\newcommand{\etal}{\textit{e{}t~a{}l.}\xspace}
\DeclareMathOperator{\cost}{cost}
\DeclareMathOperator{\opt}{opt}
\DeclareMathOperator{\Exp}{E}
\DeclareMathOperator*{\argmin}{arg\,min}
\DeclareMathOperator*{\argmax}{arg\,max}
\newtheorem{theorem}{Theorem}[section]
\newtheorem{definition}[theorem]{Definition}
\newtheorem{lemma}[theorem]{Lemma}
\newtheorem{observation}[theorem]{Observation}
\newtheorem{corollary}[theorem]{Corollary}
\newtheorem{claim}[theorem]{Claim}
\newtheorem{case}[]{Case}
\newtheorem{assumption}[]{Assumption}
\newtheorem{property}[]{Property}
\newcommand{\seclab}[1]{\label{sec:#1}}
\newcommand{\secref}[1]{Section~\ref{sec:#1}}
\newcommand{\subsecref}[1]{Subsection~\ref{sec:#1}}
\newcommand{\thmlab}[1]{{\label{theo:#1}}}
\newcommand{\thmref}[1]{Theorem~\ref{theo:#1}}
\newcommand{\lemlab}[1]{\label{lemma:#1}}
\newcommand{\lemref}[1]{Lemma~\ref{lemma:#1}}
\newcommand{\figlab}[1]{\label{fig:#1}}
\newcommand{\figref}[1]{Figure~\ref{fig:#1}}
\newcommand{\tablab}[1]{\label{tab:#1}}
\newcommand{\tabref}[1]{Table~\ref{tab:#1}}
\newcommand{\deflab}[1]{\label{def:#1}}
\newcommand{\defref}[1]{Definition~\ref{def:#1}}
\newcommand{\alglab}[1]{\label{alg:#1}}
\newcommand{\algref}[1]{Algorithm~\ref{alg:#1}}
\newcommand{\obslab}[1]{\label{obs:#1}}
\newcommand{\obsref}[1]{Observation~\ref{obs:#1}}
\newcommand{\corlab}[1]{\label{cor:#1}}
\newcommand{\corref}[1]{Corollary~\ref{cor:#1}}
\newcommand{\caselab}[1]{\label{case:#1}}
\newcommand{\caseref}[1]{Case~\ref{case:#1}}
\newcommand{\claimlab}[1]{\label{claim:#1}}
\newcommand{\claimref}[1]{Claim~\ref{claim:#1}}
\newcommand{\asslab}[1]{\label{ass:#1}}
\newcommand{\assref}[1]{Assumption~\ref{ass:#1}}
\newcommand{\proplab}[1]{\label{prop:#1}}
\newcommand{\propref}[1]{Property~\ref{prop:#1}}
\newcommand{\NP}{\textbf{NP}} 
\newcommand{\Frechet}{Fr\'echet\xspace}
\newcommand{\atgen}{\symbol{'100}}
\providecommand{\eps}{{\varepsilon}}%
\newcommand{\Astop}{\overline{a}}
\providecommand{\ceil}[1]{\left\lceil {#1} \right\rceil}
\providecommand{\floor}[1]{\left\lfloor {#1} \right\rfloor}
\providecommand{\pth}[2][\!]{#1\left({#2}\right)}
\providecommand{\brc}[1]{\left\{ {#1} \right\}}
\providecommand{\cbrc}[1]{\left\langle{#1}\right\rangle}
\providecommand{\cardin}[1]{\left\lvert#1\right\rvert}
\newcommand{\pbrc}[1]{\left[ {#1} \right]}
\renewcommand{\Re}{{\rm I\!\hspace{-0.025em} R}}
\newcommand{\Na}{{\rm I\!\hspace{-0.025em} N}}
\newcommand{\nrClusters}{\ensuremath{k}}
\newcommand{\lenClusters}{\ensuremath{\ell}}
\newcommand{\constA}{\ensuremath{\alpha}}
\newcommand{\constB}{\ensuremath{\beta}}
\newcommand{\initialCost}{\ensuremath{D}}
\newcommand{\costOpt}{\ensuremath{\delta^{*}}}
\newcommand{\trajectory}[2]{\ensuremath{{#1}_{#2}}}
\newcommand{\trajectorySimp}[2]{\ensuremath{\widehat{{#1}_{#2}}}}
\newcommand{\inputSym}{\ensuremath{\tau}}
\newcommand{\centerSym}{\ensuremath{c}}
\newcommand{\inputTraj}[1]{\trajectory{\inputSym}{#1}}
\newcommand{\inputTrajSimp}[1]{\trajectorySimp{\inputSym}{#1}}
\newcommand{\centerTraj}[1]{\trajectory{\centerSym}{#1}}
\newcommand{\VtxSet}{\ensuremath{\mathcal{V}}}
\newcommand{\distFr}[2]{\ensuremath{d_F\pth{#1,#2}}}
\newcommand{\distFrConst}{\ensuremath{d_F}}
\newcommand{\setSubC}[4]{\ensuremath{\brc{ #1(#2) ~|~ #2 \in [#3,#4]} }}
\newcommand{\range}[2]{\ensuremath{[#1]_{#2}}}
\newcommand{\minSubC}[4]{\ensuremath{\min( #1[#3,#4])}}
\newcommand{\maxSubC}[4]{\ensuremath{\max( #1[#3,#4])}}
\begin{document}


\title{Clustering time series under the \Frechet distance
\thanks{The conference version of this paper will be published at 26th ACM-SIAM Symposium on Discrete Algorithms (SODA) 2016.}
}

\author{%
   Anne Driemel%
   \thanks{Department of Mathematics and Computer Science, TU Eindhoven, The
   Netherlands; 
      \texttt{a.driemel}\hspace{0cm}\texttt{\atgen{}tue.nl}. 
      Work on this paper was partially funded by NWO STW project ``Context
      awareness in predictive analytics'' and NWO Veni project ``Clustering time
      series and trajectories (10019853)''. } %
   \and%
   Amer Krivo\v{s}ija%
   \thanks{Department of Computer Science, TU Dortmund, Germany; 
      \texttt{amer.krivosija}\hspace{0cm}\texttt{\atgen{}tu-dortmund.de}. Work on this paper has been partly supported by DFG within the Collaborative Research Center SFB 876 ``Providing Information by Resource-Constrained Analysis'', project A2.} %
   \and %
   Christian Sohler%
   \thanks{Department of Computer Science, TU Dortmund, Germany; 
      \texttt{christian.sohler}\hspace{0cm}\texttt{\atgen{}tu-dortmund.de}. Work on this paper has been partly supported by DFG within the Collaborative Research Center SFB 876 ``Providing Information by Resource-Constrained Analysis'', project A2.} %
}

\date{\today}

\maketitle

\begin{abstract}
The \Frechet{} distance is a popular distance measure for curves. We study the problem of clustering time series under the \Frechet{} distance.  In particular, we give $(1+\eps)$-approximation algorithms for variations of the following problem with parameters $\nrClusters$ and $\lenClusters$.  Given $n$ univariate time series $P$, each of complexity at most $m$, we find $\nrClusters$ time series, not necessarily from $P$, which we call \emph{cluster centers} and which each have complexity at most $\lenClusters$, such that (a) the maximum distance of an element of $P$ to its nearest cluster center or (b) the sum of these distances is minimized.  Our algorithms have running time near-linear in the input size for constant $\eps, \nrClusters$ and $\lenClusters$.  To the best of our knowledge, our algorithms are the first clustering algorithms for the \Frechet{} distance which achieve an approximation factor of $(1+\eps)$ or better. 

Keywords: time series, longitudinal data, functional data, clustering, \Frechet{} distance, dynamic time warping, approximation algorithms.
\end{abstract}
\vfill
\thispagestyle{empty}
\pagebreak
\setcounter{page}{1}

\section{Introduction}

Time series are sequences of discrete measurements of a continuous signal. Examples of data that are often represented as time series include stock market values, electrocardiograms (ECGs), temperature, the earth's population, and the hourly requests of a webpage. In many applications, we would like to automatically analyze time series data from different sources: for example in industry 4.0 applications, where the performance of a machine is monitored by a set of sensors. An important tool to analyze (time series) data is clustering. The goal of clustering is to partition the input into groups of similar time series. Its purpose is to discover hidden structure in the input data and/or summarize the data by taking a representative of each cluster.  Clustering is fundamental for performing tasks as diverse as data aggregation, similarity retrieval, anomaly detection, and monitoring of time series data.  Clustering of time series is an active research topic
\cite{
ar-dcaa-2013,
Boecking2014129,
cmp-fcis-07, 
Hsu2014358,
jp-fdc-13,
Li2013243,
liao05survey,
tc-sbc-12,
Petitjean2011678,
rm-fcbwm-06, 
sf-mpfa-14,
hd-tsc-14,
wsh-cbc-06,
zt-MODIS-14,
qg-nadtw-12}
however, most solutions lack a rigorous algorithmic analysis. Therefore, in this paper we study the problem of clustering time series from a theoretical point of view. 

Formally, a \emph{time series} is a recording of a signal that changes over time. It consists of a series of paired values $(w_i,t_i)$, where $w_i$ is the $i$th measurement of the signal, and $t_i$ is the time at which the $i$th measurement
was taken. A common approach is to treat time series data as point data in high-dimensional space.  That is, a time series
$S_j=(w_1,t_1),\dots,(w_m,t_m)$ of the input set is treated as the point
$S^{*}_j=(w_1,\dots,w_m)$ in $m$-dimensional Euclidean space. Using this simple interpretation of the data, any clustering algorithm for points can be applied. Despite it being common practice (for a survey, see~\cite{liao05survey}), it has many limitations. 
One major drawback is the requirement that all time series must have the same length and the sampling must be regular and synchronized. In particular, the latter requirement is often hard to achieve. 

In this paper, we follow a different approach that has also received much attention in the literature. In order to formulate the objective function of our clustering problem, we consider a distance measure that allows for irregular sampling and local shift and that only depends on the ``shape'' of the analyzed time series: the \Frechet{} distance.
The \Frechet{} distance is defined for continuous functions $f:[0,1]\rightarrow
\Re$.
Two functions $f,g$ have \Frechet{} distance at most $\delta$, if one can move simultaneously 
but at different and possibly changing positive speed on the domains of $f$ and $g$ from $0$ to $1$ such 
that at all times $|f(x)-g(y)|\le \delta$, where $x$ and $y$ are the current positions on the respective domains. 
The \Frechet{} distance is the infimum over the values $\delta$ that allow such a movement. For a formal
definition, see \secref{problem}. The \Frechet{} distance between two monotone
functions $f,g:[0,1] \rightarrow \Re$ equals the maximum of $|f(0)-g(0)|$ and $|f(1)-g(1)|$. 
This also implies, that the \Frechet{} distance between two functions is completely 
determined by the sequence of local extrema ordered from $0$ to $1$. 
In order to consider the \Frechet{} distance of time series, we view a time series as a specification of
the sequence of local extrema of a function. 

Once we have defined the distance measure we can also formulate the clustering problem we want to study.
We will look at two different variants: $k$-center clustering and $k$-median clustering. Both methods are
based on the idea of representing a cluster by a cluster center, which can be thought of as being a
representative for the cluster. In $k$-center clustering, the objective is to find a set of $k$ time series (the cluster centers)
such that the maximum \Frechet{} distance to the nearest cluster center is minimized over all input time series.
In $k$-median clustering, the goal is to find a set of $k$ centers that minimizes the sum of distances to 
the nearest centers. However, this is not yet the problem formulation we consider.

We would like to address another problem that often occurs with time series: noise. In many applications where
we study time series data, the measurements are noisy. For example, physical measurements typically have measurement errors
and when we want to determine trends in stock market data, the effects of short term trading is usually not of interest.
Furthermore, a cluster center that minimizes the \Frechet{} distance to many curves may have a complexity (length of the time series)
that equals the sum of the complexities of the time series in the cluster. This will typically lead to a vast overfitting
of the data. We address this problem by limiting the complexity of the cluster center. Thus,
our problem will be to find $k$ cluster centers each of complexity at most $\ell$ that minimize the $k$-center and
$k$-median objective under the \Frechet{} distance.
It seems that  none of the existing approaches summarizes the
input along the time-dimension in such a way.  

\paragraph{Our results} 

To the best of our knowledge, clustering under the \Frechet{} distance has not been studied before. 
We develop the first $(1+\eps)$-approximation algorithm for the $k$-center and the $k$-median problem 
under \Frechet{} distance when the complexity of the centers is at most $\ell$. For constant $\eps, 
k$ and $\ell$, the running time of our algorithm is $\widetilde O(nm)$, where $n$ is the number of input 
time series and $m$ their maximum complexity. (see \thmref{k:l:center:main} and \thmref{k:l:median:main}.)
We also prove that clustering univariate curves is \NP-hard and show that the doubling dimension of the 
\Frechet{} (pseudo)metric space is infinite for univariate curves.

\paragraph{Challenge and ideas}
High-dimensional data pose a common challenge in many clustering applications.
The challenge in clustering under the \Frechet distance is twofold:
\begin{compactenum}[(A)]
\item High dimensionality of the joint parametric
space of the set of time series:  For the seemingly simple task of computing the (\Frechet{})
\emph{median} of a fixed group of time series, state-of-the-art algorithms run
exponentially in the number of time series~\cite{ahn2015middle, hr-fdre-14}, since the standard
approach is to search the joint parametric space for a monotone 
path~\cite{ag-cfdbt-95, akw-mpcfd-10,buchin2012four, bbw-eapcm-09, feldman2012gps, GudVahr12,
hr-fdre-14, mssz-fdsl-2011}. 
\item High dimensionality of the metric space: the doubling dimension of the \Frechet metric space is
infinite, as we will show, even if we restrict the length of the time series.
Existing ($1+\eps)$-approximation algorithms \cite{abs-cm-10, kumar2010lineartime} with comparable running time are only known to work for special cases
as the Euclidean $k$-median problem or (more generally) for metric spaces with bounded doubling dimension 
\cite{abs-cm-10}.
\end{compactenum}
These two challenges make the clustering task particularly difficult, even for short univariate time
series. Our approach exploits the low dimensionality of the ambient space of the
time series and the fact that we are looking for low-complexity cluster 
centers which best describe the input. 
We introduce the concept of \emph{signatures} (\defref{signature}) which capture critical points of
the input time series. We can show that each signature vertex of an input curve
needs to be matched to a different vertex of its nearest cluster center~(\lemref{nec:suff}).  
Furthermore, we use a technique akin to \emph{shortcutting}, which has been used
before in the context of partial curve matching~\cite{bbw-eapcm-09, dh-jydfd-13}.
We show that any vertex of an optimal solution that is \emph{not} matched to a
signature vertex, can be omitted from the solution without changing its
cost~(\thmref{remove:one}).

These ingredients enable us to generate a constant-size set of candidate solutions.
The ideas going into the individual parts can be summarized as follows.
For the $k$-center problem we generate a candidate set based on the 
entire input and a decision parameter. If the candidate set turns out too large,
then we conclude that there exists no solution for this parameter, since more
vertices would be needed to ``describe'' the input. 
For the $k$-median problem we use the approach of random sampling used
previously by Kumar~\etal~\cite{kumar2010lineartime} and Ackermann~\etal~\cite{abs-cm-10}.
We show that one can generate a constant-size candidate set that contains a
$(1+\eps)$-approximation to the $1$-median based on a sample of constant size.
To achieve this, we observe that a vertex of the optimal solution that is
matched to a signature vertex and which is unlikely to be induced by our
sample, can be omitted without increasing the cost by a factor of more than
$(1+\eps)$~(\lemref{omit:low:prob}).

\paragraph{Related Work}
A distance measure that is closely related to the \Frechet{} distance is \emph{Dynamic time warping (DTW)}.
DTW has been popularized in the field of data mining~\cite{dtswk-08,mueller07dtw} and is known for its unchallenged
universality.
It is a discrete version of the \Frechet
distance which measures the total sum of the distances taken at certain fixed
points along the traversal. The process of traversing the curves with varying
speeds is sometimes referred to as ``time-warping'' in this context.
It has been successfully used in the automated classification of time series
describing phenomena as diverse as surgical processes \cite{Forestier2012255},
whale singing \cite{bmp-ackw-07}, chromosomes \cite{Legrand2008215},
fingerprints \cite{888711}, electrocardiogram (ECG) frames \cite{1013101} and
many others.  
While DTW was born in the 80's from the use of dynamic programming with the
purpose of aligning distorted speech signals, the \Frechet{} distance was
conceived by Maurice \Frechet{} at the beginning of the 20th century in the
context of the study of general metric spaces. 
The best known algorithms for computing either distance measure between two
input time series have a worst-case running time which is roughly quadratic in
the number of time stamps~\cite{buchin2012four, mueller07dtw}. Faster
algorithms exist for both problems under certain input
assumptions~\cite{dhw-afd-12,keogh2005exact}. Recently, Bringmann
showed~\cite{Bringmann14} that the \Frechet{} distance between two polygonal
curves lying in the plane cannot be computed in strongly subquadratic time, unless
the Strong Exponential Time Hypothesis (SETH) fails. This was extended by Bringmann and Mulzer to the 1-dimensional case \cite{bm-adfd-15}.

Both distance measures consider
only the ordering of the measurements and ignore the explicit time stamps. This
makes them robust against local deformations.  Both distance measures deal well
with irregular sampling and have been used in combination with curve
simplification techniques~\cite{ahmw-nltcs-05,dhw-afd-12,keogh1999scaling}.
However, while the \Frechet distance is inherently independent of the sampling
of the curves, DTW does not work well when one of the two curves is sampled
much less frequently (see \figref{subsampling}). Since we are interested in finding cluster centers of low
complexity, we therefore focus on the \Frechet distance.

\begin{figure}\centering
\includegraphics{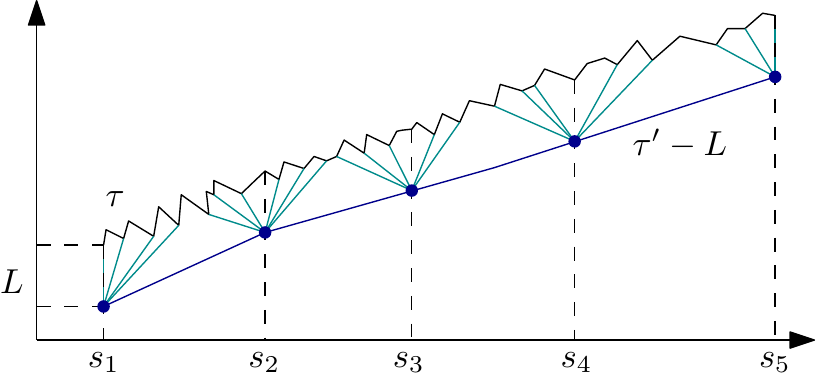}
\caption{Illustration why a continuous distance measure performs better than a
discrete distance measure if sampling rates differ substantially. Shown is the
curve $\tau$ and the same curve subsampled at times $s_1,\dots,s_5$ (for
better visibility, the subsampled curve $\tau'$ is translated by $L$). A discrete 
assignment between the vertices of the curves is shown in light blue.
$\tau$ records a linear process with some error $\eps$. A suitable distance measure should estimate
the distance between $\tau$ and $\tau'$ in the interval $[0,2\eps]$. This is true
for the (normalized) DTW, if the rate of subsampling is high enough. Similarly
it is true for the (continuous) \Frechet distance at any rate of subsampling. 
However, as the rate of subsampling decreases, DTW will estimate the distance
increasingly larger.
}
\figlab{subsampling}
\end{figure}

The problem of clustering points in general metric spaces has been extensively studied 
in many different settings. The problem is known to be computationally hard in most of its
variants~\cite{aloise2009nphard,jain2002greedy,megiddo1984geo,Vaziranibook}.
A number of polynomial-time constant-factor approximation algorithms are known 
~\cite{AryaGKMMP04,CharikarG05, cgts-kmedian-02, CharikarL12, c-kmc-09,feder1988optimal,Gonzalez1985,HochbaumShmoys1985,jain2002greedy,ls-akm-13}. 
For the $k$-center problem the best such algorithm achieves a $2$-approximation
\cite{Gonzalez1985, HochbaumShmoys1985} and this is also the best lower bound
for the approximation ratio of the polynomial-time algorithms unless $\textbf{P}=\NP$. This picture
looks much different if one makes certain assumptions on the graph that defines
the metric, as done in the work by Eisenstadt~\etal~\cite{emk-akcpg-14}, for
example. 

For the $k$-median problem in a general metric space a
$(1+\sqrt{3}+\eps)$-approximation can be achieved \cite{ls-akm-13}. The
best lower bound for the approximation ratio of polynomial time algorithm is
$1+2/e\approx 1.736$ unless $\NP\subseteq \text{DTIME}\left[ n^{O(\log\log n)}\right]$ \cite{jain2002greedy}. If the data points are from the
Euclidean space $\Re^d$, a number of different algorithms are known that compute
a $(1+\eps)$-approximation to the $k$-median problem 
\cite{c-kmc-09,FeldmanLangberg11, HarPeledKushal07, harpeledmazumdar2004, KR99, kumar2010lineartime}. 
Ackermann \etal~\cite{abs-cm-10}
show that under certain conditions a randomized $(1+\eps)$-approximation algorithm by Kumar
\etal~\cite{kumar2010lineartime} can be extended to general distance measures. 
In particular, they show that one can compute a $(1+\eps)$-approximation to 
the $k$-median problem if the distance metric is such that the $1$-median 
problem can be $(1+\eps)$-approximated using information only from a random
sample of constant size. They show that this is the case for
metrics with bounded doubling dimension. In our case, the doubling dimension is,
however, unbounded.


A different approach would be to use the technique of metric
embeddings, which proved successful for clustering in Euclidean 
high-dimensional spaces. A metric embedding is a mapping between two metric
spaces which preserves distances up to a certain distortion.
Unlike the similar Hausdorff distance, for which non-trivial embeddings are
known, it is not known if the \Frechet distance can be embedded into
an $\ell_{p}$ space using finite dimension~\cite{IndMat04}.
Any finite metric space can be embedded into
$\ell_\infty$, which is therefore considered as ``the
mother of all metrics''. In turn, any bounded set in $\Re^d$ with $\ell_{\infty}$ 
can be embedded as curves with the \Frechet distance (see also
\secref{nphard}).   Recent work by Bartal~\etal~\cite{bartal2014impossible}
suggests that, due to the infinite doubling dimension, a metric embedding of the
\Frechet distance into an $\ell_p$-space needs to have at least super-constant
distortion.  On the positive side, Indyk showed how to embed the \Frechet
distance into an inner product space for the purpose of designing approximate
nearest-neighbor data structures \cite{i-approxnn-02}. 
However, this work focuses on a different version of the \Frechet distance,
namely the \emph{discrete} \Frechet distance, which is aimed at sequences
instead of continuous curves. For any $t>1$, the resulting data structure is
$O(\pth{\log m \log \log n}^{t-1})$-approximate, uses space exponential in
$tm^{1/t}$, and achieves $(m+\log n)^{O(1)}$ query time, where $m$ is the maximum length of
a sequence and $n$ is the number of sequences stored in the data structure. 

The problem of clustering under the DTW distance has also been studied in the data mining community. In particular, substantial effort has been put into extending Lloyd's algorithm~\cite{lloyd82} to the case of DTW, an example is the work of Petitjean \etal~\cite{Petitjean2011678}. In order to use Lloyds algorithm, one first has to be able to compute the mean of a set of time series. For DTW, this turns out to be a non-trivial problem and current solutions are not satisfactory.  One of the problems is that the complexity of the mean is prone to become quite high, namely linear in the order of the total complexity of the input. This can lead to overfitting in the same way as discussed for the \Frechet distance. For a more extensive discussion we refer to~\cite{ar-dcaa-2013} and references therein.

Also in statistics, the problem of clustering longitudinal data, or functional data, is an active research topic and solutions based on wavelet decomposition and principal component analysis have been suggested \cite{cmp-fcis-07, jp-fdc-13, rm-fcbwm-06, wsh-cbc-06}.

\section{Preliminaries}

A \emph{time series} is a series $(w_1,t_1),\ldots ,(w_m,t_m)$ of measurements
$w_i$ of a signal taken at times $t_i$. We assume $0=t_1<t_2<\ldots <t_m=1$ and $m$ is finite. A
time series may be viewed as a continuous function $\tau: [0,1] \rightarrow \Re$ by
linearly interpolating $w_1,\dots,w_m$ in order of $t_i$, $i=1,\ldots m$. We
obtain a polygonal curve with \emph{vertices} $w_1=\tau(t_1),\dots,
w_m=\tau(t_m)$ and segments between $w_i$ and $w_{i+1}$ called \emph{edges}
$\overline{w_i w_{i+1}}=\{xw_i+(1-x)w_{i+1}|x\in [0,1]\}$. We will simply refer
to $\tau$ as a \emph{curve}. 
When defining such a curve, we may write ``curve $\tau:[0,1]\rightarrow\Re$ with $m$
vertices'', or ``curve $\tau=w_1,\ldots, w_m$''.
\footnote{Notice that this notation does not specify the points of time at which the measurements
are taken. The reason is that the \Frechet distance only depends on the ordering of
the measurements (and their values), but not on the exact points of time when the measurements
are taken.}
%
We say that such a curve $\tau$ has \emph{complexity} $m$. 
We denote its set of vertices with  $\VtxSet(\tau)$. For
any $t_i<t_j \in [0,1]$, we denote the subcurve of $\tau$ starting at
$\tau(t_i)$ and ending at $\tau(t_j)$ with $\tau[t_i,t_j]$. 
We define 
$\minSubC{\tau}{t}{t^{-}}{t^{+}}=\min\setSubC{\tau}{t}{t^{-}}{t^{+}}$, and
$\maxSubC{\tau}{t}{t^{-}}{t^{+}}=\max\setSubC{\tau}{t}{t^{-}}{t^{+}}$, 
to denote the minimum and maximum along a (sub)curve.

Furthermore, we will use the following non-standard notation for an interval.
For any $a,b\in \Re$ we define $\cbrc{a,b}=\pbrc{\min(a,b), \max(a,b)}$. 
We denote $\range{h}{\delta}=[h-\delta, h+\delta]$. For any
set $P$, we denote its cardinality with $|P|$.

Let $\mathcal H$ denote the set of continuous and increasing functions
$f:[0,1]\rightarrow[0,1]$ with the property that $f(0)=0$ and $f(1)=1$. 
For two given functions
$\tau:[0,1]\rightarrow \Re$ and $\pi:[0,1]\rightarrow \Re$,
their \emph{\Frechet distance} is defined as 
\begin{equation} \label{def:frechet} 
\distFr{\tau}{\pi}=\inf_{f\in \mathcal H}\; \max_{t \in [0,1]} \| \tau(f(t))-\pi(t)
\|,
\end{equation}
The \Frechet{} distance between two time series is defined as the \Frechet{} distance
of their corresponding 
continuous functions.
\footnote{
In the above definition, the \Frechet{} distance is only defined for functions with domain $[0,1]$.
This is mostly for simplicity of analysis.
We can easily extend it to other domains using an arbitrary homeomorphism that identifies this domain with $[0,1]$. 
The \Frechet{} distance is invariant under reparametrizations. 
All our algorithms use the parametrizations only implicitely. Any input time
series is given as an ordered list of measurements, without explicit
time-stamps. Therefore our definitions are without loss of generality.
}
Note that any $f \in \mathcal H$ induces a bijection between the two curves.
We refer to the function $f$ that realizes the \Frechet distance as a
\emph{matching}. It may be that such a matching exists in the limit only. That
is, for any $\eps>0$, there exists a $f\in \mathcal H$ that matches each point
on $\tau$ to a point on $\pi$ within distance $\distFr{\tau}{\pi}+\eps$.

The \Frechet{} distance is a pseudo metric~\cite{ag-cfdbt-95}, i.e. it satisfies all properties of
a metric space except that there may be two different functions that have distance
$0$. If one considers the equivalence classes that are induced by functions of pairwise
distance $0$ we can obtain a metric space $(\Delta,\distFrConst)$ defined by the \Frechet
distance and the set $\Delta$ of all (equivalence classes of) univariate time series.
We denote with $\Delta_m$ the set of all univariate time series of complexity at
most $m$.

We notice that only the ordering of the $w_i$ is relevant and that under \Frechet distance two curves can be thought of being identical, if they have the same sequence of local minima and maxima. Therefore, we can assume that a curve is induced by its sequence of local minima and maxima and we will use the term curve in the paper to describe the equivalence class of curves with pairwise \Frechet distance 0.

\begin{definition}\deflab{concatenation}
Let two curves $\tau_1:[a_1,b_1]\rightarrow\mathbb{R}$, $0\leq a_1\leq b_1\leq 1$, and $\tau_2:[a_2,b_2]\rightarrow\mathbb{R}$, $0\leq a_2\leq b_2\leq 1$ be given, such that $\tau_1(b_1)=\tau_2(a_2)$. The concatenation of $\tau_1$ and $\tau_2$ is a curve $\tau$ defined as $\tau=\tau_1\oplus\tau_2:[0,1]\rightarrow\mathbb{R}$, such that 
\[ \tau(t)=\left(\tau_1\oplus\tau_2\right)(t)=\begin{cases}\tau_1\left(a_1+\left(b_1-a_1+b_2-a_2\right)\cdot t\right) & \text{if  } t\leq \frac{b_1-a_1}{b_1-a_1+b_2-a_2}
\\ \tau_2\left(b_2-\left(b_1-a_1+b_2-a_2\right)\cdot \left(1-t\right)\right) & \text{if  } t> \frac{b_1-a_1}{b_1-a_1+b_2-a_2} 
\end{cases}\]

\end{definition}

We are going to use the following simple observations throughout the paper. 
\begin{observation}\obslab{fd:concat}
Let two curves $\tau:[0,1]\rightarrow \Re$ and $\pi:[0,1]\rightarrow \Re$
be the concatenations of two subcurves each, $\tau=\tau_1\oplus \tau_2$ and
$\pi=\pi_1\oplus \pi_2$, then it holds  that \[\distFr{\tau}{\pi}\leq \max\lbrace
\distFr{\tau_1}{\pi_1}, \distFr{\tau_2}{\pi_2}\rbrace\]
\end{observation}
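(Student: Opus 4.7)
The plan is to construct a reparametrization for the pair $(\tau,\pi)$ by stitching together near-optimal reparametrizations for $(\tau_1,\pi_1)$ and $(\tau_2,\pi_2)$ at the breakpoint inherited from the concatenation. Fix $\eps>0$ and let $f_1,f_2\in\mathcal{H}$ be reparametrizations with $\|\tau_i(f_i(s))-\pi_i(s)\|\leq \distFr{\tau_i}{\pi_i}+\eps$ for every $s\in[0,1]$ and $i\in\{1,2\}$; these exist by the definition of the infimum in the \Frechet distance.

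Let $\alpha_\tau,\alpha_\pi\in[0,1]$ be the breakpoints read off from \defref{concatenation}, so that $\tau(t)$ traces $\tau_1$ on $[0,\alpha_\tau]$ and $\tau_2$ on $[\alpha_\tau,1]$ (each under the affine rescaling specified there), and similarly for $\pi$ with breakpoint $\alpha_\pi$. I would then define $f:[0,1]\to[0,1]$ by
\[
f(t)=\begin{cases} \alpha_\tau \cdot f_1(t/\alpha_\pi) & \text{if } t\leq \alpha_\pi,\\ \alpha_\tau+(1-\alpha_\tau)\cdot f_2((t-\alpha_\pi)/(1-\alpha_\pi)) & \text{if } t> \alpha_\pi.\end{cases}
\]
Because $f_1(1)=1$ and $f_2(0)=0$, both branches evaluate to $\alpha_\tau$ at $t=\alpha_\pi$, so $f$ is continuous; each branch is continuous and strictly increasing since $f_1,f_2$ are, and $f(0)=0$, $f(1)=1$, whence $f\in\mathcal{H}$.

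It then suffices to bound the induced pointwise distance on each piece. For $t\in[0,\alpha_\pi]$, unrolling \defref{concatenation} gives $\pi(t)=\pi_1(t/\alpha_\pi)$ and $\tau(f(t))=\tau_1(f(t)/\alpha_\tau)=\tau_1(f_1(t/\alpha_\pi))$, so $\|\tau(f(t))-\pi(t)\|\leq \distFr{\tau_1}{\pi_1}+\eps$; the case $t\in[\alpha_\pi,1]$ is symmetric and bounded by $\distFr{\tau_2}{\pi_2}+\eps$. Hence $f$ realizes cost at most $\max\{\distFr{\tau_1}{\pi_1},\distFr{\tau_2}{\pi_2}\}+\eps$, and letting $\eps\to 0$ yields the claim.

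The only subtle point, and what I would expect to be the main obstacle, is the bookkeeping enforced by \defref{concatenation}: the subcurves $\tau_i$ and $\pi_i$ live on arbitrary intervals $[a_i,b_i]$ that are affinely rescaled into $[0,1]$, so one must check that the breakpoints $\alpha_\tau,\alpha_\pi$ and the rescaling factors line up with the correct sub-reparametrizations. Since the \Frechet distance is invariant under reparametrization this amounts to unrolling the definition rather than a genuine difficulty, and degenerate cases where $\alpha_\pi\in\{0,1\}$ are handled by taking $f=f_2$ or $f=f_1$ alone.
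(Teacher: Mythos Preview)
Your argument is correct and is exactly the standard proof: glue near-optimal reparametrizations at the concatenation breakpoint, verify membership in $\mathcal{H}$, bound each half, and pass to the limit. The paper itself offers no proof of this statement---it is stated as an \emph{observation} and left without justification---so there is nothing to compare against beyond noting that your write-up supplies precisely the elementary construction the authors evidently considered routine. Your closing remark about the bookkeeping from \defref{concatenation} (the affine rescaling of the domains $[a_i,b_i]$ into $[0,1]$) is the only place where care is needed, and you have identified it correctly; since the \Frechet distance is reparametrization-invariant this is indeed just unrolling definitions.
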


\begin{observation}\obslab{segments}
Given two edges $\overline{a_1 a_2}$ and $\overline{b_1 b_2}$ with $a_1,a_2,b_1,b_2\in \Re$, it holds that 
\[\distFr{\overline{a_1 a_2}}{\overline{b_1 b_2}}=\max\pth{|a_1-b_1|,|a_2-b_2|}.\]
\end{observation}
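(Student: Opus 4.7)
The plan is to prove matching upper and lower bounds on $\distFr{\overline{a_1 a_2}}{\overline{b_1 b_2}}$ equal to $\max(|a_1-b_1|,|a_2-b_2|)$.

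For the lower bound, I would observe that every $f \in \mathcal{H}$ satisfies $f(0)=0$ and $f(1)=1$ by definition. Writing $\tau(t)=(1-t)a_1+t a_2$ and $\pi(t)=(1-t)b_1+t b_2$ for the natural parametrizations of the two edges, the inner maximum in \eqref{def:frechet} is taken over all $t \in [0,1]$; evaluating at $t=0$ and $t=1$ gives $|\tau(f(0))-\pi(0)| = |a_1-b_1|$ and $|\tau(f(1))-\pi(1)| = |a_2-b_2|$. Hence for every $f \in \mathcal H$ the maximum is at least $\max(|a_1-b_1|,|a_2-b_2|)$, and the infimum inherits this lower bound.

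For the upper bound, I would exhibit the identity reparametrization $f(t)=t$, which is in $\mathcal H$. Then for every $t \in [0,1]$,
\[
|\tau(t)-\pi(t)| = |(1-t)(a_1-b_1) + t(a_2-b_2)| \le (1-t)|a_1-b_1| + t|a_2-b_2| \le \max(|a_1-b_1|,|a_2-b_2|),
\]
using the triangle inequality and the fact that $(1-t)$ and $t$ are nonnegative and sum to $1$. Taking the maximum over $t$ and then the infimum over $f$ gives the matching upper bound, completing the proof.

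There is no real obstacle here; the argument is purely a consequence of the boundary conditions forced on $f \in \mathcal H$ (which handle the lower bound) and the convexity of the linear interpolations along the two edges (which handles the upper bound). The only thing to be careful about is that the interval $\cbrc{a,b}$ notation and the claim are stated for $a_1,a_2,b_1,b_2 \in \Re$, so no multidimensional geometry enters and the absolute value in the computation above is simply the Euclidean distance in $\Re$.
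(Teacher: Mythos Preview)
Your argument is correct and is the natural way to verify this fact. The paper itself states this as an observation without proof, so there is nothing to compare against; your two-sided bound via the forced boundary conditions $f(0)=0$, $f(1)=1$ for the lower bound and the identity reparametrization plus convexity for the upper bound is exactly the intended elementary verification.
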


We make the following \emph{general position} assumption on the input. For every
input curve $\inputTraj{}$ we assume that no two vertices $\inputTraj{}$ have
the same coordinates and any two differences between coordinates of two
vertices of $\inputTraj{}$ are different. This assumption can easily be achieved
by symbolic perturbation. Furthermore we assume that $\inputTraj{}$ has no edges 
of length zero and its vertices are an alternating sequence of minima and
maxima, i.e. no vertex lies in the linear interpolation of its two neighboring vertices.

\subsection{Problem statement}
\seclab{problem}

Given a set of $n$ time series $P=\lbrace\inputTraj{1},\ldots , \inputTraj{n}\rbrace \subseteq \Delta_m$ and parameters $\nrClusters,\lenClusters \in \Na$, we define a \emph{$(\nrClusters,\lenClusters)$-clustering}  as a set of $\nrClusters$ time series $C=\lbrace\centerTraj{1},\ldots , \centerTraj{\nrClusters}\rbrace$ taken from $\Delta_{\lenClusters}$  which minimize one of the following cost functions:
\begin{equation} \label{def:kcenter} 
\cost_{\infty}(P,C)=\max_{i=1,\ldots n}\; \min_{j=1,\ldots \nrClusters}
\distFr{\inputTraj{i}}{\centerTraj{j}}. \end{equation} 

\begin{equation} \label{def:kmedian} 
\cost_1(P,C)=\sum_{i=1,\ldots n}\; \min_{j=1,\ldots \nrClusters}
\distFr{\inputTraj{i}}{\centerTraj{j}}. \end{equation} 

We refer to the clustering problem as 
\emph{$(\nrClusters,\lenClusters)$-center} (Equation~\ref{def:kcenter}) 
and \emph{$(\nrClusters,\lenClusters)$-median} (Equation~\ref{def:kmedian}),
respectively. 
We denote the cost of the optimal solution as 
\[ \opt^{(i)}_{\nrClusters,\lenClusters}(P) = \min_{C \subset \Delta_{\lenClusters}} \cost_{i}(P,C),\]
where the restrictions on $C$ are as described above and $i \in \{\infty,1\}$.
Note that this corresponds to the classical definition of the $k$-median problem (resp. $k$-center problem) 
if $D$ is a distance measure on $\Delta_{\ell} \cup P$, 
defined as $D(p,q)=\infty$ for $p,q \in P$
and $D(p,q)=\distFr{p}{q}$ otherwise. Note that the new distance measure $D$
does not satisfy the triangle inequality and is therefore not a metric.

\section{On signatures of time series}
\seclab{on:signatures}

Before introducing our signatures, we first review similar notions traditionally
used for the purpose of curve compression. 
A simplification of a curve is a curve which is lower in complexity (it has
fewer vertices) than the original curve and which is similar to the original
curve. This is captured by the following standard definitions.  

\begin{definition}\deflab{min:error:simp}
We call a curve $\pi$ a minimum-error $\ell$-simplification of $\inputTraj{}$ if
for any curve $\pi'$ of at most $\ell$ vertices, it holds that
$\distFr{\pi'}{\tau} \geq \distFr{\pi}{\tau}$.
\end{definition}

\begin{definition}\deflab{min:size:simp}
We call a curve $\pi$ a minimum-size $\varepsilon$-simplification of $\inputTraj{}$ if $\distFr{\pi}{\inputTraj{}}\leq \varepsilon$ and
for any curve $\pi'$ such that  $\distFr{\pi'}{\inputTraj{}}\leq \varepsilon$, it holds that the complexity of $\pi'$ is at least as much as the complexity of $\pi$.
\end{definition}

The simplification problem has been studied under different names for
multidimensional curves and under various error measures, in domains, such as
cartography~\cite{dp-73,ramer1972iterative}, computational
geometry~\cite{godau1991natural}, and pattern recognition~\cite{pratt2002search}. 
  Often, the simplified curve is restricted to vertices of the input curve and
  the endpoints are kept. However, in our clustering setting, we need to use  
  the more general problem definitions stated above.
  
  Historically, the first minimal-size curve simplification algorithm was a
  heuristic algorithm independently suggested in the 1970's by Ramer and Douglas
  and Peucker~\cite{dp-73,ramer1972iterative} and it remains popular in the area
  of geographic information science until today.  It uses the Hausdorff error
  measure and has running time $O(n^2)$ (where $n$ denotes the complexity of the
  input curve), but does not offer a bound to the size of the simplified curve.
  Recently, worst-case  and average-case lower bounds on the number of vertices
  obtained by this algorithm were proven by
  Daskalakis~\etal~\cite{daskalakis2010good}.  Imai and Iri~\cite{ii-pac-1988}
  solved both the minimum-error and minimum-size simplification problem under
  the Hausdorff distance by modeling it as a shortest path problem in directed
  acyclic graphs.

Curve simplification using the \Frechet distance was first proposed by
Godau~\cite{godau1991natural}.  The current state-of-the-art approximation algorithm for
simplification under the \Frechet distance was suggested by
Agarwal~\etal~\cite{ahmw-nltcs-05}. This algorithm computes a $2$-approximate
minimal-size simplification in time $O(n\log n)$.  The framework of Imai and Iri
is also used for the streaming algorithm of Abam~\etal~\cite{abh-sals-10} under
the \Frechet distance.  Driemel and Har-Peled~\cite{dh-jydfd-13} introduced the
concept of a \emph{vertex permutation} with the aim of preprocessing a curve for
curve simplification. The idea is that any prefix of the permutation represents
a bicriteria approximation to the minimal-error curve simplification. In
\secref{computing:signatures} we will use this concept to develop improved
algorithms in our setting where the curves are time series.  

For time series, a concept similar to simplification  called 
\emph{segmentation} has been extensively studied in the area of data
mining~\cite{bingham2006segmentation,himberg01,terzi2006efficient}.
The standard approach for computing exact segmentations is to use dynamic
programming which yields a running time of $O(n^2)$.

\begin{figure}[h]
\centering
\includegraphics{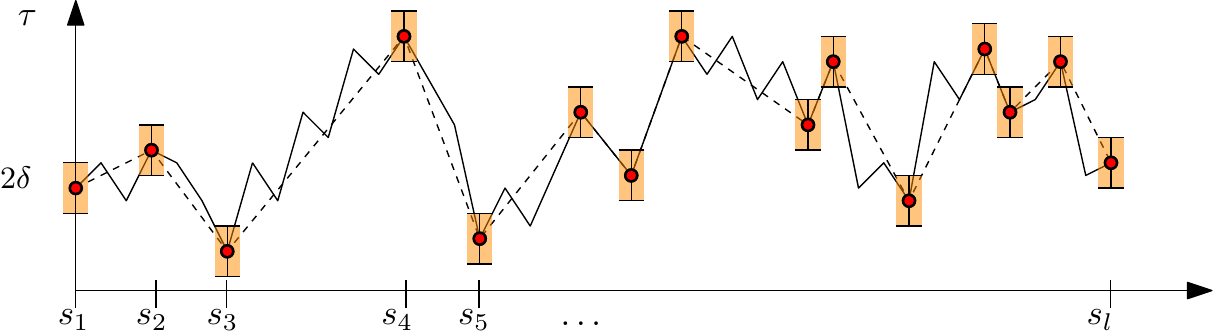}
\caption{Example of a $\delta$-signature of a time series}
\figlab{signature}
\end{figure}

We now proceed to introduce the concept of signatures (\defref{signature}).  Our definition
aligns with the work on computing important minima and
maxima in the context of time series compression~\cite{pratt2002search}.
Intuitively, the signatures provide us with the ``shape'' of a time series at
multiple scales.  Signatures have a unique hierarchical structure
(see~\lemref{canonical:signature}) which we can exploit in order to achieve
efficient clustering algorithms. Furthermore, the signatures of a curve
approximate the respective simplifications under the \Frechet distance (see \lemref{apx:min:error:simp}).
\figref{signature} shows an example of a signature. We show several crucial
properties of signatures in \subsecref{properties:signatures}.  Signatures
always exist and are easy to compute (see \secref{computing:signatures}).  In
particular, we show how to compute the $\delta$-signature of a curve in one pass
in linear time~(\thmref{computing:delta:signature}), and how to preprocess a
curve in near-linear time for fast queries of the signature of a certain
size~(\thmref{compute:signatures}). 

\begin{definition}[$\delta$-signature]\deflab{signature}
We define the $\delta$-signature of any 
curve $\tau: [0,1]\rightarrow
\Re$ as follows.
The signature is a curve $\sigma: [0,1] \rightarrow \Re$ defined by a series of values $0=t_1 <\dots<t_\lenClusters=1$ as the linear interpolation of $\tau(t_i)$ in the order of the index $i$, and with the following properties.\\ 
For $1 \leq i \leq \lenClusters -1$ the following conditions hold: 
\begin{compactenum}[(i)]
\item (non-degeneracy) if $i \in [2,\lenClusters-1]$ then $\tau(t_i) \notin \langle \tau(t_{i-1}), \tau(t_{i+1})\rangle $,
\item (direction-preserving)\\if $\tau(t_i)<\tau(t_{i+1})$ for $t < t' \in [t_i,t_{i+1}]$:
$\tau(t)-\tau(t') \leq 2\delta$, and \\if $\tau(t_i)>\tau(t_{i+1})$ for $t < t'
\in [t_i,t_{i+1}]$: $\tau(t')-\tau(t) \leq 2\delta$,
\item (minimum edge length)\\if $i \in [2,\lenClusters-2]$ then $|\tau(t_{i+1})-\tau(t_i)| > 2\delta$, and \\if
$i\in \{1,\lenClusters-1\}$ then $|\tau(t_{i+1})-\tau(t_{i})| > \delta$,
\item (range) for $t \in [t_i,t_{i+1}]$: \\if 
$i \in [2,\lenClusters-2]$ then $\tau(t) \in \langle\tau(t_i), \tau(t_{i+1})\rangle$, and \\if 
$i=1$ and $\lenClusters>2$ then $\tau(t) \in \langle \tau(t_i),\tau(t_{i+1})\rangle \cup \langle
\tau(t_i)-\delta,\tau(t_i)+\delta\rangle$, and \\if 
$i=\lenClusters -1$  and $\lenClusters>2$ then $\tau(t) \in \langle \tau(t_{i-1}),\tau(t_i)\rangle \cup \langle
\tau(t_i)-\delta,\tau(t_i)+\delta\rangle$, and \\if
$i=1$ and $\lenClusters=2$ then $\tau(t) \in \langle \tau(t_{1}),\tau(t_2)\rangle \cup \langle
\tau(t_1)-\delta,\tau(t_1)+\delta\rangle \cup \langle
\tau(t_2)-\delta,\tau(t_2)+\delta\rangle.$
\end{compactenum}
\end{definition}

It follows from the properties (i) and (iv) of \defref{signature} that the parameters $t_i$ for $i\in [1,\ell]$ specify vertices of $\tau$. 
Furthermore, it follows that the vertex $\tau(t_i)$ is either a minimum or maximum on $\tau[t_{i-1}, t_{i+1}]$ for $i\in [2,\ell-1]$.

For a signature $\sigma$ we will simply write \emph{signature $\sigma:[0,1]\rightarrow\Re$ with $\lenClusters$ vertices} or \emph{signature $\sigma=v_1,\ldots,v_\lenClusters$}, instead of \emph{signature $\sigma:[0,1]\rightarrow\Re$, with vertices $v_1=\sigma(s_1),\ldots,v_\lenClusters=\sigma(s_\lenClusters)$, where $0=s_1<\ldots<s_\lenClusters=1$}. We assume that the parametrization of $\sigma$ is chosen such that $\sigma(s_j)=\tau(s_j)$, for any $j\in\lbrace 1,\ldots,\lenClusters\rbrace$.

We remark that while the above definition is somewhat cumbersome, the stated
properties turn out to be the exact properties needed to prove
\thmref{remove:one}, which in turn enables the basic mechanics of our clustering
algorithms.

\subsection{Useful properties of signatures}
\seclab{properties:signatures}

\begin{lemma}\lemlab{fd:signature}
It holds for any $\delta$-signature $\sigma$ of $\tau$ that $\distFr{\sigma}{\tau}\leq \delta$.
\end{lemma}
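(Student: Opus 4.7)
The plan is to invoke Observation~\ref{obs:fd:concat} iteratively: writing $\sigma = \sigma_1 \oplus \cdots \oplus \sigma_{\ell-1}$ for the edges of $\sigma$ and $\tau = \tau[t_1,t_2] \oplus \cdots \oplus \tau[t_{\ell-1},t_\ell]$ for the matching subcurves, it suffices to show $\distFr{\sigma_i}{\tau[t_i,t_{i+1}]} \leq \delta$ for each $i \in \{1,\ldots,\ell-1\}$. Fix such an $i$ and, by negating the curve values if necessary, assume $\tau(t_i) < \tau(t_{i+1})$ (the other case is entirely symmetric).

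The core step is to construct a continuous, non-decreasing ``reference'' function $r : [t_i, t_{i+1}] \to [\tau(t_i), \tau(t_{i+1})]$ with $r(t_i) = \tau(t_i)$, $r(t_{i+1}) = \tau(t_{i+1})$, and $|r(t) - \tau(t)| \leq \delta$ throughout. To this end I introduce the two envelopes
\[
h(t) \;=\; \max_{u \in [t_i, t]} \tau(u), \qquad l(t) \;=\; \min_{v \in [t, t_{i+1}]} \tau(v),
\]
both continuous and non-decreasing with $l(t) \leq \tau(t) \leq h(t)$. The direction-preserving property, applied to the arg-max $u^* \leq t$ defining $h(t)$ and the arg-min $v^* \geq t$ defining $l(t)$, yields $h(t) - l(t) \leq 2\delta$. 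I then set
\[
r(t) \;=\; \max\!\left(\tau(t_i),\; \min\!\left(\tau(t_{i+1}),\; \tfrac{1}{2}\bigl(h(t) + l(t)\bigr)\right)\right),
\]
a clamp of the envelope midpoint. Monotonicity and continuity are inherited from $h,l$, and the correct endpoint values follow because at $t_i$ the midpoint is at most $\tau(t_i)$ (so the lower clamp kicks in) and at $t_{i+1}$ it is at least $\tau(t_{i+1})$ (so the upper clamp kicks in).

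The main obstacle is bounding $|r(t) - \tau(t)|$ in the clamping cases. When no clamp is active, $|r(t) - \tau(t)| \leq \tfrac{1}{2}(h(t) - l(t)) \leq \delta$. If $r$ is clamped up to $\tau(t_i)$ then necessarily $l(t) < \tau(t_i)$; the range property forbids this for a middle edge ($i \in [2,\ell-2]$), and only for $i=1$ or $\ell = 2$ does the extended range give $l(t) \geq \tau(t_i) - \delta$, in which case $h(t) \leq l(t) + 2\delta < \tau(t_i) + \delta$ confines $\tau(t)$ to $[\tau(t_i)-\delta,\tau(t_i)+\delta]$, so $|r(t)-\tau(t)| \leq \delta$. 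The symmetric downward clamp to $\tau(t_{i+1})$ is handled analogously using the range extension at the right endpoint, which is available precisely when $i = \ell-1$ or $\ell = 2$; the minimum-edge-length property is used implicitly to guarantee $\tau(t_{i+1})-\tau(t_i) > \delta$, so the two clamping regions do not interfere.

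With $r$ in hand, $\distFr{r}{\tau[t_i,t_{i+1}]} \leq \delta$ via the identity reparametrization (same domain, pointwise within $\delta$), and $\distFr{r}{\sigma_i} = 0$ because both $r$ and $\sigma_i$ are continuous non-decreasing functions with common image $[\tau(t_i), \tau(t_{i+1})]$ and can be matched by equal-value points. The triangle inequality for the \Frechet pseudo-metric then yields $\distFr{\sigma_i}{\tau[t_i,t_{i+1}]} \leq \delta$, which by Observation~\ref{obs:fd:concat} delivers the lemma.
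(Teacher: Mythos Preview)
Your overall strategy is sound and genuinely different from the paper's. The paper builds a greedy \emph{discrete} matching: it walks through the vertices of $\tau[t_i,t_{i+1}]$ and advances a pointer on the signature edge by the rule $v\mapsto\max(v,w-\delta)$, verifying the bound edge by edge via \obsref{segments}. You instead construct a continuous monotone surrogate $r$ from the running-max/tail-min envelopes and conclude by the triangle inequality; this is a clean analytic alternative that avoids any vertex-level bookkeeping and makes the role of each signature property (direction-preserving for $h-l\le 2\delta$, range for the clamping cases) more transparent.

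There is, however, one concrete slip in the clamping argument. When the lower clamp is active you write
\[
h(t)\ \le\ l(t)+2\delta\ <\ \tau(t_i)+\delta,
\]
but the second inequality would require $l(t)<\tau(t_i)-\delta$, which contradicts the range bound $l(t)\ge\tau(t_i)-\delta$ you just invoked. The direction-preserving bound $h\le l+2\delta$ alone is too weak here. The fix is to use the clamp condition itself: lower clamp active means $\tfrac12\bigl(h(t)+l(t)\bigr)<\tau(t_i)$, hence $h(t)<2\tau(t_i)-l(t)\le 2\tau(t_i)-(\tau(t_i)-\delta)=\tau(t_i)+\delta$. Combined with $\tau(t)\ge l(t)\ge\tau(t_i)-\delta$, this confines $\tau(t)$ to $\range{\tau(t_i)}{\delta}$ as you wanted. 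The upper-clamp case is the mirror image. With this correction the proof goes through.
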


\begin{proof}
Let $t_1<\dots<t_l \in [0,1]$ be the series of parameter values of vertices on $\tau$ that describe $\sigma$. We construct a greedy matching between each signature edge $e_i=\overline{\tau(t_i)\tau(t_{i+1})}$ and the corresponding subcurve $\widehat{\tau}=\tau[t_i,t_{i+1}]$ of $\tau$. Assume first for simplicity that it holds $\tau(t_i)<\tau(t_{i+1})$ (i.e. the traversal of the signature is directed upwards at the time) and none of its endpoints are endpoints of $\tau$. We process the vertices of the subcurve $\widehat{\tau}$ while keeping a current position $v$ on the edge $e$. The idea is to walk as far as possible on $\widehat{\tau}$ while walking as little as possible on $e_i$. We initialize $v=\tau(t_i)$ and match the first vertex of $\widehat{\tau}$ to $v$. When processing a vertex $w$, we update $v$ to $\max(v,w-\delta)$ and match $w$ to the current position $v$ on $e_i$. By the direction-preserving condition in \defref{signature} and by \obsref{segments} every subcurve of $\widehat{\tau}$ is matched to a subsegment of $e_i$ within \Frechet distance $\delta$. If for the edge $e_i$ it holds that $\tau(t_i)>\tau(t_{i+1})$ (traversal directed downwards) the construction can be done symmetrically by walking backwards on $\widehat{\tau}$ and $e_i$. If the first vertex of $\widehat{\tau}$ is an endpoint of $\tau$, we start the above construction with the first vertex that lies outside the range $[\tau(0)-\delta,\tau(0)+\delta]$. The skipped vertices can be matched to $\tau(0)$. As for the remaining case if the last vertex of $\widehat{\tau}$ is an endpoint of $\tau$, we can again walk backwards on $\widehat{\tau}$ and $e_i$ and the case is analogous to the above.
\footnote{
Technically, the constructed matching is not strictly increasing. However, for any $\eps>0$ it can be perturbed slightly to obtain a proper bijection. The result is then obtained in the limit.
}
\end{proof}

\begin{lemma}\lemlab{nec:suff}
Let $\sigma=v_1,\dots,v_\lenClusters$ be a $\delta$-signature of $\tau=w_1,\dots,w_m$. Let $r_i=[v_i-\delta,v_i+\delta]$,  for $1\leq i \leq \lenClusters$, be ranges centered at the vertices of $\sigma$ ordered along $\sigma$. It holds for any curve $\pi$ if $\distFr{\tau}{\pi}\leq\delta$, then $\pi$ has a vertex in each range $r_i$, and such that these vertices appear on $\pi$ in the order of $i$.
\end{lemma}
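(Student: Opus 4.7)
The plan is to fix a matching $f\in\mathcal{H}$ that (approximately) realizes $\distFr{\tau}{\pi}\leq\delta$, and for each signature vertex $v_i=\tau(t_i)$ choose a parameter $s_i\in[0,1]$ with $f(s_i)=t_i$ (taking $s_1=0,s_\ell=1$). Then $|\pi(s_i)-v_i|\leq\delta$, so $\pi(s_i)\in r_i$, and $0=s_1\leq s_2\leq\cdots\leq s_\ell=1$ by monotonicity of $f$. The endpoints $s_1$ and $s_\ell$ are already vertices of $\pi$ in $r_1$ and $r_\ell$, so the real work is to produce, for each interior $i\in[2,\ell-1]$, a vertex of $\pi$ in $r_i$, and to verify the ordering.

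For each interior $v_i$, non-degeneracy forces $v_i$ to be a strict local extremum of $\sigma$; I treat the local maximum case, the minimum being symmetric. Let $[a_i,b_i]\subseteq[0,1]$ be the maximal closed interval containing $s_i$ on which $\pi\geq v_i-\delta$ (well-defined by continuity of $\pi$ and $\pi(s_i)\geq v_i-\delta$). I claim every point of $\pi$ on $[a_i,b_i]$ lies in $r_i$. The lower bound is by construction. For the upper bound, the minimum-edge-length property $|v_i-v_{i\pm1}|>2\delta$ (for fully interior indices) gives $\pi(s_{i\pm 1})\leq v_{i\pm 1}+\delta<v_i-\delta$, so $s_{i-1},s_{i+1}\notin[a_i,b_i]$ and therefore $[a_i,b_i]\subseteq(s_{i-1},s_{i+1})$; monotonicity of $f$ then yields $f([a_i,b_i])\subseteq[t_{i-1},t_{i+1}]$, and the range property of a $\delta$-signature gives $\tau\leq v_i$ on $[t_{i-1},t_{i+1}]$ (using, for the first and last edges, that the $\delta$-neighborhoods around $v_1$ and $v_\ell$ still lie below $v_i$ because of the minimum-edge-length bound). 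Combined with $|\pi-\tau\circ f|\leq\delta$, this yields $\pi\leq v_i+\delta$ throughout $[a_i,b_i]$. I then pick $q_i$ to be any vertex of $\pi$ in $[a_i,b_i]$: if there is no breakpoint of $\pi$ in $(a_i,b_i)$, then $\pi$ is linear on $[a_i,b_i]$ with $\pi(a_i)=\pi(b_i)=v_i-\delta$, hence constant $v_i-\delta$, and the slope change at $a_i$ supplies a vertex of $\pi$ whose value lies in $r_i$.

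The ordering $q_1\leq q_2\leq\cdots\leq q_\ell$ follows from pairwise disjointness of the intervals $[a_i,b_i]$. By non-degeneracy, consecutive interior extrema alternate between maxima and minima, so for $v_i$ max and $v_{i+1}$ min the set $\{t:\pi(t)\geq v_i-\delta\}$ containing $[a_i,b_i]$ is disjoint from the set $\{t:\pi(t)\leq v_{i+1}+\delta\}$ containing $[a_{i+1},b_{i+1}]$, precisely because $v_i-v_{i+1}>2\delta$ (the minimum-edge-length condition for an interior edge). Together with $s_i\in[a_i,b_i]$, $s_{i+1}\in[a_{i+1},b_{i+1}]$, and $s_i<s_{i+1}$, this forces $b_i<a_{i+1}$, hence $q_i<q_{i+1}$. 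The boundary orderings $q_1=0\leq q_2$ and $q_{\ell-1}\leq q_\ell=1$ are trivial.

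The main delicate point is the treatment of the indices $i\in\{2,\ell-1\}$, where the adjacent signature edge has length only $>\delta$ rather than $>2\delta$, so the interval $[a_2,b_2]$ might touch $0$ (and symmetrically $[a_{\ell-1},b_{\ell-1}]$ might touch $1$). In that case one checks directly that $\pi(0)\leq v_1+\delta<v_2$ combined with $\pi(0)\geq v_2-\delta$ (forced by $a_2=0$) already places $\pi(0)\in r_2$, so the endpoint vertex $q_1=0$ simultaneously serves as $q_2$, preserving the weak monotonicity of $q_1,\ldots,q_\ell$ and completing the proof.
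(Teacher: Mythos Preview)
Your proof is correct and follows essentially the same approach as the paper's: both fix a matching $f$ witnessing $\distFr{\tau}{\pi}\leq\delta$, locate points $\pi(s_i)\in r_i$ at the preimages of the signature vertices, and then use the local-extremum structure of $v_i$ together with the minimum-edge-length and range conditions to force a vertex of $\pi$ in $r_i$. The only cosmetic difference is that the paper phrases the last step as ``$\pi$ must change direction between visiting $r_{i-1}$ and $r_{i+1}$, and cannot overshoot $r_i$, so the turning vertex lies in $r_i$,'' whereas you make this explicit via the maximal interval $[a_i,b_i]$ on which $\pi$ stays on the correct side of $v_i-\delta$; these are equivalent formalizations of the same idea, and your handling of the boundary indices $i\in\{2,\ell-1\}$ matches the paper's separate treatment of those cases.
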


\begin{proof}
For any $i=\lbrace 3,\ldots \lenClusters-2\rbrace$ the vertices $v_{i-1}, v_i$ and $v_{i+1}$ satisfy that $|v_i-v_{i-1}|> 2\delta$ and $|v_{i+1}-v_{i}|> 2\delta$. This implies that $r_{i-1}\cap r_i = \emptyset$, $r_{i}\cap r_{i+1} = \emptyset$. 
Let $\pi(p_i)$ be the point matched to $v_i$ under a matching that witnesses $\distFr{\tau}{\pi}$ and $p_i\in [0,1]$, for all $1\leq i \leq \ell$. It holds that $p_1<p_2<\ldots <p_\ell$.
Therefore the curve $\pi$ visits the ranges $r_{i-1}$, $r_i$ and $r_{i+1}$ in the order of the index $i$. Since $v_i \notin \langle v_{i-1}, v_{i+1}\rangle $ the curve $\pi$ must change direction (from increasing to decreasing or vice versa) between visiting $r_{i-1}$ and $r_{i+1}$. Furthermore, $\pi$ cannot go beyond $r_i$ between visiting $r_{i-1}$ and $r_{i+1}$, i.e. there is no point $x\in\pi\left[p_{i-1}, p_{i+1} \right]$ such that it holds that $x\notin r_i$ and there is an ordering $v_{i-1}<v_i<v_i+\delta<x$ or $v_{i-1}>v_i>v_i-\delta>x$. This follows from $v_i$ being a local extremum on $\tau$. Therefore, the change of the direction of $\pi$ takes place in a vertex in $r_i$. 


For $i=2$ we use a similar argument. Note that $\pi(0)$ has to be matched to $v_1$ by the definition of the \Frechet distance. As before, $\pi$ has to visit the ranges $r_2$ and $r_3$ in this order and it holds that $r_2 \cap r_3 = \emptyset$. Either the first vertex of $\pi$ already lies in $r_2$, or again $\pi$ has to change direction and therefore needs to have a vertex in $r_2$. The case $i=\lenClusters-1$ is symmetric.  
The fact that the points $\tau(0)$ and $\tau(1)$ have to be matched to $\pi(0)$ and $\pi(1)$ closes the proof.
\end{proof}


The following is a direct implication of \lemref{nec:suff} and the minimum-edge-length condition in \defref{signature}, since $\sigma$ is a $\delta$-signature and there has to be at least one vertex in each of the ranges centered in vertices which are not endpoints of $\tau$. 
\begin{corollary}\corlab{nec:suff}
Let $\sigma$ be a signature of $\tau$ with $\lenClusters$ vertices and $\distFr{\sigma}{\tau}\leq \delta$. Then any curve $\pi$ with $\distFr{\pi}{\tau}\leq \delta$ needs to have at least $\lenClusters-2$ vertices.
\end{corollary}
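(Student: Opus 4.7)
The plan is to derive the $\ell-2$ lower bound directly by combining \lemref{nec:suff} with the minimum-edge-length condition (iii) of \defref{signature}, exactly as the lead-in to the corollary suggests.

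First I would apply \lemref{nec:suff} to the $\delta$-signature $\sigma = v_1,\ldots,v_\lenClusters$ of $\tau$ and to the curve $\pi$. Since $\distFr{\pi}{\tau}\leq \delta$, the lemma supplies vertices $u_1,u_2,\ldots,u_\lenClusters$ of $\pi$ with $u_i \in r_i = [v_i-\delta, v_i+\delta]$, appearing on $\pi$ in the order of the index $i$. The corollary then reduces to the combinatorial claim that among $u_2, u_3, \ldots, u_{\lenClusters-1}$ there are at least $\lenClusters - 2$ distinct vertices of $\pi$.

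Next I would use the minimum-edge-length condition: for every $i \in [2, \lenClusters-2]$ we have $|v_{i+1} - v_i| > 2\delta$. Since $r_i$ and $r_{i+1}$ are intervals of radius $\delta$ centered at $v_i$ and $v_{i+1}$ respectively, this strict inequality forces $r_i \cap r_{i+1} = \emptyset$ for all such $i$. Consequently, consecutive vertices $u_i$ and $u_{i+1}$ (for $i \in [2, \lenClusters-2]$) lie in disjoint ranges, so they cannot coincide as vertices of $\pi$. Iterating this observation along the sequence of inner indices yields that $u_2, u_3, \ldots, u_{\lenClusters-1}$ are pairwise distinct (their order along $\pi$ being strictly increasing once consecutive pairs are distinct), giving the desired $\lenClusters-2$ distinct vertices of $\pi$.

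There is no real obstacle here; the corollary is essentially an arithmetic consequence of the two ingredients above. The only subtlety worth flagging is why we do not obtain the sharper bound of $\lenClusters$ vertices: for the boundary edges, the definition only guarantees $|v_2 - v_1| > \delta$ and $|v_\lenClusters - v_{\lenClusters-1}| > \delta$, so the ranges $r_1$ and $r_2$ (respectively $r_{\lenClusters-1}$ and $r_\lenClusters$) may overlap and $u_1 = u_2$ (respectively $u_{\lenClusters-1} = u_\lenClusters$) is not excluded. This is why the argument only guarantees distinctness for the $\lenClusters - 2$ interior ranges, matching the statement of the corollary.
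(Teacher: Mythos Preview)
Your proposal is correct and follows exactly the approach the paper sketches in the sentence preceding the corollary: apply \lemref{nec:suff} to obtain ordered vertices of $\pi$ in the ranges $r_i$, then use the minimum-edge-length condition (iii) of \defref{signature} to see that the $\lenClusters-2$ interior ranges $r_2,\dots,r_{\lenClusters-1}$ are consecutively disjoint, so the corresponding vertices of $\pi$ are distinct. Your added remark explaining why the boundary ranges do not yield the sharper bound $\lenClusters$ is a nice clarification that the paper leaves implicit.
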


\begin{theorem}\thmlab{remove:one}
Let $\sigma=v_1,\dots,v_\lenClusters$ be a $\delta$-signature of $\tau=w_1,\dots,w_m$.
Let $r_j=[v_j-\delta,v_j+\delta]$ be ranges centered at the vertices of $\sigma$
ordered along $\sigma$, where $r_1=[v_1-4\delta,v_1+4\delta]$ and
$r_\lenClusters=[v_\lenClusters-4\delta,v_\lenClusters+4\delta]$. Let $\pi$ be a curve with
$\distFr{\tau}{\pi} \leq \delta$ and let $\pi'$ be a curve obtained by removing 
some vertex $u_i=\pi(p_i)$ from $\pi$ with $u_i \notin \bigcup_{1\leq j \leq \lenClusters} r_j$. 
It holds that $\distFr{\tau}{\pi'} \leq \delta$. 
\end{theorem}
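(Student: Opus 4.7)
Let $f \in \mathcal{H}$ be a matching realizing $\distFr{\tau}{\pi} \le \delta$ and set $q_j = f(p_j)$ for $j \in \{i-1, i, i+1\}$. Decomposing $\pi' = \pi[0, p_{i-1}] \oplus \overline{u_{i-1} u_{i+1}} \oplus \pi[p_{i+1}, 1]$ and $\tau = \tau[0, q_{i-1}] \oplus \tau[q_{i-1}, q_{i+1}] \oplus \tau[q_{i+1}, 1]$, applying \obsref{fd:concat} twice, and using that the outer pieces are matched within $\delta$ by $f$, the claim reduces to
\[
\distFr{\overline{u_{i-1} u_{i+1}}}{\tau[q_{i-1}, q_{i+1}]} \le \delta.
\]

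Because $|u_i - \tau(q_i)| \le \delta$ and $u_i \notin r_k$ for any $k$, the value $\tau(q_i)$ cannot coincide with any signature vertex $v_k$, so $q_i$ lies strictly inside some signature interval $(t_j, t_{j+1})$. By the range property (\defref{signature}(iv)), $\tau(q_i) \in \langle v_j, v_{j+1} \rangle$ in the interior case (widened by $\delta$ at the endpoints), hence $u_i \in [\min(v_j, v_{j+1}) - \delta, \max(v_j, v_{j+1}) + \delta]$. The condition $u_i \notin r_j \cup r_{j+1}$ then pins $u_i$ strictly inside the open gap between the two signature values with margin $\delta$; assuming WLOG $v_j < v_{j+1}$, we have $u_i \in (v_j + \delta, v_{j+1} - \delta)$. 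The endpoint cases $j \in \{1, \ell-1\}$ are absorbed by the enlarged $4\delta$-ranges $r_1$ and $r_\ell$, which account for the additional $\delta$ of slack in \defref{signature}(iv) at the boundary.

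I would then construct the required matching by parametrizing the shortcut edge as $e(s) = (1-s) u_{i-1} + s u_{i+1}$ and defining a continuous non-decreasing $h \colon [q_{i-1}, q_{i+1}] \to [0, 1]$ with $h(q_{i-1}) = 0$ and $h(q_{i+1}) = 1$, such that $|e(h(q)) - \tau(q)| \le \delta$ for every $q$. The construction splits at $q_i$ and on each half greedily picks $h(q)$ so that $e(h(q))$ is the point on the edge closest in value to $\tau(q)$, subject to monotonicity. The verification uses the direction-preserving property (\defref{signature}(ii)) to bound the range of $\tau$-values in terms of $\tau(q_{i-1})$ and $\tau(q_{i+1})$, both of which are within $\delta$ of the edge's endpoints.

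The main obstacle is the case in which $u_i$ is a local extremum of $\pi$, because $\tau$ can then attain values near $u_i$ at $q_i$ that threaten to exceed the value range of the shortcut edge by more than $\delta$. The resolution combines two ingredients. First, \lemref{nec:suff} together with monotonicity of $f$ forbids $\tau[q_{i-1}, q_{i+1}]$ from containing any signature vertex in its open interior: such a crossing would require a $\pi$-vertex strictly interior to $\pi[p_{i-1}, p_{i+1}]$ lying in the corresponding $r_k$, but $u_i$ is the only interior candidate and lies outside every range by hypothesis. Consequently $\tau[q_{i-1}, q_{i+1}]$ is confined to the single signature segment $[t_j, t_{j+1}]$. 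Second, the direction-preserving property restricts the variation of $\tau$ on this confined subcurve to at most $2\delta$ of backtracking, which together with the bounds $|\tau(q_{i\pm1}) - u_{i\pm1}| \le \delta$ keeps the attained $\tau$-values within $\delta$ of the shortcut edge's value range and allows the greedy construction of $h$ to succeed.
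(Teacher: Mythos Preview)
Your reduction to $\distFr{\overline{u_{i-1}u_{i+1}}}{\tau[q_{i-1},q_{i+1}]} \le \delta$ is too strong; this inequality can fail even when the theorem's conclusion holds. Take $\delta=1$, $\tau = 20,0,5,3.1,10,-10$ (so the $1$-signature is $20,0,10,-10$), and $\pi = 20,0,6,2.1,5,10,-10$. One checks $\distFr{\tau}{\pi}=1$ by matching $6\leftrightarrow 5$, $2.1\leftrightarrow 3.1$, $5\leftrightarrow 4$. The vertex $u_i=2.1$ lies in none of the ranges $[16,24],[-1,1],[9,11],[-14,-6]$. With your $q_{i\pm1}=f(p_{i\pm1})$, the subcurve $\tau[q_{i-1},q_{i+1}]$ runs $5\to 3.1\to 4$ and attains the value $3.1$, while the shortcut edge $\overline{6,5}$ never drops below $5$; hence $\distFr{\overline{6,5}}{\tau[q_{i-1},q_{i+1}]}\ge 1.9>\delta$. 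Yet $\distFr{\tau}{\pi'}\le 1$ does hold --- but only via a matching that \emph{alters $f$ on the neighbouring edge of $\pi$}: one lets the point $4.1$ on the edge $0\to 6$ of $\pi'$ absorb the entire wiggle $5\to 3.1$ of $\tau$. Your decomposition via \obsref{fd:concat} freezes the matching on $\pi[0,p_{i-1}]$ and $\pi[p_{i+1},1]$ and thus cannot do this. The paper's proof is built precisely around such ``borrowing'': it introduces auxiliary parameters $x<p^{-}$ on $\pi$ and $y>f(p^{+})$ on $\tau$ (and in harder cases $q>p^{+}$, an iterated sequence $x^{(a)},y^{(a)}$, etc.) and rematches the enlarged pieces. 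This is why the argument is long.

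A second, independent gap: your confinement claim that $\tau[q_{i-1},q_{i+1}]$ cannot contain a signature vertex does not follow from \lemref{nec:suff}. That lemma guarantees $\pi$ has a vertex in each $r_k$ in order along $\pi$, but says nothing about where $f$ sends those vertices; the $\pi$-vertex in $r_k$ may well be $u_{i-1}$ or $u_{i+1}$ themselves, with $f$ mapping them just before or after $s_k$. The paper in fact shows that the signature minimum $s_j$ cannot lie in the broken interval, but explicitly allows the signature maximum $s_{j+1}$ to do so and devotes separate subcases to it.
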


In order to prove this theorem, we have the following lemma, which is a slight variation of the Theorem and it simplifies the case when the \Frechet distance is obtained in the limit.

\begin{lemma}\lemlab{remove:one}
Let $\sigma=v_1,\dots,v_\lenClusters$ be a $\delta$-signature of $\tau=w_1,\dots,w_m$.
Let $r_j=[v_j-\delta,v_j+\delta]$ be ranges centered at the vertices of $\sigma$
ordered along $\sigma$, where $r_1=[v_1-4\delta,v_1+4\delta]$ and
$r_\lenClusters=[v_\lenClusters-4\delta,v_\lenClusters+4\delta]$. Let $\pi$ be a curve with
$\distFr{\tau}{\pi} <\delta$ and let $\pi'$ be a curve obtained by removing 
some vertex $u_i=\pi(p_i)$ from $\pi$ with $u_i \notin \bigcup_{1\leq j \leq \lenClusters} r_j$. 
For any $\eps>0$, it holds that $\distFr{\tau}{\pi'} \leq \delta+\eps$. 
\end{lemma}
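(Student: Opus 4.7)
The plan is to construct an explicit matching $f'$ between $\pi'$ and $\tau$ of cost at most $\delta+\eps$ by modifying the matching $f$ (witnessing $\distFr{\pi}{\tau}<\delta$) in a neighborhood of the removed vertex $u_i$. Set $q_i=f^{-1}(p_i)$ and $y^*=\tau(q_i)$, so $|u_i-y^*|<\delta$. Since $u_i\notin\bigcup_j r_j$, the parameter $q_i$ lies in the interior of some signature interval $[t_j,t_{j+1}]$. Assume WLOG that the signature edge is upward ($v_j<v_{j+1}$). The range and minimum-edge-length properties of \defref{signature}, together with $u_i\notin r_j\cup r_{j+1}$, force $u_i\in(v_j+\delta,v_{j+1}-\delta)$ and $y^*\in(v_j,v_{j+1})$; the direction-preserving property bounds $\tau$'s deviation from monotone on $[t_j,t_{j+1}]$ by $2\delta$. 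The enlarged $4\delta$ endpoint ranges $r_1,r_\lenClusters$ absorb the weaker Definition conditions at the two signature endpoints.

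If $u_i\in[\min(u_{i-1},u_{i+1}),\max(u_{i-1},u_{i+1})]$, then the three neighboring vertices are collinear in $\Re$ and $\pi'$ has the same image as $\pi$, so a reparametrization of $f$ directly witnesses $\distFr{\pi'}{\tau}<\delta$. Otherwise $u_i$ is a strict local extremum; WLOG a local maximum with $u_{i-1},u_{i+1}<u_i$ (the local-minimum case is symmetric). By \lemref{nec:suff}, $\pi'$ contains a vertex $u_{a_{j+1}}\in r_{j+1}$ following $u_{i+1}$, so $u_{a_{j+1}}\geq v_{j+1}-\delta>u_i$. Let $y_\pm=\tau(q_{i\pm 1})$ and $M=\max_{t\in[q_{i-1},q_{i+1}]}\tau(t)$. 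Monotonicity of $f$ together with the fact that $\pi$ restricted to $[p_{i-1},p_{i+1}]$ tops out at $u_i$ gives $M\leq u_i+\delta$, while the direction-preserving property gives $M\leq y_++2\delta$.

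The matching $f'$ is defined piecewise. Outside $[q_{i-1},q_{i+1}]$, $f'$ inherits $f$ (reparametrized for $\pi'$) and contributes pointwise distance below $\delta$. Inside, since $\overline{u_{i-1}u_{i+1}}$ alone cannot reach the value $M$, we let $f'$ advance along the subsequent edge $\overline{u_{i+1}u_{a_{j+1}}}$ of $\pi'$ to a balanced intermediate value $z^*=\max(u_{i+1},(M+y_+)/2)$ while $\tau$ climbs to $M$, hold at $z^*$ while $\tau$ descends back to $y_+$, and finally resume value-based matching. Using $M-y_+\leq 2\delta$, the pointwise distance during this phase is $\max(|M-z^*|,|z^*-y_+|)\leq(M-y_+)/2\leq\delta$. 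The traversal of $\overline{u_{i-1}u_{i+1}}$ is controlled analogously on the $u_{i-1}$-side using direction-preserving and $|u_{i\pm 1}-y_\pm|<\delta$.

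The main obstacle is verifying that $z^*$ actually fits on the edge $\overline{u_{i+1}u_{a_{j+1}}}$ in boundary cases: when $M$ is close to $v_{j+1}$ and $u_{a_{j+1}}$ sits on the lower end of $r_{j+1}$, the required $z^*$ could overshoot the edge, but the inequality $M-u_{a_{j+1}}\leq\delta$ (from $M\leq v_{j+1}$ and $u_{a_{j+1}}\geq v_{j+1}-\delta$) absorbs this shortfall into the $\delta+\eps$ bound; when the witness $u_{a_{j+1}}$ (or a corresponding $u_{a_j}$ on the other side) coincides with one of the signature endpoints $v_1,v_\lenClusters$, the enlarged $4\delta$ endpoint ranges provide the needed slack. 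The $\eps$ slack afforded by the strict inequality $\distFr{\pi}{\tau}<\delta$ and by the infimum in the definition of the Frechet distance handles the remaining limit effects.
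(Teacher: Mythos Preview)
Your proposal contains a genuine structural gap. The central step of your construction is to ``advance along the subsequent edge $\overline{u_{i+1}u_{a_{j+1}}}$ of $\pi'$'' to reach the intermediate value $z^*$, but $\overline{u_{i+1}u_{a_{j+1}}}$ is in general \emph{not} an edge of $\pi'$. \lemref{nec:suff} only guarantees that some vertex $u_{a_{j+1}}$ of $\pi$ lies in $r_{j+1}$; there may be arbitrarily many vertices of $\pi$ (hence of $\pi'$) between $u_{i+1}$ and $u_{a_{j+1}}$, and the subcurve $\pi'[p_{i+1},p_{a_{j+1}}]$ can dip well below $u_{i+1}$ before reaching $z^*$. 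While $\pi'$ traverses such a dip, $\tau$ (which in your scheme is climbing toward $M$) is not within $\delta$ of $\pi'$, and the matching fails. The same difficulty reappears in your vague ``resume value-based matching'' step: once you have consumed part of $\pi'$ beyond $p_{i+1}$ to cover the peak of $\tau$, you can no longer simply inherit $f$ on the suffix, and what remains of $\pi'$ may not line up with the remaining suffix of $\tau$ at cost $\leq\delta$. This is precisely the phenomenon that forces the paper into its elaborate case analysis (\caseref{level}--\caseref{matroska}, with the iterative ``matroska'' scheme and six subcases), and your single-paragraph local fix does not address it.

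There are two subsidiary problems. First, your bound $M\leq y_{+}+2\delta$ relies on the direction-preserving property on $[t_j,t_{j+1}]$, which requires $q_{i+1}\in[t_j,t_{j+1}]$; you only argued $q_i\in(t_j,t_{j+1})$, and indeed $q_{i+1}$ may lie beyond $t_{j+1}$ (the paper handles this via \claimref{descent:after:sj1} and \claimref{no:minimum}). Second, your appeal to \lemref{nec:suff} for $\pi'$ is circular as stated (the lemma assumes $\distFr{\tau}{\pi'}\leq\delta$, which is what you are proving); you should instead apply it to $\pi$ and note that $u_i\notin\bigcup_j r_j$ preserves the witnesses in $\pi'$, and you would still need to argue that the witness in $r_{j+1}$ occurs \emph{after} $u_{i+1}$.

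For comparison, the paper proves this lemma by choosing auxiliary parameters $x$ (on $\pi$, going \emph{backward} from the removed subcurve) and $y$ (on $\tau$, going forward), proving that the cross-matchings $\pi(x)\Leftrightarrow\tau[f(x),y]$ and $\pi[x,p^-]\Leftrightarrow\tau(y)$ are valid, and then doing a careful case split on the behavior of $\pi$ after $p^+$ relative to the levels $\pi(x)$ and $\pi(p_{\max})$. Your idea of borrowing from the adjacent portion of $\pi'$ is in the same spirit, but the paper's argument shows that controlling the interplay between the two curves after the repaired region is the hard part, and it does not reduce to a single local edge adjustment.
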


We obtain the \thmref{remove:one} from \lemref{remove:one} as follows. 
\begin{proof}[Proof of \thmref{remove:one}]
By the theorem statement, we are given $\tau,\pi$ and $\delta$, such that 
$\distFr{\tau}{\pi} \leq \delta$. By the definition of the \Frechet distance
 it holds for any $\eps > 0$ that $\distFr{\tau}{\pi} < \delta+\eps$. 
Let $\delta'=\delta+\eps$ for some $\eps>0$ small enough such that: 
\begin{compactenum}[(i)]
\item the $\delta$-signature of $\tau$ is equal to the $\delta'$-signature of $\tau$ (see also \lemref{canonical:signature} for the existence of such a signature), and
\item any vertex $u_i$ of $\pi$ satisfying the conditions in \thmref{remove:one} also 
satisfies the conditions of \lemref{remove:one} for $\delta'$.
\end{compactenum}
Now we can apply \lemref{remove:one} using $\delta'$, implying that $\distFr{\tau}{\pi'} \leq \delta+2\eps$.
Since this is implied for any $\eps > 0$ small enough, we have
$\lim_{\eps \rightarrow 0} \distFr{\tau}{\pi'} 
\leq \lim_{\eps\rightarrow 0} \left( \delta+2\eps \right) = \delta$. 
\end{proof}

\begin{proof}[Proof of \lemref{remove:one}]
Let $f$ denote the witness matching from $\pi$ to $\tau$, that maps each point on $\pi$ to a point on $\tau$ within distance $\delta$.\footnote{The existence of such a matching $f$ is ensured since $\distFr{\tau}{\pi} <\delta$.} Intuitively, we removed $u_i$ and its incident edges from $\pi$ by replacing the incident edges with a new ``edge'' connecting the two subcurves which were disconnected by the edge removal. The obtained curve is called $\pi'$. We want to construct a matching $f'$ from $\pi'$ to $\tau$ based on $f$ to show that their \Frechet distance is at most $\delta$. 

Because of the continuity of the curves, 
we have to describe the ``edge'' connecting disconnected parts. Let $\pi(p_{i-1})$ and $\pi(p_{i+1})$ be the endpoints of the disconnected components. Let $\pi[p^{-},p^{+}]$ denote the subcurve by which $\pi$ and $\pi'$ differ.  In particular, $p^{-}$ and $p^{+}$ are such that $\pi'$ can be written as a concatenation of a prefix and a suffix curve of $\pi$: 
$\pi'= \pi[0,p^{-}] \oplus \pi[p^{+},1]$ 
and $p_i$ is contained in the open interval $(p^{-},p^{+})$. Note that $\pi(p^{-})=\pi(p^{+})$.  Furthermore, it is clear that $\pi[p^{-},p^{+}]$ consists of two edges with $u_i$ being the minimum or maximum connecting them. (Otherwise, if $u_i$ was neither a minimum nor a maximum on $\pi$, then $\pi[p^{-},p^{+}]$ is empty. In this case the claim holds trivially.) 

The new ``edge'' $\pi'[p_{i-1}, p_{i+1}]$ consists of three parts: the edge $\pi[p_{i-1}, p^-]$, the point $\pi[p^-]$ and the edge $\pi[p^+, p_{i+1}]$. This is illustrated by \figref{vertexremovalexample}.

\begin{figure}[h]\centering
\includegraphics[page=1]{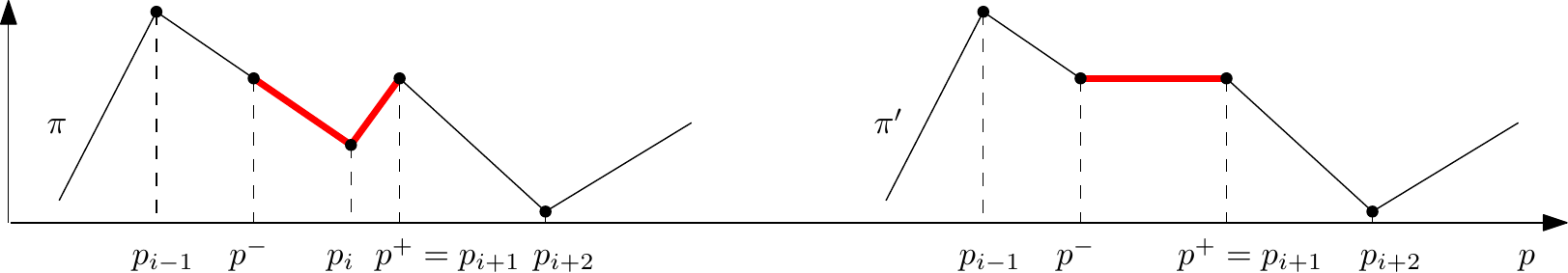}\\
\caption{The removal of the vertex $\pi(p_i)$ from $\pi$. The curve $\pi[p^-,p^+]$ is marked red}
\figlab{vertexremovalexample}
\end{figure}

In the construction of $f'$ we need to show that the subcurve $\tau[f(p^{-}),f(p^{+})]$, which was matched by $f^{-1}$ to the missing part, can be matched to some subcurve of $\pi'$, while respecting the monotonicity of the matching. The proof is a case analysis based on the structure of the two curves.  In order to focus on the essential arguments, we first make some global assumptions stated below.  The first two assumptions can be made without loss of generality. We also introduce some basic notation which is used throughout the rest of the proof.
\begin{assumption}
We assume that $\pi(p_i)$ is a local minimum on $\pi$ (otherwise we 
first mirror the curves $\tau$ and $\pi$ across the horizontal time axis to obtain
this property without changing the \Frechet distance).
\end{assumption}

Let $z_{\min}= \argmin_{t \in [f(p^{-}),f(p^{+})]} \tau(t)$. 
Let $\tau[s_j,s_{j+1}]$ be the subcurve of $\tau$ bounded by two
consecutive signature vertices, such that $z_{\min} \in [s_j,s_{j+1}]$.

\begin{assumption}
We assume that $\tau(s_j) < \tau(s_{j+1})$ (otherwise we first reparametrize
the curves $\tau$ and $\pi$ with $\phi(t)=1-t$, i.e., reverse the direction of
the time axis, to obtain this property without changing the \Frechet distance;
note that this does not change the property of $\pi(p_i)$ being a local
minimum).
\asslab{sign:edge:asc}
\end{assumption}

\begin{assumption}
We assume that neither $s_j=0$, nor $s_{j+1}=1$ (These are boundary cases which
will be handled at the end of the proof).
\asslab{inner:sig:edge}
\end{assumption}

\begin{property}[Signature]
By \defref{signature} we can assume that 
\begin{compactenum}[(i)] 
\item $\tau(s_{j+1}) - \tau(s_{j}) > 2\delta$,\label{item:sig:length}
\item $\tau(s_{j}) = \minSubC{\tau}{t}{s_{j-1}}{s_{j+1}}$,\label{item:sig:sj:min}
\item $\tau(s_{j+1}) = \maxSubC{\tau}{t}{s_j}{s_{j+2}}$,\label{item:sig:sj1:max}
\item $\tau(t) \geq \tau(t')-2\delta$ for $s_{j}\leq t' < t \leq
s_{j+1}$.\label{item:sig:descent}
\item $\tau(s_{j+1}) - \tau(s_{j+2}) > 2\delta$,\label{item:sig:length:2}
\end{compactenum}
By the general position assumption the minimum $\tau(s_{j})$ and the maximum
$\tau(s_{j+1})$ are unique on their respective subcurves.
\proplab{signature}
\end{property}

\begin{property}[\Frechet]
Any two points matched by $f$ have distance at most $\delta$ from each other.
In particular, for any two $0\leq p' < p \leq 1$, it holds that
\begin{compactenum}[(i)] 
\item $\tau(f(p)) -\delta \leq \pi(p) 
\leq \tau(f(p)) + \delta$,\label{item:frechet:dist}
\item $\minSubC{\tau}{t}{f(p')}{f(p)} - \delta
\leq \minSubC{\pi}{p}{p'}{p} 
\leq \minSubC{\tau}{t}{f(p')}{f(p)} + \delta$,\label{item:frechet:min}
\item $\maxSubC{\tau}{t}{f(p')}{f(p)} - \delta
\leq \maxSubC{\pi}{p}{p'}{p} 
\leq \maxSubC{\tau}{t}{f(p')}{f(p)} + \delta$.\label{item:frechet:max}
\end{compactenum} 
\proplab{frechet}
\end{property}

Our proof is structured as case analysis. We consider first the case $\tau(z_{\min}) \geq \pi(p^{-}) - \delta$.
This is illustrated by \figref{shortcutting:case1}. 

\begin{case}[Trivial case]
$\tau(z_{\min}) \geq \pi(p^{-}) - \delta$
\caselab{trivial}
\end{case}
\begin{figure}[h]\centering
\includegraphics[page=1]{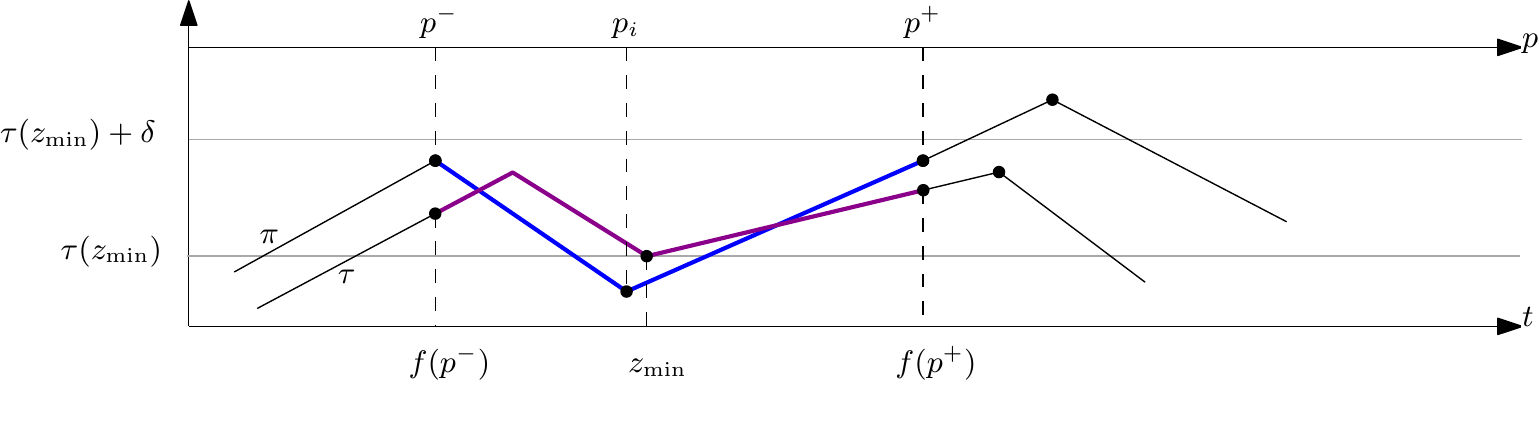}\\
\caption{Example of \caseref{trivial}. The broken part of the matching $f$ is indicated by fat lines.}
\figlab{shortcutting:case1}
\end{figure}
\begin{proof}[Proof of \caseref{trivial}]
As a warm-up exercise we quickly check that the above case is indeed trivial. In
this case, we would simply match $\pi(p^{-})$ to the subcurve
$\tau[f(p^{-}),f(p^{+})]$ and the remaining subcurves $\pi[0,p^{-}]$ and
$\pi[p^{+},1]$ can be matched as done by $f$.\footnote{Note that the constructed
matching is not a bijection. However, for any $\eps>0$, it can be perturbed to obtain a 
proper bijection.} Indeed, 
\begin{equation*}
\setSubC{\tau}{t}{f(p^{-})}{f(p^{+})} \subseteq \range{\pi(p^{-})}{\delta},
\end{equation*}
since by the case distinction 
\begin{equation*}
\minSubC{\tau}{t}{f(p^{-})}{f(p^{+})} = \tau(z_{\min})   \geq \pi(p^{-}) - \delta
\end{equation*}
and by \propref{frechet}, 
\begin{equation*}
\maxSubC{\tau}{t}{f(p^{-})}{f(p^{+})}  \leq  \maxSubC{\pi}{p}{p^{-}}{p^{+}} +\delta = \pi(p^{-}) +\delta.
\end{equation*}
\end{proof}

\begin{assumption}
We assume in the rest of the proof that $\tau(z_{\min}) < \pi(p^{-}) - \delta$ (non-trivial case).
\asslab{nontrivialcase}
\end{assumption}

Intuitively, we want to extend the subcurves of the
trivial case in order to fix the broken matching. The difficulty lies in finding
suitable subcurves which cover the broken part $\tau[f(p^{-}),f(p^{+})]$ and
whose \Frechet distance is at most $\delta$. Furthermore, the endpoints need to
line up suitably such that we can re-use $f$ for the suffix and prefix curves.

The next two claims settle the question, to which extent signature vertices can be
included in the subcurve $\tau[f(p^{-}),f(p^{+})]$ for which we need to fix the broken matching. 

\begin{claim}
If $s_{j+1} \in [f(p^{-}),f(p^{+})]$ then
$\setSubC{\tau}{t}{s_{j+1}}{f(p^{+})} \subseteq [\tau(s_{j+1})-2\delta,
\tau(s_{j+1})]$.
\claimlab{descent:after:sj1}
\end{claim}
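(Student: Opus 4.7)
The plan is to reduce both the upper bound $\tau(t) \leq \tau(s_{j+1})$ and the lower bound $\tau(t) \geq \tau(s_{j+1})-2\delta$ on $[s_{j+1}, f(p^+)]$ to the single structural statement $f(p^+) \leq s_{j+2}$. The upper bound then follows immediately from \propref{signature}\ref{item:sig:sj1:max}, which states that $\tau(s_{j+1})$ is the maximum of $\tau$ on $[s_j, s_{j+2}]$. The lower bound follows by a first-crossing argument: if $\tau$ dipped below $\tau(s_{j+1})-2\delta$ at some $t^{*} \in (s_{j+1}, f(p^+)]$, then \propref{frechet}(\ref{item:frechet:dist}) would give $\pi(f^{-1}(t^{*})) \leq \tau(s_{j+1})-\delta$, and the V-shape of $\pi[p^-,p^+]$ around its minimum $u_i$ would yield a contradiction analogous to the one used to prove $f(p^+) \leq s_{j+2}$ below.

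A preliminary estimate needed throughout is $\pi(p^+) = \pi(p^-) \geq \tau(s_{j+1})-\delta$. Writing $p_{\max} := f^{-1}(s_{j+1}) \in [p^-, p^+]$, \propref{frechet}(\ref{item:frechet:dist}) yields $\pi(p_{\max}) \geq \tau(s_{j+1})-\delta$. Since $\pi[p^-,p^+]$ is V-shaped with its unique minimum $u_i$ at parameter $p_i$, the common endpoint value $\pi(p^+) = \pi(p^-)$ dominates $\pi(p_{\max})$ irrespective of whether $p_{\max}$ lies before or after $p_i$. Consequently, \propref{frechet}(\ref{item:frechet:dist}) applied at $p^+$ yields $\tau(f(p^+)) \geq \pi(p^+)-\delta \geq \tau(s_{j+1})-2\delta$, the anchor value for the first-crossing argument.

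To prove $f(p^+) \leq s_{j+2}$, suppose for contradiction that $f(p^+) > s_{j+2}$. Then $s_{j+2} \in (s_{j+1}, f(p^+))$, and by \propref{signature}\ref{item:sig:length:2} together with \propref{frechet}(\ref{item:frechet:dist}), any $q''$ with $f(q'') = s_{j+2}$ must satisfy $\pi(q'') < \tau(s_{j+1})-\delta$. Monotonicity of $f$ places $q'' \in (p_{\max}, p^+)$. In the \emph{ascending-side} case $p_{\max} \geq p_i$, the curve $\pi$ is monotonically increasing on $[p_{\max}, p^+]$, so $\pi(q'') \geq \pi(p_{\max}) \geq \tau(s_{j+1})-\delta$, an immediate contradiction.

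The \emph{descending-side} case $p_{\max} < p_i$ is the main obstacle, since $\pi$ is not monotonic on $[p_{\max}, p^+]$ and descends through the low value $u_i$. Here I would invoke the hypothesis $u_i \notin r_{j+2}$ together with \lemref{nec:suff} applied to the signature vertex $s_{j+2}$, which forces $\pi$ to contain a distinct local-minimum vertex inside $r_{j+2}$. Combining this with the defining property of $z_{\min}$ as the minimum of $\tau$ on $[f(p^-), f(p^+)]$ that is stipulated to lie in $[s_j, s_{j+1}]$---which simultaneously forces $\tau(z_{\min}) \leq \tau(s_{j+2})$ (from $s_{j+2}$ being in the range) and $\tau(z_{\min}) \geq \tau(s_j)$ (from \propref{signature}\ref{item:sig:sj:min} applied through the range condition on the ascending edge $[s_j, s_{j+1}]$)---and threading these constraints through the matching of $\pi[p_{\max}, p_i]$ onto the descending portion of $\tau$ just past $s_{j+1}$ should yield the required contradiction.
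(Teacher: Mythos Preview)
Your preliminary estimate $\pi(p^{+})\geq \tau(s_{j+1})-\delta$ and the ascending-side case ($f^{-1}(s_{j+1})\geq p_i$) are correct. The descending-side case ($f^{-1}(s_{j+1})<p_i$), however, has a genuine gap. You write that invoking $u_i\notin r_{j+2}$ together with \lemref{nec:suff} ``should yield the required contradiction,'' but this line does not close: \lemref{nec:suff} only guarantees that $\pi$ has \emph{some} vertex in $r_{j+2}$, with no control tying that vertex to the V-shaped piece $\pi[p^{-},p^{+}]$. The inequalities you extract on $\tau(z_{\min})$, namely $\tau(s_j)\leq \tau(z_{\min})\leq \tau(s_{j+2})$, are mutually compatible and do not contradict $f(p^{+})>s_{j+2}$. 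Your first-crossing argument for the lower bound defers to the same unresolved case, so the proposal is incomplete on both fronts.

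The observation you are missing---and which the paper uses---is this: in the descending-side case $\pi$ is monotone decreasing on $[p^{-},f^{-1}(s_{j+1})]$, so
\[
\min\bigl(\pi[p^{-},f^{-1}(s_{j+1})]\bigr)=\pi\bigl(f^{-1}(s_{j+1})\bigr)\geq \tau(s_{j+1})-\delta.
\]
Applying \propref{frechet}(\ref{item:frechet:min}) on this interval, and using that $z_{\min}\in[f(p^{-}),s_{j+1}]$, gives $\tau(z_{\min})=\min\bigl(\tau[f(p^{-}),s_{j+1}]\bigr)\geq \tau(s_{j+1})-2\delta$. Since $z_{\min}$ is the global minimiser of $\tau$ on $[f(p^{-}),f(p^{+})]$, the lower bound $\min\bigl(\tau[s_{j+1},f(p^{+})]\bigr)\geq \tau(s_{j+1})-2\delta$ follows at once, \emph{without} first proving $f(p^{+})\leq s_{j+2}$. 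The paper in fact runs the logic in the opposite order to yours: it establishes the lower bound directly via the two-case split on the position of $f^{-1}(s_{j+1})$ relative to $p_i$, then deduces $s_{j+2}\notin[f(p^{-}),f(p^{+})]$ from \propref{signature}(\ref{item:sig:length:2}), and only then reads off the upper bound from \propref{signature}(\ref{item:sig:sj1:max}). The detour through $f(p^{+})\leq s_{j+2}$ is unnecessary, and the hypothesis $u_i\notin r_{j+2}$ plays no role in this claim.
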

\begin{proof}
We have to prove that
\[\min(\tau[s_{j+1},f(p^+)])\geq \tau(s_{j+1})-2\delta \text{ and } \max(\tau[s_{j+1},f(p^+)])\leq \tau(s_{j+1}).\]

The subcurve $\pi[p^-, p^+]$ consists of two edges $\overline{\pi(p^-)\pi(p_i)}$ and $\overline{\pi(p_i)\pi(p^+)}$  and $\pi(p_i)$ is the minimum of the subcurve. For the lower bound we distinguish two cases: $p_i\leq f^{-1}(s_{j+1})\leq p^+$ and $p^-\leq f^{-1}(s_{j+1}) < p_i$. 

If $p_i\leq f^{-1}(s_{j+1})\leq p^+$, then by \propref{frechet}
\[\min(\tau[s_{j+1},f(p^+)])\geq \min(\pi[f^{-1}(s_{j+1}), p^+])-\delta = \pi(f^{-1}(s_{j+1})) -\delta \geq \tau(s_{j+1})-2\delta.\]

If $p^-\leq f^{-1}(s_{j+1}) < p_i$, then since $z_{\min}<s_{j+1}$ it holds that
\[\tau(z_{\min}) = \min(\tau[f(p^-),s_{j+1}]) \geq \min(\pi[p^-,f^{-1}(s_{j+1})]) - \delta = \pi(f^{-1}(s_{j+1})) - \delta \geq \tau(s_{j+1})-2\delta.\]
It follows that 
\[\min(\tau[s_{j+1},f(p^+)])\geq \min(\tau[f(p^-), f(p^+)]) =\tau(z_{\min}) \geq \tau(s_{j+1})-2\delta\] 
as claimed.

%
%

Furthermore, by \propref{signature}(\ref{item:sig:length:2}) it follows that
\[\min(\tau[s_{j+1},f(p^+)])\geq \tau(s_{j+1})-2\delta > \tau(s_{j+2}),\]
and therefore   $s_{j+2}\notin [f(p^{-}), f(p^{+})]$.

Now, \propref{signature}(\ref{item:sig:sj1:max}) implies the upper bound:
$ \maxSubC{\tau}{t}{s_{j+1}}{f(p^{+})} \leq \tau(s_{j+1}).$

\end{proof}

\begin{claim}
$s_j \notin [f(p^{-}),f(p^{+})]$
\claimlab{no:minimum}
\end{claim}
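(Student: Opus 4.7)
The plan is to argue by contradiction: suppose $s_j \in [f(p^{-}), f(p^{+})]$ and derive that the removed vertex $u_i = \pi(p_i)$ is forced into the range $r_j = [v_j-\delta, v_j+\delta] = [\tau(s_j)-\delta, \tau(s_j)+\delta]$, contradicting the hypothesis $u_i \notin \bigcup_{k} r_k$.

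First, I would pin down $z_{\min}$. By the choice of $s_j$ and $s_{j+1}$, $z_{\min} \in [s_j, s_{j+1}] \subseteq [s_{j-1}, s_{j+1}]$, where the left endpoint $s_{j-1}$ exists because \assref{inner:sig:edge} guarantees $s_j \neq 0$. By \propref{signature}(\ref{item:sig:sj:min}) together with the general position assumption, $\tau(s_j)$ is the \emph{unique} minimum of $\tau$ on $[s_{j-1}, s_{j+1}]$, so $\tau(z_{\min}) \geq \tau(s_j)$. On the other hand, since $s_j \in [f(p^{-}), f(p^{+})]$ is itself a candidate for the argmin defining $z_{\min}$, we have $\tau(z_{\min}) \leq \tau(s_j)$. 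Combining and invoking uniqueness gives $z_{\min} = s_j$ and $\tau(z_{\min}) = \tau(s_j) = v_j$.

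Next, I would use the \Frechet property on the interval $[p^{-}, p^{+}]$. Recall that $\pi[p^{-}, p^{+}]$ consists of the two edges meeting at the local minimum $u_i = \pi(p_i)$, with $\pi(p^{-}) = \pi(p^{+})$, so $\pi(p_i) = \min(\pi[p^{-}, p^{+}])$. Applying \propref{frechet}(\ref{item:frechet:min}) then yields $|\pi(p_i) - \tau(z_{\min})| \leq \delta$, and hence $\pi(p_i) \in [v_j-\delta, v_j+\delta] = r_j$, contradicting $u_i \notin r_j$.

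The step that deserves the most care is the first one: verifying that \propref{signature}(\ref{item:sig:sj:min}) applies to the signature index corresponding to $s_j$ (this is what \assref{inner:sig:edge} is for) and that the minimum of $\tau$ over the piecewise-linear subcurve $\tau[s_{j-1}, s_{j+1}]$ is attained at a unique vertex, so that general position rules out the possibility $z_{\min} \neq s_j$. The rest of the argument is essentially immediate from the \Frechet property.
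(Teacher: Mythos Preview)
Your proof is correct and follows essentially the same approach as the paper's: both argue by contradiction, use \propref{signature}(\ref{item:sig:sj:min}) together with the assumption $s_j \in [f(p^{-}),f(p^{+})]$ to conclude $\tau(z_{\min}) = \tau(s_j)$, and then apply \propref{frechet}(\ref{item:frechet:min}) to force $\pi(p_i)$ into $r_j$, contradicting the hypothesis on $u_i$. The only cosmetic difference is that you additionally invoke general position to pin down $z_{\min} = s_j$ as a parameter, whereas the paper only needs the equality of values $\tau(z_{\min}) = \tau(s_j)$; this extra step is harmless but not required.
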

\begin{proof}
For the sake of contradiction, assume the claim is false, i.e. $s_j \in [f(p^{-}),f(p^{+})]$.
We have (by definition)
\[z_{\min} \in [f(p^{-}),f(p^{+})] \cap [s_j,s_{j+1}]\]
Furthermore, by definition 
$\tau(z_{\min}) = \minSubC{\tau}{t}{f(p^{-})}{f(p^{+})}$, and
by \propref{signature}(\ref{item:sig:sj:min}),
we have
$\tau(s_j) = \minSubC{\tau}{t}{s_{j-1}}{s_{j+1}}$.
This would imply that 
\[ \tau(z_{\min}) = \min \brc{\tau(t)~|~ t \in [f(p^{-}),f(p^{+})] \cap
[s_j,s_{j+1}]}  =
\tau(s_j).\]
By the theorem statement
$\pi(p_i) \notin \range{\tau(s_j)}{\delta} = \range{\tau(z_{\min})}{\delta}$. 
However, by \propref{frechet},
\[\pi(p_i) = \minSubC{\pi}{p}{p^{-}}{p^{+}} \in \range{
\minSubC{\tau}{t}{f(p^{-})}{f(p^{+})}}{\delta}=\range{\tau(z_{\min})}{\delta}. \]

%
\end{proof}

%

We now introduce some more notation which will be used throughout the proof.
\begin{eqnarray*} 
t_{\min} &=& \argmin_{t \in [f(p^{-}),s_{j+1}] } \tau(t)\\
x &=& \max\{p \in [0,p^{-}] ~|~ \pi(p) = \min(\tau(t_{\min}) + \delta,
\tau(s_{j+1}) - \delta)\}\\
p_{\max}&=& \argmax_{p \in [x, p^{-}] } \pi(p)\\
y &=& \min\{t \in [t_{\min}, 1] ~|~ \tau(t) = \pi(p_{\max}) - \delta\}
\end{eqnarray*}

In the next few claims we argue that these variables are well-defined. In
particular, that $x$ and $y$ always exist in the non-trivial case
(\claimref{x:exists} and \claimref{y:exists}).  Clearly $t_{\min}$ is
well-defined and by our initial assumptions we have $z_{\min} \leq t_{\min}$
(since $z_{\min} \in [s_j,s_{j+1}]$). We also derive some bounds along the way,
which will be used throughout the later parts of the proof.

\begin{claim}[Existence of $x$]
It holds that
\begin{compactenum}[(i)]
\item $\min(\tau(t_{\min}) + \delta, \tau(s_{j+1})-\delta) \in \{\pi(p) ~|~ p
\in [f^{-1}(s_j),p^{-}]\}$
\item $\minSubC{\pi}{p}{x}{p^{-}} \geq \min(\tau(t_{\min}) + \delta, \tau(s_{j+1})-\delta) = \pi(x)$
\item $\tau(s_j) < \tau(t_{\min})$
\end{compactenum}
\claimlab{x:exists}
\claimlab{x:high}
\claimlab{sj:low}
\end{claim}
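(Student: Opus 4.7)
The plan is to prove (iii) first since it feeds into (i), then derive (i) via the intermediate value theorem and deduce (ii) from the maximality of $x$.

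For (iii), I observe that \claimref{no:minimum} gives $s_j \notin [f(p^-),f(p^+)]$, while $z_{\min} \in [s_j,s_{j+1}] \cap [f(p^-),f(p^+)]$ forces $s_j \leq z_{\min} \leq f(p^+)$. This rules out $s_j > f(p^+)$ and leaves $s_j < f(p^-)$. Consequently the interval $[f(p^-),s_{j+1}]$ on which $t_{\min}$ is taken sits inside $(s_j,s_{j+1}] \subseteq [s_{j-1},s_{j+1}]$. By \propref{signature}(\ref{item:sig:sj:min}) and the general position assumption, $\tau(s_j)$ is the unique minimum of $\tau$ on $[s_{j-1},s_{j+1}]$, so $\tau(t_{\min}) > \tau(s_j)$.

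For (i), set $h = \min(\tau(t_{\min})+\delta,\tau(s_{j+1})-\delta)$. I will show $\pi(f^{-1}(s_j)) < h < \pi(p^-)$; continuity of $\pi$ and the intermediate value theorem then produce a $p \in [f^{-1}(s_j),p^-]$ with $\pi(p)=h$. For the upper bound, \propref{frechet}(\ref{item:frechet:dist}) gives $\pi(f^{-1}(s_j)) \leq \tau(s_j)+\delta$, which is strictly less than $\tau(t_{\min})+\delta$ by (iii) and strictly less than $\tau(s_{j+1})-\delta$ by \propref{signature}(\ref{item:sig:length}), and so below $h$ in either case. For the lower bound, note that $z_{\min} \in [f(p^-),s_{j+1}]$ (since $s_j < f(p^-) \leq z_{\min} \leq s_{j+1}$), hence $\tau(t_{\min}) \leq \tau(z_{\min})$; combined with \assref{nontrivialcase} this yields $\pi(p^-) > \tau(z_{\min})+\delta \geq \tau(t_{\min})+\delta \geq h$, where the last inequality works for both arguments of the $\min$ defining $h$.

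Part (ii) follows from the maximality of $x$. By (i) and the continuity of $\pi$, the set $\{p \in [0,p^-] : \pi(p)=h\}$ is non-empty and closed, so its maximum $x$ is attained and $\pi(x)=h$. If $\pi(p') < h$ held for some $p' \in (x,p^-]$, then since $\pi(p^-) > h$, the intermediate value theorem would produce some $p'' \in [p',p^-]$ with $\pi(p'')=h$, contradicting the maximality of $x$. Therefore $\pi \geq h$ throughout $[x,p^-]$. The main obstacle is the bookkeeping in (i): both endpoint estimates must dominate $h$ regardless of which of $\tau(t_{\min})+\delta$ or $\tau(s_{j+1})-\delta$ attains the minimum, and this hinges on the minimum-edge-length property of the signature together with the non-trivial case assumption.
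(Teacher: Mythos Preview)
Your proof is correct and follows the same overall approach as the paper (intermediate value theorem for (i), maximality of $x$ for (ii), the signature-minimum property together with \claimref{no:minimum} for (iii)). One clean simplification worth noting: where the paper splits into the cases $s_{j+1} \geq f(p^{+})$ versus $s_{j+1} < f(p^{+})$ to bound $\pi(p^{-})$ from below, you instead observe directly that $z_{\min} \in [f(p^{-}),f(p^{+})] \cap [s_j,s_{j+1}] \subseteq [f(p^{-}),s_{j+1}]$, so $\tau(t_{\min}) \leq \tau(z_{\min})$, and then the non-trivial case assumption gives $\pi(p^{-}) > \tau(z_{\min})+\delta \geq \tau(t_{\min})+\delta \geq h$ in one stroke; this is shorter and loses nothing, since only $\pi(p^{-}) \geq h$ is needed for the claim.
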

\begin{proof}
We first prove part (i) of the claim.
We show that there exist two parameters
$f^{-1}(s_j) \leq p_1 \leq  p_2 \leq p^{-}$ such that 
\[\pi(p_1) \leq \min\pth{\tau(t_{\min})+\delta, \tau(s_{j+1})-\delta} \leq \pi(p_2).\]
Since the curve is continuous, this would imply the claim.
Indeed, we can choose $p_1=f^{-1}(s_j)$ and  $p_2=p^{-}$.
If $s_{j+1}\geq f(p^{+})$ we have 
\begin{equation*}
\pi(p_2)=\pi(p^{-}) \geq \tau(z_{\min}) + \delta \geq \tau(t_{\min}) + \delta, 
\end{equation*}
(since we assume the non-trivial case). 
Otherwise, if  $s_{j+1} < f(p^{+})$ and therefore $\tau(s_{j+1})=\maxSubC{\tau}{t}{f(p^{-})}{f(p^{+})}$, then by \propref{signature} and
\propref{frechet},
\[\pi(p_2)=\pi(p^{-}) = \maxSubC{\pi}{p}{p^{-}}{p^{+}} \geq
\maxSubC{\tau}{t}{f(p^{-})}{f(p^{+})} - \delta = \tau(s_{j+1}) -\delta. \]
Thus, in both cases, it holds that 
$\pi(p_2) \geq \min\pth{\tau(t_{\min})+\delta, \tau(s_{j+1})-\delta}$.

As for $p_1$, by \claimref{no:minimum} we have $0\leq s_j \leq f(p^{-}) \leq t_{\min} \leq s_{j+1}$ and 
by \propref{frechet} 
\[\pi(p_1) =\pi(f^{-1}(s_j)) \leq \tau(s_j) +\delta .\]
It follows by \propref{signature}(\ref{item:sig:sj:min}) that 
$\pi(p_1) < \tau(t_{\min}) + \delta$
and by \propref{signature}(\ref{item:sig:length}) that 
$\pi(p_1)  < \tau(s_{j+1})-\delta.$
Now, part (ii) of the claim follows directly from the above,
since $\pi(x)$ is defined as the last point along the prefix subcurve 
$\pi[0,p^{-}]$ with the specified value and since $\pi(x) \leq \pi(p^{-})$.
Note that part (iii) we indirectly proved above. 
\end{proof}

\begin{claim}[Existence of $y$] It holds that
\begin{compactenum}[(i)]
\item $ \pi(p_{\max}) - \delta  \in \{\tau(t) ~|~ t \in [t_{\min},s_{j+1}]\}$
\item $\maxSubC{\tau}{t}{t_{\min}}{y} \leq \pi(p_{\max}) - \delta =\tau(y)$
\item $\pi(p_{\max}) < \tau(s_{j+1}) + \delta$
\end{compactenum}
\claimlab{y:exists}
\claimlab{sj1:high}
\end{claim}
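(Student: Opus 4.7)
The plan is to prove the three parts in the order (iii), (i), (ii). The unifying mechanism throughout is the intermediate value theorem, together with the maximality of $x$ established in \claimref{x:exists}.

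For (iii), I would argue by contradiction and assume $\pi(p_{\max}) \geq \tau(s_{j+1})+\delta$. Since $t_{\min}$ is the argmin of $\tau$ on an interval containing $s_{j+1}$ and $s_{j+1}$ is (by \propref{signature}(\ref{item:sig:sj1:max}) together with general position) the strict maximum on $[s_j,s_{j+2}]$, we have $\tau(t_{\min})<\tau(s_{j+1})$. Hence $\tau(s_{j+1})+\delta$ strictly exceeds $\min\pth{\tau(t_{\min})+\delta,\,\tau(s_{j+1})-\delta}=\pi(x)$, where the last equality uses \claimref{x:exists}(ii). Since $p_{\max}\in[x,p^{-}]$ and $\pi(p_{\max})>\pi(x)$, applying IVT to the continuous function $\pi$ on $[x,p_{\max}]$ produces some $p\in(x,p_{\max}]\subseteq(x,p^{-}]$ with $\pi(p)=\pi(x)$. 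But then $p\in[0,p^{-}]$ strictly exceeds $x$, contradicting the definition of $x$ as the largest such parameter.

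Given (iii), part (i) reduces to sandwiching $\pi(p_{\max})-\delta$ between $\tau(t_{\min})$ and $\tau(s_{j+1})$ and then applying IVT to $\tau$ on $[t_{\min},s_{j+1}]$. The upper bound $\pi(p_{\max})-\delta<\tau(s_{j+1})$ is just (iii). The lower bound follows from $\pi(p_{\max})\geq\pi(x)=\min\pth{\tau(t_{\min})+\delta,\,\tau(s_{j+1})-\delta}$ by a short case split: if the minimum is $\tau(t_{\min})+\delta$ the desired inequality $\pi(p_{\max})-\delta\geq\tau(t_{\min})$ is immediate, and if it is $\tau(s_{j+1})-\delta$ then $\tau(s_{j+1})-\tau(t_{\min})\leq 2\delta$ forces $\pi(p_{\max})-\delta\geq\tau(s_{j+1})-2\delta\geq\tau(t_{\min})$.

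Finally, (ii) uses the minimality of $y$. The equality $\tau(y)=\pi(p_{\max})-\delta$ is immediate from the defining set of $y$ being nonempty (by (i)). If some $t^{*}\in[t_{\min},y)$ satisfied $\tau(t^{*})>\tau(y)=\pi(p_{\max})-\delta$, then combined with $\tau(t_{\min})\leq\pi(p_{\max})-\delta$ established in the previous paragraph, IVT applied to $\tau$ on $[t_{\min},t^{*}]$ yields a parameter in $[t_{\min},t^{*}]\subset[t_{\min},y)$ at which $\tau$ attains the value $\pi(p_{\max})-\delta$, contradicting the minimality of $y$. The only real subtlety I anticipate is in (iii): one must notice that a single contradiction via maximality of $x$ suffices uniformly across both expressions for $\pi(x)$; the remaining pieces are routine IVT and case splits.
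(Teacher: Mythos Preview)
Your argument for part~(iii) has a genuine gap. You claim that from $\pi(p_{\max})>\pi(x)$ the intermediate value theorem on $[x,p_{\max}]$ yields some $p\in(x,p_{\max}]$ with $\pi(p)=\pi(x)$. But IVT on $[x,p_{\max}]$ with endpoint values $\pi(x)$ and $\pi(p_{\max})$ only guarantees that the value $\pi(x)$ is attained somewhere in the \emph{closed} interval $[x,p_{\max}]$---and it is, namely at $x$ itself. Nothing prevents $\pi$ from being strictly increasing on $[x,p_{\max}]$, in which case $\pi(p)=\pi(x)$ has the unique solution $p=x$ and there is no contradiction with the maximality of $x$. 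The same objection applies regardless of which branch realises $\pi(x)=\min(\tau(t_{\min})+\delta,\tau(s_{j+1})-\delta)$, so your ``unifying'' remark does not rescue the argument.

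The missing ingredient is the \Frechet matching $f$, which you never invoke for~(iii). The paper's proof is direct: since $x\leq p_{\max}\leq p^{-}$ and $f$ is monotone, $f(p_{\max})\in[f(x),f(p^{-})]\subseteq[s_j,s_{j+1}]$ (the containment $s_j\leq f(x)$ and $f(p^{-})\leq s_{j+1}$ having been established in the proof of \claimref{x:exists} and the surrounding discussion). Then \propref{signature}(\ref{item:sig:sj1:max}) gives $\tau(f(p_{\max}))\leq\tau(s_{j+1})$, and \propref{frechet} gives $\pi(p_{\max})\leq\tau(f(p_{\max}))+\delta\leq\tau(s_{j+1})+\delta$. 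There is no purely ``intrinsic'' route from the definition of $p_{\max}$ (which lives on $\pi$) to a bound involving $\tau(s_{j+1})$ without passing through $f$ or a claim that already encodes it. Your arguments for (i) and (ii) are fine once (iii) is in hand; in fact the paper obtains the lower bound in (i) more simply via the non-trivial-case assumption $\pi(p_{\max})\geq\pi(p^{-})>\tau(z_{\min})+\delta\geq\tau(t_{\min})+\delta$, avoiding your case split.
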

\begin{proof}
To prove part (i) of the claim we show that there exist two parameters
$t_{\min} \leq t_1 \leq t_2 \leq s_{j+1}$,
such that 
\[\tau(t_1)\leq  \pi(p_{\max})-\delta \leq \tau(t_2). \]
We choose $t_1=t_{\min}$ and $t_2=s_{j+1}$.
Since we have the non-trivial case, we know 
\[ \pi(p_{\max}) \geq \pi(p^{-}) \geq \tau(z_{\min})+\delta \geq \tau(t_{\min})+\delta = \tau(t_1)+\delta.\]
Now, for $t_2$, we know that $s_j \leq f(x) \leq f(p_{\max}) \leq f(p^{-}) \leq s_{j+1}$.
By \propref{frechet} and by \propref{signature}(\ref{item:sig:sj1:max})
\[\tau(t_2) = \tau(s_{j+1}) \geq \tau(f(p_{\max})) \geq \pi(p_{\max})-\delta. \] 
Since the subcurve is continuous, there must be a parameter $t_1\leq t\leq
t_2$ which satisfies the claim.
Now, part (ii) of the claim also follows directly, since $\tau(y)$ is the first point along the suffix subcurve
$\tau[t_{\min},1]$ with the specified value and since $\tau(y) \geq \tau(t_{\min})$.
Note that part (iii) we just proved above.
\end{proof}

The following claim follows directly from \claimref{x:exists} and \claimref{y:exists}.
\begin{claim}
$s_j \leq f(x) \leq y \leq s_{j+1}$
\claimlab{xy:inside}
\end{claim}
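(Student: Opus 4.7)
The plan is to verify each of the three inequalities $s_j \leq f(x)$, $f(x) \leq y$, and $y \leq s_{j+1}$ separately, reading them off from the previous two claims together with the monotonicity of $f$. Since the claim is essentially bookkeeping to confine $f(x)$ and $y$ to the signature edge $[s_j, s_{j+1}]$, no case analysis is required.

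For the first inequality $s_j \leq f(x)$, I would argue as follows. \claimref{x:exists}(i) produces a parameter $p \in [f^{-1}(s_j), p^{-}]$ with $\pi(p) = \min(\tau(t_{\min})+\delta, \tau(s_{j+1})-\delta)$. Since $x$ is defined as the \emph{maximum} parameter in $[0,p^{-}]$ attaining this value, necessarily $x \geq p \geq f^{-1}(s_j)$, and monotonicity of $f$ gives $f(x) \geq s_j$.

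For the third inequality $y \leq s_{j+1}$, I would invoke \claimref{y:exists}(i) in the same spirit: it provides some $t \in [t_{\min},s_{j+1}]$ with $\tau(t) = \pi(p_{\max})-\delta$, and $y$ is defined as the \emph{minimum} parameter in $[t_{\min},1]$ attaining this value, so $y \leq t \leq s_{j+1}$.

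The middle inequality $f(x) \leq y$ is the step I expect to require the most thought, though it is still short. The idea is to chain $f(x) \leq f(p^{-}) \leq t_{\min} \leq y$. The first of these uses $x \leq p_{\max} \leq p^{-}$ together with monotonicity of $f$. The second uses that $z_{\min} \in [f(p^{-}), f(p^{+})] \cap [s_j, s_{j+1}]$, which together with \claimref{no:minimum} (so $s_j \leq f(p^{-})$) gives $z_{\min} \in [f(p^{-}), s_{j+1}]$; since $t_{\min}$ is defined as the argmin of $\tau$ on exactly that interval, this forces $f(p^{-}) \leq t_{\min}$. The third inequality $t_{\min} \leq y$ is immediate from the definition of $y$ as a parameter in $[t_{\min},1]$. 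Combining the three chains completes the proof.
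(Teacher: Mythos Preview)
Your proof is correct and follows the same route the paper indicates: the paper's own proof is the single sentence ``The following claim follows directly from \claimref{x:exists} and \claimref{y:exists},'' and you have simply unpacked this, reading $s_j \leq f(x)$ off \claimref{x:exists}(i), $y \leq s_{j+1}$ off \claimref{y:exists}(i), and the middle inequality from the chain $f(x) \leq f(p^{-}) \leq t_{\min} \leq y$. One minor simplification: $f(p^{-}) \leq t_{\min}$ is immediate from the definition $t_{\min} = \argmin_{t \in [f(p^{-}),s_{j+1}]} \tau(t)$, so the detour through $z_{\min}$ and \claimref{no:minimum} is not needed there.
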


The following claim will be used throughout the proof.
\begin{claim} 
$\pi(p_{\max}) - 2\delta \leq \pi(x)$
\claimlab{not:so:bad}
\claimlab{max:dist}
\end{claim}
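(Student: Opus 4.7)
The plan is a two-case split driven by the definition
$\pi(x) = \min(\tau(t_{\min}) + \delta,\; \tau(s_{j+1}) - \delta)$, according to which of the two expressions realizes the minimum. In the easy case $\pi(x) = \tau(s_{j+1}) - \delta$, and \claimref{sj1:high}(iii) immediately yields $\pi(p_{\max}) < \tau(s_{j+1}) + \delta = \pi(x) + 2\delta$, so all the work concentrates on the other case.

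Suppose instead $\pi(x) = \tau(t_{\min}) + \delta$, so the target becomes $\pi(p_{\max}) \leq \tau(t_{\min}) + 3\delta$. My plan is to locate $f(p_{\max})$ inside the signature edge $[s_j, s_{j+1}]$, then chain the direction-preserving clause~(ii) of \defref{signature} with \propref{frechet}(i). For the location: $x \leq p_{\max} \leq p^{-}$ together with monotonicity of $f$ gives $f(p_{\max}) \in [f(x), f(p^{-})]$; by \claimref{xy:inside} we have $f(x) \geq s_j$, and by the very definition of $t_{\min}$ (an $\arg\min$ on $[f(p^{-}), s_{j+1}]$) we have $f(p^{-}) \leq t_{\min} \leq s_{j+1}$. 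Hence $f(p_{\max}) \in [s_j, t_{\min}] \subseteq [s_j, s_{j+1}]$. Since $\tau(s_j) < \tau(s_{j+1})$ by \assref{sign:edge:asc}, direction-preservation applied with $t = f(p_{\max})$ and $t' = t_{\min}$ gives $\tau(f(p_{\max})) \leq \tau(t_{\min}) + 2\delta$, and then \propref{frechet}(i) yields
$\pi(p_{\max}) \leq \tau(f(p_{\max})) + \delta \leq \tau(t_{\min}) + 3\delta = \pi(x) + 2\delta$.

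The only step requiring attention is the chain $f(p_{\max}) \leq f(p^{-}) \leq t_{\min}$ that drops $f(p_{\max})$ inside a single signature edge; once that is noted, direction-preservation applies and the rest is just two invocations of earlier results. Combining the two cases gives $\pi(p_{\max}) \leq \pi(x) + 2\delta$, which is the claim.
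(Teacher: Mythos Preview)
Your proof is correct and essentially identical to the paper's: both rely on \claimref{sj1:high}(iii) for the bound $\pi(p_{\max}) \leq \tau(s_{j+1})+\delta$, and on the chain $s_j \leq f(x) \leq f(p_{\max}) \leq f(p^{-}) \leq t_{\min} \leq s_{j+1}$ together with direction-preservation and \propref{frechet} for $\pi(p_{\max}) \leq \tau(t_{\min})+3\delta$. The only cosmetic difference is that the paper proves both inequalities and then takes the minimum, whereas you split into cases according to which term attains the minimum; the substance is the same.
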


\begin{proof}
We need to show that 
\[ \pi(p_{\max}) - 2 \delta \leq \min(\tau(t_{\min}) + \delta, \tau(s_{j+1})-\delta)\]
\claimref{y:exists} immediately implies $\pi(p_{\max}) \leq \tau(s_{j+1}) + \delta$.
On the other hand, by \claimref{xy:inside}, 
\[s_j \leq f(x) \leq f(p_{\max}) \leq f(p^{-}) \leq t_{\min} \leq s_{j+1}.\]
By \propref{frechet} and by \propref{signature}(\ref{item:sig:descent}),
\[\pi(p_{\max}) \leq \tau(f(p_{\max})) + \delta \leq \tau(t_{\min}) + 3\delta \]
\end{proof}

The next two claims (\claimref{frechet:y} and
\claimref{frechet:x}) show that our choice of $x$ and $y$ is suitable for fixing some part
of the broken matching: the subcurve $\pi[x,p^{-}]$ can be matched
entirely to $\tau(y)$ and the subcurve $\tau[f(x),y]$ can be matched entirely to
$\pi(x)$. After that, it remains to match the subcurve $\pi[p^{+}, f^{-1}(y)]$. For
this we have the case analysis that follows.

\begin{claim}
$\setSubC{\pi}{p}{x}{p^{-}} \subseteq  [\pi(p_{\max}) -2\delta, \pi(p_{\max})] = \range{\tau(y)}{\delta}.$
\claimlab{frechet:y}
\claimlab{frechet:ya}
\end{claim}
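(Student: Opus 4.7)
The plan is to split the desired inclusion into an upper bound, a lower bound, and an identification of the two intervals. Each of the three pieces is essentially immediate from a claim proved just above, so the proof amounts to assembling the existing ingredients in the right order.

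First I would dispatch the upper bound: for every $p\in[x,p^{-}]$, $\pi(p)\leq \pi(p_{\max})$ is a tautology from the definition $p_{\max}=\argmax_{p\in[x,p^{-}]}\pi(p)$.

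Next I would establish the lower bound $\pi(p)\geq \pi(p_{\max})-2\delta$ on the same interval. \claimref{x:high}(ii) already records that $\minSubC{\pi}{p}{x}{p^{-}}\geq \pi(x)$, so it suffices to bound $\pi(x)$ from below by $\pi(p_{\max})-2\delta$; but that bound is exactly the content of \claimref{max:dist}. Chaining the two inequalities yields $\pi(p)\geq \pi(x)\geq \pi(p_{\max})-2\delta$ for all $p\in[x,p^{-}]$, and combined with the upper bound we obtain $\setSubC{\pi}{p}{x}{p^{-}}\subseteq [\pi(p_{\max})-2\delta,\pi(p_{\max})]$.

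Finally, to rewrite the interval in terms of $\tau(y)$, I would use \claimref{y:exists}(ii), which gives $\tau(y)=\pi(p_{\max})-\delta$. Substituting, $\range{\tau(y)}{\delta}=[\tau(y)-\delta,\tau(y)+\delta]=[\pi(p_{\max})-2\delta,\pi(p_{\max})]$, which matches the interval produced in the previous step and completes the claim. The main ``obstacle'' here is conceptual rather than technical: one should be comfortable that choosing $x$ as the \emph{last} parameter in $[0,p^{-}]$ at which $\pi$ attains the value $\min(\tau(t_{\min})+\delta,\tau(s_{j+1})-\delta)$, together with continuity, is what prevents $\pi$ from dipping below $\pi(x)$ on $[x,p^{-}]$; but this was already argued inside \claimref{x:high}, so no additional work is needed.
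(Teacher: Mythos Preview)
Your proof is correct and essentially identical to the paper's own argument: the paper also invokes \claimref{x:high} and \claimref{not:so:bad} (your \claimref{max:dist}) for the lower bound, the definition of $p_{\max}$ for the upper bound, and the definition of $y$ together with \claimref{y:exists} for the final equality. The only cosmetic difference is that you explicitly isolate $\pi(x)$ as the intermediate quantity in the lower-bound chain, whereas the paper writes the chain via $\min(\tau(t_{\min})+\delta,\tau(s_{j+1})-\delta)$ directly.
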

\begin{proof}
By \claimref{x:high} and \claimref{not:so:bad},
\[\minSubC{\pi}{p}{x}{p^{-}} \geq \min\pth{\tau(t_{\min})+\delta, \tau(s_{j+1})-\delta} \geq \pi(p_{\max})-2\delta.\]
On the other hand, by definition of $p_{\max}$, we have
$\maxSubC{\pi}{p}{x}{p^{-}} = \pi(p_{\max}).$
The last equality of the claim follows directly from the definition of $y$ and from \claimref{y:exists} ($y$ is well-defined).
\end{proof}

\begin{claim}
$\setSubC{\tau}{t}{f(x)}{y} \subseteq 
\range{\min\pth{\tau(t_{\min}) + \delta, \tau(s_{j+1})-\delta}}{\delta}
= \range{\pi(x)}{\delta}.$
\claimlab{frechet:x}
\end{claim}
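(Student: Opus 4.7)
The plan is to show that $\tau(t) \in [\pi(x)-\delta,\, \pi(x)+\delta]$ for every $t \in [f(x),y]$, where $\pi(x) = \min(\tau(t_{\min})+\delta,\ \tau(s_{j+1})-\delta)$ by the definition of $x$. First I would record the nesting
\[
s_j \leq f(x) \leq f(p^{-}) \leq t_{\min} \leq y \leq s_{j+1},
\]
which combines $x \leq p^{-}$, the definition $t_{\min} \in [f(p^{-}), s_{j+1}]$, the fact that $y \in [t_{\min},1]$ by construction, and \claimref{xy:inside}. I then split $[f(x), y]$ into the three pieces $[f(x), f(p^{-})]$, $[f(p^{-}), t_{\min}]$, $[t_{\min}, y]$ and bound $\tau$ on each piece using the appropriate tool.

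For the upper bound $\tau(t) \leq \pi(x) + \delta = \min(\tau(t_{\min}) + 2\delta,\ \tau(s_{j+1}))$, I would establish the two candidate bounds independently. The bound $\tau(t) \leq \tau(s_{j+1})$ is immediate on all of $[f(x),y]$, since this interval lies in $[s_j, s_{j+1}]$ and $\tau(s_{j+1}) = \maxSubC{\tau}{t}{s_j}{s_{j+2}}$ by \propref{signature}(\ref{item:sig:sj1:max}). For the bound $\tau(t) \leq \tau(t_{\min}) + 2\delta$, on the piece $[f(x), t_{\min}]$ I apply the direction-preserving condition \propref{signature}(\ref{item:sig:descent}) with $t' := t$ and outer parameter $t_{\min}$; on the piece $[t_{\min}, y]$ I combine \claimref{y:exists}(ii), which gives $\tau(t) \leq \pi(p_{\max}) - \delta$, with \claimref{max:dist} and the definition of $\pi(x)$, which together yield $\pi(p_{\max}) - \delta \leq \pi(x) + \delta \leq \tau(t_{\min}) + 2\delta$.

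For the lower bound $\tau(t) \geq \pi(x) - \delta$, I would handle the two remaining subintervals separately. On $[f(p^{-}), y] \subseteq [f(p^{-}), s_{j+1}]$, the defining property of $t_{\min}$ yields $\tau(t) \geq \tau(t_{\min}) \geq \pi(x) - \delta$, where the last step uses $\pi(x) \leq \tau(t_{\min}) + \delta$. On $[f(x), f(p^{-})]$, I would chain \propref{frechet}(\ref{item:frechet:min}) with \claimref{x:high}(ii) to obtain
\[
\minSubC{\tau}{t}{f(x)}{f(p^{-})} \geq \minSubC{\pi}{p}{x}{p^{-}} - \delta \geq \pi(x) - \delta.
\]

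The main obstacle is really just careful bookkeeping: keeping straight which of the two candidates $\tau(t_{\min}) + \delta$ or $\tau(s_{j+1}) - \delta$ realizes $\pi(x)$, and pairing each subinterval with the right tool (a signature property when one needs structural control of $\tau$ on $[s_j, s_{j+1}]$, a \Frechet property when one needs to transport a bound between $\pi$ and $\tau$). Once the three-part decomposition above is in place, each piece reduces to a one-line invocation of an already-established fact from the preceding claims.
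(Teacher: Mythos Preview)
Your proposal is correct and follows essentially the same approach as the paper: the paper splits the lower bound into the pieces $[f(x),f(p^{-})]$ and $[f(p^{-}),y]$ (using \propref{frechet} together with \claimref{x:high} on the first, and the definition of $t_{\min}$ on the second) and splits the upper bound into $[f(x),t_{\min}]$ and $[t_{\min},y]$ (using \propref{signature}(\ref{item:sig:descent}) on the first and \claimref{y:exists}(ii) with \claimref{max:dist} on the second), with the global cap $\tau(s_{j+1})$ coming from \propref{signature}(\ref{item:sig:sj1:max}). Your three-piece description is just a common refinement of these two overlapping two-piece decompositions, and every bound you invoke is exactly the one the paper uses.
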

\begin{proof}
We first prove the lower bound on the minimum of the subcurve $\tau[f(x),y]$.
By \claimref{x:high}, and by \propref{frechet}, we have
\[\minSubC{\tau}{t}{f(x)}{f(p^{-})} \geq \minSubC{\pi}{p}{x}{p^{-}} - \delta \geq \min\pth{\tau(t_{\min}), \tau(s_{j+1})-2\delta}.\] 
By definition, $\tau(t_{\min})$ is a minimum on $\tau[f(p^{-}), s_{j+1}]$, thus
\[\minSubC{\tau}{t}{f(p^{-})}{y} \geq \tau(t_{\min}) \geq
\min\pth{\tau(t_{\min}), \tau(s_{j+1})-2\delta},\]
for $y \leq s_{j+1}$, which is ensured by \claimref{xy:inside}. 

We now prove the upper bound on the maximum of the subcurve $\tau[f(x),y]$.
Since by \claimref{sj1:high} $s_j \leq f(x) \leq t_{\min} \leq s_{j+1}$ and since by 
\propref{signature}(\ref{item:sig:descent}), $\tau[s_j, s_{j+1}]$ may
not descend by more than $2\delta$, it follows that 
\[\maxSubC{\tau}{t}{f(x)}{t_{\min}} \leq \tau(t_{\min})+2\delta.\]
By definitions of $x$ and $y$ and by \claimref{max:dist}
\[\maxSubC{\tau}{t}{t_{\min}}{y} \leq \pi(p_{\max})-\delta \leq \pi(x) +\delta \leq \tau(t_{\min}) + 2\delta.\]
By \propref{signature}(\ref{item:sig:sj1:max}) we also have, 
\[\maxSubC{\tau}{t}{f(x)}{y} \leq \tau(s_{j+1}),\]
for $y \leq s_{j+1}$, which is ensured by \claimref{xy:inside}. 

Together this implies the claim. The last equality of the claim follows directly from the
definition of $x$ and from \claimref{x:exists} ($x$ is well-defined).
\end{proof}

Now we have established the basic setup for our proof. In the following, we
describe the case analysis based on the structure of the two curves $\tau$ and
$\pi$.  Consider walking along the
subcurve $\pi[p^{+},1]$. At the beginning of the subcurve, we have $\pi(p^{+})
\in [\pi(x), \pi(p_{\max})]$.  One of the following events may happen
during the walk: either we go above $\pi(p_{\max})$, or we go below 
$\pi(x)$, or we stay inside this interval. Let $q$ denote
the time at which for the first time one of these events happens. Formally,
we define the intersection function $g:\Re\rightarrow [p^{+},1] \cup \brc{p_{\infty}}$,
\begin{eqnarray*}
g(h) &=&  \min (\{p \in [p^{+}, 1] ~|~ \pi(p)=h \} \cup \{p_{\infty}\}) \\
q &=& \min\pth{g(\pi(p_{\max})), g(\pi(x))},
\end{eqnarray*}
where $p_{\infty}>1$ is some fixed constant for the case that the suffix curve
$\pi[p^{+},1]$ does not contain the value $h$.
We distinguish the following main cases. In each of the cases, we devise a
matching scheme to fix the broken matching. For each case, our construction
ensures that the extended subcurves cover the subcurve $\tau[f(p^{-}),f(p^{+})]$
and that the subcurves line up with suitable prefix and suffix curves, such that
we can always use $f$ for the parts of $\pi$ and $\tau$ not covered in the matching
scheme. We need to prove in each case that the \Frechet distance between the
specified subcurves is at most $\delta$. If this is the case, we call the
matching scheme \emph{valid}.

We have to make further distinction between the case when $f(p^{+}) \leq y$ and the case $f(p^{+}) > y$. If $f(p^{+}) \leq y$ holds, the three aforementioned events are described by \caseref{level}, \caseref{upwards} and \caseref{downwards}. If it happens that $f(p^{+}) > y$, it becomes more complicated to repair the matching. This is discussed in \caseref{matroska}.

\begin{case}[$\pi$ stays level]
$p^{+} \leq f^{-1}(y) \leq q$.
\caselab{level}
\end{case}
\begin{figure}[h]\centering
\includegraphics[page=2]{shortcutting_case1234.pdf}\\
\caption{Example of \caseref{level}. The broken part of the matching $f$ is indicated by fat lines.}
\figlab{shortcutting:case2}
\end{figure}
\caseref{level} is the simplest case. We intend to use the following matching scheme:
\begin{eqnarray*}
\pi(x) &\Leftrightarrow& \tau[f(x),y]\\
\pi[x,p^{-}] &\Leftrightarrow& \tau(y)\\
\pi[p^{+},f^{-1}(y)] &\Leftrightarrow& \tau(y)
\end{eqnarray*}

\begin{proof}[Proof of \caseref{level}]
\claimref{frechet:x} implies that the \Frechet distance between $\tau[f(x),y]$
and $\pi(x)$ is at most $\delta$.
\claimref{frechet:y} implies that the \Frechet distance between $\pi[x,p^{-}]$
and $\tau(y)$ is at most $\delta$.
Finally, by our case distinction and by \claimref{not:so:bad}
\[\setSubC{\pi}{p}{p^{+}}{f^{-1}(y)} 
\subseteq [\pi(x), \pi(p_{\max})] 
\subseteq [\pi(p_{\max})-2\delta, \pi(p_{\max})] 
= \range{\tau(y)}{\delta}.\]
Therefore, also the \Frechet distance between $\pi[p^{+},f^{-1}(y)]$ and $\tau(y)$ is at most
$\delta$.
\end{proof}

\begin{case}[$\pi$ tends upwards]
$q < f^{-1}(y)$ and $q=g(\pi(p_{\max}))$
\caselab{upwards}
\end{case}
\begin{figure}[h]\centering
\includegraphics[page=3]{shortcutting_case1234.pdf}\\
\caption{Example of \caseref{upwards}. The broken part of the matching $f$ is indicated by fat lines.}
\figlab{shortcutting:case3}
\end{figure}
 In \caseref{upwards}, let $y'=\max\{t\in [0,f(q)]~ |~
 \tau(t)=\tau(y) \}$ and $z=\max\{p^{+},f^{-1}(y')\}$. We intend to use
 the following matching scheme:
\begin{eqnarray*}
\pi(x) &\Leftrightarrow& \tau[f(x),y']\\
\pi[x,p^{-}] &\Leftrightarrow& \tau(y')\\
\pi[p^{+},z] &\Leftrightarrow& \tau(y')\\
\pi(z) &\Leftrightarrow& \tau[y',f(p^{+})]
\end{eqnarray*}
(Note that if $y' > f(p^{+})$, then the last line of the above matching scheme is simply dropped.) 

\begin{proof}[Proof of \caseref{upwards}]
We first argue that $y'$ exists. To this end, we show that there exist
two parameters $0\leq t_1 < t_2 \leq  f(q)$, such that
\[ \tau(t_1) \leq \tau(y) = \pi(p_{\max})-\delta \leq \tau(t_2). \]
We choose $t_1=z_{\min}$ and $t_2=f(q)$.
Note that $z_{\min} \leq f(p^{+}) \leq f(q)$.
Now, by \propref{frechet},
\[\tau(t_2) = \tau(f(q)) \geq \pi(q)-\delta = \pi(p_{\max})-\delta.\] 
Since we are assuming the non-trivial case, 
\[ \pi(p_{\max}) \geq \pi(p^{-}) \geq \tau(z_{\min}) + \delta = \tau(t_1) + \delta. \] 
Thus, since $\tau[0,f(q)]$ is continuous, $y'$ must exist and it holds that $f(p^{-}) \leq z_{\min} \leq y'$.
It remains to prove that the matching scheme is valid.
Since $y'\leq f(q) < y$,
\claimref{frechet:x} implies that the \Frechet distance between $\tau[f(x),y']$
and $\pi(x)$ is at most $\delta$.
\claimref{frechet:y} implies that the \Frechet distance between $\pi[x,p^{-}]$
and $\tau(y')$ is at most $\delta$.
For the last two lines of the matching scheme we distinguish two cases.
If $y' > f(p^{+})$, then $z=f^{-1}(y')$ and we need to prove that
\[ \setSubC{\pi}{p}{p^{+}}{f^{-1}(y')} \subseteq \range{\tau(y')}{\delta}. \]
By our case distinction and by \claimref{not:so:bad}
\[ \setSubC{\pi}{p}{p^{+}}{q} \subseteq [\pi(x), \pi(p_{\max})] 
\subseteq [\pi(p_{\max})-2\delta, \pi(p_{\max})] = \range{\tau(y')}{\delta}. \]
Since $f^{-1}(y') \leq q$, this implies the validity of the matching.
Otherwise, if $y' \leq f(p^{+})$, then $z=p^{+}$ and we need to prove 
that 
\[ \setSubC{\tau}{t}{y'}{f(p^{+})} \subseteq \range{\pi(p^{+})}{\delta}. \]
This can be derived as follows. On the one hand, by \propref{frechet}, since $y'\in [f(p^{-}),f(p^{+})]$
\[\maxSubC{\tau}{t}{y'}{f(p^{+})} 
\leq \maxSubC{\pi}{p}{f^{-1}(y')}{p^{+}}+\delta 
= \maxSubC{\pi}{p}{p^{-}}{p^{+}}+\delta 
= \pi(p^{+})+\delta. \]
On the other hand, by the definition of $y'$ and since $y' \leq f(p^{+}) \leq f(q)$
\[\minSubC{\tau}{t}{y'}{f(p^{+})} = \tau(y') = \pi(p_{\max})-\delta \geq \pi(p^{+}) -\delta. \]
\end{proof}

\begin{case}[$\pi$ tends downwards]
$q < f^{-1}(y)$ and $q=g(\pi(x))$.
\caselab{downwards}
\end{case}
\begin{figure}[h]\centering
\includegraphics[page=4]{shortcutting_case1234.pdf}\\
\caption{Example of \caseref{downwards}. The broken part of the matching $f$ is indicated by fat lines.}
\figlab{shortcutting:case4}
\end{figure}
In \caseref{downwards}, let $y''=\min\{ t \in [f(p_{\max}),1] ~|~ \tau(t) =
\tau(y) \}$. 
We intend to use the following matching scheme:
\begin{eqnarray*}
\pi(p_{\max}) &\Leftrightarrow& \tau[f(p_{\max}), y'']\\
\pi[p_{\max},p^{-}] &\Leftrightarrow& \tau(y'')\\
\pi[p^{+},q] &\Leftrightarrow& \tau(y'')\\
\pi(q) &\Leftrightarrow& \tau[y'',f(q)]
\end{eqnarray*}

\begin{proof}[Proof of \caseref{downwards}]
Clearly, $y''$ exists in the non-trivial case, since  
\[ \tau(f(p_{\max})) \geq \tau(y) \geq \tau(z_{\min}),\]
and $f(p_{\max}) \leq f(p^{-}) \leq z_{\min}$.

We prove the validity of the matching scheme line by line.
Note that by definition $\tau(y'') = \tau(y) = \pi(p_{\max})-\delta$.
For the first matching:
by the definition of $y''$ and by \propref{frechet},
\[\setSubC{\tau}{t}{f(p_{\max})}{y''} \subseteq \range{\pi(p_{\max})}{\delta}. \]
The validity of the second matching follows from \claimref{frechet:y} since $p_{\max} \geq x$.
For the third matching:
By our case distinction and by \claimref{not:so:bad}
\[\setSubC{\pi}{t}{p^{+}}{q} 
\subseteq [\pi(x), \pi(p_{\max})]
\subseteq [\pi(p_{\max})-2\delta, \pi(p_{\max})]
=\range{\tau(y'')}{\delta}.\]
As for the last matching, since $f(x) \leq f(p_{\max}) \leq y''$ and since by our case distinction
$f(q) < y$, \claimref{frechet:x} implies
\[ \setSubC{\tau}{t}{y''}{f(q)} \subseteq  \range{\pi(x)}{\delta} = \range{\pi(q)}{\delta}.\]
\end{proof}



\begin{case}[The matroska case] $f(p^{+}) > y$.
\caselab{matroska}
\end{case}
\caseref{matroska} seems to be the most difficult case to handle. However, we have
already established a suitable set of tools in the previous cases. We devise an iterative
matching scheme and prove an invariant (\claimref{frechet:xa}) to verify that the \Frechet distance of
the subcurves is at most $\delta$.
We first define $z_{\min}^{(1)}=z_{\min}$, $t_{\min}^{(1)}=t_{\min}$, 
$x^{(1)}=x$, and $y^{(1)}=y$.
Now, for $a = 2, \dots$ let 
\begin{eqnarray*}
z_{\min}^{(a)} &=& \argmin_{t\in [y^{(a-1)},f(p^{+})] } \tau(t)\\
t_{\min}^{(a)} &=& \argmin_{t\in [y^{(a-1)},s_{j+1}]} \tau(t)\\
x^{(a)}&=&  \min\brc{p\in [x^{(a-1)},p_{\max}] ~|~ \pi(p) = \min\pth{\tau(t_{\min}^{(a)})+\delta,  \tau(s_{j+1})-\delta} } \\ 
y^{(a)}&=& \min \{t \in [t_{\min}^{(a)},s_{j+1}] ~|~ \tau(t) = \pi(p_{\max}) - \delta\},
\end{eqnarray*}
We describe the intended matching scheme, beginning with the following subcurves:
\begin{eqnarray*}
\pi(x^{(1)}) &\Leftrightarrow& \tau[f(x^{(1)}),y^{(1)}]\\
\pi[x^{(a-1)},x^{(a)}] &\Leftrightarrow& \tau(y^{(a-1)})\\
\pi(x^{(a)}) &\Leftrightarrow& \tau[y^{(a-1)},y^{(a)}],
\end{eqnarray*}
where the last two matchings are repeated while incrementing $a$ (starting with
$a=2$).   After each iteration, we are left with the
unmatched subcurves $\pi[x^{(a)}, p^{-}]$ and $\tau[y^{(a)}, f(p^{+})]$.  
We would like to complete the matching with the following scheme
\begin{eqnarray*}
\pi[x^{(a)},p^{-}] &\Leftrightarrow& \tau(y^{(a)})\\
\pi(p^{+}) &\Leftrightarrow& \tau[y^{(a)},f(p^{+})]
\end{eqnarray*}

This is indeed possible if 
\[ \pi(p^{-}) \leq \tau(z_{\min}^{(a+1)}) + \delta.\]
The above is the equivalent to the trivial case (\caseref{trivial}).  
We first prove correctness in this case (\caseref{matroska}(\ref{item:trivial})). To this end, we extend
\claimref{frechet:x} as follows. Note that this claim will also be used in the non-trivial cases that follow.

%
\begin{claim}
$\setSubC{\tau}{t}{y^{(a-1)}}{y^{(a)}} \subseteq \pbrc{\tau(t_{\min}^{(a)})
~,~  \min(\tau(t_{\min}^{(a)}) + 2\delta, \tau(s_{j+1})  )}
\subseteq \range{\pi(x^{(a)})}{\delta}.$
\claimlab{frechet:xa}
\end{claim}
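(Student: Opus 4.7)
The strategy is to mirror the structure of \claimref{frechet:x} (the case $a=1$) while exploiting a monotonicity property of the iterates $t_{\min}^{(a)}$. First, by induction on $a$ and the definitions, I get the chain $s_j \leq f(p^{-}) \leq t_{\min}^{(1)} \leq y^{(1)} \leq t_{\min}^{(2)} \leq y^{(2)} \leq \dots \leq s_{j+1}$, where the lower endpoint uses \claimref{no:minimum}. Since $[y^{(a-1)}, s_{j+1}] \subseteq [f(p^{-}), s_{j+1}]$, the minimum of $\tau$ over the smaller interval is at least the minimum over the larger one, yielding the key monotonicity $\tau(t_{\min}^{(1)}) \leq \tau(t_{\min}^{(a)})$ that I will use to lift \claimref{max:dist} to the $a$-th iterate.

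To establish the first inclusion, I bound $\tau(t)$ on $[y^{(a-1)}, y^{(a)}]$ from above and below. The lower bound $\tau(t) \geq \tau(t_{\min}^{(a)})$ is immediate since $t_{\min}^{(a)}$ is the argmin of $\tau$ on the superinterval $[y^{(a-1)}, s_{j+1}]$. The upper bound by $\tau(s_{j+1})$ follows from \propref{signature}(\ref{item:sig:sj1:max}) since $[y^{(a-1)}, y^{(a)}] \subseteq [s_j, s_{j+1}] \subseteq [s_j, s_{j+2}]$. For the upper bound by $\tau(t_{\min}^{(a)}) + 2\delta$, I split the interval at $t_{\min}^{(a)}$. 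On the left piece $[y^{(a-1)}, t_{\min}^{(a)}]$ the inequality is immediate from the direction-preserving property \propref{signature}(\ref{item:sig:descent}) applied with $t'= t_{\min}^{(a)}$. On the right piece $[t_{\min}^{(a)}, y^{(a)}]$, continuity of $\tau$ together with the defining minimality of $y^{(a)}$ as the earliest parameter in $[t_{\min}^{(a)}, s_{j+1}]$ attaining the value $\pi(p_{\max}) - \delta$ forces $\tau(t) \leq \pi(p_{\max}) - \delta$; combining this with \claimref{max:dist} (which gives $\pi(p_{\max}) \leq \tau(t_{\min}^{(1)}) + 3\delta$) and the monotonicity $\tau(t_{\min}^{(1)}) \leq \tau(t_{\min}^{(a)})$ yields $\tau(t) \leq \tau(t_{\min}^{(a)}) + 2\delta$.

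The second inclusion is a short algebraic case split on which term attains the minimum in the definition of $\pi(x^{(a)})$. If $\pi(x^{(a)}) = \tau(t_{\min}^{(a)}) + \delta$, then $\tau(s_{j+1}) \geq \tau(t_{\min}^{(a)}) + 2\delta$, so the middle interval equals $[\tau(t_{\min}^{(a)}), \tau(t_{\min}^{(a)}) + 2\delta] = \range{\pi(x^{(a)})}{\delta}$. Otherwise $\pi(x^{(a)}) = \tau(s_{j+1}) - \delta$ and $\tau(s_{j+1}) < \tau(t_{\min}^{(a)}) + 2\delta$; hence $\tau(t_{\min}^{(a)}) > \tau(s_{j+1}) - 2\delta$, and the middle interval $[\tau(t_{\min}^{(a)}), \tau(s_{j+1})]$ sits inside $[\tau(s_{j+1}) - 2\delta, \tau(s_{j+1})] = \range{\pi(x^{(a)})}{\delta}$.

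I expect the main obstacle to be the upper bound on the right sub-interval $[t_{\min}^{(a)}, y^{(a)}]$: the direction-preserving property gives only a lower bound when moving \emph{forward} from $t_{\min}^{(a)}$ on an ascending signature edge, so one cannot argue locally. The remedy is to transfer the global bound $\pi(p_{\max}) \leq \tau(t_{\min}^{(1)}) + 3\delta$ of \claimref{max:dist} to the current iterate via the monotonicity of the minima proved in the first paragraph, which is what makes the matroska induction close.
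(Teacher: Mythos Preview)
Your proof is correct and follows essentially the same structure as the paper's: the lower bound via the definition of $t_{\min}^{(a)}$, the upper bound $\tau(s_{j+1})$ via \propref{signature}(\ref{item:sig:sj1:max}), and the second inclusion via the definition of $\pi(x^{(a)})$ are identical. The one difference is in the upper bound $\tau(t_{\min}^{(a)})+2\delta$ on the right piece $[t_{\min}^{(a)},y^{(a)}]$: you route through \claimref{max:dist} and the monotonicity $\tau(t_{\min}^{(1)})\leq\tau(t_{\min}^{(a)})$, whereas the paper avoids this detour by observing that on this piece $\tau(t)\leq\tau(y^{(a)})=\pi(p_{\max})-\delta=\tau(y^{(a-1)})$, and then applying the direction-preserving property directly to the pair $y^{(a-1)}\leq t_{\min}^{(a)}$ to get $\tau(y^{(a-1)})\leq\tau(t_{\min}^{(a)})+2\delta$. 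This makes your monotonicity step unnecessary (though it is correct). A small notational slip: in \propref{signature}(\ref{item:sig:descent}) the roles are $t'<t$, so on the left piece you should set $t=t_{\min}^{(a)}$ and let $t'$ range over $[y^{(a-1)},t_{\min}^{(a)}]$, not the other way round.
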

\begin{proof}
Recall that $s_j \leq y \leq y^{(a-1)} \leq y^{(a)} \leq s_{j+1}$ by
\claimref{xy:inside}.
By definition of $t_{\min}^{(a)}$,
\[ \minSubC{\tau}{t}{y^{(a-1)}}{y^{(a)}} \geq \minSubC{\tau}{t}{y^{(a-1)}}{s_{j+1}} \geq \tau(t_{\min}^{(a)}). \]
By \propref{signature}(\ref{item:sig:descent}) and the definitions of $y^{(a)}$ and $t_{\min}^{(a)}$, we also have that
\[ \maxSubC{\tau}{t}{y^{(a-1)}}{y^{(a)}} \leq  \tau(t_{\min}^{(a)}) + 2\delta, \]
and by \propref{signature}(\ref{item:sig:sj1:max}), we also have that
\[ \maxSubC{\tau}{t}{y^{(a-1)}}{y^{(a)}} \leq \tau(s_{j+1}). \]
This proves the first part of the claim. For the second part we use the definition of $\pi(x^{(a)}) = \min(\tau(t_{\min}^{(a)})+\delta,
\tau(s_{j+1})-\delta)$, which implies
\begin{eqnarray*} 
\tau(t_{\min}^{(a)}) &\geq & \pi(x^{(a)}) - \delta\\
\min(\tau(t_{\min}^{(a)}) + 2\delta, \tau(s_{j+1})) &=& \pi(x^{(a)}) + \delta.
\end{eqnarray*}
\end{proof}

\begin{claim}[Correctness of \caseref{matroska}(\ref{item:trivial})]
If for some value of $a$, it holds that 
$\pi(p^{-}) \leq \tau(z_{\min}^{(a+1)}) + \delta$ then
the above matching scheme is valid.
\claimlab{trivial:subcase}
\end{claim}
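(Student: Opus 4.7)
The plan is to verify each line of the iterated matching scheme has \Frechet distance at most $\delta$. The structural observation that streamlines the proof is twofold: every $x$-interval $[x^{(b-1)}, x^{(b)}]$ used in the scheme is contained in the parent interval $[x, p^{-}]$, and every $y$-point satisfies $\tau(y^{(b)}) = \pi(p_{\max}) - \delta$ by definition. Combined with \claimref{frechet:y}, which already pins down $\pi$ on $[x, p^{-}]$ to the band $[\pi(p_{\max}) - 2\delta, \pi(p_{\max})]$, this means that all ``subcurve-to-point'' matchings in the scheme collapse to a single citation, and only the final matching of $\pi(p^{+})$ against $\tau[y^{(a)}, f(p^{+})]$ needs a separate argument.

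Before entering the case analysis I would quickly verify, by induction on $b$, that $z_{\min}^{(b)}, t_{\min}^{(b)}, x^{(b)}, y^{(b)}$ are well-defined through $b = a+1$. The base case $b=1$ is supplied by \claimref{x:exists} and \claimref{y:exists}; for the inductive step, the fact that the trivial stopping condition has not yet fired, i.e.\ $\pi(p^{-}) > \tau(z_{\min}^{(b)}) + \delta$ for every $b \leq a$, lets me rerun those arguments on the shifted interval $[y^{(b-1)}, s_{j+1}]$, producing the required points and the chain $y^{(b-1)} \leq t_{\min}^{(b)} \leq y^{(b)} \leq s_{j+1}$ and $x^{(b-1)} \leq x^{(b)} \leq p_{\max}$. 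With that in place, the ``point-to-subcurve'' lines of the scheme defer directly to existing results: the initial matching $\pi(x^{(1)}) \Leftrightarrow \tau[f(x^{(1)}), y^{(1)}]$ is exactly \claimref{frechet:x}, and $\pi(x^{(b)}) \Leftrightarrow \tau[y^{(b-1)}, y^{(b)}]$ for $b \geq 2$ is exactly \claimref{frechet:xa}. For the ``subcurve-to-point'' lines $\pi[x^{(b-1)}, x^{(b)}] \Leftrightarrow \tau(y^{(b-1)})$ with $b=2,\ldots,a$, and the closing $\pi[x^{(a)}, p^{-}] \Leftrightarrow \tau(y^{(a)})$, the observation from the first paragraph applies: each such subcurve of $\pi$ lies in $[x, p^{-}]$, so by \claimref{frechet:y} its values sit in $[\pi(p_{\max}) - 2\delta, \pi(p_{\max})] = \range{\tau(y^{(b)})}{\delta}$, giving the desired $\delta$-bound.

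The only line that requires a real computation is the terminal matching $\pi(p^{+}) \Leftrightarrow \tau[y^{(a)}, f(p^{+})]$. For the lower bound I note that $\min_{t \in [y^{(a)}, f(p^{+})]} \tau(t) = \tau(z_{\min}^{(a+1)})$ by definition of $z_{\min}^{(a+1)}$, and the subcase hypothesis directly gives $\tau(z_{\min}^{(a+1)}) \geq \pi(p^{-}) - \delta = \pi(p^{+}) - \delta$. For the upper bound, since $[y^{(a)}, f(p^{+})] \subseteq [f(p^{-}), f(p^{+})]$, \propref{frechet} yields $\max \tau[y^{(a)}, f(p^{+})] \leq \max \tau[f(p^{-}), f(p^{+})] \leq \max \pi[p^{-}, p^{+}] + \delta = \pi(p^{+}) + \delta$, where the last equality holds because $\pi(p_i)$ is the minimum on $[p^{-}, p^{+}]$ and the maximum is attained at the coinciding endpoints $\pi(p^{-}) = \pi(p^{+})$. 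The main obstacle is less about any individual step than about carefully identifying each line of the iterated scheme with a previously established claim; once that accounting is done and the common ``band'' $[\pi(p_{\max}) - 2\delta, \pi(p_{\max})]$ is isolated, the trivial subcase reduces to these citations plus the single computation above.
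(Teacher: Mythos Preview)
Your proof is correct and follows essentially the same approach as the paper: you invoke \claimref{frechet:x}, \claimref{frechet:y} (noting that every $[x^{(b-1)},x^{(b)}]$ and $[x^{(a)},p^{-}]$ sit inside $[x,p^{-}]$, and every $\tau(y^{(b)})=\pi(p_{\max})-\delta$), and \claimref{frechet:xa} for the iterated lines, and then handle the terminal matching $\pi(p^{+})\Leftrightarrow\tau[y^{(a)},f(p^{+})]$ with exactly the paper's two-sided bound via the hypothesis and \propref{frechet}. Your added induction sketch for well-definedness of $x^{(b)},y^{(b)}$ is a welcome bit of extra care that the paper defers to the subsequent claim on the stopping parameter.
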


\begin{proof}
By \claimref{frechet:x}, \claimref{frechet:ya} and \claimref{frechet:xa} the
iterative part of the matching scheme is valid.
It remains to prove the validity of the last two matchings.
By \claimref{frechet:y},
\[\setSubC{\pi}{p}{x}{p^{-}} \subseteq  [\pi(p_{\max}) -2\delta, \pi(p_{\max})] =
\range{\tau(y)}{\delta} = \range{\tau(y^{(a)})}{\delta}.\]
Since $x \leq x^{(a)} \leq p^{-}$, this implies that the \Frechet distance
between $\pi[x^{(a)},p^{-}]$ and $\tau(y^{(a)})$ is at most $\delta$.
As for the other matching, we have by our case distinction 
\[\pi(p^{-}) \leq 
\tau(z_{\min}^{(a+1)}) + \delta = \minSubC{\tau}{t}{y^{(a)}}{f(p^{+})} + \delta,\]
while (by \propref{frechet}) the matching $f$ testifies that
\[\maxSubC{\tau}{t}{f(p^{-})}{f(p^{+})} \leq
\maxSubC{\pi}{p}{p^{-}}{p^{+}} + \delta = \pi(p^{-}) + \delta.\]
Since $f(p^{+}) \geq y^{(a)} \geq y \geq f(p^{-})$, the above implies
\[\setSubC{\tau}{t}{y^{(a)}}{f(p^{+})} \subseteq \range{\pi(p^{-})}{\delta} = \range{\pi(p^{+})}{\delta}.\]
Note that the proof holds both if $s_{j+1} < f(p^{+})$ or $s_{j+1} \geq f(p^{+})$.
\end{proof}

From now on, we will assume the non-trivial (sub)case.  Our matching scheme is
based on a stopping parameter $\Astop$, which (intuitively) depends on whether
$f$ matched some point on the missing subcurve $\pi[p^{-},p^{+}]$ to a signature
vertex $\tau(s_{j+1})$ of $\tau$.

\begin{definition}(Stopping parameter $\Astop$)
If $s_{j+1} \geq f(p^{+})$ then let $\Astop$ be the minimal value of $a$
satisfying  $f(p^{+}) \leq y^{(a)}.$
Otherwise, let $\Astop$ be the minimal value of $a$ such that $y^{(a)} = y^{(a+1)} \leq s_{j+1}.$
\deflab{a:stop}
\end{definition}

\begin{claim}
The stopping parameter $\Astop$ (\defref{a:stop}) is well-defined 
and the matching scheme is valid for $a\leq \Astop.$ 
\end{claim}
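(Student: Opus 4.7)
My plan is to prove both statements simultaneously by an induction on $a$, processing one layer of the iterated matching scheme at a time, from $a=1$ up to the first value $\Astop$ at which the stopping condition in \defref{a:stop} is triggered. The inductive hypothesis will carry both the well-definedness of the iterates $t_{\min}^{(a-1)},x^{(a-1)},y^{(a-1)}$ and the validity of all matching pieces introduced up through iteration $a-1$.

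\textbf{Existence of the iterates.} For the base case $a=1$ we have $x^{(1)}=x$, $y^{(1)}=y$, $t_{\min}^{(1)}=t_{\min}$, and their existence is already supplied by \claimref{x:exists} and \claimref{y:exists}. For the inductive step, assume $a-1<\Astop$, so we are still in a non-trivial situation with $\pi(p^{-})>\tau(z_{\min}^{(a)})+\delta$ (otherwise \claimref{trivial:subcase} would already close the proof). Then $t_{\min}^{(a)}$ is immediate by continuity of $\tau$; $x^{(a)}$ exists in $[x^{(a-1)},p_{\max}]$ via an intermediate-value argument mirroring \claimref{x:exists}, using the monotonicity of $\tau(t_{\min}^{(a)})$ in $a$ together with \claimref{not:so:bad} to bracket the target level $\min(\tau(t_{\min}^{(a)})+\delta,\tau(s_{j+1})-\delta)$ between $\pi(x^{(a-1)})$ and $\pi(p_{\max})$; and $y^{(a)}$ exists by a corresponding adaptation of \claimref{y:exists}.

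\textbf{Validity of the new matchings.} At step $a\geq 2$ the scheme appends the pieces $\pi[x^{(a-1)},x^{(a)}]\Leftrightarrow\tau(y^{(a-1)})$ and $\pi(x^{(a)})\Leftrightarrow\tau[y^{(a-1)},y^{(a)}]$. The second piece is handled directly by \claimref{frechet:xa}. For the first piece, I would verify the containment $\setSubC{\pi}{p}{x^{(a-1)}}{x^{(a)}}\subseteq[\pi(p_{\max})-2\delta,\pi(p_{\max})]=\range{\tau(y^{(a-1)})}{\delta}$. The upper bound is immediate from the definition of $p_{\max}$ as the $\pi$-argmax on $[x,p^{-}]\supseteq[x^{(a-1)},x^{(a)}]$. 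For the lower bound, I would apply \propref{frechet}(i) pointwise to convert the $\pi$-values into $\tau$-values along the ascending signature edge $\tau[s_{j},s_{j+1}]$, and then invoke \propref{signature}(iv) (bounded descent) together with the inequality $\pi(p_{\max})\leq\tau(t_{\min})+3\delta$ derived in the proof of \claimref{max:dist}.

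\textbf{Termination.} Finally, $\Astop$ is reached in finitely many steps because $\{y^{(a)}\}$ is non-decreasing and bounded above by $s_{j+1}$, and each strict increase of $y^{(a)}$ forces $\tau$ to pass through at least one new local extremum; since $\tau$ is piecewise linear with finitely many vertices, the sequence stabilizes. In the case $s_{j+1}<f(p^{+})$ the stabilization itself is the stopping condition. In the case $s_{j+1}\geq f(p^{+})$, stabilization before $y^{(a)}\geq f(p^{+})$ would force $\tau(z_{\min}^{(a+1)})\geq\pi(p^{-})-\delta$, contradicting the standing non-trivial sub-case assumption and triggering \claimref{trivial:subcase} instead. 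I expect the main obstacle to be the lower-bound argument for $\pi$ on $[x^{(a-1)},x^{(a)}]$, which must carefully reconcile the matching $f$, the signature descent bound, and the monotonic growth of the thresholds $\pi(x^{(a)})$; every other step is a fairly direct reuse of the base-case machinery already established.
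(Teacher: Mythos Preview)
Your proposal is essentially correct and follows the same overall strategy as the paper, but you are making one part much harder than it needs to be, and you are slightly misrouting another.

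\textbf{The ``main obstacle'' is not an obstacle.} You flag the lower bound for $\pi$ on $[x^{(a-1)},x^{(a)}]$ as the hard step and propose to derive it via \propref{frechet}(i), \propref{signature}(iv), and the inequality from \claimref{max:dist}. This is unnecessary. By construction $x^{(a)}\in[x^{(a-1)},p_{\max}]$ and $x^{(1)}=x$, so $[x^{(a-1)},x^{(a)}]\subseteq[x,p_{\max}]\subseteq[x,p^{-}]$. Moreover $\tau(y^{(a-1)})=\pi(p_{\max})-\delta=\tau(y)$ for every $a$, by the very definition of $y^{(a)}$. Hence \claimref{frechet:ya} applies verbatim and gives the entire containment $\setSubC{\pi}{p}{x^{(a-1)}}{x^{(a)}}\subseteq\range{\tau(y^{(a-1)})}{\delta}$ in one stroke. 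The paper's proof of validity is literally one sentence citing \claimref{frechet:x}, \claimref{frechet:ya}, and \claimref{frechet:xa}.

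\textbf{Existence of $x^{(a)}$.} Your bracketing argument is right in spirit, but the upper bracket comes out more cleanly than via \claimref{not:so:bad}: since $t_{\min}^{(a)}$ is the argmin of $\tau$ on $[y^{(a-1)},s_{j+1}]$, which contains $y^{(a-1)}$, you get $\tau(t_{\min}^{(a)})\leq\tau(y^{(a-1)})=\pi(p_{\max})-\delta$ directly, hence $\pi(p_{\max})\geq\tau(t_{\min}^{(a)})+\delta\geq\min(\tau(t_{\min}^{(a)})+\delta,\tau(s_{j+1})-\delta)$. This same inequality also immediately gives the existence of $y^{(a)}$.

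\textbf{Termination.} Your stabilization argument matches the paper's, including the reduction to \claimref{trivial:subcase} when $s_{j+1}\geq f(p^{+})$ and the sequence stalls below $f(p^{+})$. The paper makes this reduction explicit by computing $\tau(z_{\min}^{(a)})=\tau(t_{\min}^{(a)})=\tau(y^{(a-1)})=\pi(p_{\max})-\delta\geq\pi(p^{-})-\delta$ at the stall index, which is exactly what you sketch.
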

\begin{proof}
\newcommand{\Astall}{\widehat{a}}
We first argue that there must be a value of $a$ such that 
$t_{\min}^{(a+1)}=y^{(a)}=y^{(b)}$ for any $b>a$.
Recall that by our initial assumptions, $z_{\min} \in [s_j,s_{j+1}]$ and 
thus $z_{\min} \leq t_{\min} \leq s_{j+1}$. As a consequence, \claimref{y:exists} 
testifies that in the non-trivial case, the point $\tau(y)$ exists and is well-defined.
We defined $y^{(1)}=y$ and for $a>1$ we defined
\[y^{(a)}= \min \{t \in [t_{\min}^{(a)},s_{j+1}] ~|~ \tau(t) = \pi(p_{\max}) - \delta\}.\]
Since by \claimref{sj1:high}, $\tau(s_{j+1}) \geq \pi(p_{\max})-\delta = \tau(y^{(a)})$,
there must be a value of $a$ such that 
\[\minSubC{\tau}{t}{y^{(a)}}{s_{j+1}} \geq \tau(y^{(a)}).\]
Let this value of $a$ be denoted $\Astall$.
In this case, it follows by definition that $t_{\min}^{(\Astall+1)} = y^{(\Astall)}$, 
which implies that $y^{(\Astall+1)}=y^{(\Astall)}$ and $t_{\min}^{(\Astall+2)}=y^{(\Astall+1)}, \dots$.
This has the effect that $y^{(\Astall)}=y^{(b)}$ for any $b > \Astall$.

Now, if $s_{j+1} < f(p^{+})$, then the above analysis implies that $\Astop$ is well-defined.
However, if $s_{j+1} \geq f(p^{+})$, we defined $\Astop$ to be the minimal value of $a$
such that $f(p^{+}) \leq y^{(a)}$. Now it might happen that $y^{(\Astall)} \leq f(p^{+}) \leq s_{j+1}$.
In this case, there exists no value of $a$ such that $f(p^{+}) \leq y^{(a)}$, thus $y^{(\Astop)}$ does not exist.
We can reduce this case to the trivial case (\caseref{matroska}(\ref{item:trivial})) as follows. 
By \claimref{sj1:high}, $\tau(s_{j+1}) \geq \pi(p_{\max})-\delta$ and by
\propref{signature}(\ref{item:sig:sj1:max}), $\tau(s_{j+1})$ must be a maximum on $\tau[s_{j},
s_{j+2}]$. Thus, by definition of $t_{\min}^{(\Astop)}$, we would have 
$\tau\left(z_{\min}^{(\Astop)}\right) = \tau\left(t_{\min}^{(\Astop)}\right) = \tau(y^{(\Astop-1)}) = \pi(p_{\max})-\delta \geq \pi(p^-)-\delta$, which is
\caseref{matroska}(i) (the trivial case).
Thus, also in the non-trivial case, $\Astop$ is well-defined.

The validity of the matching scheme  for $a\leq \Astop$ follows from \claimref{frechet:x}, \claimref{frechet:ya} and \claimref{frechet:xa}.
\end{proof}

It follows that the iterative part of the matching scheme is valid for $a\leq
\Astop.$ 
Now we are left with the unmatched subcurves $\pi[x^{(\Astop)}, p^{-}]$ and
$\tau[y^{(\Astop)}, f(p^{+})]$ and we have to complete the matching scheme.
In order to set up a case analysis with a similar structure as before, we define 
\begin{eqnarray*}
q'&=& \min\pth{g(\pi(p_{\max})), g(\tau(t_{\min}^{(\Astop-1)})+\delta)}\\
q''&=& \min\pth{g(\pi(p_{\max})), g(\tau(s_{j+1})-\delta)}
\end{eqnarray*}

\begin{table}[h]
\center
\begin{tabular}{|c|m{7.5cm}|m{5cm}|}
\hline
case & definition & intended matching\\
\hline
\ref{case:matroska}(\ref{item:trivial}) & 
$\exists a: \pi(p^{-}) \leq \tau(z_{\min}^{(a+1)}) + \delta$
&
$\pi[x^{(a)},p^{-}] \Leftrightarrow \tau(y^{(a)})$\newline
$\pi(p^{+}) \Leftrightarrow \tau[y^{(a)},f(p^{+})]$\\
\hline
\ref{case:matroska}(\ref{item:level}) & 
$p^{+} \leq f^{-1}(y^{(\Astop)}) \leq q'$
&
$\pi[x^{(\Astop)},p^{-}]  \Leftrightarrow \tau(y^{\Astop})$\newline
$\pi[ p^{+}, f^{-1}(y^{\Astop})] \Leftrightarrow \tau(y^{\Astop})$ \\
\hline
\ref{case:matroska}(\ref{item:upwards}) & 
$p^+\leq q' < f^{-1}(y^{(\Astop)})$ and \newline
$q'= g(\pi(p_{\max}))$ 
&
This case can be reduced\newline
to \caseref{matroska}(i) 
\\
\hline
\ref{case:matroska}(\ref{item:downwards}) & 
$p^+\leq  q' < f^{-1}(y^{(\Astop)})$ and\newline
$ q' = g(\tau(t_{\min}^{\Astop-1})+\delta)$ 
&
$\pi[x^{(\Astop-1)},p^{-}] \Leftrightarrow \tau(y^{(\Astop-1)})$\newline
$\pi[p^{+}, q'] \Leftrightarrow \tau(y^{(\Astop-1)})$\newline
$\pi(q') \Leftrightarrow \tau[y^{(\Astop-1)}, f(q')]$
\\
\hline
\ref{case:matroska}(\ref{item:upwards:sig}) &
$p^{+} > f^{-1}(y^{(\Astop)})$ and \newline
$q'' = g(\pi(p_{\max}))$\newline
For the matching scheme, let\newline 
$x'=\min\{t\in [x^{(\Astop)},p_{\max}]~|~ \tau(t)=\tau(s_{j+1})-\delta \}$ \newline
$y'=\max\{t\in [0,f(q)]~|~ \tau(t)=\tau(y) \}$ \newline
$z=\max\{p^{+},f^{-1}(y')\}$ 
& 
$\pi[x^{(\Astop)},x'] \Leftrightarrow \tau(y^{(\Astop)})$\newline
$\pi(x') \Leftrightarrow \tau[y^{(\Astop)},y']$\newline
$\pi[x',p^{-}] \Leftrightarrow \tau(y')$\newline
$\pi[p^{+}, z] \Leftrightarrow \tau(y')$\newline
$\pi[z] \Leftrightarrow \tau[y',f(p^{+})]$
\\
\hline
\ref{case:matroska}(\ref{item:downwards:sig}) & 
$p^{+} > f^{-1}(y^{(\Astop)})$ and \newline
$q'' = g(\tau(s_{j+1})-\delta)$
&
$\pi[x^{(\Astop)},p^{-}] \Leftrightarrow \tau(y^{(\Astop)})$\newline
$\pi[p^{+},q''] \Leftrightarrow \tau(y^{(\Astop)})$\newline
$\pi(q'') \Leftrightarrow \tau[y^{(\Astop)},f(q'')]$\\
\hline
\end{tabular}
\caption{Subcases for \caseref{matroska}:
(\ref{item:trivial}) trivial case (\claimref{trivial:subcase}),
(\ref{item:level}) $\pi$ stays level, 
(\ref{item:upwards}) $\pi$ tends upwards,
(\ref{item:downwards}) $\pi$ tends downwards,
(\ref{item:upwards:sig}) unmatched signature vertex and $\pi$ tends upwards,
(\ref{item:downwards:sig}) unmatched signature vertex and $\pi$ tends downwards.
Examples of these cases are shown in \figref{shortcutting:case5i} and
\figref{shortcutting:case5v}.
}
\tablab{matroska:subcases}
\end{table}

The exact case distinction is specified in \tabref{matroska:subcases}:
\begin{inparaenum}[(i)]
\item trivial case (see \claimref{trivial:subcase}),\label{item:trivial} 
\item $\pi$ stays level, \label{item:level}
\item $\pi$ tends upwards,\label{item:upwards}
\item $\pi$ tends downwards,\label{item:downwards}
\item unmatched signature vertex and $\pi$ tends upwards, \label{item:upwards:sig}
\item unmatched signature vertex and $\pi$ tends downwards. \label{item:downwards:sig}
\end{inparaenum}
It remains to prove that the case analysis is complete and to prove correctness
in each of these subcases.

\begin{claim}
The case distinction of subcases of \caseref{matroska} (\tabref{matroska:subcases}) is complete. 
\claimlab{matroska:complete}
\end{claim}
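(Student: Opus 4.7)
The plan is a straightforward dichotomy. First, recall that the case $\exists a\colon \pi(p^{-})\le \tau(z_{\min}^{(a+1)})+\delta$ is precisely subcase (\ref{item:trivial}); so I may assume for the remainder that $\pi(p^{-})>\tau(z_{\min}^{(a+1)})+\delta$ for all $a$. Under this non-trivial assumption, the preceding claim guarantees that the stopping parameter $\Astop$ is well-defined, and hence $x^{(\Astop)}$ and $y^{(\Astop)}$ are well-defined vertices in the iteration. Now split on the relative order of $p^{+}$ and $f^{-1}(y^{(\Astop)})$, which partitions the remaining possibilities into the block (\ref{item:level})--(\ref{item:downwards}) (when $p^{+}\le f^{-1}(y^{(\Astop)})$) and the block (\ref{item:upwards:sig})--(\ref{item:downwards:sig}) (when $p^{+}>f^{-1}(y^{(\Astop)})$).

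For the first block, I use $q'=\min(g(\pi(p_{\max})),\, g(\tau(t_{\min}^{(\Astop-1)})+\delta))$. Either $f^{-1}(y^{(\Astop)})\le q'$, which is exactly subcase (\ref{item:level}), or $q'<f^{-1}(y^{(\Astop)})$. In the latter situation $q'$ must be attained by one of the two arguments of the $\min$: if $q'=g(\pi(p_{\max}))$ we are in subcase (\ref{item:upwards}), and if $q'=g(\tau(t_{\min}^{(\Astop-1)})+\delta)$ we are in subcase (\ref{item:downwards}). I would then verify that the returned value of $g$ is finite (i.e.\ $\neq p_{\infty}$) rather than undefined: by \claimref{frechet:xa} and the non-trivial assumption, the suffix $\pi[p^{+},1]$ starts within the band $\range{\tau(y^{(\Astop)})}{\delta}$ and must eventually leave it (otherwise one can derive a violation of the non-trivial condition using \propref{frechet}), so by continuity it crosses at least one of the two levels, ensuring $q'<p_{\infty}$.

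For the second block, I use $q''=\min(g(\pi(p_{\max})),\, g(\tau(s_{j+1})-\delta))$ instead. The assumption $p^{+}>f^{-1}(y^{(\Astop)})$ forces the case $s_{j+1}\ge f(p^{+})$ by \defref{a:stop}, so $\tau(s_{j+1})-\delta$ is the appropriate second barrier (rather than $\tau(t_{\min}^{(\Astop-1)})+\delta$) since $\pi(p^{+})$ already exceeds the lower threshold. Again $q''$ must be attained by one of the two arguments of its $\min$, yielding subcase (\ref{item:upwards:sig}) or (\ref{item:downwards:sig}) respectively. Well-definedness of $q''$ follows by the same continuity argument, now using \claimref{sj1:high} to ensure the value $\tau(s_{j+1})-\delta\le \pi(p_{\max})$ lies within the range traversed by $\pi[p^{+},1]$ before it returns to the height of $\pi(p^{-})$.

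The main obstacle will be the bookkeeping around $p_{\infty}$: I need to verify in every branch that the continuity/endpoint arguments force $g$ to return a finite parameter in $[p^{+},1]$, so the $\min$ defining $q'$ or $q''$ is not vacuously $p_{\infty}$. Once this is established, the completeness of the tabulated subcases is immediate from the two dichotomies above (trivial vs.\ non-trivial; $p^{+}\le f^{-1}(y^{(\Astop)})$ vs.\ $p^{+}> f^{-1}(y^{(\Astop)})$; and within each non-trivial branch, the two possible witnesses of the defining $\min$).
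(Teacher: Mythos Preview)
Your high-level dichotomy is correct, but there are two concrete problems in the second block.

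First, you invert the conclusion drawn from \defref{a:stop}. You write that $p^{+}>f^{-1}(y^{(\Astop)})$ ``forces the case $s_{j+1}\ge f(p^{+})$''. This is backwards: by \defref{a:stop}, \emph{when} $s_{j+1}\ge f(p^{+})$ the stopping parameter is chosen minimal with $f(p^{+})\le y^{(\Astop)}$, so in that branch one always has $p^{+}\le f^{-1}(y^{(\Astop)})$. Hence $p^{+}>f^{-1}(y^{(\Astop)})$ forces $s_{j+1}<f(p^{+})$, i.e.\ the signature vertex $\tau(s_{j+1})$ lies inside $\tau[f(p^{-}),f(p^{+})]$. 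This is not a cosmetic slip: the whole point of subcases (\ref{item:upwards:sig})--(\ref{item:downwards:sig}) is precisely that $s_{j+1}$ is ``unmatched'' because it falls in $[f(p^{-}),f(p^{+})]$, and your subsequent reasoning about $q''$ relies on which side the inequality goes.

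Second, your sketch for $q''\neq p_\infty$ does not work. You appeal to \claimref{sj1:high} and say the level $\tau(s_{j+1})-\delta$ is crossed ``before $\pi$ returns to the height of $\pi(p^{-})$'', but $\pi(p^{+})=\pi(p^{-})$, so that phrase has no content, and nothing you cite forces the suffix $\pi[p^{+},1]$ to descend to $\tau(s_{j+1})-\delta$. The paper's argument is substantively different: it uses that $\pi$ must later match $\tau(s_{j+2})$, together with $\tau(s_{j+1})-\tau(s_{j+2})>2\delta$ (\propref{signature}(\ref{item:sig:length:2})), to obtain $\pi(f^{-1}(s_{j+2}))\le \tau(s_{j+1})-\delta$. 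The delicate point is the boundary case $s_{j+2}=1$, where the edge-length bound may fail; here one must invoke the theorem hypothesis $\pi(p_i)\notin\range{\tau(1)}{4\delta}$ and a chain of inequalities through \claimref{max:dist}. None of this is hinted at in your sketch, and this is exactly the place where the enlarged ranges $r_1,r_\ell$ of width $4\delta$ in the statement of \thmref{remove:one} are actually used.

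For the first block your worry about $q'=p_\infty$ is unnecessary: if $q'=p_\infty$ then automatically $f^{-1}(y^{(\Astop)})\le q'$ and you are in subcase (\ref{item:level}); only when $q'<f^{-1}(y^{(\Astop)})$ is finiteness of $q'$ needed, and then it is immediate from $q'<f^{-1}(y^{(\Astop)})\le 1$.
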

\begin{proof}
We assume that we are not in the trivial case
\caseref{matroska}(\ref{item:trivial}). 
If $f(p^{+}) \leq y^{(\Astop)}$ (also $f(p^{+}) \leq y^{(\Astop)} \leq s_{j+1}$) 
we get one of
\caseref{matroska}(\ref{item:level})-(\ref{item:downwards}).
Otherwise we have $f(p^{+}) > y^{(\Astop)}$ (also $f(p^{+}) > s_{j+1} \geq y^{(\Astop)}$). In this case, we get one of  
\caseref{matroska}(\ref{item:upwards:sig})-(\ref{item:downwards:sig}).
In the following, we argue that, indeed, if the subcurve of $\tau$ specified by the parameter interval 
$[f(p^{-}),f(p^{+})]$ contains the signature vertex at $s_{j+1}$, it must be that 
\[ \tau(s_{j+1}) - \delta \in \setSubC{\pi}{p}{p^{+}}{1}\]
and thus, $q''\neq p_\infty$ and $\pi(q'')$ is be one of 
$\brc{\pi(p_{\max}), \tau(s_{j+1})-\delta}$.
Assume that $s_{j+2} \neq 1$, i.e., the next signature vertex after
$\tau(s_{j+1})$ is not the last signature vertex. In this case, by
\propref{signature} and \propref{frechet}, we have
\[\tau(s_{j+1}) \geq \tau(s_{j+2})+2\delta \geq \pi(f^{-1}(s_{j+2})) +
\delta.\]
Since $\pi$ is continuous, this implies that there must exist a point $\pi(t)$
with $t \geq p^{+}$ and $\pi(t) \leq \tau(s_{j+1})-\delta$.
Now, assume that $s_{j+2} = 1$. In this case, we have by the theorem statement
that $\pi(p_i) \notin \range{\tau(s_{j+2})}{4\delta}$.
It must be that either $\tau(s_{j+1}) \geq \pi(p_i) > \tau(s_{j+2}) + 4\delta $
(in which case we can apply the above argument), or 
$\pi(p_i) < \tau(s_{j+2}) - 4\delta \leq \tau(s_{j+1}) - 5\delta$. The second
case is not possible since by \claimref{max:dist} and by \propref{frechet}
we have 
\begin{eqnarray*}
\pi(p_i) &\geq& \tau(f(p_i)) -\delta \geq \tau(t_{\min}) -\delta \geq
\pi(p_{\max}) - 4\delta \\
&& \geq \pi(p^{+})-4\delta \geq \pi(f^{-1}(s_{j+1}))-4\delta
\geq \tau(s_{j+1}) - 5\delta.
\end{eqnarray*}
Here, $\pi(p^{+}) \geq \pi(f^{-1}(s_{j+1}))$ follows from $s_{j+1} \in [f(p^{-}), f(p^{+})]$ 
in \caseref{matroska}(\ref{item:upwards:sig})-(\ref{item:downwards:sig})
and the fact that 
\[ \maxSubC{\pi}{p}{p^{-}}{p^{+}} = \pi(p^{+}), \]
by our initial assumptions.
\end{proof}

\begin{proof}[Proof of \caseref{matroska}(\ref{item:level})]

By \claimref{frechet:y} the \Frechet distance between $\pi[x^{(\Astop)},p^{-}]$ and
$\tau(y^{(\Astop)})$ is at most $\delta$.
By our case distinction,  
\[\setSubC{\pi}{p}{p^{+}}{f^{-1}(y^{(\Astop)})} 
\subseteq [\tau(t_{\min}^{(\Astop-1)})+\delta, \pi(p_{\max})] 
\subseteq [\pi(p_{\max}) -2\delta, \pi(p_{\max})] 
= \range{\tau(y^{(\Astop)})}{\delta},\] 
since by \claimref{max:dist},
\[\tau(t_{\min}^{(\Astop-1)})+\delta \geq \tau(t_{\min})+\delta \geq
\pi(p_{\max}) -2\delta. \]
(Note that $t_{\min}^{(\Astop-1)}$ always exists since $y^{(1)} \leq s_{j+1}$ by
\claimref{xy:inside}.) 
This implies that also the second matching is valid.
\end{proof}

\begin{proof}[Proof of \caseref{matroska}(\ref{item:upwards})]
We can reduce this case to \caseref{matroska}(\ref{item:trivial}) (the trivial case) as
follows.
By our case distinction,  $f(p^{+}) \leq f(q') < y^{(\Astop)}$. Let $b$ be the
maximal value of $a$ such that $f(q') \in [y^{(a)}, y^{(\Astop)}]$.
By \propref{frechet} 
it must be that $\tau(f(q')) \geq \pi(p_{\max}) - \delta =
\tau(y^{(b)})$. Thus $\min(\tau[y^{(b)},f(q')])\geq \pi(p_{\max})-\delta$. This holds since for any $a'$, $\tau$ goes upwards in $\tau(y^{(a')})$, then intersects $\pi(p_{\max})-\delta$ downwards and goes upwards again in $\tau(y^{(a'+1)})$. By our case distinction, $f(p^{+}) \in [y^{(b)},
f(q')]$. Thus, 
\[ \tau(z_{\min}^{(b+1)}) = \minSubC{\tau}{t}{y^{(b)}}{f(p^{+})} \geq \minSubC{\tau}{t}{y^{(b)}}{f(q')} \geq \pi(p_{\max}) - \delta\geq \pi(p^-) - \delta.\]
\end{proof}

\begin{proof}[Proof of \caseref{matroska}(\ref{item:downwards})]
In this case, we rollback the last two matchings of the iterative matching
scheme and instead end with $a=\Astop-1$. Thus, we are left with the 
unmatched subcurves $\pi[x^{(\Astop-1)}, p^{-}]$ and $\tau[y^{(\Astop-1)}, f(p^{+})]$.  
We complete the matching scheme as defined in \tabref{matroska:subcases}.
The validity of the first matching follows directly from
\claimref{frechet:y}, since $x^{(\Astop-1)} > x$.
By the definition of $q'$ and our case distinction, 
\[\setSubC{\pi}{p}{p^{+}}{q'} \subseteq [\tau(t_{\min}^{(\Astop-1)}
+\delta, \pi(p_{\max}))] \subseteq \range{\tau(y^{(\Astop-1)})}{\delta}. \]
This proves validity of the second matching.
\claimref{frechet:xa} implies the validity of the last matching.
\end{proof}

\begin{figure}\centering
\includegraphics[page=1]{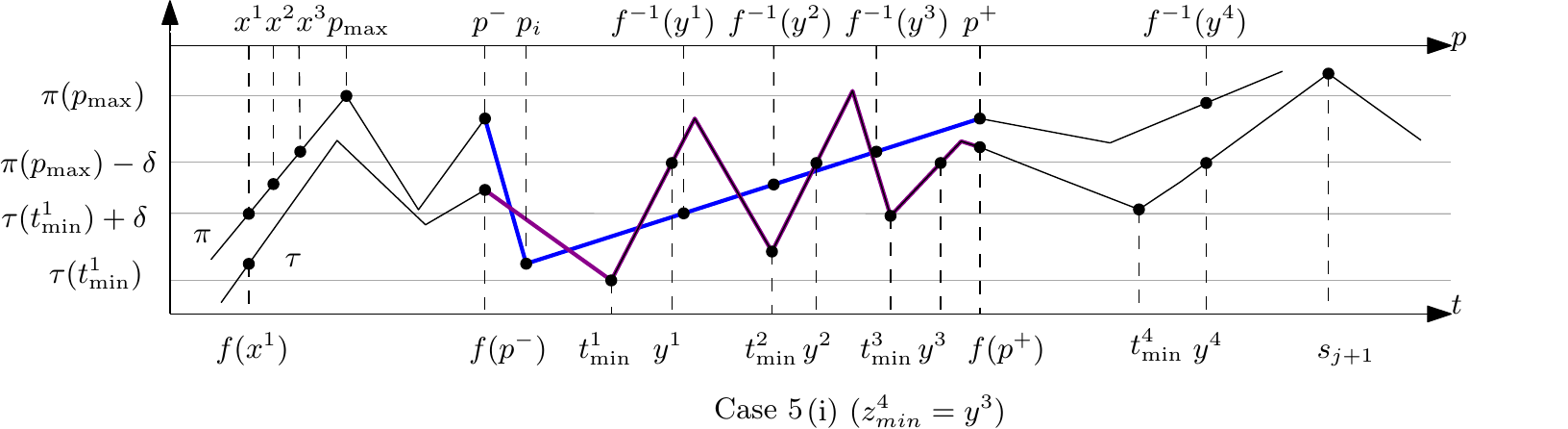}\\
\vspace{\baselineskip}
\includegraphics[page=2]{shortcutting_case5.pdf}\\
\vspace{\baselineskip}
\includegraphics[page=3]{shortcutting_case5.pdf}\\
\vspace{\baselineskip}
\includegraphics[page=4]{shortcutting_case5.pdf}\\
\caption{Examples of
\caseref{matroska}(\ref{item:trivial})-(\ref{item:downwards}).  The broken part of the matching
$f$ is indicated by fat lines.}
\figlab{shortcutting:case5i}
\end{figure}

\begin{figure}\centering
\includegraphics[page=5]{shortcutting_case5.pdf}\\
\vspace{\baselineskip}
\includegraphics[page=6]{shortcutting_case5.pdf}\\
\vspace{\baselineskip}
\includegraphics[page=7]{shortcutting_case5.pdf}\\
\vspace{\baselineskip}
\includegraphics[page=8]{shortcutting_case5.pdf}\\
\caption{Examples of
\caseref{matroska}(\ref{item:upwards:sig})-(\ref{item:downwards:sig}).  The broken part of the matching
$f$ is indicated by fat lines.}
\figlab{shortcutting:case5v}
\end{figure}

We have now handled
\caseref{matroska}(\ref{item:trivial})-(\ref{item:downwards}). Examples of these
cases are shown in \figref{shortcutting:case5i}.
We now move on to prove correctness of the remaining cases 
\caseref{matroska}(\ref{item:upwards:sig}) and \caseref{matroska}(\ref{item:downwards:sig}).

\begin{proof}[Proof of \caseref{matroska}(\ref{item:upwards:sig})]
Observe that in this case $q''=q=g(\pi(p_{\max}))$, as in \caseref{upwards}.
Therefore, $y'$ and $z$ are the same as in \caseref{upwards} and must exist.
We argue that $x'$ must also exist. Indeed, we can derive 
$\pi(x^{(\Astop)}) \leq \tau(s_{j+1})-\delta \leq \pi(p_{\max})$,
as follows. Recall that by our case distinction $f(p^{-}) \leq s_{j+1} \leq f(p^{+})$.
By \propref{frechet}, it follows that
\[\tau(s_{j+1}) = \maxSubC{\tau}{t}{f(p^{-})}{f(p^{+})} 
\leq  \maxSubC{\pi}{p}{p^{-}}{p^{+}} +\delta \leq \pi(p_{\max}) +\delta.\]

Now we need to prove the validity of the matching scheme.
The first line follows from \claimref{frechet:ya}.
For the second line we need to prove that
\[\setSubC{\tau}{t}{y^{(\Astop)}}{y'} \subseteq [\tau(s_{j+1})-2\delta,
\tau(s_{j+1})] = \range{\pi(x')}{\delta}.\]
The upper bound follows from \propref{signature}(\ref{item:sig:sj1:max}).
As for the lower bound, by the definition of the stopping parameter,
\[ \minSubC{\tau}{t}{y^{(\Astop)}}{s_{j+1}} = \tau(y^{(\Astop)})=
\pi(p_{\max})-\delta \geq \tau(s_{j+1})-2\delta, \]
(as we just proved above).
By \claimref{descent:after:sj1}, 
\[ \minSubC{\tau}{t}{s_{j+1}}{f(p^{+})} \geq \tau(s_{j+1})-2\delta.\]
By our case distinction and by \propref{frechet},
\[\minSubC{\tau}{t}{f(p^{+})}{f(q'')} \geq \minSubC{\pi}{p}{p^{+}}{q''}-\delta \geq
\tau(s_{j+1})-2\delta. \]
The validity of the third matching is implied by \claimref{frechet:y}.
For the last two matchings we can apply the respective part of the proof of
\caseref{upwards} verbatim.
\end{proof}

\begin{proof}[Proof of \caseref{matroska}(\ref{item:downwards:sig})]
The validity of the first matching follows from \claimref{frechet:y} and
since $x^{(\Astop)} \geq x$.
By our case distinction,
\[ \setSubC{\pi}{p}{p^{+}}{q''} \subseteq 
[\tau(s_{j+1})-\delta, \pi(p_{\max})] \subseteq \range{\tau(y^{(\Astop)}}{\delta}).\]
Thus, also the second matching is valid.
For the last matching we need to prove that
\[\setSubC{\tau}{t}{y^{(\Astop)}}{f(q'')} \subseteq [\tau(s_{j+1})-2\delta,
\tau(s_{j+1})] = \range{\pi(q'')}{\delta}.\]
Again, as in \caseref{matroska}(\ref{item:upwards:sig}), it holds that 
\[\setSubC{\tau}{t}{y^{(\Astop)}}{f(p^{+})} \subseteq [\tau(s_{j+1})-2\delta,
\tau(s_{j+1})].\]
By our case distinction and by \propref{frechet} 
\[ \minSubC{\tau}{t}{f(p^{+})}{f(q'')} \geq 
\minSubC{\pi}{p}{p^{+}}{q''} -\delta \geq 
\tau(s_{j+1})-2\delta\]
This also implies that $f(q'') < s_{j+2}$, by
\propref{signature}(\ref{item:sig:length}). 
Thus, by \propref{signature}(\ref{item:sig:sj1:max}), we conclude
\[\maxSubC{\tau}{t}{f(p^{+})}{f(q'')} \leq \tau(s_{j+1}).\]
Together this implies the validity of the last matching.
\end{proof}

We now proved correctness of the last two cases
\caseref{matroska}(\ref{item:upwards:sig}) and
\caseref{matroska}(\ref{item:downwards:sig}).  Examples of these cases are shown
in \figref{shortcutting:case5v}.

It remains to prove the boundary cases, which we have ruled out so far by
\assref{inner:sig:edge}. There are three boundary cases:
\begin{compactenum}[(B1)]
\item $s_j=0$ and $s_{j+1}=1$,
\item $s_j=0$ and $s_{j+1}<1$,
\item $s_j>0$ and $s_{j+1}=1$.
\end{compactenum}

To prove the claim in each of these cases, we can use the above proof verbatim
with minor modifications.  Note that in the proof, we used $s_j$ in its function
as the minimum on the signature edge $\overline{s_js_{j+1}}$, resp., we used
$s_{j+1}$ in its function as the maximum
on this edge.  Thus, let
 \[s_{\min}=\argmin_{s \in [s_j,s_{j+1}]} \tau(s), ~~~ s_{\max}=\argmax_{s \in [s_{j},s_{j+1}]} \tau(s). \]

\begin{claim}
In each of the cases (B1), (B2) and (B3), it holds that
$f(p_i) \in [s_{\min},s_{\max}]$ and $\tau(s_{\max})-\tau(s_{\min}) \geq
4\delta$.
\claimlab{pi:inside}
\end{claim}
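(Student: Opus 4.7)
The plan is to prove both parts of \claimref{pi:inside} in each boundary case by adapting the Frechet and signature machinery already developed (\propref{frechet}, \propref{signature}, and \claimref{descent:after:sj1}), while exploiting the fact that endpoint signature vertices have enlarged ranges $r_1$ and $r_\lenClusters$ of radius $4\delta$ rather than $\delta$. Throughout, I invoke \assref{sign:edge:asc} so that $s_j$ is the lower and $s_{j+1}$ the upper signature endpoint of the edge containing $z_{\min}$, and \assref{nontrivialcase} so that $\tau(z_{\min})<\pi(p^{-})-\delta$.

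For case (B1), the signature has exactly two vertices, both of which are endpoints of $\tau$. The range condition in \defref{signature} for $\lenClusters=2$ forces $\tau(t)\in[\tau(0)-\delta,\tau(1)+\delta]$ for all $t$, so by \propref{frechet}(\ref{item:frechet:dist}) the image $\tau(f(p_i))$ lies in the same interval. Combined with the exclusions $\pi(p_i)\notin r_1=\range{\tau(0)}{4\delta}$ and $\pi(p_i)\notin r_2=\range{\tau(1)}{4\delta}$, a short arithmetic argument rules out the ``below'' and ``above'' alternatives and yields $\tau(0)+4\delta<\pi(p_i)<\tau(1)-4\delta$, which immediately gives $\tau(s_{\max})-\tau(s_{\min})=\tau(1)-\tau(0)>8\delta$. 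To deduce $f(p_i)\in[s_{\min},s_{\max}]$, I apply the direction-preserving property to the single signature edge: if $f(p_i)<s_{\min}$, then for $t=f(p_i)<t'=s_{\min}$ direction-preserving forces $\tau(f(p_i))\le\tau(s_{\min})+2\delta\le\tau(0)+2\delta$, contradicting $\tau(f(p_i))\ge\pi(p_i)-\delta>\tau(0)+3\delta$; a symmetric argument rules out $f(p_i)>s_{\max}$.

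For cases (B2) and (B3), which are symmetric, I first establish the $4\delta$-gap using the boundary-edge range condition $\tau(t)\in[\tau(s_j)-\delta,\tau(s_{j+1})]$ on the first (resp.~last) signature edge. In (B2), $v_j=\tau(s_j)=\tau(0)$ is an endpoint with $r_j$ of radius $4\delta$, while $v_{j+1}$ is interior with radius $\delta$. Using the Frechet property together with the range condition as in (B1), exclusion from $r_j$ gives $\pi(p_i)>\tau(s_j)+4\delta$, while exclusion from $r_{j+1}$ together with $\tau(z_{\min})\le\tau(s_{j+1})$ (by the range condition) gives $\pi(p_i)<\tau(s_{j+1})-\delta$. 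Combining, $\tau(s_{j+1})-\tau(s_j)>5\delta$, and since $s_j\in[s_j,s_{j+1}]$ forces $\tau(s_{\min})\le\tau(s_j)$, we conclude $\tau(s_{\max})-\tau(s_{\min})\ge\tau(s_{j+1})-\tau(s_j)>5\delta>4\delta$.

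The main obstacle is the containment $f(p_i)\in[s_{\min},s_{\max}]$ in (B2) and (B3). The $f(p_i)\ge s_{\min}$ direction is handled exactly as in (B1) via direction-preserving within the boundary edge: any $f(p_i)<s_{\min}$ would force $\tau(f(p_i))\le\tau(s_{\min})+2\delta\le\tau(s_j)+2\delta$, contradicting $\tau(f(p_i))>\tau(s_j)+3\delta$. The upper bound $f(p_i)\le s_{\max}=s_{j+1}$ is the delicate part: if $s_{j+1}\ge f(p^{+})$ it is trivial, so assume $s_{j+1}\in[f(p^{-}),f(p^{+})]$. Then \claimref{descent:after:sj1} pins $\tau[s_{j+1},f(p^{+})]$ into $[\tau(s_{j+1})-2\delta,\tau(s_{j+1})]$; if $f(p_i)>s_{j+1}$, this interval together with \propref{frechet} forces $\pi(p_i)\ge\tau(s_{j+1})-3\delta$, and by perturbing the witness matching (moving the image of $p_i$ backwards along $\tau$ while staying within $\delta$ of $\pi(p_i)$, which is possible because $\tau$ reaches the value $\pi(p_i)\pm\delta$ inside the edge $[s_j,s_{j+1}]$ by the intermediate value theorem applied to $\tau$ on $[f(p^{-}),s_{j+1}]$) one can choose a witness $f$ with $f(p_i)\le s_{j+1}$ without violating monotonicity. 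This reparametrization argument, together with \claimref{no:minimum} (which was proven for the interior case but extends by the same proof, using $\pi(p_i)\notin r_j$), is what makes the claim valid for any witness of the \Frechet distance and constitutes the main technical step.
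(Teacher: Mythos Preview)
Your argument for the $4\delta$-gap and for the ``boundary-side'' containment (i.e.\ $f(p_i)\ge s_{\min}$ when $s_j=0$, and $f(p_i)\le s_{\max}$ when $s_{j+1}=1$) via direction-preserving plus \propref{frechet} is sound and matches the paper's route. The paper's proof is essentially a one-liner: it observes that in the boundary cases $\tau(s_{\min})\in[\tau(0)-\delta,\tau(0)]$ (resp.\ $\tau(s_{\max})\in[\tau(1),\tau(1)+\delta]$), then invokes direction-preserving on the boundary signature edge together with \propref{frechet} and the $4\delta$-exclusion to pin down $f(p_i)$; from $f(p_i)\in[s_{\min},s_{\max}]$ it reads off $\tau(f(p_i))\ge\tau(s_{\min})$ and $\tau(f(p_i))\le\tau(s_{\max})$, and combines this with the exclusion to obtain the gap. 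Your derivation of the gap is actually a bit more direct than the paper's, since you bound $\pi(p_i)$ between $\tau(s_j)+4\delta$ and $\tau(s_{j+1})-\delta$ without routing through $\tau(f(p_i))$.

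Where your proposal breaks down is the final paragraph, the ``interior-side'' bound $f(p_i)\le s_{\max}=s_{j+1}$ in case~(B2). The claim is a statement about the fixed witness $f$ that was chosen at the start of the proof of \lemref{remove:one}; you cannot establish it by \emph{replacing} $f$ with a perturbed witness. Saying ``one can choose a witness $f$ with $f(p_i)\le s_{j+1}$'' is a change of hypothesis, not a proof of the stated inclusion, and the sentence ``this makes the claim valid for any witness'' is self-contradictory. Moreover, your invocation of \claimref{descent:after:sj1} here is premature: that claim was proved under \assref{inner:sig:edge}, which is precisely what the boundary analysis is meant to lift; you would first need its boundary analogue. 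The paper does not go through any of this machinery for the interior side: in (B2) the vertex $s_{j+1}$ is an ordinary interior signature vertex, so the existing interior arguments (in particular the analogue of \claimref{no:minimum} with $s_j$ replaced by $s_{\min}$, whose proof carries over because $[\tau(s_{\min})-\delta,\tau(s_{\min})+\delta]\subset r_1$) already cover that side, and the direction-preserving step is only needed for the genuinely new boundary side.
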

\begin{proof}
By the theorem statement and by \defref{signature}, it holds that
\[ \pi(p_i) \notin \range{v_1}{4\delta} \cup \range{v_{\ell}}{4\delta} 
= \range{\tau(0)}{4\delta} \cup \range{\tau(1)}{4\delta}\]
i.e., the removed vertex $\pi(p_i)$ lies very far from the endpoints of
the curve $\tau$. At the same time, by \defref{signature},
in case $s_j=0$,
\[\tau(0) \geq \tau(s_{\min}) \geq \tau(0) - \delta\] 
and, in case $s_{j+1}=1$,
\[\tau(1) \leq \tau(s_{\max}) \leq \tau(1) + \delta.\]
By the direction-preserving property of \defref{signature} and by
\propref{frechet}, this implies that $f(p_i) \in [s_{\min},s_{\max}]$.
In the cases where $s_j=0$, this implies
 \[\tau(f(p_i)) \geq \tau(s_{\min}) \geq \tau(0) - \delta, \]
therefore, by the above, $\tau(f(p_i)) \geq \tau(0) + 4\delta \geq
\tau(s_{\min})+4\delta$.
Similarly in the cases, where $s_{j+1}=1$, 
we can derive that $\tau(f(p_i)) \leq \tau(1) - 4\delta \leq \tau(s_{\max})-4\delta$.
In each of the cases (B1), (B2) and (B3), this implies the second part of the claim.
\end{proof}

We replace \propref{signature} with the following property. 
\begin{property}[Signature (boundary case)]
\mbox{}
\begin{compactenum}[(i)]
\item $\tau(s_{\max}) - \tau(s_{\min}) > 2\delta$,
\item $\tau(s_{\min}) = \minSubC{\tau}{t}{s_{j-1}}{s_{j+1}}$ 
      (if $s_j=0$, then $\tau(s_{\min}) = \minSubC{\tau}{t}{0}{s_{j+1}}$),
\item $\tau(s_{\max}) = \maxSubC{\tau}{t}{s_j}{s_{j+2}}$ 
      (if $s_{j+1}=1$, then $\tau(s_{\max}) = \maxSubC{\tau}{t}{s_j}{1}$),
\item $\tau(t) \geq \tau(t')-2\delta$ for $s_{\min}\leq t' < t \leq
s_{\max}$,
\item if $s_{j}=0$, then $\tau(s_{\max}) - \tau(s_{j+2}) > 2\delta$.
\end{compactenum}
\proplab{signature:boundary}
\end{property}

\propref{signature:boundary}(\ref{item:sig:sj:min}), (\ref{item:sig:sj1:max}), 
(\ref{item:sig:descent}), and (\ref{item:sig:length:2}) hold by \defref{signature}.
\propref{signature:boundary}(\ref{item:sig:length}) follows from
\claimref{pi:inside}.

Instead of \claimref{descent:after:sj1} we use the claim
\begin{claim}
If $s_{\max} \in [f(p^{-}),f(p^{+})]$ then
$\setSubC{\tau}{t}{s_{\max}}{f(p^{+})} \subseteq [\tau(s_{\max})-2\delta, \tau(s_{\max})]$.
\end{claim}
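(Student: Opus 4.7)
The plan is to follow the template of \claimref{descent:after:sj1} with $s_{\max}$ playing the role of $s_{j+1}$, but appealing to \propref{signature:boundary} and to the direction-preserving clause of \defref{signature}(ii) directly. As in the original, I would separate the claim into a lower bound $\minSubC{\tau}{t}{s_{\max}}{f(p^+)} \geq \tau(s_{\max})-2\delta$ and an upper bound $\maxSubC{\tau}{t}{s_{\max}}{f(p^+)} \leq \tau(s_{\max})$, and prove the lower bound first.

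For the lower bound, I would treat the three boundary sub-cases separately. In (B1) and (B3) we have $s_{j+1}=1$, so $f(p^+)\le s_{j+1}$ automatically and $[s_{\max}, f(p^+)] \subseteq [s_j,s_{j+1}]$ lies entirely inside a single ascending signature edge. Applying \defref{signature}(ii) to the pair $s_{\max}<t'$ for any $t'\in[s_{\max},f(p^+)]$ directly yields $\tau(t') \geq \tau(s_{\max})-2\delta$, which is the whole bound. Case (B2), where $s_{\max}=s_{j+1}<1$, is the only one that really mirrors the original argument because $f(p^+)$ may now lie past $s_{j+1}$ in the next signature piece. Here I would replicate the proof of \claimref{descent:after:sj1} verbatim: split on whether $f^{-1}(s_{\max}) \in [p_i,p^+]$ or $f^{-1}(s_{\max})\in [p^-,p_i)$, use that $\pi$ is monotone on each half of $[p^-,p^+]$ (with $\pi(p_i)$ its minimum), and transfer the minimum from $\pi$ to $\tau$ via \propref{frechet}(ii), picking up an extra $\delta$ each time.

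For the upper bound, cases (B1) and (B3) are again immediate: by \propref{signature:boundary}(iii), $\tau(s_{\max})=\max(\tau[s_j,1]) \geq \max(\tau[s_{\max},f(p^+)])$, since $[s_{\max},f(p^+)] \subseteq [s_j,1]$. In case (B2) I would re-use the original two-step argument: the lower bound just established combined with \propref{signature:boundary}(v), namely $\tau(s_{\max})-\tau(s_{j+2})>2\delta$, forces $\tau(s_{j+2}) < \tau(s_{\max})-2\delta \leq \minSubC{\tau}{t}{s_{\max}}{f(p^+)}$, so $s_{j+2} \notin [s_{\max},f(p^+)]$ and hence $f(p^+)<s_{j+2}$; then \propref{signature:boundary}(iii) gives $\maxSubC{\tau}{t}{s_{\max}}{f(p^+)}\leq \tau(s_{\max})$.

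The main obstacle is case (B2), which is the only sub-case that genuinely recapitulates \claimref{descent:after:sj1}; the other two become one-line arguments because the subcurve stays inside a single ascending signature edge, where direction-preserving does all the work. I would also want to double-check the degenerate configurations in which $s_{\max}=f(p^+)$, but there the interval collapses to a point and both bounds hold trivially.
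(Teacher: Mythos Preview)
Your proposal is correct and matches the paper's approach: the paper does not give an explicit proof of this claim but simply states it as the boundary analogue of \claimref{descent:after:sj1}, remarking that the whole argument goes through by replacing $s_{j+1}$ with $s_{\max}$ (and $s_j$ with $s_{\min}$). Your case split is a faithful unpacking of that substitution, with the added (and correct) observation that in (B1) and (B3) the interval $[s_{\max},f(p^+)]$ stays inside the single signature edge $[s_j,1]$, so direction-preserving and \propref{signature:boundary}(iii) suffice without the two-case $\pi$-argument; only one small point is left implicit, namely that in (B2) one indeed has $s_{\max}=s_{j+1}$, which follows from the range condition of \defref{signature} for the first signature edge.
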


Instead of \claimref{no:minimum} we use the claim
\begin{claim}
$s_{\min} \notin [f(p^{-}),f(p^{+})]$.
\end{claim}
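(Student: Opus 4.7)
The plan is to mirror the proof of \claimref{no:minimum} with $s_{\min}$ playing the role of the signature minimum $s_j$ and with \propref{signature:boundary}(ii) replacing \propref{signature}(\ref{item:sig:sj:min}). I would assume toward contradiction that $s_{\min}\in[f(p^{-}),f(p^{+})]$. From this one gets $\tau(z_{\min})\leq\tau(s_{\min})$ because $z_{\min}$ is the argmin of $\tau$ on $[f(p^{-}),f(p^{+})]$, and $\tau(z_{\min})\geq\tau(s_{\min})$ because $z_{\min}\in[s_j,s_{j+1}]$ and $s_{\min}$ is by definition the argmin of $\tau$ on this signature edge. Hence $\tau(z_{\min})=\tau(s_{\min})$, and by \propref{frechet}(\ref{item:frechet:min}),
\[
\pi(p_i)\;=\;\minSubC{\pi}{p}{p^{-}}{p^{+}}\;\in\;[\tau(s_{\min})-\delta,\tau(s_{\min})+\delta].
\]
The rest of the proof consists in showing in each boundary case (B1)--(B3) that this interval lies inside the excluded union $\bigcup_{k=1}^{\ell}r_k$, contradicting the theorem hypothesis.

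In cases (B1) and (B2), where $s_j=0$, I would use the range property of the first signature edge together with \assref{sign:edge:asc} to argue that the curve can only dip by $\delta$ below $\tau(0)$ near $t=0$ (and stays above $\tau(0)$ elsewhere on $[0,s_{j+1}]$ by minimum-edge-length), so $\tau(s_{\min})\in[\tau(0)-\delta,\tau(0)]$. Consequently $[\tau(s_{\min})\pm\delta]\subseteq[\tau(0)-2\delta,\tau(0)+\delta]\subseteq r_1=[v_1\pm 4\delta]$, contradicting $\pi(p_i)\notin r_1$. In case (B3), where $s_j>0$ and $s_{j+1}=1$, I would split on whether the curve dips below $\tau(s_j)$ on the last signature edge. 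In the non-dipping subcase $\tau(s_{\min})=\tau(s_j)$ and $[\tau(s_{\min})\pm\delta]=r_{\ell-1}$, giving the contradiction directly. In the dipping subcase I would combine \claimref{pi:inside} (which supplies $\tau(s_{\max})-\tau(s_{\min})\geq 4\delta$ and, together with the last-edge range property, $\tau(s_{\max})\leq\tau(1)+\delta$) with the Fr\'echet lower bound $\pi(p_i)\geq\tau(f(p_i))-\delta\geq\tau(s_{\min})-\delta$ to place $\pi(p_i)$ inside $r_{\ell-1}\cup r_\ell$.

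I expect the dipping subcase of (B3) to be the main obstacle, since \defref{signature}(iv) on the last edge permits $\tau(s_{\min})$ to lie strictly below $\tau(s_j)$ by as much as $\delta$, and then neither $r_{\ell-1}$ (of radius $\delta$) nor $r_\ell$ (of radius $4\delta$) alone covers $[\tau(s_{\min})\pm\delta]$. Pinning $\pi(p_i)$ down there requires simultaneously invoking \claimref{pi:inside} and the theorem's exclusions around both $v_{\ell-1}$ and $v_\ell$ (plus the min-edge-length condition $\tau(1)-\tau(s_j)>\delta$ to relate the two). By contrast, (B1) and (B2) are disposed of in one step from the $4\delta$-exclusion around $v_1$.
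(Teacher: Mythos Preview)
Your approach matches the paper's (implicit) proof: the paper does not spell out a separate argument here but simply instructs one to rerun the proof of \claimref{no:minimum} with $s_j\mapsto s_{\min}$, $s_{j+1}\mapsto s_{\max}$, and \propref{signature:boundary} in place of \propref{signature}. Your derivation of $\tau(z_{\min})=\tau(s_{\min})$ and $\pi(p_i)\in[\tau(s_{\min})-\delta,\tau(s_{\min})+\delta]$ is exactly this substitution, and your treatment of (B1) and (B2) via the enlarged range $r_1=[v_1-4\delta,v_1+4\delta]$ is correct and is what the paper intends.

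Your (B3) analysis is over-engineered. The ``dipping subcase'' you worry about does not arise: under the intended range condition for the last signature edge (symmetric to the first-edge case, with the $\delta$-slack around the endpoint $\tau(1)$---this is the reading used in \claimref{pi:inside} and \propref{signature:boundary}), one has $\tau(t)\in\langle\tau(s_j),\tau(1)\rangle\cup[\tau(1)-\delta,\tau(1)+\delta]$ on $[s_j,1]$, and combined with the minimum-edge-length bound $\tau(1)-\tau(s_j)>\delta$ this forces $\min\tau[s_j,1]=\tau(s_j)$. Hence $s_{\min}=s_j$ is an \emph{inner} signature vertex, $r_{\ell-1}=[\tau(s_j)-\delta,\tau(s_j)+\delta]$, and the original proof of \claimref{no:minimum} applies verbatim. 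You were likely misled by the literal indices in \defref{signature}(iv) for $i=\ell-1$; the surrounding lemmas make the intended symmetry clear. The extra machinery you sketch (juggling $r_{\ell-1}$, $r_\ell$, and \claimref{pi:inside}) is unnecessary---and, as you yourself note, does not quite close.
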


Now, the theorem follows in the boundary cases (B1), (B2) and (B3), by replacing
$s_j$ with $s_{\min}$ and replacing $s_{j+1}$ with $s_{\max}$.

This closes the proof of \lemref{remove:one}
\end{proof}

\section{$(\nrClusters,\lenClusters)$-center}

\begin{algorithm}[h]\alglab{candidate:generator:center}
 \KwData{curves $P=\{\inputTraj{1},\dots,\inputTraj{n}\} \subset \Delta_m$, parameters $\alpha,\beta > 0$, $\nrClusters,\lenClusters \in \Na$}
 \KwResult{candidate set $\Gamma^{\nrClusters,\lenClusters}_{\alpha,\beta}(S) \subseteq \Delta_\ell$}
 \caption{Generate candidates for $(\nrClusters,\lenClusters)$-center from signature vertices}
For each $\inputTraj{i}$, let $\VtxSet_i$ be the vertex set of its $\alpha$-signature computed by \algref{low:pass}\; 
Compute the union $U$ of the intervals $r=\left[w-4\alpha,w+4\alpha\right]$ for $w
\in \VtxSet=\bigcup_{i=1}^{n} \VtxSet_i$\; 
\eIf{ $\mu(U) > 24\alpha\nrClusters\lenClusters$}
{Return the empty set\;}
{Discretize $U$ with resolution $\beta$, thereby generating a set of vertices~$\widehat{\VtxSet}$\;
Return all possible curves consisting of $\lenClusters$ vertices from $\widehat{\VtxSet}$\;}
\end{algorithm}

\begin{lemma}\lemlab{candidate:generator:center}
Given a set of curves $P=\{\inputTraj{1},\dots,\inputTraj{n}\} $ and parameters
$\alpha,\beta > 0$, and $\nrClusters,\lenClusters \in \Na$, then \algref{candidate:generator:center}
generates a set of candidate solutions $\Gamma^{\nrClusters,\lenClusters}_{\alpha,\beta}(S) \subseteq \Delta_\ell$
of size at most $\left(\floor{\frac{24\alpha\nrClusters\lenClusters}{\beta}}+6\nrClusters\lenClusters\right)^{\lenClusters}$.
Furthermore, if $\alpha \geq \opt^{(\infty)}_{\nrClusters,\lenClusters}(P)$, then the generated set contains 
$k$ candidates $\tilde{C}=\tilde{c_1},\dots,\tilde{c_k}$ with 
\[ \cost_{\infty}(P, \tilde{C}) \leq \alpha + \beta. \]
\end{lemma}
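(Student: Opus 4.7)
The plan divides into a cardinality bound for the candidate set, which holds unconditionally, and a quality guarantee, which holds when $\alpha\geq\opt^{(\infty)}_{k,\ell}(P)$.

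\textbf{Size bound.} If $\mu(U) > 24\alpha k\ell$, the algorithm outputs the empty set and the bound is trivial, so assume $\mu(U)\leq 24\alpha k\ell$. Every connected component of $U$ contains at least one interval $[w-4\alpha,w+4\alpha]$ of length $8\alpha$, so the number of components is at most $\mu(U)/(8\alpha)\leq 3k\ell$. A component of length $L$ contributes at most $\floor{L/\beta}+2$ points to $\widehat{\VtxSet}$ (grid samples at spacing $\beta$ together with the two component endpoints), and summing yields $|\widehat{\VtxSet}|\leq\floor{\mu(U)/\beta}+2\cdot 3k\ell\leq\floor{24\alpha k\ell/\beta}+6k\ell$, using the subadditivity $\sum_j\floor{L_j/\beta}\leq\floor{\sum_j L_j/\beta}$. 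Since a candidate is an ordered $\ell$-tuple from $\widehat{\VtxSet}$, the candidate set has at most $|\widehat{\VtxSet}|^\ell$ elements.

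\textbf{The set $U$ is small in the interesting regime.} Suppose $\alpha\geq\opt^{(\infty)}_{k,\ell}(P)$ and fix an optimal clustering $C^*=(c_1^*,\dots,c_k^*)$. For each $\tau_i$, let $j(i)$ index its nearest center, so $\distFr{\tau_i}{c_{j(i)}^*}\leq\alpha$. Applied to the $\alpha$-signature of $\tau_i$ and the curve $c_{j(i)}^*$, \lemref{nec:suff} tells us that each signature vertex of $\tau_i$ lies within distance $\alpha$ of some vertex of $c_{j(i)}^*$. The union $\VtxSet$ is therefore covered by at most $k\ell$ intervals of length $2\alpha$ centered at vertices of $C^*$, and expanding each by an additional $4\alpha$ on either side gives $U\subseteq\bigcup_u[u-5\alpha,u+5\alpha]$, with total measure at most $10\alpha k\ell\leq 24\alpha k\ell$. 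Hence the algorithm proceeds past the threshold check.

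\textbf{Constructing a good candidate.} For each $j$ with at least one $\tau_i$ assigned to $c_j^*$, I convert $c_j^*$ into a candidate $\tilde{c}_j$ by shortcutting followed by rounding. Iteratively apply \thmref{remove:one} to delete every vertex of $c_j^*$ that lies outside $U$. The ranges invoked by \thmref{remove:one} (namely $[v-\alpha,v+\alpha]$ for interior and $[v-4\alpha,v+4\alpha]$ for endpoint signature vertices of any $\tau_i$ with $j(i)=j$) are by construction subsets of $U$, so a vertex outside $U$ is outside every range relevant to every such $\tau_i$. Consequently, each removal preserves $\distFr{\tau_i}{\cdot}\leq\alpha$ simultaneously for all such $\tau_i$. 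Let $c_j^{**}$ denote the resulting curve; it has at most $\ell$ vertices, all lying in $U$. Now obtain $\tilde{c}_j$ by rounding each vertex of $c_j^{**}$ to the nearest point of $\widehat{\VtxSet}$, incurring a per-vertex perturbation of at most $\beta$. By \obsref{segments} together with iterated application of \obsref{fd:concat}, $\distFr{c_j^{**}}{\tilde{c}_j}\leq\beta$, and the triangle inequality for the \Frechet pseudo-metric yields $\distFr{\tilde{c}_j}{\tau_i}\leq\alpha+\beta$ for every $\tau_i$ with $j(i)=j$. For any $j$ with no assigned curve, pick $\tilde{c}_j$ arbitrarily from the candidate set. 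The resulting $\tilde{C}=(\tilde{c}_1,\dots,\tilde{c}_k)$ satisfies $\cost_\infty(P,\tilde{C})\leq\alpha+\beta$, as required.

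\textbf{Main obstacle.} The delicate step is the simultaneous validity of the shortcutting across all input curves sharing a center. This hinges on the precise calibration between the $4\alpha$ expansion defining $U$ and the $4\alpha$ range at signature endpoints in \thmref{remove:one}: a smaller expansion would leave some removable vertices unprotected near the endpoints, while a larger one would inflate $\mu(U)$ and weaken the cardinality bound by the same factor in the exponent $\ell$.
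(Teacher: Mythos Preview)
Your argument is correct and follows the same blueprint as the paper: use \lemref{nec:suff} to confine signature vertices near the optimal center vertices (hence $\mu(U)$ is small), invoke \thmref{remove:one} to prune center vertices outside $U$, and round the surviving vertices to the grid $\widehat{\VtxSet}$. You are more explicit than the paper in two places---the counting of grid points per component of $U$, and the justification that the shortcutting can be applied iteratively and simultaneously for all curves assigned to a given center---and you obtain the slightly sharper containment $U\subseteq\bigcup_z[z-5\alpha,z+5\alpha]$ (yielding $\mu(U)\le 10\alpha k\ell$) where the paper argues only $\mu(U)\le 24\alpha k\ell$; both suffice for the threshold check.
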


\begin{proof} 
Let  $C=c_1,\dots,c_{\nrClusters}$ denote an optimal solution for $P$ and  
let $c_i=z_{i1},\dots,z_{i{\lenClusters}}$ denote the vertices for each cluster
center. 
Consider the union of intervals
\[ R= \bigcup_{i=1}^{\nrClusters} \bigcup_{j=1}^{\lenClusters} [z_{ij}-4\alpha,z_{ij}+4\alpha].\]
\lemref{nec:suff} implies that $R$ contains all elements of $\VtxSet$, the
signature vertices computed by \algref{candidate:generator:center}.
Now consider the dual statement, namely, whether the vertices $z_{ij}$ are contained in the set $U$ computed by the algorithm.
If there exists a $z_{ij}$ which is not contained in $U$, then
\thmref{remove:one} implies that we can omit $z_{ij}$ from the solution 
while not increasing the cost beyond $\alpha$.
Therefore, let $\widehat{C}$ denote the solution 
where all vertices that lie outside $U$ have been omitted.
Clearly, $U$ contains all remaining vertices of cluster centers in $\widehat{C}$. 
Therefore, $\Gamma^{\nrClusters,\lenClusters}_{\alpha,\beta}$ must contain $k$
candidates $\tilde{C}=\tilde{c_1},\dots,\tilde{c_k}$ with 
\[ \cost_{\infty}(P,\tilde{C}) \leq \alpha + \beta. \]
Note that $R$ consists of at most $\nrClusters\lenClusters$ intervals and has measure at most
$\mu(R)=8\alpha\nrClusters\lenClusters$.
Therefore, the measure of $U$ can be at most 
$\mu(U) \leq \mu(R) +(2\nrClusters\lenClusters)8\alpha = 24\alpha\nrClusters\lenClusters.$
In the worst case a signature vertex lies at each boundary point of $R$.
Furthermore, $U$ consists of at most $\ceil{\frac{\mu(U)}{8\alpha}}\leq 3\nrClusters\lenClusters$ 
intervals, since each interval has measure at least $8\alpha$.
\end{proof}


\begin{theorem}\thmlab{k:l:center:main}
Let $\eps >0$ and $\nrClusters,\lenClusters \in \Na$ be given constants. 
Given a set of curves $P=\{\inputTraj{1},\dots,\inputTraj{n}\}$
we can compute a $(1+\eps)$-approximation to $\opt^{(\infty)}_{\nrClusters,\lenClusters}(P)$
in time $O\pth{nm\log m}$. 
\end{theorem}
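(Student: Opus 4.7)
The plan is to combine \lemref{candidate:generator:center} with a parametric search over the cost parameter $\alpha$, and to return the best candidate found.

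First I would precompute the signatures of every input curve in $O(nm\log m)$ total time, using the signature-computation algorithms from \secref{on:signatures}.

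Second, I would bracket $\opt^{(\infty)}_{\nrClusters,\lenClusters}(P)$ by quantities $L \leq \opt^{(\infty)}_{\nrClusters,\lenClusters}(P) \leq U$ with $\log(U/L)$ depending only on $\nrClusters$ and $\lenClusters$. A natural realization is a Gonzalez-style farthest-first traversal run on low-complexity approximations of the input (for instance, their signatures), so that each \Frechet distance evaluation costs $O(m\lenClusters)$. A triangle-inequality argument shows that the constant-factor approximation on the simplifications carries over to the original problem, with only an additional multiplicative factor depending on $\nrClusters$ and $\lenClusters$.

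Third, I would iterate over $\alpha_i = U(1+\eps)^{-i}$ for $i = 0, 1, \dots, I$ with $I = O(\log(U/L)/\eps)$, which is constant for constant $\eps,\nrClusters,\lenClusters$. For each $\alpha_i$ I invoke \algref{candidate:generator:center} with parameters $(\alpha_i, \beta_i = \eps\alpha_i)$. By \lemref{candidate:generator:center}, this returns a candidate set of size at most $N = (24\nrClusters\lenClusters/\eps + 6\nrClusters\lenClusters)^{\lenClusters}$, which is $O(1)$ for constant $\eps, \nrClusters, \lenClusters$. For each returned candidate $\tilde C$ I compute $\cost_{\infty}(P, \tilde C)$ by evaluating $n\nrClusters$ \Frechet distances with the standard $O(m\lenClusters)$ dynamic program, and return the best candidate seen overall.

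Correctness follows from \lemref{candidate:generator:center}: the smallest $\alpha_i$ with $\alpha_i \geq \opt^{(\infty)}_{\nrClusters,\lenClusters}(P)$ satisfies $\alpha_i \leq (1+\eps)\opt^{(\infty)}_{\nrClusters,\lenClusters}(P)$, and the corresponding candidate set contains $\tilde C$ with $\cost_{\infty}(P, \tilde C) \leq \alpha_i + \beta_i \leq (1+\eps)^2 \opt^{(\infty)}_{\nrClusters,\lenClusters}(P)$. Rescaling $\eps$ by a constant yields the stated $(1+\eps)$-approximation. The total running time is dominated by the signature precomputation $O(nm\log m)$, since for constant $\eps,\nrClusters,\lenClusters$ both the bracket and the main loop cost $O(nm)$.

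The main obstacle is implementing the bracketing step within $O(nm\log m)$: a naive Gonzalez traversal on full-complexity inputs would spend $O(n\nrClusters m^2)$ time on \Frechet distance evaluations, exceeding the target. The workaround is to run Gonzalez on low-complexity simplifications of the input and to argue via the triangle inequality (together with \lemref{fd:signature}) that the approximation factor is preserved up to constants depending on $\nrClusters$ and $\lenClusters$.
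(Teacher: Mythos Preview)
Your proposal is essentially the same approach as the paper's own proof: obtain a constant-factor bracket via Gonzalez on approximate $\ell$-simplifications (the paper calls this \algref{phase1} and proves \thmref{cf:approx:center:main}), then search the resulting $O(1)$-ratio interval by repeatedly invoking \algref{candidate:generator:center} with $\beta=\eps\alpha$ and evaluating the constant-size candidate sets. The paper uses a binary search with two calls per step (parameters $\alpha_1=\alpha/(1+\eps/2)$ and $\alpha_2=\alpha$) to force a three-way decision, whereas you simply scan a geometric grid $\alpha_i=U(1+\eps)^{-i}$; both yield $O(1)$ iterations and the same correctness argument via \lemref{candidate:generator:center}.

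One small inaccuracy to fix: there is no ``standard $O(m\ell)$ dynamic program'' for the continuous \Frechet distance used here; computing $\distFr{\tau_i}{c_j}$ takes $O(m\ell\log(m\ell))$ via Alt--Godau, which for constant $\ell$ is $O(m\log m)$. Consequently your main loop costs $O(nm\log m)$ rather than $O(nm)$, but the overall bound is unchanged.
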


\begin{proof}
We use 
\algref{phase1}
described in \secref{cf:approx:uni} to compute a
constant-factor approximation.  We obtain an interval
$[\delta_{\min},\delta_{\max}]$ which contains
$\opt^{(\infty)}_{\nrClusters,\lenClusters}(P)$ and such that
$\delta_{\max}/\delta_{\min}\leq 8$ by \thmref{cf:approx:center:main}.
We can now do a binary search in this interval.  
In each step of the binary
search, we apply \algref{candidate:generator:center} to $P$ a constant number
of times and evaluate every candidate solution. 
More specifically, if we apply \algref{candidate:generator:center} to $P$ with parameters
$\alpha$ and $\beta=\eps\alpha$, 
by \lemref{candidate:generator:center}
we gain the following knowledge:
\begin{compactenum}[(i)]
\item Either $\alpha < (1+\eps)\opt^{(\infty)}_{\nrClusters,\lenClusters}(P)$, or
\item $\alpha \geq \opt^{(\infty)}_{\nrClusters,\lenClusters}(P)$ and 
we have computed a solution with cost at most $(1+\eps)\alpha$.
\end{compactenum}
In both cases, the outcome is correct. Since we want to take an exact decision 
during the binary search, we simply call the procedure twice with 
parameters $\alpha_1=\frac{\alpha}{1+\eps/2},\beta_1=\frac{\eps}{2}\alpha_1$ and
$\alpha_2=\alpha,\beta_2=\eps\alpha_2$. 
Now there are three possible outcomes:
\begin{compactenum}[(i)]
\item Either $\opt^{(\infty)}_{\nrClusters,\lenClusters}(P) < \alpha$, or
\item $\opt^{(\infty)}_{\nrClusters,\lenClusters}(P) > (1+\eps)\alpha$, or
\item $\opt^{(\infty)}_{\nrClusters,\lenClusters}(P) \in [\alpha,(1+\eps)\alpha]$.
\end{compactenum} 
So either we can take an exact decision and proceed with the
binary search, or we obtain a $(1+\eps)$-approximation to the solution and we
stop the search.


By \lemref{candidate:generator:center} we know that the size of the
candidate set $\Gamma^{\nrClusters,\lenClusters}_{\alpha,\beta}$ is $O(1)$
(where the constant depends on $\eps, \nrClusters$ and $\lenClusters$.

One execution of \algref{candidate:generator:center} takes $O(nm)$ time for
computing the $n$ signatures (using \algref{low:pass}) and 
$O(1)$ time for generating the candidate set.
Evaluating one candidate solution (consisting of $k$ centers from the candidate set) 
takes $kn$ \Frechet distance computations, where one
\Frechet distance computation takes time $O(m \log m)$
using the algorithm by Alt and Godau
\cite{ag-cfdbt-95}.  The number of binary search steps depends only on the constant
$\eps$ and so is $O(1)$, which implies the running time.
\end{proof}

\section{$(\nrClusters,\lenClusters)$-median}

In this section we will make use of a result by Ackermann \etal \cite{abs-cm-10} for 
computing an approximation to the $k$-median problem under an arbitrary dissimilarity
measure $D:\Delta \times \Delta \rightarrow \mathbb{R}_0^+$ on a ground set of items $\Delta$, i.e. a 
function that satisfies $D(x,y) = 0$, iff $x \not= y$. The result roughly says that
we can obtain an efficient $(1+\varepsilon)$-approximation algorithm for the $k$-median
problem on input $P \subseteq \Delta$, if there is an algorithm that given a random sample 
of constant size returns
a set of candidates for the $1$-median that contains with constant probability (over the
choice of the sample) a $(1+\varepsilon)$-approximation to the $1$-median.

We restate the sampling property defined by Ackermann \etal~(\cite{abs-cm-10},Property 4.1).

\begin{definition}[sampling property]
\deflab{sampling:property}
We say a dissimilarity measure D satisfies the (weak)
$[\eps,\lambda]$-sampling property iff there exist integer constants
$m_{\eps,\lambda}$ and $t_{\eps,\lambda}$ such that for each $P\subseteq \Delta$
of size $n$ and for each uniform sample multiset $S\subseteq P$ of size $m_{\eps,\lambda}$ a set 
$\Gamma(S) \subseteq \Delta$ of size at most $t_{\eps,\lambda}$ can be computed satisfying
\[\Pr\left[ \exists \tilde{c} \in \Gamma(S) : D(P,\tilde{c}) \leq (1+\eps) \opt(P) \right] \geq 1-\lambda.\]
Furthermore, $\Gamma(S)$ can be computed in time depending on $\gamma,\delta$ and $m_{\eps,\lambda}$ only.
\end{definition} 

It is likely that the sampling property
(\defref{sampling:property}) does not hold for the \Frechet distance for
arbitrary value of $\lenClusters$. We will therefore prove a modified sampling
property, which allows the size of the sample to depend on $\lenClusters$.

The following lemma intuitively says that curves that lie far away from a candidate median 
have little influence on the shape of the candidate median.

\begin{lemma}\lemlab{omit:far:away}
Given a set of $n$ curves $P=\{\tau_1,\ldots, \tau_n\}$ and a polygonal curve $\pi$, it holds that 
\[\cost_{1}(P,\widehat{\pi}) \leq (1+\eps)\cost_{1}(P,\pi) \] 
for a curve $\widehat{\pi}$ obtained from $\pi$ by omitting any subset of vertices lying outside 
the following union $R_S$:
\[
R_S = \bigcup_{ 1\leq i \leq n \atop x_i \leq \frac{2x_1}{\eps} }\; \bigcup_{v \in \VtxSet(\sigma_i) } [v-4x_i,v+4x_i],
\]
where the $\tau_i$ are sorted in increasing order of $x_i=\distFr{\inputTraj{i}}{\pi}$ and where
$\sigma_i$ is the $x_i$-signature of $\inputTraj{i} \in P$. 
\end{lemma}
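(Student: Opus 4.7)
The plan is to partition $P$ into a ``near'' subset $P_N = \{\tau_i \in P : x_i \leq 2x_1/\eps\}$ and a ``far'' subset $P_F = P \setminus P_N$, and to bound $\distFr{\tau_i}{\widehat{\pi}}$ separately on each part. Assuming $\eps \leq 2$ (which is the regime of interest for a $(1+\eps)$-approximation) we have $\tau_1 \in P_N$, a fact I will use as a bridge. For the near curves I will invoke \thmref{remove:one} to argue that omitting vertices outside $R_S$ does not affect the \Frechet distance at all; for the far curves I will use a triangle-inequality argument bootstrapped from $\tau_1$.

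For each $\tau_i \in P_N$, the $x_i$-signature $\sigma_i$ contributes the intervals $[v-4x_i,v+4x_i]$ for $v \in \VtxSet(\sigma_i)$ to $R_S$. Since these $4x_i$-radii absorb both the $x_i$-neighborhoods of interior signature vertices and the $4x_i$-neighborhoods of the two endpoint signature vertices, every vertex of $\pi$ lying outside $R_S$ satisfies the hypothesis of \thmref{remove:one} for the pair $(\tau_i, x_i)$. Applying the theorem once removes a single outside-of-$R_S$ vertex without increasing $\distFr{\tau_i}{\cdot}$ beyond $x_i$; iterating vertex by vertex over the set of omitted vertices yields $\distFr{\tau_i}{\widehat{\pi}} \leq x_i$ for every $\tau_i \in P_N$. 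Specializing to $\tau_1$ and invoking the (pseudo)metric triangle inequality for the \Frechet distance gives the key auxiliary bound
\[ \distFr{\pi}{\widehat{\pi}} \;\leq\; \distFr{\pi}{\tau_1} + \distFr{\tau_1}{\widehat{\pi}} \;\leq\; 2x_1. \]

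For each $\tau_i \in P_F$, another triangle inequality and the definition of $P_F$ yield
\[ \distFr{\tau_i}{\widehat{\pi}} \;\leq\; x_i + \distFr{\pi}{\widehat{\pi}} \;\leq\; x_i + 2x_1 \;\leq\; (1+\eps)\, x_i, \]
where the last inequality uses $2x_1 \leq \eps\, x_i$. Summing the two contributions,
\[ \cost_{1}(P,\widehat{\pi}) \;\leq\; \sum_{\tau_i \in P_N} x_i \;+\; (1+\eps) \sum_{\tau_i \in P_F} x_i \;\leq\; (1+\eps)\, \cost_{1}(P,\pi). \]
The main technical obstacle is justifying the iterated application of \thmref{remove:one}: after one vertex is removed, the intermediate curve still has \Frechet distance at most $x_i$ from $\tau_i$, and its remaining vertices are unchanged as points of $\Re$ so those that were outside $R_S$ remain outside, allowing another removal. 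The second subtlety is to check that the $4x_i$ radii in $R_S$ really cover both the ``inner'' $\delta$-ranges and the ``outer'' $4\delta$-endpoint ranges required by \thmref{remove:one}; this is immediate from the definition of $R_S$, so no forbidden vertex of $\pi$ ever lies outside $R_S$ for any $\tau_i \in P_N$.
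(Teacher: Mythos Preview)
Your proof is correct and follows essentially the same approach as the paper: split into near and far curves, use \thmref{remove:one} for the near ones, and a triangle-inequality bootstrap through $\tau_1$ for the far ones. The only cosmetic difference is that you first isolate the bound $\distFr{\pi}{\widehat{\pi}}\le 2x_1$ and then apply one triangle inequality, whereas the paper writes out the chain $\distFr{\widehat{\pi}}{\tau_i}\le \distFr{\widehat{\pi}}{\tau_1}+\distFr{\tau_1}{\pi}+\distFr{\pi}{\tau_i}\le x_1+x_1+x_i$ directly; you are also a bit more explicit about the need for $\eps\le 2$ and about iterating \thmref{remove:one}, which the paper leaves implicit.
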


\begin{proof}
By \lemref{nec:suff}, the curve $\pi$ has a vertex in each 
range centered at the vertices 
of 
$\sigma_i$. These will not be omitted, therefore it is ensured that
$\widehat{\pi}$ has at least 2 vertices, i.e. defines a curve.
By  \thmref{remove:one},  it holds that
$\distFr{\widehat{\pi}}{\inputTraj{i}}\leq x_i$ for the curves $\inputTraj{i}$
with $x_i \leq \frac{2x_1}{\eps}$, that is, for the curves that lie close
to~$\pi$.
We now argue using the triangle inequality that the distances to the curves
that lie further away are only altered by a factor of at most $(1+\eps)$.
Consider any index $i$, such that $x_i > \frac{2x_1}{\eps} = \widehat{x}$.
By the triangle inequality, it holds that 
\begin{eqnarray*}
\distFr{\widehat{\pi}}{\inputTraj{i}} 
&\leq& \distFr{\widehat{\pi}}{\inputTraj{1}} + \distFr{\inputTraj{1}}{\inputTraj{i}} \\
&\leq& \distFr{\widehat{\pi}}{\inputTraj{1}} +\distFr{\inputTraj{1}}{\pi} + \distFr{\pi}{\inputTraj{i}} \\
&\leq& x_1 + x_1 + x_i \\
&<&  (1+\eps) x_i
\end{eqnarray*}

Therefore,
\begin{align*}
\cost_{1}(P,\widehat{\pi}) 
\leq \sum_{x_i \leq \widehat{x}}  \distFr{\pi}{\inputTraj{i}} 
+ \sum_{x_i > \widehat{x}} (1+\eps) \distFr{\pi}{\inputTraj{i}} 
\leq (1+\eps) \cost_{1}(P,\pi).
\end{align*}

\end{proof}

The following lemma is in similar spirit as \lemref{omit:far:away}. We prove that the
basic shape of a candidate median can be approximated based on a constant-size
sample.

\begin{lemma}\lemlab{omit:low:prob}
There exist integer constant $m_{\eps,\lambda,\lenClusters}$ such that given a
set of curves $P$  and a curve $\pi=u_1,\dots,u_{\lenClusters}$ for each uniform
sample multiset $S\subseteq P$ of size $m_{\eps,\lambda,\lenClusters} \ge \ceil{\frac{8\lenClusters}{\eps}
\left(\log\left(\frac{1}{\lambda}\right)+\log(\lenClusters)\right)}$
it holds that 
\[\Pr\left[ \cost_{1}(P,\widehat{\pi}) \leq (1+\eps)\cost_{1}(P,\pi) \right] \geq 1-\lambda,\] 
for a curve $\widehat{\pi}$ obtained from $\pi$ by omitting any subset of vertices lying outside 
the following union $R_S$:
\[
R_S = \bigcup_{ \inputTraj{i} \in S }\; \bigcup_{v \in \VtxSet(\sigma_i) } [v-4x_i,v+4x_i],
\]
where the $\tau_i$ are sorted in increasing order of $x_i=\distFr{\inputTraj{i}}{\pi}$ and where
$\sigma_i$ is the $x_i$-signature of $\inputTraj{i} \in S$.
\end{lemma}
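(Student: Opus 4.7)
The plan is to mirror the proof of \lemref{omit:far:away}, with random sampling taking the role of the distance-based threshold $2x_1/\eps$. For each vertex $u_j$ of $\pi$, partition curves according to whether $u_j$ lies in the ``danger zone'' of the curve's signature: let $Q_j = \{i : u_j \in \bigcup_{v \in \VtxSet(\sigma_i)}[v-4x_i, v+4x_i]\}$, where $\sigma_i$ is the $x_i$-signature of $\inputTraj{i}$ and $x_i = \distFr{\inputTraj{i}}{\pi}$. Fix a threshold $\eps' = \eps/4$ and call $u_j$ \emph{heavy} if $|Q_j| \geq \eps' n/\lenClusters$, \emph{light} otherwise.

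I would first establish two high-probability events over the choice of $S$. Event $E_1$: $S \cap Q_j \neq \emptyset$ for every heavy $u_j$. For each heavy $u_j$, the miss probability is at most $(1 - \eps'/\lenClusters)^m \leq e^{-m\eps'/\lenClusters}$; a union bound over the at most $\lenClusters$ vertices of $\pi$ gives $\Pr[\neg E_1] \leq \lambda/2$ provided $m \geq (\lenClusters/\eps')\ln(2\lenClusters/\lambda)$. Event $E_2$: some $\inputTraj{*} \in S$ satisfies $x_{*} \leq 2\bar{x}$, where $\bar{x} = \cost_1(P,\pi)/n$. By Markov's inequality at least $n/2$ curves satisfy $x_i \leq 2\bar{x}$, so $\Pr[\neg E_2] \leq 2^{-m} \leq \lambda/2$ for $m \geq \log_2(2/\lambda)$. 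Both constraints on $m$ are comfortably absorbed by the stated $m_{\eps,\lambda,\lenClusters}$, so by a further union bound $\Pr[E_1 \wedge E_2] \geq 1 - \lambda$.

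Condition on $E_1 \wedge E_2$. Any omitted vertex $u_j$ must be light, since for a heavy $u_j$ some sampled $\inputTraj{i} \in Q_j$ forces $u_j \in R_S$. Now for each $\inputTraj{k} \in P$ I distinguish two subcases. If no omitted vertex lies in $\bigcup_{v \in \VtxSet(\sigma_k)}[v-4x_k,v+4x_k]$, iterated application of \thmref{remove:one} (at scale $\delta = x_k$, using that each legitimate removal preserves the distance bound and hence the hypothesis of the next invocation) yields $\distFr{\widehat{\pi}}{\inputTraj{k}} \leq x_k$. Otherwise $\inputTraj{k} \in Q_j$ for some omitted $u_j$; applying the same iterated argument to $\inputTraj{*}$ (legitimate because $R_{\sigma_{*}} \subseteq R_S$) gives $\distFr{\widehat{\pi}}{\inputTraj{*}} \leq x_{*}$, and then the triangle inequality for the \Frechet pseudometric yields
\[ \distFr{\widehat{\pi}}{\inputTraj{k}} \leq \distFr{\widehat{\pi}}{\inputTraj{*}} + \distFr{\inputTraj{*}}{\pi} + \distFr{\pi}{\inputTraj{k}} \leq 2x_{*} + x_k \leq 4\bar{x} + x_k. \]

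Summing over $P$, the number of curves in the second subcase is at most $\sum_{j \text{ light}} |Q_j| \leq \lenClusters \cdot (\eps' n/\lenClusters) = \eps' n$, so
\[ \cost_1(P,\widehat{\pi}) \leq \cost_1(P,\pi) + 4\bar{x}\cdot \eps' n = (1 + 4\eps')\cost_1(P,\pi) = (1+\eps)\cost_1(P,\pi). \]
The main subtlety is the iterated application of \thmref{remove:one}: after each removal one must verify that the \Frechet distance to $\inputTraj{k}$ (respectively $\inputTraj{*}$) is still bounded by the corresponding $x_i$ before invoking the theorem again, so that the ranges used in the hypothesis remain those of the original $x_i$-signature. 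Matching the stated sample size $m_{\eps,\lambda,\lenClusters} \geq \lceil (8\lenClusters/\eps)(\log(1/\lambda) + \log(\lenClusters))\rceil$ then amounts to absorbing the factor of $4$ via $\eps' = \eps/4$ and bounding $\ln(2\lenClusters/\lambda) \leq \ln 2 + \ln \lenClusters + \ln(1/\lambda)$.
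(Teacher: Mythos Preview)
Your proof is correct and follows essentially the same structure as the paper's: classify vertices $u_j$ by the size of the set of curves whose signature range contains them, use a union bound to show all ``heavy'' vertices survive, and then use a close reference curve plus the triangle inequality to control the cost on the small set $\bigcup_{j\text{ omitted}} Q_j$. The one difference is in the choice of the reference curve: you draw $\tau_{*}$ from the sample via a separate event $E_2$, whereas the paper observes that under $E_1$ the bad set $P_T$ has size at most $\eps n/4 < n/2$, and since (by Markov) more than $n/2$ curves satisfy $x_i \le 2\bar{x}$, some curve $\tau_q \in P\setminus P_T$ already has $x_q \le 2\bar{x}$---no extra randomness needed. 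This saves one union bound but is otherwise equivalent.
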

\begin{proof}
If all vertices of $\pi$ are contained in $R_S$, then $\pi=\widehat{\pi}$ and the claim is implied.
However, this is not necessarily the case. In the following, we consider a fixed vertex $u_j$ 
and we prove that it is either contained in $R_S$ with sufficiently high probability or
ignoring it will not increase the cost of a solution significantly. 

For this purpose, let $T_j \subseteq P$ be the subset of curves $\inputTraj{i}$ with 
\[u_j \in \bigcup_{v\in \VtxSet(\sigma_i) } [v-4x_i,v+ 4x_i].\] 
If any curve of $T_j$ is contained in our sample $S$, then $u_j$ is
contained in $R_S$.

We distinguish two cases.  If $T_j$ is large enough then $u_j$ is
contained in $R_S$ with high probability, or we argue that the total change in cost resulting 
from omitting $u_j$ from $\pi$ will be small.  
We will first argue that for all $j$ with $1\le j \le \ell$ and with $T_j > \frac{\eps n}{4\ell}$ we obtain for our choice of $m_{\eps,\lambda,\lenClusters}$ that at least one element of $T_j$ is contained in $S$. Indeed, we have
\begin{align}
\Pr[ T_j \cap S = \emptyset] &\leq (1-|T_j|/n)^{m_{\eps,\lambda,\lenClusters}} \leq \left( 1 - \frac{\eps}{4 \lenClusters} \right) ^{m_{\eps,\lambda,\lenClusters}}.
\end{align}
We use the union bound, to estimate the probability that this event fails for at least one of the 
sets $T_j$ in question. We choose the parameter $m_{\eps,\lambda,\lenClusters}$ large
enough, such that it holds for the failure probability that
\[\lambda \leq \lenClusters \left( 1 - \frac{\eps}{4 \lenClusters} \right)
^{m_{\eps,\lambda,\lenClusters}}.\]
For this, it suffices to choose 
\begin{align*}
m_{\eps,\lambda,\lenClusters} &\ge
\ceil{\frac{8\lenClusters}{\eps}
\left(\log\left(\frac{1}{\lambda}\right)+\log(\lenClusters)\right)}
\end{align*}
to obtain that with probability at least $1-\lambda$ for all $1\le j \le \ell$, simultaneously,
we have that at least one element of $T_j$ is in $S$, if $|T_j| \ge \frac{\eps n}{4 \ell}$.

Now consider the set of curves $T = \{ T_j : 1\le j \le \ell, T_j \cap S = \emptyset\}$.
By our previous considerations we have that with probability at least $1-\lambda$,
$T \subseteq \left\{T_j \left| 1\leq j \leq\lenClusters, |T_j|
\leq \frac{\eps n}{4 \lenClusters} \right. \right\}$. We will assume in the following
that this event happens.

Let $\tilde{\pi}$ denote the curve obtained from $\pi$ by removing all vertices from
$R_S$, which is equivalent to removing all vertices $u_j$ with $T_j \in T$. 

In the following, let $P_T = \bigcup_{T'\in T} T'$ be the set of input curves that are
contained in one of the sets in $T$.
By \thmref{remove:one}, it holds for any curve $\inputTraj{i} \in P \setminus P_T$, that 
$\distFr{\tilde{\pi}}{\inputTraj{i}} \leq \distFr{\pi}{\inputTraj{i}}=x_i$, i.e., 
their distances do not increase 
beyond $x_i$
by the removal. 
Let $\inputTraj{q}$ be the curve of this set with minimal distance to $\pi$
(i.e. with smallest index $q$).  
Since at least half of the input curves have to lie within a radius of $\frac{2}{n}\cost_{1}(P,\pi)$ from $\pi$ (two times the average distance of the input curves to $\pi$)
and since the union of the sets from $T$ has size less than $n/2$ (with probability at least
$1-\lambda$), this implies that $x_q \leq \frac{2}{n}\cost_{1}(P,\pi)$.
Therefore,
\begin{align*}
\cost_{1}(P,\tilde{\pi}) &= \cost_{1}(P\setminus P_T, \tilde{\pi}) + \cost_{1}(P_T, \tilde{\pi})\\
 &\leq \cost_{1}(P\setminus P_T, \pi) + \sum_{\inputTraj{} \in P_T} \left( \distFr{\inputTraj{}}{\pi} + \distFr{\pi}{\inputTraj{q}} + \distFr{\inputTraj{q}}{\tilde{\pi}} \right) \\
 & \leq \cost_{1}(P\setminus P_T, \pi) +  \cost_{1}(P_T, \pi) + 2|P_T|x_q\\
 &\leq \cost_{1}(P,\pi) + \frac{\eps n}{4} \cdot \frac{4\cost_{1}(P,\pi)}{n}\\
 &= \left(1+{\eps}\right)\cost_{1}(P,\pi)  .
\end{align*}
\end{proof}

\subsection{Generating Candidate Solutions}

Our next step is to define an algorithm that generates a set of candidate curves from the 
sample set. 

\begin{algorithm}[h]\alglab{candidate:generator}
 \KwData{curves $S=\{\inputTraj{1},\dots,\inputTraj{s}\} \subset \Delta_m$, parameters $\alpha,\beta > 0$, $\lenClusters \in \Na$}
 \KwResult{candidate set $\Gamma^{\lenClusters}_{\alpha,\beta}(S) \subseteq \Delta_\ell$}
 \caption{Generate candidates for $(1,\lenClusters)$-median from signature vertices}
For each $\inputTraj{i}$, let $\VtxSet_i$ be the vertex set of the 
signature of size at most $\ell+3$~(\thmref{compute:signatures})\; 
Compute the union $U$ of the intervals $r=\left[v-8\alpha,v+8\alpha\right]$ for $v
\in \VtxSet=\bigcup_{i=1}^{s} \VtxSet_i$\; 
Discretize $U$ with resolution $\beta$, thereby generating a set of vertices~$\widehat{\VtxSet}$\;
Return all possible curves consisting of $\lenClusters$ vertices from
$\widehat{\VtxSet}$\;
\end{algorithm}

We prove some properties of \algref{candidate:generator} and of the candidate
set generated by it. This proof serves as a basis for the proof of the sampling
property in \thmref{modified:sampling:prop}. 

\begin{lemma}\lemlab{candidate:generator}
Given a set of curves $S=\{\inputTraj{1},\dots,\inputTraj{s}\} $ and parameters
$\alpha,\beta,\eps > 0$, and $\lenClusters \in \Na$, with 
$\alpha \geq \min_{1\leq i\leq s} \frac{ \distFr{\inputTraj{i}}{c_s} }{\eps}$, where $c_s$ denotes an optimal $(1,\lenClusters)$-median of $S$. 
There exist $\widehat{c} \in \Delta_{\lenClusters}$ with
\[ \cost_{1}(S,\widehat{c}) \leq (1+\eps)\opt^{(1)}_{1,\lenClusters}(S),\] 
and 
\algref{candidate:generator} computes a set of candidates $\Gamma^{\lenClusters}_{\alpha,\beta}(S) \subseteq \Delta_\ell$
of size  $\left({\frac{16\alpha s(\lenClusters+3)}{\beta}}\right)^{\lenClusters}$
which contains an element $\tilde{c}$, such that 
\[\distFr{\widehat{c}}{\tilde{c}}\leq \beta.\]
\end{lemma}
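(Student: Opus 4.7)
The plan is to first exhibit a curve $\widehat{c}$ of complexity at most $\lenClusters$ whose cost on $S$ is within a $(1+\eps)$-factor of $\opt^{(1)}_{1,\lenClusters}(S)$ and whose vertices all lie in the union $U$ produced by the algorithm, and then to round each vertex of $\widehat{c}$ to the nearest element of $\widehat{\VtxSet}$, yielding a candidate $\tilde{c}$ with $\distFr{\widehat{c}}{\tilde{c}} \leq \beta$.

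For the existence of $\widehat{c}$, order the curves of $S$ so that $x_i = \distFr{\inputTraj{i}}{c_s}$ is non-decreasing; the hypothesis on $\alpha$ gives $x_1 \leq \eps\alpha$, hence $2x_1/\eps \leq 2\alpha$. I would apply \lemref{omit:far:away} with $\pi = c_s$, taking $\widehat{c}$ to be the curve obtained by removing every vertex of $c_s$ that lies outside $U$; since the lemma permits omitting any subset of vertices outside its set $R_S$, this is allowed provided $R_S \subseteq U$, and the lemma then yields $\cost_1(S, \widehat{c}) \leq (1+\eps)\cost_1(S, c_s) = (1+\eps)\opt^{(1)}_{1,\lenClusters}(S)$. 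To show $R_S \subseteq U$, note that every index $i$ contributing to $R_S$ satisfies $x_i \leq 2\alpha$, giving $[v - 4x_i, v + 4x_i] \subseteq [v - 8\alpha, v + 8\alpha]$, so it suffices to verify that every vertex of the $x_i$-signature $\sigma_i^{x_i}$ appears in $\VtxSet_i$. By \corref{nec:suff} applied to $\inputTraj{i}$ and the median $c_s$ (which has at most $\lenClusters$ vertices and satisfies $\distFr{c_s}{\inputTraj{i}} = x_i$), the $x_i$-signature has at most $\lenClusters+2 \leq \lenClusters+3$ vertices, and by the canonical nested structure of signatures (\lemref{canonical:signature}) its vertex set is contained in that of the (finer) signature of size at most $\lenClusters + 3$ computed by the algorithm.

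For the rounding step, each vertex of $\widehat{c}$ lies in $U$ and is therefore at distance at most $\beta$ from some point of $\widehat{\VtxSet}$; applying \obsref{segments} edgewise together with \obsref{fd:concat} then gives $\distFr{\widehat{c}}{\tilde{c}} \leq \beta$ for the rounded curve $\tilde{c}$, padding with repeated vertices if needed to reach complexity exactly $\lenClusters$. The cardinality bound follows from $|\VtxSet| \leq s(\lenClusters + 3)$, which yields $\mu(U) \leq 16\alpha s(\lenClusters + 3)$, hence $|\widehat{\VtxSet}| \leq 16\alpha s(\lenClusters + 3)/\beta$ and at most $(16\alpha s(\lenClusters + 3)/\beta)^{\lenClusters}$ candidate $\lenClusters$-vertex curves. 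The main obstacle is establishing the containment of $x_i$-signature vertices in $\VtxSet_i$: the algorithm uses a size-budgeted signature, whereas the correctness analysis naturally involves distance-parametrised signatures, and bridging the two requires the hierarchical/canonical structure of signatures together with the tight counting from \corref{nec:suff}.
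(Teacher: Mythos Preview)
Your proposal is correct and follows essentially the same approach as the paper's proof: define $\widehat{c}$ by deleting from the optimal median $c_s$ the vertices lying outside the region covered by the algorithm, invoke \lemref{omit:far:away} for the cost bound, and use the nesting of canonical signatures together with \corref{nec:suff} to show that the $x_i$-signature vertices are among the $\VtxSet_i$ computed by the algorithm. The only cosmetic difference is that the paper first removes all vertices of $c_s$ outside $R_S$ and then argues $R_S$ is covered by $U$, whereas you remove vertices outside $U$ directly and verify this is permitted because $R_S\subseteq U$; both routes need the same containment and yield the same $\widehat{c}$ up to possibly keeping a few extra harmless vertices.
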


\begin{proof}
Let $\inputTraj{1},\dots,\inputTraj{s}$ denote the input curves in the increasing order of their
distance denoted by $x_i=\distFr{\centerTraj{s}}{\inputTraj{i}}$. For every $\inputTraj{i}$,
consider its $x_i$-signature denoted by $\sigma_i$.   
By \lemref{nec:suff}, each vertex of
$\centerTraj{s}$ lies within distance $4x_i$ to a vertex of some signature
$\sigma_i$ otherwise we can omit it by \thmref{remove:one}.  Hence, there must be a solution where
$\centerTraj{s}$ has its vertices in the union of the intervals. 
\[ \bigcup_{i=1}^s \bigcup_{v \in \VtxSet(\sigma_i) }  [v-4x_i,v+4x_i].\] 

Since $x_i$ could be very large, we cannot cover this entire region with candidates. 
Instead, we consider the following union of intervals: 
\[ R_S = \bigcup_{ x_i
\leq \widehat{x}}\; \bigcup_{v \in \VtxSet(\sigma_i) } [v-4x_i,v+4x_i],\]
with $\widehat{x}=\frac{2x_1}{\eps}$.
Now, let $\widehat{\centerTraj{}}$ be the median obtained from $\centerTraj{s}$ by
omitting all vertices which do not lie in $R_S$. 
\lemref{omit:far:away} implies $\cost_{1}(S,\widehat{\centerTraj{}}) \leq (1+\eps)  \cost_{1}(S,\centerTraj{s}).$

Clearly, the generated set contains a curve $\tilde{\centerTraj{}}$
which lies within \Frechet distance $\beta$ of $\widehat{\centerTraj{}}$. 
Indeed, by \lemref{canonical:signature}, the vertices of $\sigma_i$ are contained in
the set of signature vertices computed by \algref{candidate:generator}, since 
\corref{nec:suff} implies that 
$\ell \geq |\VtxSet\pth{\centerTraj{s}}| \geq |\VtxSet\pth{\sigma_i}|-2$.  
If a signature of size $\ell+3$ does not
exists, then by the general position assumption, there must be a signature of
size $\ell+2$.
The algorithm sets $\widehat{x}= 2\alpha \geq 2 x_1/\eps$. 
Therefore, the generated candidate set covers the region $R_S$ with resolution~$\beta$.
\end{proof}

Before we prove the modified sampling property we prove the following two easy
lemmas.

\begin{lemma}\lemlab{markov}
Let $\lambda > 0$. Given a set of curves $P$, for each uniform sample multiset
$S\subseteq P$ it holds that 
\begin{align*}
\Pr\left[\cost_1(S,c) \geq \frac{|S|}{\lambda n} \opt^{(1)}_{1,\lenClusters}(P)\right] \leq \lambda,
\end{align*}
where $c$ is an optimal median of $P$.
\end{lemma}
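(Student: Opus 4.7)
The plan is to use Markov's inequality applied to the non-negative random variable $\cost_1(S,c)$, where the randomness is over the uniform choice of the multiset $S$.

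First, I would compute the expected cost. Since $c$ is an optimal $(1,\lenClusters)$-median of $P$, we have $\cost_1(P,c) = \opt^{(1)}_{1,\lenClusters}(P)$ by definition. For a single uniform draw $\inputTraj{} \in P$, linearity gives
\[ \Exp[\distFr{\inputTraj{}}{c}] = \frac{1}{n} \sum_{i=1}^{n} \distFr{\inputTraj{i}}{c} = \frac{1}{n} \cost_1(P,c) = \frac{1}{n} \opt^{(1)}_{1,\lenClusters}(P). \]
Since $S$ is a uniform sample multiset of size $|S|$ (i.e., $|S|$ independent uniform draws from $P$), and $\cost_1(S,c)$ is just the sum of the distances $\distFr{\inputTraj{}}{c}$ over the elements of $S$, linearity of expectation yields
\[ \Exp[\cost_1(S,c)] = \frac{|S|}{n} \opt^{(1)}_{1,\lenClusters}(P). \]

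Now I would apply Markov's inequality. Since $\cost_1(S,c) \geq 0$, for any $t > 0$
\[ \Pr[\cost_1(S,c) \geq t \cdot \Exp[\cost_1(S,c)]] \leq \frac{1}{t}. \]
Setting $t = 1/\lambda$ gives
\[ \Pr\!\left[ \cost_1(S,c) \geq \frac{|S|}{\lambda n} \opt^{(1)}_{1,\lenClusters}(P) \right] \leq \lambda, \]
which is exactly the stated bound. There is no real obstacle here; the only subtlety is making sure the definition of ``uniform sample multiset'' used in \defref{sampling:property} really does yield $\Exp[\cost_1(S,c)] = (|S|/n) \cdot \opt^{(1)}_{1,\lenClusters}(P)$, which it does as long as each element of the multiset is drawn uniformly from $P$ (with replacement), as is standard in this literature.
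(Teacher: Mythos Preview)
Your proposal is correct and follows the same approach as the paper: compute $\Exp[\cost_1(S,c)] = \frac{|S|}{n}\opt^{(1)}_{1,\lenClusters}(P)$ by linearity of expectation, then apply Markov's inequality with threshold $\frac{1}{\lambda}\Exp[\cost_1(S,c)]$. The paper's proof is slightly terser but identical in substance.
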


\begin{proof}
It holds that 
\[\Exp\pbrc{\cost_1(S,c)} =  \frac{|S|}{n} \opt^{(1)}_{1,\lenClusters}(P). \]
Since $\cost_1(S,c)$ is a nonnegative random variable we can apply Markov's inequality and obtain 
\[\Pr\left[\cost_1(S,c) \geq \frac{\Exp\pbrc{\cost_1(S,c)}}{\lambda} \right] \leq \lambda,\]
which implies the claim.
\end{proof}

\begin{lemma}\lemlab{diam:S}
Let $\lambda > 0$. Given a set of curves $P$, for each uniform sample multiset
$S\subseteq P$ of size at least $\ceil{5\log(\frac{1}{\lambda})}+1$ it holds that 
\[\Pr\pbrc{ 12 \opt_{1,\lenClusters}^{(1)}(S) \geq \min_{\tau \in P}
\distFr{\tau}{c} } \geq 1-\lambda, \]
where $c$ denotes an optimal $(1,\lenClusters)$-median of $P$.
\end{lemma}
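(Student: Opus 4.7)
The plan is to reduce the claim to a diameter estimate on $S$ and then argue by cases on the min-error $\lenClusters$-simplification of the first sampled curve. Throughout, set $x_{\min} = \min_{\tau \in P} \distFr{\tau}{c}$, let $c_S \in \Delta_{\lenClusters}$ be an optimal $(1,\lenClusters)$-median of $S$, let $\tau^{(1)}$ be the first curve sampled into $S$, let $\sigma_1 \in \Delta_{\lenClusters}$ be a min-error $\lenClusters$-simplification (\defref{min:error:simp}) of $\tau^{(1)}$, and let $\epsilon_1 = \distFr{\tau^{(1)}}{\sigma_1}$.

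First I would observe that $\opt^{(1)}_{1,\lenClusters}(S) = \sum_{\tau \in S}\distFr{\tau}{c_S}$, so each $\distFr{\tau}{c_S} \leq \opt^{(1)}_{1,\lenClusters}(S)$ and, by the triangle inequality, $\distFr{\tau_1}{\tau_2} \leq 2\,\opt^{(1)}_{1,\lenClusters}(S)$ for all $\tau_1,\tau_2 \in S$. Hence if the claim fails (i.e.\ $\opt^{(1)}_{1,\lenClusters}(S) < x_{\min}/12$), then $S$ is contained in the \Frechet ball $B = \{\tau \in P : \distFr{\tau}{\tau^{(1)}} \leq x_{\min}/6\}$.

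If $\epsilon_1 > x_{\min}/4$, every $c' \in \Delta_{\lenClusters}$ satisfies $\distFr{\tau^{(1)}}{c'} \geq \epsilon_1 > x_{\min}/4$ by the definition of the min-error simplification, so in particular $\distFr{\tau^{(1)}}{c_S} > x_{\min}/4$. For every $\tau \in S \subseteq B$, the triangle inequality then gives $\distFr{\tau}{c_S} > x_{\min}/4 - x_{\min}/6 = x_{\min}/12$, forcing $\opt^{(1)}_{1,\lenClusters}(S) > x_{\min}/12$; this case is resolved deterministically, with no probabilistic argument needed.

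In the remaining case $\epsilon_1 \leq x_{\min}/4$, I would use $\sigma_1$ as a candidate median for $P$. Writing $d_1 = \distFr{\tau^{(1)}}{c} \geq x_{\min}$ and $r = x_{\min}/6$, the triangle inequality yields $\distFr{\tau}{\sigma_1} \leq r + \epsilon_1$ and $\distFr{\tau}{c} \geq d_1 - r$ for $\tau \in B$, and $\distFr{\tau}{\sigma_1} \leq \distFr{\tau}{c} + d_1 + \epsilon_1$ for $\tau \in P\setminus B$. Plugging these into $\cost_1(P, c) \leq \cost_1(P, \sigma_1)$ (optimality of $c$) and canceling the common $\sum_{\tau \notin B}\distFr{\tau}{c}$ yields $2|B|(d_1 - r) \leq n(d_1 + \epsilon_1)$; a short calculation using $d_1 \geq x_{\min}$ and $\epsilon_1 \leq x_{\min}/4$ then gives $|B| \leq 3n/4$. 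Conditioning on $\tau^{(1)}$, the probability that the remaining $|S|-1$ samples all land in $B$ is therefore at most $(3/4)^{|S|-1}$, which equals $\lambda^{5\log(4/3)} \leq \lambda$ whenever $|S| \geq \ceil{5\log(1/\lambda)}+1$, since $5\log(4/3) > 1$.

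The main obstacle will be the calibration in the second case: the thresholds $x_{\min}/4$ for $\epsilon_1$ and $x_{\min}/6$ for the ball radius must be chosen so that the optimality-of-$c$ inequality forces $|B|/n$ strictly below $e^{-1/5}$, so that the sample size $\ceil{5\log(1/\lambda)}+1$ suffices. The specific choice above gives the clean ratio $3/4$, and the rest of the arithmetic falls out.
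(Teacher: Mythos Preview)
Your argument is correct and close in spirit to the paper's, but the organization is genuinely different. The paper splits into two global cases: either there exists some $q\in\Delta_\lenClusters$ around which at least a $5/7$ fraction of $P$ concentrates within radius $\distFr{q}{c}/5$, or not. In the first case it uses optimality of $c$ to show the complementary mass is at least $1/7$, so two samples land on opposite sides with good probability; in the second case it takes the first sample's $\lenClusters$-simplification as the specific $q$, invokes the negated hypothesis to get the $2/7$ escape probability, and then argues as you do. You instead split directly on the size of $\epsilon_1=\distFr{\tau^{(1)}}{\sigma_1}$, dispatch the large-$\epsilon_1$ case by the one-line observation $\opt^{(1)}_{1,\lenClusters}(S)\ge \distFr{\tau^{(1)}}{c_S}\ge\epsilon_1$ (your detour through $S\subseteq B$ here is unnecessary but harmless), and in the small-$\epsilon_1$ case compare $\cost_1(P,c)$ to $\cost_1(P,\sigma_1)$ directly to bound $|B|\le 3n/4$. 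This avoids the paper's existential case split and the associated side argument bounding $|Q|$, at the price of a slightly more careful triangle-inequality bookkeeping for the $\tau\notin B$ terms. Both routes ultimately rest on the same two ingredients, the min-error simplification lower-bounding $\opt^{(1)}_{1,\lenClusters}(S)$ and the optimality of $c$ limiting how much of $P$ can cluster far from it, but your packaging is somewhat more direct.
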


\begin{proof}
We analyze two cases. For the first case, assume that there exists a curve $q\in
\Delta_{\lenClusters}$, such that 
\[ \cardin{\brc{\tau \in P: \distFr{q}{\tau} \leq r}} \geq \frac{5}{7} |P|, \] 
where $r=\frac{\distFr{q}{c}}{5}$.
That is, a large fraction of $P$  lies within a small ball far away from the
optimal center. We let
\[ Q= \brc{\tau \in P:  \distFr{q}{\tau} < 2r},\]
and we claim that $Q$ has size at most $\frac{6}{7}|P|$.
Assume the opposite for the sake of contradiction. In this case, it follows by the triangle inequality that
\begin{eqnarray*}
\cost_{1}(P,q) &\leq \cost_{1}(P,c) - |Q|r + |P\setminus Q| 5r \leq
\cost_{1}(P,c) - \frac{1}{7} |P| r < \cost_1(P,c).
\end{eqnarray*}
This would imply that $c$ is not optimal. 
We analyze the event that at least one curve of $P$ lies within
\Frechet distance $r$ of $q$ and at least one curve lies further than $2r$ from
$q$. If this event happens, then again by the triangle inequality 
\[ \opt^{(1)}_{1,\lenClusters}(S)\geq \max_{\pi,\pi' \in S} \distFr{\pi}{\pi'} \geq r \geq \frac{\min_{\tau \in
P}\distFr{c}{\tau}}{6}.\]
Clearly, it holds for the $i$th sample point $s_i$, that
\[ \Pr\pbrc{ \distFr{s_i}{q} \leq r } \geq \frac{5}{7}~~\text{and}~~ \Pr\pbrc{ \distFr{s_i}{q} \geq 2r } \geq \frac{1}{7}.\]
Using this, we can show that $\ceil{5 \log(\frac{1}{\lambda})}$ samples suffice to ensure that this
event happens with probability of at least $(1-\lambda)$.

Now, assume the second case that no such $q$ exists. Let $s_1$ be the first
sample point and let $\widehat{s_1}$ be its minimum-error $\ell$-simplification
(\defref{min:error:simp}). We need to prove the claim in the case that
\begin{equation}\label{assump1}
12 \distFr{s_1}{\widehat{s_1}} < \distFr{s_1}{c},  
\end{equation}
since $\opt^{(1)}_{1,\lenClusters}(S)$ is lower-bounded by
$\distFr{s_1}{\widehat{s_1}}$. 
By the case analysis, it holds that 
\[ \cardin{\brc{\tau \in P: \distFr{\widehat{s_1}}{\tau} \leq r}} < \frac{5}{7} |P|, \] 
for $r=\frac{\distFr{\widehat{s_1}}{c}}{5}$. Therefore, it holds for each of the remaining sample points $s_i$, for $1 < i
\leq |S|$, that
\begin{equation*}
\Pr\pbrc{ \distFr{\widehat{s_1}}{s_i} > r} \geq \frac{2}{7}.
\end{equation*}
In case this event happens, it holds by the triangle inequality
\begin{equation*}
\distFr{s_1}{s_i} 
\geq \distFr{s_i}{\widehat{s_1}}-\distFr{\widehat{s_1}}{s_1}
\geq r-\distFr{\widehat{s_1}}{s_1} 
\geq \frac{\distFr{c}{s_1}-\distFr{s_1}{\widehat{s_1}}}{5}-\distFr{\widehat{s_1}}{s_1}
\geq \frac{\distFr{c}{s_1}}{10}.
\end{equation*}
The analysis of this event is almost the same as in the first case. In this
case, a total number of $\ceil{5 \log(\frac{1}{\lambda})} + 1$ samples suffices to ensure that this
event happens with probability of at least $(1-\lambda)$.
\end{proof}

We are now ready to prove the modified sampling property.
\begin{theorem}\thmlab{modified:sampling:prop}
There exist  integer constants $m_{\eps,\lambda,\lenClusters}$ and
$t_{\eps,\lambda,\lenClusters}$ such that given a set of curves
$P=\brc{\inputTraj{1},\dots,\inputTraj{n}}$  for a uniform
sample multiset $S\subseteq P$ of size $m_{\eps,\lambda,\lenClusters}$
we can generate a candidate set $\Gamma(S) \subset \Delta_{\lenClusters}$ of size $t_{\eps,\lambda,\lenClusters}$ satisfying
\[\Pr\left[ \exists q \in \Gamma(S): \cost_{1}(P,q) \leq (1+\eps)\opt^{(1)}_{1,\lenClusters}(P) \right] \geq 1-\lambda.\] 
Furthermore, we can compute $\Gamma(S)$ in time depending on $\lenClusters,\lambda$ and $\eps$ only.
\end{theorem}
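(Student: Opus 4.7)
The plan is to combine \lemref{omit:low:prob} (which shows that removing vertices of any curve $\pi$ lying outside a sample-dependent region $R_S$ costs only a $(1+\eps)$ factor in $\cost_1(P,\cdot)$ with high probability) with \lemref{markov} and the candidate-generation scheme in \algref{candidate:generator}. Write $c^*$ for an optimal $(1,\lenClusters)$-median of $P$, $x_i^* = \distFr{\inputTraj{i}}{c^*}$, and $\sigma_i^*$ for the $x_i^*$-signature of $\inputTraj{i}$. I would choose $m_{\eps,\lambda,\lenClusters}$ large enough that, by a union bound, two events hold with probability at least $1-2\lambda/3$: by \lemref{omit:low:prob} applied with accuracy $\eps/3$ and failure $\lambda/3$ to $\pi=c^*$, the curve $\widehat c^*$ obtained by deleting from $c^*$ every vertex that lies outside
\[
R_S^* \;=\; \bigcup_{\inputTraj{i}\in S}\,\bigcup_{v\in\VtxSet(\sigma_i^*)}\,[v-4x_i^*,\; v+4x_i^*]
\]
satisfies $\cost_1(P,\widehat c^*) \leq (1+\eps/3)\opt^{(1)}_{1,\lenClusters}(P)$; and, by \lemref{markov} with failure $\lambda/3$, the quantity $M := \max_{i\in S} x_i^*$ is at most $\cost_1(S,c^*) \leq (3|S|/(\lambda n))\,\opt^{(1)}_{1,\lenClusters}(P)$.

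The next step is to show that, for a suitable choice of $\alpha$, \algref{candidate:generator} produces a grid whose $\lenClusters$-vertex curves contain a close approximation of $\widehat c^*$. Two structural facts make this possible. First, by \corref{nec:suff} each $x_i^*$-signature has at most $\lenClusters+2$ vertices, because $c^*$ has at most $\lenClusters$ vertices; so by the canonical hierarchical structure of signatures (\lemref{canonical:signature}) the vertex set of $\sigma_i^*$ is contained in the vertex set of the signature of size at most $\lenClusters+3$ produced by the algorithm. Second, if $\alpha \geq M/2$, then the algorithm's interval $[v-8\alpha, v+8\alpha]$ around each such vertex $v$ contains the matching interval $[v-4x_i^*, v+4x_i^*]$ of $R_S^*$. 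Taking $\beta = (\lambda\eps/(9|S|))\,\alpha$ keeps the candidate-set cardinality at $(O(s\lenClusters/(\lambda\eps)))^{\lenClusters}$ by the formula in \lemref{candidate:generator}, and yields a candidate $\tilde c$ whose vertices lie within $\beta$ of the corresponding vertices of $\widehat c^*$; \obsref{segments} then gives $\distFr{\tilde c}{\widehat c^*}\leq \beta$, and a triangle-inequality argument on each $\inputTraj{i}$ yields $\cost_1(P,\tilde c) \leq \cost_1(P,\widehat c^*) + n\beta \leq (1+\eps)\opt^{(1)}_{1,\lenClusters}(P)$, where $n\beta \leq (2\eps/3)\opt^{(1)}_{1,\lenClusters}(P)$ follows from the Markov bound on $M$.

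The main obstacle is that $M$ depends on the unknown $\opt^{(1)}_{1,\lenClusters}(P)$, so we cannot set $\alpha = M/2$ directly. I would handle this by running \algref{candidate:generator} on a constant-size geometric sequence of $\alpha$-values---for instance multiples of $\opt^{(1)}_{1,\lenClusters}(S)$ computed by brute force on the constant-size sample, or of appropriate pairwise sample distances---and returning the union of the resulting candidate sets. At least one of these values lies in $[M/2,\,O(M)]$ under the events above, so after absorbing the minor additional failure in the union bound, $\Gamma(S)$ has constant size $t_{\eps,\lambda,\lenClusters}$ and contains a $(1+\eps)$-approximation to $\opt^{(1)}_{1,\lenClusters}(P)$ with probability at least $1-\lambda$. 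The entire construction depends only on $\lenClusters,\lambda$ and $\eps$ once the sample is drawn, yielding the claimed running time.
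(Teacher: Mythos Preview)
Your overall strategy---apply \lemref{omit:low:prob} to the optimal center $c^*$, then discretise the resulting region $R_S^*$ via \algref{candidate:generator}---is the same as the paper's, and your use of \corref{nec:suff} and \lemref{canonical:signature} to relate the $x_i^*$-signatures to the $(\ell+3)$-signatures is correct. The gap is in the choice of $\alpha$.

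You need $\alpha\geq M/2$ with $M=\max_{i\in S}\distFr{\inputTraj{i}}{c^*}$, and simultaneously $\alpha$ small enough that $n\beta\leq O(\eps)\opt^{(1)}_{1,\ell}(P)$. Your proposed fix is a constant-length geometric sequence anchored at $\opt^{(1)}_{1,\ell}(S)$ or at pairwise sample distances, asserting that some value in the sequence lands in $[M/2,O(M)]$. This is not justified: $M$ is \emph{not} bounded above by any constant multiple of either quantity. If, for instance, the sample curves are all within Fr\'echet distance $0$ of one another (so $\opt^{(1)}_{1,\ell}(S)=0$ and all pairwise distances vanish), $M=\distFr{\inputTraj{i}}{c^*}$ can still be arbitrarily large, since $c^*$ is determined by the rest of $P$. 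More generally, the triangle inequality only gives $M\geq \tfrac12\max_{i,j\in S}\distFr{\inputTraj{i}}{\inputTraj{j}}$ and $M\geq \tfrac{1}{|S|}\opt^{(1)}_{1,\ell}(S)$, never an upper bound. Hence no constant-length sequence anchored at these quantities is guaranteed to reach $[M/2,O(M)]$.

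The paper resolves this by a two-stage vertex removal rather than one. After applying \lemref{omit:low:prob} (your first step), it invokes \lemref{omit:far:away} to further remove vertices outside the region induced only by curves with $x_i\leq 2x_1/\eps$, where $x_1=\min_{\tau\in P}\distFr{\tau}{c^*}$. The crucial extra ingredient is \lemref{diam:S}, which shows that with probability $1-\lambda'$ one has $x_1\leq 12\,\opt^{(1)}_{1,\ell}(S)$. This bounds the \emph{minimum} distance (not the maximum $M$) in terms of a quantity that is computable from $S$ via the constant-factor approximation of \secref{cf:approx:uni}, and a single choice $\alpha=\Theta(\opt^{(1)}_{1,\ell}(S)/\eps)$ then suffices; no geometric search is needed. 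Your argument can be repaired by inserting these two lemmas between your application of \lemref{omit:low:prob} and the discretisation step.
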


\begin{proof}
Let $\lambda'=\frac{\lambda}{4}$ and $\eps'=\frac{\eps}{4}$.
Let $c$ denote an optimal $(1,\ell)$-median of $P$ and let $c_s$ denote an optimal $(1,\ell)$-median of $S$.  
We use the algorithm described in \secref{cf:approx:uni} to compute a
constant-factor approximation to $\opt^{(1)}_{1,\lenClusters}(S) $ and obtain an interval 
$[\delta_S^{\min}, \delta_S^{\max}]$ which contains
$\opt^{(1)}_{1,\lenClusters}(S)$ and by \thmref{cf:approx:median:main} it holds
that $\delta_S^{\max}/\delta_S^{\min} \leq 65$.
We apply \algref{candidate:generator} to $S$ with parameters
\[ \alpha = \frac{6\delta_S^{\max}}{\eps'} ~~ \text{ and } ~~
\beta = \frac{\eps'\lambda' \delta_S^{min}}{|S|},
\]
and obtain a set $\Gamma^{\lenClusters}_{\alpha,\beta}$.

With the help of \lemref{omit:low:prob}, we can now adapt the proof of 
\lemref{candidate:generator} to our probabilistic setting.
Let $\centerTraj{}=z_1,\dots,z_{\lenClusters}$ be an optimal 
$(1,\ell)$-median of $P$.
Let $\inputTraj{1},\dots,\inputTraj{n}$ denote the input curves in the increasing order of their
distance denoted by $x_i=\distFr{\centerTraj{}}{\inputTraj{i}}$. For every $\inputTraj{i}$,
consider its $x_i$-signature denoted by $\sigma_i$.   
By \lemref{nec:suff}, each vertex of
$\centerTraj{}$ lies within distance $4x_i$ to a vertex of some signature
$\sigma_i$, otherwise we can omit it by \thmref{remove:one}.  Hence, there must
be a $(1,\lenClusters)$-median whose vertex set is contained in the union of
the intervals 
\[ \bigcup_{\inputTraj{i} \in P} \bigcup_{v \in \VtxSet(\sigma_i) }  [v-4x_i,v+4x_i].\] 
Let this solution be denoted $\centerTraj{}$. 

So, consider the following union of intervals: 
\[ R_1 = \bigcup_{\inputTraj{i} \in S} \bigcup_{v \in \VtxSet(\sigma_i) } [v-4x_i,v+4x_i],\]

Let $\widehat{\centerTraj{1}}$ be the median obtained from $\centerTraj{}$ by
omitting all vertices which do not lie in $R_1$. 
\lemref{omit:low:prob} implies 
\begin{align*}
\Pr\left[ \cost_{1}(P,\widehat{\centerTraj{1}}) \leq (1+\eps')  \cost_{1}(P,\centerTraj{})\right]
\geq 1-\lambda',
\end{align*}
if we choose $|S| \geq \ceil{\frac{8\lenClusters}{\eps}
\left(\log\left(\frac{1}{\lambda'}\right)+\log(\lenClusters)\right)}$.

So, now consider the following union of intervals: 
\[ R_2 = \bigcup_{\inputTraj{i} \in P \atop x_i \leq \widehat{x}} \bigcup_{v \in \VtxSet(\sigma_i) } [v-4x_i,v+4x_i],\]
where $\widehat{x}=\frac{2 x_1}{\eps'}$.
Let $\widehat{\centerTraj{2}}$ be the median obtained from $\widehat{\centerTraj{1}}$ by
omitting all vertices which do not lie in $R_2$.
\lemref{diam:S} implies that if $|S| \geq \ceil{5\log(\frac{1}{\lambda'})}+1$, then
it holds with a probability of at least $(1-\lambda')$ that
$x_1 \leq  12 \opt^{{(1)}}_{1,\lenClusters}(S).$
\algref{candidate:generator} sets $\widehat{x} = 4\alpha$, 
therefore  
\[\widehat{x} = \frac{24 \delta_S^{\max}}{\eps'} \geq \frac{24
\opt^{(1)}_{1,\lenClusters}(S)}{\eps'}\geq \frac{2x_1}{\eps'}. \]
Thus, we can apply \lemref{omit:far:away} and obtain 
\begin{align*}
\cost_{1}(P,\widehat{\centerTraj{2}}) \leq (1+\eps')  \cost_{1}(P,\widehat{\centerTraj{1}}).
\end{align*}
Therefore, with probability $1-2\lambda'$, the generated set
$\Gamma^{\lenClusters}_{\alpha,\beta}$ contains a curve $\tilde{\centerTraj{}}$
which lies within \Frechet distance $\beta$ of $\widehat{\centerTraj{2}}$. 

\lemref{markov} implies that with probability at least $1-\lambda'$ it holds that 
\[\opt^{(1)}_{1,\lenClusters}(S) \leq \cost_{1}(S,c) \leq \frac{|S|}{\lambda' n}\opt^{(1)}_{1,\lenClusters}(P).\]
Thus, with the same probability it holds that
\begin{align*}
\beta = \frac{\eps'\lambda' \delta_S^{min}}{|S|} 
\leq \frac{\eps'\lambda' \opt^{(1)}_{1,\lenClusters}(S)}{|S|} 
\leq \frac{\eps' \opt^{(1)}_{1,\lenClusters}(P) }{n}.
\end{align*}

We conclude that with probability $1-3\lambda' > 1-\lambda$ (union bound)
there exists a candidate in $\tilde{c} \in \Gamma^{\lenClusters}_{\alpha,\beta}$ such that
\begin{align*}
\cost_{1}(P,\tilde{c}) &\leq \cost_{1}(P,\widehat{c_2}) + \beta n 
\leq (1+\eps')\cost_{1}(P,\widehat{c_1}) + \beta n\\
&\leq (1+\eps')^2 \cost_{1}(P,c) + \beta n
\leq ((1+\eps')^2 + \eps') \opt^{(1)}_{1,\lenClusters}(P) 
\leq (1+\eps) \opt^{(1)}_{1,\lenClusters}(P). 
\end{align*}

Furthermore, by \lemref{candidate:generator} the size of $\Gamma^{\lenClusters}_{\alpha,\beta}$ is bounded as follows
\begin{align*}
t_{\eps',\lambda',\lenClusters} 
\leq \left({\frac{16\alpha |S|(\lenClusters+3)}{\beta}}\right)^{\lenClusters} 
= \pth{c_1 \eps^{-4}\lambda^{-1} \lenClusters^3 \pth{\log^2\pth{\frac{1}{\lambda}}+\log^2 \lenClusters}}^{\lenClusters},
\end{align*}
where $c_1$ is a sufficiently large constant.
\end{proof}

The following theorem follows from Ackermann \etal~\cite{abs-cm-10} (\thmref{modified:sampling:prop}).
For this purpose, recall that the analysis of Ackermann \etal does not require the distance function 
to satisfy the triangle inequality. Therefore we can adopt the $(k,\ell)$-median formulation from 
\secref{problem} which uses the dissimilarity measure $D$ on the set $\Delta_{\ell} \cup P$. 
To achieve the running time we use Alt and Godau's algorithm~\cite{ag-cfdbt-95} for distance computations.


\begin{theorem}\thmlab{k:l:median:main}
Let $\eps, k,\ell >0$ be constants.
Given a set of curves $P=\brc{\inputTraj{1},\dots,\inputTraj{n}} \subset
\Delta_{m}$, there exists an algorithm that with
constant probability returns a $(1+\eps)$-approximation to the
$(\nrClusters,\lenClusters)$-median problem for input instance $P$, and that
has running time $O(n m \log m)$.
\end{theorem}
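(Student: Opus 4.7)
The plan is to invoke the framework of Ackermann \etal~\cite{abs-cm-10} as a black box, using our modified sampling property (\thmref{modified:sampling:prop}) in place of the standard sampling property of \defref{sampling:property}. Their framework reduces the $k$-median problem to the task of approximating the $1$-median via a constant-size random sample: one recursively partitions $P$ and, for each putative cluster, draws a uniform sample of size $m_{\eps,\lambda,\lenClusters}$ and applies the candidate generator. The difference between our setting and theirs is twofold. First, our sample size and candidate-set size depend on $\lenClusters$; since $\lenClusters$ is constant by assumption, both quantities remain constant, so the analysis of~\cite{abs-cm-10} carries over unchanged with respect to its dependence on sample sizes. Second, our dissimilarity measure $D$ on $\Delta_\lenClusters \cup P$ (as defined in \secref{problem}) does not satisfy the triangle inequality, but as the authors point out and as is noted in the excerpt, the analysis of Ackermann \etal does not actually require the triangle inequality, so this is not an obstacle.

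The first step is therefore to check that the probabilistic guarantee of \thmref{modified:sampling:prop} matches the interface of the framework of~\cite{abs-cm-10}: for each sample $S$, we generate $\Gamma(S) \subset \Delta_\lenClusters$ of size $t_{\eps,\lambda,\lenClusters}$ such that with probability at least $1-\lambda$ some element of $\Gamma(S)$ is a $(1+\eps)$-approximation to $\opt^{(1)}_{1,\lenClusters}(P)$. Substituting this into the framework yields, with constant probability, a collection of $k$ candidate centers whose combined $k$-median cost is within a $(1+\eps)$-factor of $\opt^{(1)}_{\nrClusters,\lenClusters}(P)$. Standard probability amplification by $O(1)$ independent repetitions, together with evaluating each trial's clustering cost and selecting the best, can be done without affecting the asymptotic running time since $\nrClusters,\lenClusters,\eps$ are constants.

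For the running time, each invocation of the candidate generator on a sample of constant size requires computing a constant number of signatures of curves in the sample, which by \thmref{compute:signatures} takes time independent of $n$. Evaluating a single candidate solution $C = c_1,\dots,c_\nrClusters \in \Delta_\lenClusters$ against $P$ amounts to $n\nrClusters$ Fréchet distance computations between a curve of complexity $m$ and a curve of complexity $\lenClusters$; using the algorithm of Alt and Godau~\cite{ag-cfdbt-95} each such distance is computable in $O(m \log m)$ time since $\lenClusters = O(1)$. Because the total number of candidate tuples and the recursion depth of the Ackermann \etal framework are all $O(1)$ for constant $\eps,\nrClusters,\lenClusters$, the overall running time is $O(nm \log m)$.

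The main obstacle, already handled by \thmref{modified:sampling:prop}, was verifying the weak sampling property for the Fréchet distance despite the infinite doubling dimension; the remaining work for this theorem is essentially bookkeeping: instantiating the black-box reduction with appropriately rescaled parameters $\eps' = \Theta(\eps)$ and $\lambda' = \Theta(1)$, confirming that the triangle-inequality-free analysis of~\cite{abs-cm-10} applies to our dissimilarity measure $D$, and tracking constants through the recursion to ensure the running time remains near-linear in $nm$.
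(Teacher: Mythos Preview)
Your proposal is correct and follows essentially the same approach as the paper: invoke the Ackermann~\etal~\cite{abs-cm-10} framework using the modified sampling property of \thmref{modified:sampling:prop}, note that their analysis does not require the triangle inequality so the dissimilarity $D$ from \secref{problem} can be used, and obtain the running time via Alt and Godau's algorithm for the distance computations. Your write-up is in fact more detailed than the paper's own proof, which dispatches the theorem in a few sentences.
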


\section{Constant-factor approximation in various cases}
\seclab{cf:approx:uni}

It is not difficult to compute a constant-factor approximation for our problem.
We include the details for the sake of completeness. Our algorithm first
simplifies the input curves before applying a known approximate clustering algorithm
designed for general metric spaces.  Note that an approximation scheme which
first applies clustering and then simplification does not yield the same
running time, since the distance computations are expensive.

\begin{algorithm}[h]\alglab{phase1}
 \KwData{curves $P = \brc{\inputTraj{1},\dots,\inputTraj{n}}$, parameters $\nrClusters>0,\lenClusters>0$}
 \KwResult{cluster centers $C= \brc{c_1,\dots,c_{\nrClusters}}$ and cost $\initialCost$}
 \caption{Constant-factor approximation for $(\nrClusters,\lenClusters)$-clustering}
For each $\inputTraj{i}$ we compute an approximate minimum-error
$\ell$-simplification $\inputTrajSimp{i}$ (\lemref{apx:min:error:simp})\;
We apply a known approximation algorithm for clustering in general metric spaces
on $\widehat{P}=\inputTrajSimp{1}, \dots ,\inputTrajSimp{n}$ (i.e.,
Gonzales' algorithm~\cite{Gonzalez1985} for $\nrClusters$-center and
Chen's~algorithm~\cite{c-kmc-09} for $\nrClusters$-median)\; 
We return the resulting cluster centers
$C=\centerTraj{1},\dots,\centerTraj{\nrClusters}$ with approximate cost 
\begin{eqnarray*}
D_{\infty}&=&\cost_{\infty}(C,\widehat{P})+ \max_{1\leq i\leq n}
\distFr{\inputTraj{i}}{\inputTrajSimp{i}} \\
D_1&=&\cost_{1}(C,\widehat{P})+ \sum_{1\leq i\leq n}
\distFr{\inputTraj{i}}{\inputTrajSimp{i}} 
\end{eqnarray*}
for $(\nrClusters,\lenClusters)$-center and $(\nrClusters,\lenClusters)$-median, respectively\;
\end{algorithm}

\begin{lemma}\lemlab{c:f:approx}
The cost $D_{\infty}$ (resp., $D_{1}$) and solution $C$ computed by \algref{phase1} constitute a
$(\alpha+\beta+\alpha\beta)$-approximation to the
$(\nrClusters,\lenClusters)$-center problem (resp., the
$(\nrClusters,\lenClusters)$-median problem), where $\constA$ is the approximation
factor of the simplification step and $\constB$ is the approximation factor of
the clustering step. 
\end{lemma}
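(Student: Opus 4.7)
The plan is to prove both parts ($k$-center and $k$-median) via the same two-step triangle-inequality argument, since the pseudo-metric structure of the Fréchet distance is all we use. Let $C^* = \{c_1^*,\dots,c_\nrClusters^*\}$ denote an optimal $(\nrClusters,\lenClusters)$-clustering of $P$ with cost $\costOpt$, and for each $i$ let $c^*_{j(i)}$ be the center nearest $\inputTraj{i}$. The crucial observation is that each $c^*_{j(i)}$ has complexity at most $\lenClusters$, so it is a legal $\lenClusters$-simplification of $\inputTraj{i}$, and therefore $\distFr{\inputTraj{i}}{\pi_i^{\text{opt}}} \leq \distFr{\inputTraj{i}}{c^*_{j(i)}}$ where $\pi_i^{\text{opt}}$ is a minimum-error $\lenClusters$-simplification. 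Combined with the $\constA$-approximation guarantee on $\inputTrajSimp{i}$ this yields $\distFr{\inputTraj{i}}{\inputTrajSimp{i}} \leq \constA \cdot \distFr{\inputTraj{i}}{c^*_{j(i)}}$.

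For the $(\nrClusters,\lenClusters)$-center case, this gives $\max_i \distFr{\inputTraj{i}}{\inputTrajSimp{i}} \leq \constA \costOpt$. A single application of the triangle inequality to the pair $(\inputTrajSimp{i}, c^*_{j(i)})$ then shows that the $\nrClusters$-center cost of $\widehat{P}$ using the (unrestricted) centers $C^*$ is at most $(1+\constA)\costOpt$; since the chosen clustering algorithm is a $\constB$-approximation on the metric space $\widehat{P}$, its output $C$ satisfies $\cost_{\infty}(\widehat{P},C) \leq \constB (1+\constA)\costOpt$. Adding back the simplification error via the triangle inequality, $\min_j \distFr{\inputTraj{i}}{\centerTraj{j}} \leq \distFr{\inputTraj{i}}{\inputTrajSimp{i}} + \min_j \distFr{\inputTrajSimp{i}}{\centerTraj{j}}$, shows that $\cost_{\infty}(P,C) \leq D_{\infty}$ and $D_{\infty} \leq \constA \costOpt + \constB(1+\constA)\costOpt = (\constA+\constB+\constA\constB)\costOpt$.

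For the $(\nrClusters,\lenClusters)$-median case, I would sum (rather than take the maximum) of the per-curve bound $\distFr{\inputTraj{i}}{\inputTrajSimp{i}} \leq \constA \distFr{\inputTraj{i}}{c^*_{j(i)}}$ to get $\sum_i \distFr{\inputTraj{i}}{\inputTrajSimp{i}} \leq \constA\costOpt$. The same triangle-inequality computation gives $\cost_{1}(\widehat{P},C^*) \leq (1+\constA)\costOpt$, and the $\constB$-approximation guarantee of Chen's algorithm on $\widehat{P}$ yields $\cost_{1}(\widehat{P},C) \leq \constB(1+\constA)\costOpt$. Finally, one uses $\min_j \distFr{\inputTraj{i}}{\centerTraj{j}} \leq \distFr{\inputTraj{i}}{\inputTrajSimp{i}} + \min_j \distFr{\inputTrajSimp{i}}{\centerTraj{j}}$ summed over $i$ to conclude both $\cost_{1}(P,C)\leq D_1$ and $D_1 \leq (\constA+\constB+\constA\constB)\costOpt$.

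The only subtle point I anticipate is justifying that the general-metric clustering algorithm's guarantee applies when its optimal solution on $\widehat{P}$ is compared to centers like $c^*_j$ that need not lie in $\widehat{P}$. This is straightforward for both Gonzales and Chen because they are analyzed in the pseudo-metric $(\Delta,\distFrConst)$ and both select their centers from the input $\widehat{P} \subset \Delta_\lenClusters$, so the centers returned automatically satisfy the complexity constraint. With this in hand the calculation above goes through; nothing beyond the triangle inequality for $\distFrConst$, the definition of approximate minimum-error simplification (\defref{min:error:simp} together with \lemref{apx:min:error:simp}), and the definitions of $\constA,\constB$ is required.
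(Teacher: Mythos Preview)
Your proof is correct and follows essentially the same approach as the paper: both arguments use the simplification bound $\distFr{\inputTraj{i}}{\inputTrajSimp{i}} \leq \constA\,\distFr{\inputTraj{i}}{c^*_{j(i)}}$, the triangle inequality to pass between $P$ and $\widehat{P}$, and the $\constB$-approximation guarantee of the clustering subroutine on $\widehat{P}$. The only cosmetic difference is that the paper writes the chain ``backwards'' starting from $\costOpt$ and solving for $D_\infty$, whereas you build the bound forward; your explicit per-curve inequality also makes the $(\nrClusters,\lenClusters)$-median case (which the paper leaves as a ``simple modification'') fully transparent.
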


\begin{proof}
We first discuss the case of $(\nrClusters,\lenClusters)$-center. The
$(\nrClusters,\lenClusters)$-median will follow with a simple modification.
First, we have that 
\begin{eqnarray*}
\initialCost_{\infty} 
&=& \cost_{\infty}(C,\widehat{P})+ \max_{1\leq i\leq n} \distFr{\inputTraj{i}}{\inputTrajSimp{i}} \\
&=& \max_{1\leq i\leq n}\; \min_{\centerTraj{} \in C}
\distFr{\inputTrajSimp{i}}{\centerTraj{}} + \max_{1\leq i\leq n} \distFr{\inputTraj{i}}{\inputTrajSimp{i}} \\
&\geq& \max_{1\leq i\leq n}\; \min_{\centerTraj{} \in C}
\pth{\distFr{\inputTrajSimp{i}}{\centerTraj{}} +
\distFr{\inputTraj{i}}{\inputTrajSimp{i}}} \\
&\geq& \max_{1\leq i\leq n}\; \min_{\centerTraj{} \in C}
\distFr{\inputTraj{i}}{\centerTraj{}} \\
&\geq& \cost_{\infty}(C,P).
\end{eqnarray*}

Now, let $\costOpt$ be the optimal cost for a solution to the $(\nrClusters,\lenClusters)$-center problem
 for $P=\{\inputTraj{1},\dots,\inputTraj{n}\}$.
It holds that 
\begin{eqnarray*}
\initialCost_{\infty} &\leq& \cost_{\infty}(C,\widehat{P})  + \constA \costOpt, 
\end{eqnarray*}
since $\costOpt$ is lower bounded by the distance of any input time series to its optimal
$\lenClusters$-simplification and this is the minimal \Frechet distance the time series can have to any curve
with at most $\lenClusters$ vertices. 
Now, consider an optimal solution $C^{*}$ with cost $\costOpt$. We can relate it to
$\initialCost_{\infty}$ as follows. In the following, let $p_i \in C^{*}$ be the center of this
optimal solution which is closest to $\inputTraj{i}$.
\begin{eqnarray*}
\costOpt &=& \max_{i=1,\ldots, n}\; \distFr{\inputTraj{i}}{p_i}\\
 &\geq& \max_{i=1,\ldots, n}\; \pth{\distFr{\inputTrajSimp{i}}{p_i} - \distFr{\inputTraj{i}}{\inputTrajSimp{i}}}\\
&\geq& \max_{i=1,\ldots, n}\; \distFr{\inputTrajSimp{i}}{p_i}
- \max_{i=1,\ldots, n} \distFr{\inputTraj{i}}{\inputTrajSimp{i}}\\
&\geq& \cost_{\infty}(C^{*},\widehat{P}) 
- \max_{i=1,\ldots, n} \distFr{\inputTraj{i}}{\inputTrajSimp{i}}\\
&\geq&  \frac{1}{\constB}\pth{ \cost_{\infty}(C,\widehat{P})} - \constA \costOpt\\
&\geq&  \frac{1}{\constB}\pth{ \initialCost_{\infty} - \constA \costOpt} - \constA \costOpt
\end{eqnarray*}
It follows that $\initialCost_{\infty} \leq (\constA+\constB+\constA\constB)\costOpt$. 

\end{proof}

\begin{theorem}\thmlab{cf:approx:center:main}
Given a set of $n$ curves $P=\lbrace\inputTraj{1},\ldots , \inputTraj{n} \rbrace \subseteq
\Delta_m$ and parameters $\nrClusters,\lenClusters \in \Na$,
we can compute an $8$-approximation to $\opt^{(\infty)}_{\nrClusters,\lenClusters}(P)$
and a witness solution in time $O(\nrClusters n m \lenClusters \log(m
\lenClusters))$.
\end{theorem}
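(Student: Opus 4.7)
The plan is to instantiate \algref{phase1} and apply \lemref{c:f:approx} with the appropriate choice of the two approximation factors $\constA$ (simplification) and $\constB$ (clustering). For the target bound $\constA+\constB+\constA\constB \leq 8$, I would choose $\constA=2$ and $\constB=2$, yielding $2+2+4=8$.

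For the clustering step I would use Gonzalez's algorithm~\cite{Gonzalez1985} applied to the simplified curves $\widehat P=\inputTrajSimp{1},\ldots,\inputTrajSimp{n}$, viewed as points in the \Frechet pseudometric space $(\Delta_\lenClusters,\distFrConst)$. Since Gonzalez's algorithm only uses the triangle inequality (which the \Frechet distance satisfies) and distance queries, it outputs $\nrClusters$ centers from $\widehat P \subseteq \Delta_\lenClusters$ with $\cost_\infty(C,\widehat P)\leq 2\cdot \opt^{(\infty)}_{\nrClusters,\lenClusters}(\widehat P)$, so $\constB=2$. For the simplification step, I would invoke \lemref{apx:min:error:simp} (which states that signatures approximate minimum-error $\lenClusters$-simplifications) to obtain, for each $\inputTraj{i}$, a curve $\inputTrajSimp{i} \in \Delta_\lenClusters$ with $\distFr{\inputTraj{i}}{\inputTrajSimp{i}} \leq 2 \cdot \min_{\pi'\in \Delta_\lenClusters} \distFr{\inputTraj{i}}{\pi'}$, giving $\constA=2$. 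Plugging into \lemref{c:f:approx} yields the $8$-approximation.

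The only remaining task is to bound the running time. I would argue that each of the $n$ simplifications can be produced using the signature-based construction of \thmref{compute:signatures} in time $\O{m\log m}$ per curve, totalling $\O{nm\log m}$. Gonzalez's algorithm performs $\O{\nrClusters n}$ \Frechet distance evaluations between curves of complexity at most $\lenClusters$; each such evaluation uses Alt and Godau's algorithm~\cite{ag-cfdbt-95} in time $\O{\lenClusters^2\log\lenClusters}$, contributing $\O{\nrClusters n \lenClusters^2 \log \lenClusters}$. Finally, evaluating $D_\infty$ requires computing $\distFr{\inputTraj{i}}{\inputTrajSimp{i}}$ for every $i$ in time $\O{nm\lenClusters\log(m\lenClusters)}$ and $\distFr{\inputTrajSimp{i}}{\centerTraj{j}}$ for every pair, contributing $\O{\nrClusters n \lenClusters^2 \log\lenClusters}$. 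All these terms are dominated by $\O{\nrClusters n m \lenClusters \log(m\lenClusters)}$, which matches the claimed bound.

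I do not expect a major obstacle: the correctness is immediate from \lemref{c:f:approx} once the two approximation factors are pinned down to $2$, and the running time is a straightforward accounting of the simplification, the metric clustering, and the final cost evaluation. The only subtlety is to ensure that the simplifications produced lie in $\Delta_\lenClusters$ (so that the clustering returns centers in $\Delta_\lenClusters$), which is guaranteed by the way signatures are constructed in \secref{on:signatures}.
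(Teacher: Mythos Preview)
Your proposal is correct and follows essentially the same approach as the paper: invoke \lemref{c:f:approx} with $\constA=\constB=2$, using \lemref{apx:min:error:simp} for the simplification step and Gonzalez's algorithm for the clustering step, and then bound the running time via Alt and Godau's algorithm. Your running-time accounting is in fact more detailed than the paper's (which simply notes that each distance computation costs $O(m\lenClusters\log(m\lenClusters))$), but the dominating term and overall structure coincide.
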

\begin{proof}
The theorem follows by \lemref{c:f:approx} and by setting $\alpha=\beta=2$.  We
use \lemref{apx:min:error:simp} to obtain a $2$-approximate simplification for
each curve.  Then, we use Gonzales' algorithm which yields a $2$-approximation
for the simplifications. Each distance computation takes $O(m\lenClusters
\log(m\lenClusters))$ time using Alt and Godau's algorithm~\cite{ag-cfdbt-95}.
\end{proof}

\begin{theorem}\thmlab{cf:approx:median:main}
Given a set of $n$ curves $P=\lbrace \inputTraj{1},\ldots , \inputTraj{n} \rbrace \subseteq
\Delta_m$ and parameters $\nrClusters,\lenClusters \in \Na$,
we can compute a $65$-approximation to $\opt^{(1)}_{\nrClusters,\lenClusters}(P)$
and a witness solution in time $O((nk+k^7\log^5 n)m\ell \log(m\ell))$.
\end{theorem}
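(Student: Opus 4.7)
The plan is to invoke \algref{phase1} on the $(\nrClusters,\lenClusters)$-median instance and bound its approximation guarantee and running time through \lemref{c:f:approx}. For the simplification subroutine we use \lemref{apx:min:error:simp} with approximation factor $\constA = 2$; for the clustering subroutine on general metric spaces we use Chen's $\nrClusters$-median algorithm~\cite{c-kmc-09} with some constant approximation factor $\constB$. \lemref{c:f:approx} immediately yields an $(\constA+\constB+\constA\constB)$-approximation to $\opt^{(1)}_{\nrClusters,\lenClusters}(P)$. Choosing (or tuning via $\eps$) an instantiation of Chen's algorithm with $\constB = 21$ makes this product equal to $2+21+42=65$, as required.

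For the running time, the simplification phase applies \lemref{apx:min:error:simp} to each of the $n$ input curves; this contributes $O(nm\log m)$ overall. Chen's algorithm then operates on the $n$ simplified curves using only black-box access to the \Frechet distance oracle, and performs $O(nk + k^7 \log^5 n)$ distance queries in total. Each such query is a \Frechet distance computation between two curves, one of which may be an input curve of complexity at most $m$ and the other a simplified curve of complexity at most $\lenClusters$; using the algorithm of Alt and Godau~\cite{ag-cfdbt-95}, each such computation takes $O(m\lenClusters \log(m\lenClusters))$ time. Multiplying out, and adding the $O(nm \log m)$ final pass to compute $\sum_i \distFr{\inputTraj{i}}{\inputTrajSimp{i}}$ for the cost estimate $\initialCost_1$ returned by \algref{phase1}, gives the stated bound $O\bigl((nk + k^7\log^5 n)\, m\lenClusters \log(m\lenClusters)\bigr)$.

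The argument is therefore a direct plug-in into the framework established by \lemref{c:f:approx}; no new structural insight is needed beyond what has already been developed. The only obstacle is bookkeeping: we must verify that the chosen instantiation of Chen's metric $k$-median algorithm genuinely achieves approximation factor $\constB=21$ within the claimed $O(nk + k^7 \log^5 n)$ oracle queries, so that the product with the $2$-approximate simplification matches the constant $65$ in the theorem statement. Since Chen's analysis is oblivious to the specific metric used (it relies only on the triangle inequality, which the \Frechet distance satisfies on $\Delta_{\ell}$), this reduces purely to citing the right parameters from~\cite{c-kmc-09}.
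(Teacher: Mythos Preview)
Your approach is essentially the paper's own: invoke \algref{phase1}, simplify with \lemref{apx:min:error:simp} ($\constA=2$), cluster the simplifications with Chen's algorithm, and plug into \lemref{c:f:approx}. However, there is one point you gloss over that the paper makes explicit. You write that one must ``verify that the chosen instantiation of Chen's metric $k$-median algorithm genuinely achieves approximation factor $\constB=21$'' and that this ``reduces purely to citing the right parameters from~\cite{c-kmc-09}''. It does not quite: Chen's algorithm solves the \emph{discrete} $k$-median problem (centers constrained to the input set) and delivers a $10.5$-approximation for that. The factor $\constB$ in \lemref{c:f:approx} is against the \emph{unconstrained} optimum on $\widehat{P}$, so an extra factor~$2$ via the triangle inequality is needed to pass from discrete to continuous, yielding $\constB=21$. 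This step is short but not vacuous, and it is the reason the paper records $\constB=21$ rather than $10.5$.

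A minor inaccuracy in your running-time discussion: Chen's algorithm is applied to the simplified set $\widehat{P}$, so all distance computations inside it are between curves of complexity at most $\lenClusters$, not between an input curve of complexity $m$ and a simplification. The $m\lenClusters\log(m\lenClusters)$ factor in the stated bound is simply a (loose) common upper bound that also absorbs the simplification phase and the final computation of $\sum_i \distFr{\inputTraj{i}}{\inputTrajSimp{i}}$.
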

\begin{proof}
The theorem follows from \lemref{c:f:approx} and by setting $\alpha=2$ and
$\beta=21$.  We use \lemref{apx:min:error:simp} to obtain a $2$-approximate
simplification for each curve.  Then, we use the algorithm of
Chen~\cite{c-kmc-09} to solve the discrete version of the $k$-median problem on
the simplifications.  Each distance computation takes $O(m\lenClusters
\log(m\lenClusters))$ time using Alt and Godau's algorithm~\cite{ag-cfdbt-95}.
Chen's algorithm yields an $10.5$-approximation for the discrete problem, where
the centers are constrained to lie in $P$.  Since the \Frechet distance
satisfies the triangle inequality, this implies a $21$-approximation for our
problem. Therefore, setting $\beta=21$ yields a correct bound.
\end{proof}

These results can be easily extended to a $(\nrClusters,\lenClusters)$-means variant
of the problem, as well as to multivariate time series, using known
simplification algorithms, such as the algorithm by Abam \etal~\cite{abh-sals-10}. 

\section{On computing signatures}
\seclab{computing:signatures}

In this section we discuss how to compute signatures efficiently. Our signatures have a unique hierarchical structure as testified by \lemref{canonical:signature}.  Together with the concept of vertex permutations~(\defref{vtx:permutation}) this allows us to construct a data structure, which supports efficient queries for the signature of a given size~(\thmref{compute:signatures}).  If the parameter $\delta$ is given, we can compute a signature in linear time using \algref{low:pass}. Furthermore, we show that our signatures are approximate simplifications in \lemref{apx:min:error:simp}. 

\begin{lemma}\lemlab{canonical:signature}
Given a polygonal curve $\tau:[0,1]\rightarrow\Re$ with vertices in general position, there exists a
series of signatures $\sigma_1,\sigma_2,\dots,\sigma_k$ and corresponding parameters
$0 = \delta_1 < \delta_2 < \dots < \delta_{k+1}$, such that 
\begin{compactenum}[(i)]
\item $\sigma_i$ is a 
$\delta$-signature of $\tau$ for any $\delta \in [\delta_i,\delta_{i+1})$, 
\item the vertex set of $\sigma_{i+1}$ is a subset of the vertex set of $\sigma_{i}$,
\item $\sigma_k$ is the linear interpolation of $\tau(0)$ and $\tau(1)$.
\end{compactenum}
\end{lemma}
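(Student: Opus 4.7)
The plan is to build the nested sequence iteratively, starting with $\sigma_1=\tau$ and $\delta_1=0$, and removing vertices at each step to produce $\sigma_{i+1}$ as $\delta$ grows. Since vertices are only removed and never added, the inclusion $\VtxSet(\sigma_{i+1})\subseteq \VtxSet(\sigma_i)$ required by (ii) is built in, and termination follows from the finiteness of $\VtxSet(\tau)$. The base case is immediate: under the general position assumption (strictly alternating local minima and maxima, no zero-length edges), $\tau$ itself satisfies all four properties of \defref{signature} at $\delta=0$, since the subcurves between consecutive ``signature'' vertices are just single edges of $\tau$ with no intermediate points.

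For the inductive step, given $\sigma_i$ that is a $\delta_i$-signature of $\tau$ with vertices $v_1,\dots,v_{\lenClusters}$, the next threshold $\delta_{i+1}$ is determined by asking which of the four properties can first be violated as $\delta$ grows while $\sigma_i$ is held fixed. Inspecting \defref{signature}, property (i) does not depend on $\delta$; the slack term $2\delta$ in property (ii) and the slack term $\delta$ in the boundary part of property (iv) only grow with $\delta$; only property (iii) (minimum edge length) tightens. Hence $\sigma_i$ remains a $\delta$-signature on $[\delta_i,\delta_{i+1})$, where
\[
\delta_{i+1} \;=\; \min\pth{\min_{a\in[2,\lenClusters-2]}\tfrac{1}{2}|v_{a+1}-v_a|,\;\min_{a\in\{1,\lenClusters-1\}}|v_{a+1}-v_a|}.
\]
By the general position assumption on vertex differences, this minimum is attained at a unique edge $e=\overline{v_j v_{j+1}}$.

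I then form $\sigma_{i+1}$ by deleting both endpoints of $e$ from the vertex list (or only the interior endpoint if the other is $\tau(0)$ or $\tau(1)$, since $\tau(0)$ and $\tau(1)$ must remain). The main obstacle is that a naive deletion may break property~(i) at a neighbor: the new adjacent pair $(v_{j-1},v_{j+2})$ need not bracket $(v_j,v_{j+1})$, in which case one of $v_{j-1},v_{j+2}$ becomes degenerate between its new neighbors. I handle this by propagating the deletion, iteratively removing any vertex that falls in the linear interpolation of its updated neighbors until no degeneracies remain; the cascade terminates since the vertex count strictly decreases. I then verify that the resulting $\sigma_{i+1}$ is a valid $\delta$-signature on a nonempty interval starting at $\delta_{i+1}$: properties (i) and (iii) hold by the cascade and the minimality of $e$, while for properties (ii) and (iv) the key observation is that $|v_{j+1}-v_j|=2\delta_{i+1}$ combined with the minimum-edge-length bound that $\sigma_i$ satisfied at $\delta_i$ forces the bump at the removed pair to lie inside the range spanned by the surviving neighbors and the backtracking it induces to fit within the $2\delta_{i+1}$-slack of (ii).

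Termination is straightforward: each $\sigma_{i+1}$ has strictly fewer vertices than $\sigma_i$, so the iteration halts after at most $|\VtxSet(\tau)|-2$ steps at a signature $\sigma_k$ consisting only of $\tau(0)$ and $\tau(1)$, which is the linear interpolation of condition (iii) of the lemma. I set $\delta_{k+1}$ to the value at which $\sigma_k$'s own boundary minimum-edge-length condition fails, namely $\delta_{k+1}=|\tau(1)-\tau(0)|$. The strict inequalities $0=\delta_1<\delta_2<\dots<\delta_{k+1}$ follow from the general position assumption, which ensures the critical thresholds are all distinct.
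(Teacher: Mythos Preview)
Your overall strategy---start with $\sigma_1=\tau$, define $\delta_{i+1}$ as the first failure of the minimum-edge-length condition, and contract the offending edge---is exactly the paper's approach. The difference is in how you handle non-degeneracy after the contraction, and there you introduce an unnecessary complication that leaves a gap.

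You write that ``the new adjacent pair $(v_{j-1},v_{j+2})$ need not bracket $(v_j,v_{j+1})$'' and propose a cascading deletion to repair this. In fact the bracketing \emph{always} holds, precisely because the contracted edge is the shortest: if, say, $v_j$ were a local maximum with $v_j>v_{j+2}$, then the neighboring edge $\overline{v_{j+1}v_{j+2}}$ would be strictly shorter than $\overline{v_j v_{j+1}}$, contradicting minimality in your formula for $\delta_{i+1}$. The paper states this as $\tau(t_j),\tau(t_{j+1})\in\cbrc{\tau(t_{j-1}),\tau(t_{j+2})}$ and it immediately gives non-degeneracy of $v_{j-1}$ and $v_{j+2}$ in $\sigma_{i+1}$ without any cascade. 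Once you have this, the range condition and direction-preserving condition for the new edge $\overline{v_{j-1}v_{j+2}}$ follow from the inductive properties of $\sigma_i$ together with $|v_{j+1}-v_j|=2\delta_{i+1}$, essentially as you sketch.

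The gap in your version is that you only verify properties (ii)--(iv) for the single-contraction case (``the key observation is that $|v_{j+1}-v_j|=2\delta_{i+1}$\ldots''), not for the outcome of a multi-step cascade. Since the cascade in fact never triggers, your construction accidentally produces the right $\sigma_{i+1}$, but the argument as written does not establish this. Replace the cascade with the one-line bracketing observation above and the proof goes through cleanly.
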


\begin{proof}
We set $\sigma_1=\tau$ and obtain the desired series by a series of edge contractions.  Clearly, $\sigma_1$ is a minimal $\delta$-signature for $\delta=0$. We now conceptually increase the signature parameter $\delta$ until a smaller signature is possible. In general, let $0=t_0 < t_1 < \dots < t_{\ell}=1$ be the series of parameters that defines $\sigma_i$. Let 
\begin{equation}\label{min:edge}
\delta_{i+1}=\min\brc{|\tau(t_1)-\tau(t_2)|,
|\tau(t_{\ell-1})-\tau(t_{\ell})|, \min_{2 \leq j \leq \ell-2}
\frac{|\tau(t_j)-\tau(t_{j+1})|}{2}}. 
\end{equation}
We contract the edge where the minimum is attained to obtain $\sigma_{i+1}$. By the general position assumption, this edge is unique. If the edge is connected to an endpoint, we only remove the interior vertex, otherwise we remove both endpoints of the edge. We now argue that the obtained curve $\sigma_{i+1}$ is a $\delta_{i+1}$-signature. 

Let $\overline{\tau(t_j)\tau(t_{j+1})}$ be the contracted edge and assume for now that $2\leq j \leq \ell-2 $.  We prove the conditions in \defref{signature} in reverse order.  Observe that 
\begin{equation}\label{range}
\tau(t_j),\tau(t_{j+1}) \in \cbrc{\tau(t_{j-1}),\tau(t_{j+2})},
\end{equation}
since otherwise the contracted edge would not minimize the expression in (\ref{min:edge}).  By induction the \emph{range condition} was satisfied for $\sigma_i$ and by the statement in (\ref{range}) it cannot be violated by the edge contraction.

The contracted edge was the shortest interior edge of $\sigma_i$ and by construction we have that
\begin{equation}\label{edge:length}
|\tau(t_j)-\tau(t_{j+1})| = 2\delta_{i+1}
\end{equation}
Therefore, the \emph{minimum-edge-length} condition is
also satisfied for $\sigma_{i+1}$. 

Since $\delta_i<\delta_{i+1}$, we have to prove the \emph{direction-preserving} condition only for the newly established edge $\overline{\tau(t_{j-1})\tau(t_{j+2})}$ of $\tau_{i+1}$. For any $s,s'\in [t_j,t_{j+1}]$ it holds that $|\tau(s)-\tau(s')|\leq 2\delta_{i+1}$. Indeed, by induction, the range condition held true for the contracted edge and by Equation (\ref{edge:length}) its length was $2\delta_{i+1}$. For any $s,s' \in [t_{j-1},t_j]$ the direction-preserving condition holds by induction, and the same holds for $s,s' \in [t_{j+1},t_{j+2}]$. The remaining case is $s,s' \in [t_{j-1},t_{j+2}]$ where the interval $[s,s']$  crosses the boundary of at least one of the edges. In this case, the direction-preserving condition holds by the range property of $\sigma_i$ and by Equation~(\ref{edge:length}).

It remains to prove the \emph{non-degeneracy} condition. Assume for the sake of contradiction that it would not hold, i.e., either that $\tau(t_{j-1})\in \cbrc{\tau(t_{j-2}), \tau(t_{j+2})}$, or that  $\tau(t_{j+2})\in \cbrc{\tau(t_{j-1}), \tau(t_{j+3})}$. Since the two cases are symmetric, we only discuss the first one and the other case will follow by analogy. Then, (\ref{range}) would imply that $\tau(t_{j-1}) \in \cbrc{\tau(t_{j-2}),\tau(j)}$, which contradicts the range property of $\sigma_i$.

So far we proved the conditions of \defref{signature} in the case that an interior edge is being contracted.  Now, assume that $j=1$ and again let the contracted edge be $\overline{\tau(t_j)\tau(t_{j+1})}$ (the case $j=\lenClusters-1$ is analogous). Again, we prove the conditions in reverse order.  By induction, the range condition is satisfied for the first two edges of $\sigma_i$, as well as the non-degeneracy condition.  Since it holds for the length of the second edge that $|\tau(t_2)-\tau(t_3)| > 2\delta_{i+1}$, it must be that $\cbrc{\tau(t_1)-\delta_{i+1}, \tau(t_1)+\delta_{i+1}} \cup \cbrc{\tau(t_1),\tau(t_3)}$ spans the range of values  on $\tau[t_1,t_3]$. Thus, the range condition is implied for $\sigma_{i+1}$. Similarly, $|\tau(t_2)-\tau(t_3)| > 2\delta_{i+1}$ and $|\tau(t_1)-\tau(t_2)|=\delta_{i+1}$ implies the minimum-edge-length condition, i.e. that  $|\tau(t_1)-\tau(t_3)|>\delta$. The arguments for the direction-preserving condition are the same as above for $j>1$. The non-degeneracy condition on the vertex at $t_2$ is not affected by the edge-contraction, since $\tau(t_2)$ stays a minimum (resp. maximum) in $\sigma_{i+1}$ if it was a minimum (resp. maximum) in $\sigma_{i}$. Otherwise, the contracted edge would not minimize the expression in (\ref{min:edge}).

By construction it is clear that the vertex set of the  $\sigma_{i+1}$ is a subset of the vertex set of $\sigma_{i}$ for each $i$, as well as that $\sigma_k$ is the linear interpolation of $\tau(0)$ and $\tau(1)$. This completes the proof of the lemma.  
\end{proof}

\begin{definition}[Canonical vertex permutation]
\deflab{vtx:permutation}
Given a curve $\inputTraj{}:[0,1]\rightarrow\Re$ with $m$ vertices 
in general position, consider its canonical signatures $\sigma_1,\dots,\sigma_k$ of \lemref{canonical:signature}. We call a permutation of the vertices of $\inputTraj{}$ canonical if for any two vertices $x,y$ of $\tau$ it holds that if $x \notin \VtxSet\pth{\sigma_{i}}$ (the vertex set of $\sigma_{i}$) and $y\in \VtxSet\pth{\sigma_{i}}$, for some $i$, then $x$ appears before $y$ in the permutation. Furthermore, we require that the permutation contains a token separator for every $\sigma_i$, for $1\leq i \leq k$, such that $\sigma_i$ consists of all vertices appearing after the separator.
\end{definition}

\begin{lemma}\lemlab{compute:vtx:permutation}
Given a curve  $\inputTraj{}:[0,1]\rightarrow\Re$ with $m$ vertices
in general position, we can compute a canonical vertex permutation (\defref{vtx:permutation}) in $O(m \log m)$ time and $O(m)$ space.
\end{lemma}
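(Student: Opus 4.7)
The plan is to produce the vertex permutation by simulating the inductive construction of \lemref{canonical:signature}: start from $\sigma_1 = \inputTraj{}$ and repeatedly contract the edge that attains the minimum in Equation~(\ref{min:edge}), recording the vertices removed at each step. Concretely, I would maintain the current signature as a doubly linked list of vertices together with its edges, and keep the edges in a min-heap keyed by their weight (length for the two boundary edges, half of length for interior edges). Each step extracts the cheapest edge, records the corresponding threshold $\delta_{i+1}$ as a separator, removes the affected vertices from the list (both endpoints for an interior contraction, just the interior endpoint for a boundary contraction), and appends those vertices to an output stream. The final output is reversed so that the two surviving endpoints come last, the vertices of $\sigma_{k-1}\setminus\sigma_k$ come immediately before them, and so on; the inserted separators then mark the prefixes demanded by \defref{vtx:permutation}.

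Correctness reduces directly to \lemref{canonical:signature}: the sequence of contractions we perform is exactly the sequence of edge contractions used in that proof, so after the $i$-th contraction the remaining list is $\sigma_{i+1}$. Because we append the removed vertices in the order of their removal and then reverse, every vertex not in $\VtxSet(\sigma_i)$ precedes every vertex of $\VtxSet(\sigma_i)$, and the separator emitted at step $i$ lies between them, matching \defref{vtx:permutation} exactly.

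For the complexity bound, the key observation is that a single contraction only affects a constant number of heap entries. When an interior edge $\overline{\tau(t_j)\tau(t_{j+1})}$ is contracted, I delete it from the heap together with its two neighboring edges (if they exist), splice $\tau(t_{j-1})$ to $\tau(t_{j+2})$ in the linked list, and insert the newly created edge with the appropriate weight; for a boundary contraction I similarly touch at most two edges. I also re-key any edge whose status flips between interior and boundary as a result of the contraction, again $O(1)$ per step. Since each contraction removes at least one vertex, there are at most $m$ contractions, each doing $O(1)$ heap operations at $O(\log m)$ cost, giving the claimed $O(m\log m)$ time; the linked list and heap use $O(m)$ space.

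The main obstacle is the bookkeeping around boundary edges: a vertex which is interior in $\sigma_i$ can become the second-from-the-end vertex in $\sigma_{i+1}$, changing which edges are weighted by $|\cdot|$ versus $|\cdot|/2$, and the separator logic must correctly reflect whether one or two vertices disappear in the current step. A minor subtlety is tie-breaking in the heap key: the general-position assumption makes edge lengths pairwise distinct, but halving can in principle create coincidences between an interior weight and a boundary weight, so I would either strengthen the perturbation or fix a deterministic tie-breaking rule (e.g.\ prefer the interior edge), which does not affect the $O(m\log m)$ bound.
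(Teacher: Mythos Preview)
Your approach is essentially the paper's: simulate the edge contractions of \lemref{canonical:signature} with a priority queue, recording the removed vertices and inserting a separator between contraction rounds. The paper keys a min-heap on the interior \emph{vertices} (by the smaller of the two adjacent weighted edge lengths) and explicitly footnotes that an edge heap works just as well; your edge-based variant and the constant-per-step re-keying argument are fine and give the same $O(m\log m)$ bound.

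One slip: do \emph{not} reverse the output stream. Appending the removed vertices in extraction order and then appending the two endpoints already places every vertex of $\VtxSet(\sigma_i)\setminus\VtxSet(\sigma_{i+1})$ before every vertex of $\VtxSet(\sigma_{i+1})$, which is precisely what \defref{vtx:permutation} requires---and is in fact the final order you describe (``$\sigma_{k-1}\setminus\sigma_k$ immediately before the endpoints''). Reversing would put the earliest-removed vertices \emph{after} later-removed ones, violating the ordering condition for small~$i$. Your correctness paragraph is right once the word ``reverse'' is dropped. (The paper also simply asserts uniqueness of the minimizing edge via general position; your remark about possible ties between an interior half-length and a boundary length is a fair caveat, but a deterministic tie-break suffices, as you note.)
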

\begin{proof}
Let $w_1,\ldots,w_m$ be the vertices of the curve $\tau$.
The idea is to simulate the series of edge contractions done in the proof of \lemref{canonical:signature}. We build a min-heap from the vertices $w_2,\dots,w_{m-1}$ using certain weights, which will be defined shortly.\footnote{A heap of the edges can be alternatively used.} We then iteratively extract the (one or more) vertices with the current minimum weight from the heap and update the weights of their neighbors along the current signature curves. The extracted vertices are recorded in a list $L$ in the order of their extraction and will form the canonical vertex permutation in the end. Before every iteration we append a token separator to $L$. In this way, all vertices extracted during one iteration are placed between two token separators in $L$. After the last iteration we again append a token separator and at last the two vertices $w_1$ and $w_m$.

More precisely, let $v_1,\dots,v_k$ denote the vertices contained in the heap in the beginning of one particular iteration, sorted in the order in which they appear along the curve $\sigma$. We call the curve
\[ \sigma = w_1,v_1,\dots,v_k,w_m,\] 
the \emph{current signature}. For every vertex we keep a pointer to the heap element which represents its current predecessor and successor along the current signature. We also keep these pointers to the virtual elements $w_1$ and $w_m$, which are not included in the heap.  We define the weight $W(\cdot)$ for every vertex $v_i$ in the heap as follows:
\begin{compactenum}[(i)]
\item if $i=1$, then $W(v_i)=\min\pth{|w_1-v_1|, \frac{|v_1-v_{2}|}{2}}$,
\item if $i=k$, then $W(v_i)=\min\pth{|w_{m}-v_{k}|, \frac{|v_{k}-v_{k-1}|}{2}}$, otherwise
\item $W(v_i)=\min\pth{\frac{|v_i-v_{i-1}|}{2}, \frac{|v_i-v_{i+1}|}{2}}$.
\end{compactenum}
Initially, the current signature equals $\tau$ and initializing these weights takes $O(n)$ time in total. Following the argument in \lemref{canonical:signature}, we need to contract the edge(s) with minimum length (where exceptions hold for the first and last edge). This is captured by the choice of the weights above. Assume for simplicity that the minimum is attained for exactly one edge\footnote{The two other possible cases are as follows.  It may be that multiple edges of the same length are contracted at once. In this case more than two vertices need to be extracted. Furthermore, it may be that only one vertex $v_1$ or $v_k$ is extracted at once. This corresponds to the case that an edge adjacent to $w_1$ or $w_m$ is being contracted. } with endpoints $v_i$ and $v_{i+1}$ for some $i$. In this case, $v_{i}$ and $v_{i+1}$ are the next two elements to be extracted from the heap and their weight must be equal to $\frac{|v_i-v_{i+1}|}{2}$. Using the pointers to $v_{i-1}$ (unless $i=1$) and $v_{i+2}$ (unless $i=k$), we now update the weights of these neighbors and update the pointers such that $v_{i-1}$ (resp., $w_1$) becomes predecessor of $v_{i+2}$ (resp., $w_m$). Computing the new weight of one of these neighboring vertices can be done in $O(1)$ time, updating the weights in the heap takes $O(\log n)$ time per vertex. We can charge every update to the extraction of a neighboring vertex. Since every vertex is extracted at most once, we have $O(n)$ weight updates in total.
\end{proof}
\begin{lemma}\lemlab{lb:vtx:permutation}
Given a curve $\tau:[0,1]\rightarrow\Re$ with $m$ vertices
in general position, the problem of computing a canonical vertex permutation (\defref{vtx:permutation}) has time-complexity $\Theta(m\log m)$.
\end{lemma}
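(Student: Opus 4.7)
The upper bound $O(m \log m)$ is already given by \lemref{compute:vtx:permutation}, so the task is to establish the matching lower bound $\Omega(m \log m)$. The plan is to reduce sorting to the problem of computing a canonical vertex permutation, in a computation model (e.g.\ algebraic decision tree or comparison-based) in which sorting $n$ distinct real numbers is known to require $\Omega(n \log n)$ operations.

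Given $n$ distinct positive reals $a_1,\dots,a_n \in (0,1)$, I would construct in $O(n)$ time a polygonal curve $\tau$ with $m=2n+2$ vertices, namely a zigzag curve whose vertex sequence (after a generic symbolic perturbation enforcing the general position assumption) is
\[ 0,\ M,\ M-a_1,\ 2M,\ 2M-a_2,\ \dots,\ nM,\ nM-a_n,\ -1 \]
for a fixed constant $M>1$. Each ``short'' edge between $iM$ and $iM-a_i$ has length roughly $a_i$ and is flanked by ``long'' edges of length at least $M$, so by the weight formulas in the proof of \lemref{compute:vtx:permutation} both endpoints of the $i$-th short edge have initial weight exactly $a_i/2$, and these values are pairwise distinct because the $a_i$ are distinct.

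The main obstacle will be to show that these weights are invariant under the iterative edge contractions used to produce the canonical permutation. When the short edge with the currently smallest $a_i$ is contracted, the new edge spans two vertices that were previously endpoints of \emph{other} short edges and has length at least $2M\gg 1$, so the weights of these two vertices continue to be determined by their own short edges and remain $a_j/2$ for the respective $j$. A short case analysis handles the two boundary pairs adjacent to the endpoints of $\tau$, where the weight formula from \lemref{compute:vtx:permutation} differs. Induction on the number of iterations then shows that the pairs are extracted in strictly increasing order of $a_i$, so the canonical vertex permutation directly encodes the sorted order, which can be recovered in $O(n)$ additional time by scanning between the token separators.

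Since the construction plus read-off take linear time, any algorithm solving the canonical vertex permutation problem in $o(m \log m)$ time would sort $n$ distinct reals in $o(n \log n)$ time, contradicting the classical lower bound. Combined with \lemref{compute:vtx:permutation} this yields the claimed $\Theta(m \log m)$ bound.
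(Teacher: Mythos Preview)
Your proposal is correct and follows essentially the same approach as the paper: both reduce from sorting by constructing a zigzag curve whose short ``variable'' edges encode the input values and are separated by long ``connector'' edges, so that the canonical vertex permutation reveals the sorted order. The paper's construction uses $x_i=2i\,a_{\max}$ and ends with an ascending edge rather than your descending edge to $-1$, and it states the invariance of the extraction order more tersely, but the underlying argument is the same.
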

\begin{proof}
By \lemref{compute:vtx:permutation}, we can compute this canonical vertex permutation in time $O(m \log m)$.  We show the lower bound via a reduction from the problem of sorting a list of $M=\frac{m-2}{2}$ natural numbers. Let $a_1,\dots,a_{M}$ be the elements of the list in the order in which they appear in the list. We can determine the maximal element $a_{\max}$ in $O(M)$ time. We now construct a curve $\inputTraj{}$ as follows: \[ \inputTraj{} =~ 0,~  x_1,~ x_1-a_1,~ \dots,~ x_i,~ x_i-a_i,~ \dots, x_M, x_M-a_M, x_{M+1},  \] where $x_i=2ia_{\max}$. The constructed curve contains an edge of length $a_i$ for every $a_i$ of our sorting instance. We call these edges \emph{variable edges}. The remaining edges of the $\inputTraj{}$ are called \emph{connector edges}. All connector edges are longer than $a_{\max}$. A canonical vertex permutation of $\inputTraj{}$ would provide us the with variable edges in ascending order of their length. 
\end{proof}

The following lemma testifies that we can query the canonical vertex permutation for a signature of a given size $\ell$. (Note that a canonical signature of size
exactly $\ell$ may not exist.)

\begin{lemma}\lemlab{extract:signature}
Given a canonical vertex permutation (\defref{vtx:permutation}) of a curve $\inputTraj{}$, we can in $O(\ell \log \ell)$ time extract the canonical signature of $\inputTraj{}$ of maximal size $\ell'$ with $\ell'\leq \ell$.
\end{lemma}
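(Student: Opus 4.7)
The plan is to exploit the nested hierarchy of canonical signatures encoded in the permutation. Recall from \defref{vtx:permutation} that the permutation consists of the vertices of $\inputTraj{}$ interleaved with $k$ token separators $S_1,\dots,S_k$, and that $\sigma_i$ is exactly the multiset of vertices appearing after $S_i$. Because $\VtxSet(\sigma_{i+1}) \subseteq \VtxSet(\sigma_i)$, the counts $|\VtxSet(\sigma_1)| \geq |\VtxSet(\sigma_2)| \geq \dots \geq |\VtxSet(\sigma_k)|=2$ are monotone, so the required signature of maximal size $\ell'\leq \ell$ is uniquely identified as the one with the smallest index $i$ such that $|\VtxSet(\sigma_i)| \leq \ell$.

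First, I would scan the permutation from its end, maintaining a running count of the vertices encountered. Each time a separator $S_i$ is crossed, the current count is exactly $|\VtxSet(\sigma_i)|$. The scan stops as soon as the next step (either reading one more vertex or crossing the next separator) would push the count strictly above $\ell$, or when the first separator $S_1$ is crossed. The index $i$ of the last separator crossed determines the target signature, and the collected vertices are precisely $\VtxSet(\sigma_i)$. Since at most $\ell$ vertices and $O(\ell)$ separator tokens are touched, this phase runs in $O(\ell)$ time.

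The collected vertices are in the order produced by the extraction process of \lemref{compute:vtx:permutation}, which need not be the order in which they occur along $\inputTraj{}$. To output the signature as a curve, I would re-sort the collected vertices by their original position (timestamp) on $\inputTraj{}$; assuming each vertex in the permutation carries a pointer or index to its position on $\inputTraj{}$, this sort takes $O(\ell' \log \ell') = O(\ell \log \ell)$ time using any comparison-based sort. The resulting sequence, interpreted as a polygonal curve, is a valid canonical signature by \lemref{canonical:signature}(i), since by construction the collected vertex set equals $\VtxSet(\sigma_i)$.

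The dominant cost is the sort, giving a total running time of $O(\ell \log \ell)$. The main conceptual step is simply to notice that the separators in the permutation allow us to locate the correct signature in linear time without touching the rest of the permutation; the only real work is the sort into curve order. No obstacle of note arises: boundary cases (multiple simultaneous extractions yielding multiple vertices between two separators, or $\ell \leq 2$ which forces $\sigma_k$) are handled directly by the monotone-count argument above.
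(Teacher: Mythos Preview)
Your proposal is correct and follows essentially the same approach as the paper: take the suffix of the permutation containing at most $\ell$ vertices, trim back to the nearest token separator to obtain $\VtxSet(\sigma_i)$, and then sort by position along $\inputTraj{}$. Your write-up is a bit more explicit about why the scan touches only $O(\ell)$ tokens, but the idea and the $O(\ell\log\ell)$ accounting are identical.
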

\begin{proof}
Let $L'$ denote the suffix of the canonical vertex permutation which contains the last $\ell$ vertices. If there is no token separator at the starting position of $L'$, then we remove the maximal prefix of $L'$ which contains not token separator. In this way, we obtain the vertices of the canonical signature $\sigma$ of maximal size $\ell'$ with $\ell'\leq \ell$. We now sort the vertices in the order of their appearance along $\sigma$ and return the resulting curve.
\end{proof}

The following theorem follows from \lemref{compute:vtx:permutation} and \lemref{extract:signature}. Furthermore, \lemref{lb:vtx:permutation} testifies that the preprocessing time is asymptotically tight.

\begin{theorem}\thmlab{compute:signatures}
Given a curve $\inputTraj{}:[0,1]\rightarrow \Re$ with $m$ vertices in general position, we can construct a data structure in time $O(m \log m)$ and space $O(m)$, such that given a parameter $\ell$ we can extract in time $O(\ell \log \ell)$ a canonical signature of maximal size $\ell'$ with $\ell'\leq \ell$.
\end{theorem}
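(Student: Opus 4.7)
The theorem is a direct combination of the two preceding lemmas, so the proof plan is essentially to stitch them together as preprocessing and query phases of a data structure.

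The plan is as follows. In the preprocessing phase, given the input curve $\inputTraj{}$ with $m$ vertices, I would invoke \lemref{compute:vtx:permutation} to compute a canonical vertex permutation of $\inputTraj{}$ (in the sense of \defref{vtx:permutation}), together with the token separators delimiting the canonical signatures $\sigma_1,\dots,\sigma_k$ guaranteed by \lemref{canonical:signature}. This takes $O(m\log m)$ time and $O(m)$ space, since the permutation is a list of $m$ vertices interspersed with at most $m$ token separators. The data structure stores this list, along with an index that allows $O(1)$-time access to its suffix of length $\ell$ (e.g., store the permutation in an array and record, for each position, whether it is a token separator).

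For the query phase, given $\ell$, I would invoke \lemref{extract:signature} on the stored canonical vertex permutation. That lemma provides the canonical signature of maximal size $\ell' \leq \ell$ by reading off the suffix of the permutation of length $\ell$, trimming it to the nearest token separator, sorting the $O(\ell)$ vertices by their order of appearance along $\inputTraj{}$, and returning the linear interpolation. This runs in $O(\ell \log \ell)$ time, dominated by the sorting step.

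There is no real obstacle here: correctness of the output as a canonical signature is guaranteed by \lemref{canonical:signature} together with \defref{vtx:permutation}, and both the preprocessing and query bounds are inherited verbatim from \lemref{compute:vtx:permutation} and \lemref{extract:signature}. Hence the claimed bounds of $O(m\log m)$ preprocessing time, $O(m)$ space, and $O(\ell\log \ell)$ query time all follow, completing the proof.
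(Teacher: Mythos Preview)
Your proposal is correct and matches the paper's own proof, which simply states that the theorem follows from \lemref{compute:vtx:permutation} and \lemref{extract:signature}. The implementation details you add (array storage with separator markers) are reasonable and consistent with the intended construction.
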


\begin{algorithm}[]\alglab{low:pass} 
 \KwData{curve $\inputTraj{}=\tau(t_1),\dots,\tau(t_m)$ with $0=t_1<\ldots<t_m=1$, parameter $\delta>0$}
 \KwResult{values $s_1<\ldots<s_{\lenClusters}$, such that $\sigma=\tau(s_1),\tau(s_2),\dots,\tau(s_{\lenClusters})$ is the $\delta$-signature of $\tau$}
 \caption{Computing a $\delta$-signature}
       $j=1$; $a=1$; $s_1=0$ \tcc*[r]{assign first vertex of $\sigma$}
       \lRepeat(\tcc*[f]{scan beginning of time series}){ $\tau(t_j) \notin
       \range{\tau(0)}{\delta}$ or $j \geq m$ }{$j=j+1$\;}
       $b=j$\tcc*[r]{$\tau(t_b)$ is first point outside $\range{\tau(0)}{\delta}$}
       \For(\tcc*[f]{scan remaining time series}){$i=j+1$ to $m$}{
           \eIf{$\tau(t_b) \in \cbrc{\tau(s_a), \tau(t_i)}$}{
                   $b=i$ \tcc*[r]{update furthest point from $\tau(s_a)$ seen so far}
           }{
           \If(\tcc*[f]{gone backwards too far}){$|\tau(t_i)-\tau(t_b)| > 2 \delta $}{
                   $a=a+1$;
                   $s_a=t_b $  \tcc*[r]{append furthest point to $\sigma$}  
                   $b=i$ \tcc*[r]{update furthest point (and change direction)}  
           }}   
       }
       \If{$\tau(t_b) \notin \range{\tau(1)}{\delta}$}{
       $a=a+1$; $s_a=t_b $;  
       }
       $a=a+1$; $s_a=1$;  \tcc*[r]{assign last vertex of $\sigma$}
\end{algorithm}



\begin{lemma}\lemlab{low:pass:correct}
Given a curve $\inputTraj{}:[0,1]\rightarrow\Re$ with $m$ vertices 
in general position, and given a parameter $\delta>0$, \algref{low:pass} computes a $\delta$-signature $\sigma:[0,1]\rightarrow\Re$ of $\inputTraj{}$ with $\lenClusters$ vertices 
in $O(m)$ time and space.
\end{lemma}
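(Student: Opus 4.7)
My plan is to establish a loop invariant capturing the algorithm's state, derive each of the four conditions of \defref{signature} from this invariant, and then observe that the running time and space are linear by inspection. The geometric intuition is that the algorithm maintains the last committed signature vertex $\tau(s_a)$ together with the vertex $\tau(t_b)$ that is furthest from $\tau(s_a)$ in the currently tracked direction among those seen since $s_a$; it commits $\tau(t_b)$ as the next signature vertex exactly when the curve first backtracks by more than $2\delta$ from $\tau(t_b)$.

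The invariant I would maintain after each iteration of the main \textbf{for} loop is: (a) $\tau(s_1),\dots,\tau(s_a)$ are the vertices of a valid partial $\delta$-signature of the prefix scanned so far, with $\tau(s_2),\dots,\tau(s_a)$ alternating strictly between local minima and maxima; (b) $\tau(t_b)$ is the furthest point from $\tau(s_a)$ along $\tau[s_a,t_i]$ in the current direction; (c) $|\tau(t_b)-\tau(s_a)|>2\delta$ when $a\geq 2$, and $|\tau(t_b)-\tau(0)|>\delta$ when $a=1$; (d) every point of $\tau[s_a,t_i]$ lies between $\tau(s_a)$ and $\tau(t_b)$ in the tracked direction, up to a backtrack of at most $2\delta$ from $\tau(t_b)$. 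Invariant~(c) is preserved because immediately after committing $s_{a+1}=\tau(t_{b_{\text{old}}})$ the reset $b\gets i$ is triggered only by $|\tau(t_i)-\tau(t_{b_{\text{old}}})|>2\delta$, giving $|\tau(t_b)-\tau(s_{a+1})|>2\delta$; invariant~(d) is preserved because any vertex beyond $\tau(t_b)$ updates $b$ and any vertex more than $2\delta$ backwards triggers a commitment.

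The four signature conditions then follow. \emph{Minimum edge length}: interior edges satisfy $|\tau(s_{a+1})-\tau(s_a)|>2\delta$ by~(c) at the moment of commitment; the first edge satisfies $|\tau(s_2)-\tau(s_1)|>\delta$ by the initial \textbf{repeat} scan; and the last edge exceeds $\delta$ either directly (when the closing \textbf{if} appends an extra vertex outside $[\tau(1)-\delta,\tau(1)+\delta]$) or via the triangle estimate $|\tau(s_a)-\tau(1)|\geq|\tau(s_a)-\tau(t_b)|-|\tau(t_b)-\tau(1)|>2\delta-\delta=\delta$ when $\tau(t_b)$ already lies close to $\tau(1)$. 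The \emph{direction-preserving} and interior \emph{range} conditions both follow directly from~(d). \emph{Non-degeneracy} follows because the reset $b\gets i$ after each commitment places $\tau(t_b)$ strictly on the opposite side of $\tau(s_{a+1})$ from $\tau(s_a)$, forcing strict alternation between local maxima and minima. For the boundary signature edges, the extra $\delta$-ball term in the range condition of \defref{signature} is accounted for by the prefix of $\tau$ scanned during the initial loop (values within $\delta$ of $\tau(0)$) and the symmetric suffix within $\delta$ of $\tau(1)$.

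Finally, each vertex of $\tau$ is visited a constant number of times with $O(1)$ work per visit, and the algorithm only stores the output parameters, so the time and space are both $O(m)$. I expect the main obstacle of the full write-up to be reconciling the two cases of the closing \textbf{if} with the disjunctions in the range condition of \defref{signature} for $i=1$ and $i=\lenClusters-1$, together with a clean use of the general-position assumption to rule out degenerate ties in the inner-loop case distinction (in particular, excluding $\tau(t_i)=\tau(t_b)$ and $\tau(t_i)=\tau(s_a)$ so that the current direction is unambiguously defined).
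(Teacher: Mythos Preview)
Your proposal is correct and follows essentially the same approach as the paper: the paper also proves \lemref{low:pass:correct} via a loop invariant maintained at the end of each iteration of the \textbf{for}-loop, with your items (a)--(d) corresponding closely to the paper's invariants (I1)--(I5) (your single invariant~(d) bundles together what the paper splits into separate ``range'', ``bounded backtrack'', and ``direction-preserving'' invariants). The paper likewise treats the initial \textbf{repeat}-scan and the closing \textbf{if} as separate phases and, as you anticipated, the delicate part is reconciling those boundary phases with the first- and last-edge clauses of \defref{signature}.
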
 

\begin{proof}
We prove that \algref{low:pass} produces the values $s_1<\ldots <
s_\lenClusters$ that define a proper $\delta$-signature
$\sigma=\tau(s_1),\tau(s_2),\dots,\tau(s_{\lenClusters})$ of $\tau$ according to
\defref{signature}.  The algorithm operates in three phases: (1) lines 2-4, (2)
lines 5-11, and (3) lines 12-14. In the first phase the algorithm finds the
first vertex $\tau(t_j)$ of $\tau$ which lies outside the interval
$\range{\tau(0)}{\delta}$ and assigns its index to the variable $b$. 

In the trivial case, $\tau$ is entirely contained in the interval
$\range{\tau(0)}{\delta}$. In this case, the second phase is not executed and
the condition in line 12 evaluates to false. The algorithm returns the correct
signature, which has two vertices, $s_1=0$ and $s_2=1$.
Otherwise, $\tau$ must leave the interval $\range{\tau(0)}{\delta}$.
We claim that the following invariants hold at the end of each iteration of the for-loop in phase 2 (lines 5-11):
\begin{compactitem}
\item[(I1)]  $\tau(s_1),\dots,\tau(s_a)$ is a correct prefix of the $\delta$-signature,
\item[(I2)]  for any $x\in [s_a,t_i]$ it holds: 
\begin{compactenum}[(a)] 
\item if $a>1$ then $\tau(x) \in \cbrc{\tau(s_a),\tau(t_b)}$
\item if $a=1$ then $\tau(x) \in [\tau(0)-\delta,\tau(t_b)]$ when
$\tau(t_b)>\tau(0)$\\ (resp., $\tau(x) \in [\tau(t_b),\tau(0)+\delta]$ when
$\tau(t_b)<\tau(0)$).
\end{compactenum}
\item[(I3)]  \begin{compactenum}[(a)]
\item if $a>1$, then $|\tau(s_a)-\tau(t_b)|>2\delta$,
\item if $a=1$ then $|\tau(0)-\tau(t_b)|>\delta$,
\end{compactenum}
\item[(I4)]  if $t_i>t_b$, then for any $x \in [t_b, t_i]:$ $|\tau(t_b)-\tau(x)|\leq 2 \delta$,
\item[(I5)]  the direction-preserving property holds for the subcurve $\tau[s_a,t_b]$,
\end{compactitem}
We prove the invariants by induction on $i$. The base case happens after execution of
line 4, before the first iteration of the for-loop. For ease of notation, we
define $i=b$ for this case. Invariants (I1), (I3) and (I4) hold by construction.
The other two invariants follow immediately from the observation that
$\tau(t_b)$ is the first point outside the interval $\range{\tau(0)}{\delta}$.
%
%

Now we prove the induction step.  One may have the following intuition.
During the execution of the for-loop in lines 5-11, we implicitely maintain a
general direction in which the curve $\tau$ is moving. 
This direction is \emph{upwards} if $\tau(s_{a}) < \tau(s_b)$ and
\emph{downwards} otherwise.
Furthermore, we maintain that $\tau(t_b)$ is the furthest point from $\tau(s_a)$
on the current signature edge (starting at $\tau(s_a)$) in the current general
direction.
Note that a new vertex is appended to the signature prefix only when $\tau$ has
already moved in the opposite direction by a distance 
greater than $2\delta$.
Only then, we say that the current general direction of the curve has changed.

Consider an arbitrary iteration $i$ of the for-loop. There are three cases,
\begin{compactenum}[(i)]
\item line 7 is executed and $i$ becomes the new $b$\\
(this happens if $\tau$ is moving in the current general direction beyond
$\tau(s_b)$),
\item lines 10 and 11 are executed and a new signature vertex is appended to the
signature prefix\\
(this happens if $\tau$ has changed its general direction), 
\item no assignments are being made\\
(this happens if $\tau$ locally changes direction, but the current general
direction does not change).
\end{compactenum}

For each invariant we consider each of the three cases above.
\begin{compactitem}
\item (I1): If the signature prefix was not changed in the previous iteration
(cases (i) and (iii)), then (I1) simply holds by induction. Otherwise, we argue
that the new signature prefix is correct. By induction,
$\tau(s_1),\dots,\tau(s_{a-1})$ is a correct signature prefix.  
The conditions of \defref{signature} for $\tau(s_1),\dots,\tau(s_{a})$ follow by the induction
hypotheses (I2),(I3), and (I5) in the iteration step $i'<i$, in which the last 
value of $b$ was assigned. In particular, (I2) implies the range condition
and the non-degeneracy, (I3) implies the minimum-edge-length condition, and (I5)
implies the direction-preserving condition.

\item (I2): Assume $a=1$ and $\tau(t_b)>\tau(0)$. Since $a=1$, we cannot be in
case (ii). Furthermore, once we enter the for-loop, the current general
direction is fixed until $a$ is incremented for the first time. 
Therefore, by (I2) we have for $x\in\left[ s_1,
t_{i-1}\right]$ that $\tau(x)\in\left[ \tau(0)-\delta, \tau(t_{b'})\right]$,
where $b'$ holds the value of $b$ before we entered the for-loop in the current
iteration.  Now, in case (i) the claim follows immediately.
In case (iii) it follows from the (false) condition in line 9, that
$\tau(t_i) > \tau(t_b)-2\delta \geq \tau(0)-\delta$, and by the (false)
condition in line 6, that $\tau(t_i)<\tau(t_b)$. 
The case $a=1$ and $\tau(t_b)<\tau(0)$ is analogous.

It remains to prove the invariant for $a>1$.
Assume case (ii) and assume that the general direction changed from
upwards to downwards (the opposite case is analogous). 
Let $a'$ and $b'$ be the values of $a$ and $b$ before the
new assignment in lines 10 and 11. By (I2),
we have $\tau(t_{b'}) \geq \tau(x)$ for any $x \in [t_{b'},t_{i-1}]$. By (I4), we
have $\tau(t_{b'})-2\delta \leq \tau(x)$ for any $x \in [t_{b'},t_{i-1}]$.
By the (true) condition in line 9, we have $\tau(t_i)<\tau(t_{b'})-2\delta$.
Therefore, for any $x \in [t_{b'}, t_i]$, we have
$\tau(x) \in [\tau(t_i),\tau(t_{b'})]$, which implies (I2) after the assignment
in lines 10 and 11.

Now, in case (i) and case (ii), we have by (I2)  for $x\in[s_a,t_{i-1}]$ 
that $\tau(x)\in \cbrc{\tau(s_a),\tau(t_b')}$. The correctness in case (i)
follows immediately. In case (iii), assume $\tau(t_b) > \tau(s_a)$ (the opposite
case is analogous). 
It follows from the (false) condition in line 9 and by (I3), that
$\tau(t_i) > \tau(t_b)-2\delta \geq \tau(s_a)$, and by the (false)
condition in line 6, that $\tau(t_i)<\tau(t_b)$. 

\item (I3) holds in case $a=1$, since the for-loop was 
started after the curve $\tau$ left the interval $\range{\tau(0)}{\delta}$
for the first time and by (I2) $\tau(t_b)$ is furthest
point from $\tau(0)$. In case $a>1$, (I3) also holds, since we append the
parameter $t_b$ to the signature prefix and re-initialize $b$ with $i$ only
after the curve has moved by a distance of at least $2\delta$ (line 9) from
$\tau(t_b)$. The distance is further maintained by (I2).

\item (I4) is clearly satisfied in case (i) and (ii), since $b=i$ is assigned.
In case (iii) it holds by the (false) condition in line 9 that
$|\tau(t_b)-\tau(t_i)|\leq 2\delta$ and for the remaining curve $\tau\left[
t_b, t_{i-1}\right]$ it follows by induction.

\item (I5) holds since we assign a new signature vertex with parameter $s_a=t_b$
as soon as the curve moves by more than $2\delta$ in the opposite direction (case (ii)).  
\end{compactitem}

In phase 3, there are two cases. If the range condition is satisfied for the
last signature edge from $\tau(s_a)$ to $\tau(1)$, the algorithm only appends
the last vertex $\tau(1)$ of the curve $\tau$ to the signature $\sigma$.
Otherwise, the algorithm appends $\tau(t_b)$ and $\tau(1)$ to the signature.  In
both cases, the conditions in \defref{signature} are satisfied also for the last
part. 

If we use a linked-list for the signature, the running time and space
requirements of the algorithm are linear in $m$, since the execution of one
iteration of the for-loop takes constant time and there are at most $m$ such
iterations.
\end{proof}

From \lemref{low:pass:correct} we obtain the following theorem.

\begin{theorem}\thmlab{computing:delta:signature}
Given a curve $\tau:[0,1]\rightarrow\Re$ with $m$ vertices 
in general position, and given a parameter $\delta>0$, we can compute a $\delta$-signature of $\inputTraj{}$ in $O(m)$ time and space.
\end{theorem}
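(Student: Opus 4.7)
The plan is to simply invoke \lemref{low:pass:correct}, which already bundles together the correctness of \algref{low:pass} and its linear running time and space. In other words, the theorem is an immediate corollary: given $\tau$ and $\delta$, we run \algref{low:pass} on the input, obtain parameter values $s_1<\dots<s_\ell$, and output the curve $\sigma=\tau(s_1),\dots,\tau(s_\ell)$, which by the lemma is a valid $\delta$-signature of $\tau$. Since the lemma guarantees $O(m)$ time and $O(m)$ space, the same bounds apply here.

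So the only substantive step is to point out that the $\delta$-signature delivered by \algref{low:pass} meets the definition. This was the main obstacle and was dealt with in the proof of \lemref{low:pass:correct} via the loop invariants (I1)--(I5), which together enforce the non-degeneracy, direction-preserving, minimum-edge-length, and range conditions of \defref{signature}. There is nothing further to check in the theorem itself: the boundary handling (phases~1 and~3 of the algorithm, corresponding to the weaker edge-length and range conditions at $i=1$ and $i=\ell-1$) is also absorbed by the lemma's analysis.

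For the complexity bound, I would simply observe that \algref{low:pass} performs a single left-to-right scan of the vertex sequence $\tau(t_1),\dots,\tau(t_m)$, that each iteration of the for-loop does $O(1)$ work (a constant number of comparisons, at most one heap-free append to the output list, and a pointer update), and that no vertex is revisited. Storing $\sigma$ as a linked list yields $O(m)$ space, matching the input size. Hence the overall runtime and space are $\Theta(m)$, which completes the proof.
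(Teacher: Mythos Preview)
Your proposal is correct and matches the paper's approach exactly: the paper simply states that the theorem follows from \lemref{low:pass:correct}, with no additional argument. Your added remarks about the invariants and the single-scan complexity are accurate but already subsumed by that lemma.
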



\begin{lemma}\lemlab{apx:min:error:simp}
Given a curve $\tau:[0,1]\rightarrow\Re$ with $m$ vertices 
in general position, and given a parameter $\ell \in \Na$, we can compute in $O(m \log m)$ time a curve $\pi:[0,1]\rightarrow\Re$ of at most $\ell$ vertices, such that $\distFr{\inputTraj{}}{\pi} \leq 2 \distFr{\inputTraj{}}{\pi^*}$, for $\pi^*$ being a minimum-error $\ell$-simplification of $\inputTraj{}$ (\defref{min:error:simp}). 
\end{lemma}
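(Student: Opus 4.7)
My plan is to extract the canonical signature of $\tau$ of maximal size at most $\ell$ and return it as $\pi$. Concretely, I first invoke \lemref{compute:vtx:permutation} to build the canonical vertex permutation of $\tau$ in $O(m \log m)$ time, and then apply \lemref{extract:signature} with parameter $\ell$ to produce, in additional $O(\ell \log \ell)\subseteq O(m \log m)$ time, the canonical signature $\pi$ whose size $\ell' \le \ell$ is maximal subject to this constraint. By construction $|\VtxSet(\pi)|\le \ell$, and the total running time is $O(m \log m)$ as required.

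Let $\delta_\pi$ be the canonical parameter associated with $\pi$ in the hierarchy of \lemref{canonical:signature}. By \lemref{fd:signature}, $\distFr{\pi}{\tau}\le \delta_\pi$, so the remaining task is to show $\delta_\pi \le 2\distFr{\pi^*}{\tau}$. To that end, I consider the canonical signature $\sigma'$ immediately preceding $\pi$ in the hierarchy. By maximality, $|\VtxSet(\sigma')|\in\brc{\ell+1,\ell+2}$, and the single contraction producing $\pi$ shortens either an interior edge of length $2\delta_\pi$ (removing two vertices) or an endpoint edge of length $\delta_\pi$ (removing one vertex).

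I will argue by contradiction: suppose $\distFr{\pi^*}{\tau}<\delta_\pi/2$ and apply \lemref{nec:suff} to $\sigma'$ viewed as a $\delta$-signature for any $\delta\in[\distFr{\pi^*}{\tau},\delta_\pi)$. The Fréchet matching forces $\pi^*(0)$ and $\pi^*(1)$ into the endpoint ranges $r_1,r_{|\VtxSet(\sigma')|}$ of $\sigma'$, leaving only $|\VtxSet(\pi^*)|-2\le \ell-2$ interior vertices of $\pi^*$ available to cover the $|\VtxSet(\sigma')|-2$ pairwise disjoint interior ranges of $\sigma'$ (disjointness following from the minimum-edge-length condition of \defref{signature}). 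In the interior-contraction case there are $\ell$ such interior ranges, yielding an immediate contradiction for every $\delta<\delta_\pi$ and thereby the stronger factor-$1$ bound. In the endpoint-contraction case only $\ell-1$ interior ranges remain, and the shortfall is usually absorbed by letting $\pi^*(0)$ simultaneously lie in $r_1$ and the adjacent interior range $r_2$; however, since $|v_1-v_2|=\delta_\pi$, the triangle inequality together with $|\pi^*(0)-v_1|\le\delta$ forces $\delta\ge \delta_\pi/2$ for $r_1\cap r_2\ne\emptyset$. For $\delta<\delta_\pi/2$ no such overlap exists, producing the contradiction and hence $\distFr{\pi^*}{\tau}\ge \delta_\pi/2$.

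The main obstacle is this endpoint subcase, where the factor of $2$ is tight and arises precisely from the threshold $\delta=\delta_\pi/2$ at which $r_1$ begins to overlap with its interior neighbor $r_2$; the interior-contraction case is strictly easier. The symmetric situation at the last vertex of $\pi^*$ is handled identically, and the degenerate boundary cases ($|\VtxSet(\tau)|\le \ell$, or $|\VtxSet(\pi)|\le 3$) are immediate.
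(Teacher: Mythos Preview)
Your approach is the same as the paper's: take the canonical signature $\sigma_i$ of maximal size $\ell'\le\ell$ (via \thmref{compute:signatures}), and bound $\distFr{\pi^*}{\tau}$ from below using the preceding signature $\sigma_{i-1}=\sigma'$ together with \lemref{nec:suff}.

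There is, however, a genuine gap in your interior--contraction case. Your counting argument asserts that the $\le \ell-2$ interior vertices of $\pi^*$ must cover all interior ranges of $\sigma'$, because $\pi^*(0),\pi^*(1)$ are already ``spent'' on the endpoint ranges $r_1,r_{|\VtxSet(\sigma')|}$. This is not justified: \lemref{nec:suff} does not prevent $\pi^*(0)$ from lying simultaneously in $r_1$ and $r_2$ (and symmetrically at the other end). In the interior--contraction case the endpoint edges of $\sigma'$ only satisfy $|v_1-v_2|>\delta_\pi$, so for $\delta$ close to $\delta_\pi$ one may well have $|v_1-v_2|<2\delta$ and hence $r_1\cap r_2\neq\emptyset$. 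With both endpoint overlaps, $\ell$ vertices of $\pi^*$ can suffice even when $|\VtxSet(\sigma')|=\ell+2$, so you get no contradiction for general $\delta<\delta_\pi$, and the claimed ``factor-$1$ bound'' does not follow.

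The fix is exactly the observation you make in the endpoint case, applied uniformly: pick $\delta\in\bigl[\distFr{\pi^*}{\tau},\delta_\pi/2\bigr)$. Then every edge of $\sigma'$ has length $>\delta_\pi>2\delta$ (the contracted interior edge has length $2\delta_\pi>2\delta$; the endpoint edges have length $>\delta_\pi>2\delta$ by minimality of the contracted edge), so \emph{all} $|\VtxSet(\sigma')|\ge\ell+1$ ranges are pairwise disjoint and $\pi^*$ needs $\ge\ell+1$ vertices, a contradiction. This is precisely what the paper does, more compactly, by working directly with $\delta/2$-ranges (half-width $\delta/2$ for $\delta$ just below $\delta_\pi$) and invoking the argument of \lemref{nec:suff}, so that disjointness of consecutive ranges holds automatically and no case split between interior and endpoint contractions is needed.
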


\begin{proof}
Let $\sigma_1,\dots,\sigma_k$ be the signatures of $\tau$ with corresponding parameters $\delta_1,\dots,\delta_k$ as defined in \lemref{canonical:signature}. \lemref{fd:signature} implies that $\distFr{\sigma_i}{\tau} \leq \delta_i$. Consider the signature $\sigma_i$ with the maximal number of $\ell'\leq \ell$ vertices. We claim that 
\[ \frac{\delta_i}{2}\leq \distFr{\pi^{*}}{\tau} \leq \delta_i .\]
The second inequality follows from the definition of $\pi^{*}$ and the fact that $\sigma_i$ consists of at most $\ell$ vertices. To see the first inequality, consider the signature $\sigma_{i-1}=w_1,\dots,w_{h}$ with $h>\ell$. By \lemref{canonical:signature}, for any $\eps>0$, the signature $\sigma_{i-1}$ is a $\delta$-signature of $\tau$ for $\delta=\delta_i-\eps$. By \defref{signature}, it holds for
\[r_j = \left[w_j-\frac{\delta}{2}, w_j+\frac{\delta}{2}\right],\]
that for any $1\leq j \leq h-1: r_j \cap r_{j+1} = \emptyset.$ By the arguments in the proof of \lemref{nec:suff}, it follows that any curve $\pi$ with $\distFr{\pi}{\tau}\leq \frac{\delta}{2}$ needs to consist of at least $h>\ell$ vertices. Therefore, the first inequality follows. We can compute the signature $\sigma_i$ in $O(m \log m)$ time using \thmref{compute:signatures}.
\end{proof}

\section{Hardness of clustering under the \Frechet distance}
\seclab{nphard}

A metric embedding is a function between two metric spaces which preserves the
distances between the elements of the metric space. The embedding is called
isometric if distances are preserved exactly. 
It is known that one can isometrically embed any bounded subset of a
$d$-dimensional vector space equipped with the $\ell_{\infty}$-norm into
$\Delta_{3d}$~\cite{IndMat04}.
This immediately implies \NP-hardness for $\ell \geq 6 $ knowing that 
the clustering problems we consider are \NP-hard under the
$\ell_{\infty}$ distance for $d\geq 2$. In this section we prove that the \NP-hardness
holds from $\ell=2$. This is achieved by preserving $\ell=d$ in the
metric embedding. 

We begin by establishing the basic facts about clustering under the 
$\ell_{\infty}$-norm.
The $k$-center problem under $\ell_{\infty}$ is \NP-hard for $d \geq 2$ as shown
by Feder and Greene \cite{feder1988optimal}.\footnote{Note that this result can
also be obtained from an earlier result by Megiddo
and~Supowit~\cite{megiddo1984geo}. They show that approximating the $k$-center
problem under $\ell_1$ within a factor smaller than $2$ is \NP-hard.}
Even approximating the optimal cost within a factor smaller than $2$ was
shown to be \NP-hard by the same authors.
The $k$-median problem under $\ell_1$ was proven to be \NP-hard  for $d \geq 2$ by Megiddo
and~Supowit~\cite{megiddo1984geo}. The following well-known observation implies
that the $k$-median problem is also \NP-hard under $\ell_{\infty}$ for $d=2$ (and
therefore also for $d \geq 2$). 

\begin{observation}
For any two points $w$ and $v$ in $\mathbb{R}^2$ it holds that 
$\| w-v \|_{\infty} = {\| T(w)-T(v) \|_{1}},$
where $T$ is a rotation by $\frac{\pi}{4}$ followed by a uniform scaling with
$\frac{1}{\sqrt{2}}$.
\end{observation}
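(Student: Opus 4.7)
The plan is to exploit the linearity of $T$. Writing $u = w - v = (x,y)$, we have $T(w) - T(v) = T(u)$, so it suffices to verify $\|u\|_{\infty} = \|T(u)\|_1$ for an arbitrary $u \in \mathbb{R}^2$.

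First I would compute $T(u)$ explicitly. The rotation by $\pi/4$ sends $(x,y)$ to $\tfrac{1}{\sqrt{2}}(x-y,\; x+y)$, and the subsequent scaling by $1/\sqrt{2}$ gives $T(u) = \tfrac{1}{2}(x-y,\; x+y)$. Consequently
\[
\|T(u)\|_1 \;=\; \tfrac{1}{2}\bigl(|x-y| + |x+y|\bigr).
\]
It then remains to establish the identity $\max(|x|,|y|) = \tfrac{1}{2}(|x-y| + |x+y|)$.

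This is a straightforward case analysis. By the symmetry $(x,y)\mapsto(y,x)$ of both sides, assume $|x|\ge |y|$; by the symmetry $(x,y)\mapsto(-x,-y)$, assume further that $x\ge 0$. Then $x+y \ge 0$ and $x-y \ge 0$, so $|x+y|+|x-y| = (x+y)+(x-y) = 2x = 2\max(|x|,|y|)$. This completes the verification and hence the observation.

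There is essentially no obstacle here; the only mild care needed is the elementary case analysis, which collapses to a single case using the two symmetries noted above.
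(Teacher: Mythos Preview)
Your proof is correct. The paper itself does not prove this observation; it simply states it as ``well-known'' and moves on, so there is no alternative argument to compare against. Your reduction via linearity of $T$ and the identity $\max(|x|,|y|)=\tfrac12(|x-y|+|x+y|)$, collapsed to one case by the two symmetries, is exactly the standard verification.
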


We now describe the metric embedding used in the \textbf{NP}-hardness reduction.

\begin{lemma}\lemlab{embedding}
Any metric space $(S,\ell_{\infty})$, where $S \subset \Re^d$ is a bounded set, can be embedded
isometrically into $\Delta_{d}$. Furthermore, if $S$ is discrete, the embedding
and its inverse can be computed in time linear in $|S|$ and~$d$.  \end{lemma}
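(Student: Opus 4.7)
The plan is to construct $\phi: S \to \Delta_d$ explicitly by adding a large alternating offset to the coordinates of $p$, so that every curve in the image has the same well-separated sequence of local minima and maxima. After a translation we may assume $S \subseteq [0,D]^d$, where $D = \max_{p,q \in S}\|p-q\|_\infty$ is the $\ell_\infty$-diameter of $S$. Fix any constant $M > 3D$ and, for $p = (p_1,\dots,p_d) \in S$, define $\phi(p) \in \Delta_d$ as the curve with $d$ vertices
\begin{equation*}
v_i(p) = \begin{cases} p_i & \text{if $i$ is odd,}\\ p_i + M & \text{if $i$ is even.}\end{cases}
\end{equation*}
I would then verify directly from \defref{signature} that $\phi(q)$ is its own $\delta$-signature for every $\delta \in [0, D]$: the alternation of odd-index values in $[0,D]$ with even-index values in $[M, M+D]$, together with $M > 3D$, guarantees that every interior vertex lies strictly outside the range spanned by its two neighbors (non-degeneracy and range), every edge is a monotone linear segment (direction-preserving), and every edge has length at least $M - D > 2D \ge 2\delta$ (minimum edge length, which under the looser endpoint threshold $\delta$ is handled uniformly by the same inequality).

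Once this is established, the isometry $\distFr{\phi(p)}{\phi(q)} = \|p-q\|_\infty$ splits into two complementary arguments. For the upper bound I would use the canonical matching that sends the $i$-th vertex of $\phi(p)$ to the $i$-th vertex of $\phi(q)$ and each edge affinely to the corresponding edge; applying \obsref{segments} to each edge pair and \obsref{fd:concat} to concatenate the bounds yields
\begin{equation*}
\distFr{\phi(p)}{\phi(q)} \le \max_{1\le i \le d} |v_i(p) - v_i(q)| = \max_{1\le i \le d} |p_i - q_i| = \|p-q\|_\infty.
\end{equation*}
For the lower bound, let $\delta = \distFr{\phi(p)}{\phi(q)}$; the upper bound gives $\delta \le D$, so $\phi(q)$ is its own $\delta$-signature. \lemref{nec:suff} then forces $\phi(p)$ to have a vertex in each of the ranges $[v_i(q)-\delta, v_i(q)+\delta]$ in the order of $i$. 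Since consecutive ranges are pairwise disjoint (a property of $\delta$-signatures, noted in the proof of \lemref{nec:suff}) and $\phi(p)$ has exactly $d$ vertices in order, the $i$-th vertex of $\phi(p)$ must lie in the $i$-th range, giving $|p_i - q_i| = |v_i(p) - v_i(q)| \le \delta$ for every $i$, whence $\|p-q\|_\infty \le \delta$.

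For the algorithmic claim, assume $S$ is presented as a list of $|S|$ vectors in $\Re^d$. A single pass over the input in $O(|S|\cdot d)$ time suffices to compute coordinate-wise extrema of $S$, and hence an admissible value of $M$. Each $\phi(p)$ is then written down in $O(d)$ time by listing the $v_i(p)$, and the inverse is computed in $O(d)$ time by reading off the $d$ vertex values of a curve and subtracting $M$ from the even-indexed ones, giving totals linear in $|S|$ and $d$. The main subtlety I anticipate is verifying the signature conditions at the two boundary vertices $v_1$ and $v_d$, where \defref{signature} uses looser thresholds on both edge length and the range condition; however, the generous choice $M > 3D$ absorbs these differences uniformly, and the remainder of the argument is routine.
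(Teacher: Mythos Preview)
Your proof is correct and follows essentially the same approach as the paper: both add a large alternating offset so that every image curve is its own $\delta$-signature, obtain the upper bound from the vertex-to-vertex matching via \obsref{segments}, and derive the lower bound from \lemref{nec:suff}. Your pigeonhole argument for the lower bound (exactly $d$ vertices to fill $d$ ranges with consecutive ones disjoint, forcing $v_i(p)\in r_i$) is slightly more direct than the paper's contradiction about structurally different matchings in the free space diagram, but the underlying idea is identical.
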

\begin{proof}
In the following, we view a list of reals $w=\cbrc{v_1,\dots,v_d}$, $v_i \in \Re$, 
from two different perspectives. We either (i) interpret $w$ as an element of $\Re^d$, 
or (ii) interpret $w$ as a curve of $\Delta_d$.
The interpretation we take should be clear from the context. 
 
Let $\delta= \max_{w \in S}  \|w\|_{\infty}$, that is, the maximal norm of an
element of $S$.  Since $S$ is bounded, $\delta$ is well-defined. Note that
$\delta$ also bounds the maximal coordinate value of an element of $S$ and
likewise $-\delta$ bounds the minimal coordinate of an element of $S$.  

We define the
translation vector \[ T= \pth{6\delta,-6\delta,6\delta,-6\delta, \dots }. \] 
Thus, $T$ translates every even coordinate by $6\delta$ and every odd
coordinate by $-6\delta$. 

Let $w,x \in S$ be two elements of the metric space. 
We argue that \[\distFr{T(w)}{ T(x)} = \|w-x\|_{\infty}. \]
Note that by the triangle inequality 
\[\| w-x\|_{\infty} \leq \|w\|_{\infty} + \|x\|_{\infty} \leq 2\delta.\]

By \obsref{segments}, the \Frechet distance between $T(w)$
and $T(x)$ is at most $\| w-x\|_{\infty}$, since we can associate the
$i$th coordinate of $w$ with the $i$th coordinate of $x$ to construct an
eligible matching~$f$.  We claim that the matching $f$ is in fact optimal.
Assume for the sake of contradiction that there exists a matching $g$ which is
``better'' than $f$, that is, $g$ matches each point of $T(w)$ to a
point on $T(x)$ within a distance strictly smaller than $\| w-x\|_{\infty}$.
It must be that $f$ and $g$ are structurally different from each other, in the
sense that their corresponding paths in the free space diagram do not visit the
same cells. This follows from our construction of $T$ which ensures that the
path corresponding to $f$ is optimal among all paths which visit the same cells.

So consider an arbitrary prefix curve $\widehat{w}$ where $f$ and $g$
structurally differ, that is, the image of $\widehat{w}$ under $g$ contains
either fewer or more vertices than the image under $f$.
Assume fewer (otherwise let $\widehat{w}$ be the corresponding suffix curve of
$T(w)$).  By our construction of $f$, the image of $\widehat{w}$ under
$f$ has the same number of vertices as $\widehat{w}$.  Let $\widehat{x}$ denote
the image of $\widehat{w}$ under $g$.   
By our construction of $T$, it holds that the difference between 
any two consecutive coordinate values of $T(w)$ is at least $8\delta$.  
Therefore, the $(2\delta)$-signature of $\widehat{w}$ is equal to $\widehat{w}$.
It follows from \lemref{nec:suff} that $\widehat{x}$ needs to have at least as
many vertices as $\widehat{w}$. However, by our choice of $\widehat{w}$, the subcurve
$\widehat{x}$ has fewer vertices than $\widehat{w}$. A contradiction.
Thus $f$ is optimal and $\distFr{T(w)}{ T(x)} = \| w-x\|_{\infty}$.
\end{proof}

 The \textbf{NP}-hardness reduction takes an instance of the $k$-center (resp.,
$k$-median) problem under $\ell_{\infty}$ in $\Re^d$ and embeds it into $\Delta_d$
using \lemref{embedding}.  If we could solve the $(k,d)$-center (resp.,
$(k,d)$-median) problem described in \secref{problem}, then by
\lemref{embedding}, we could apply the inverse embedding function to the
solution to obtain a solution for the original problem instance. The same holds
for approximate solutions. Note that the embedding given in \lemref{embedding}
works for any point in the convex hull of $S$, therefore also for the centers
(resp., medians) that form the solution. We obtain the following theorems.

\begin{theorem}\thmlab{center:nphard}
The $(\nrClusters,\lenClusters)$-center problem (where $\nrClusters$ is part of the input) is \NP-hard for
$\lenClusters \geq 2$.  Furthermore, the problem is \NP-hard to approximate within a
factor of $2$.
\end{theorem}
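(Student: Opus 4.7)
The plan is a polynomial-time reduction from the $\ell_{\infty}$ $k$-center problem in $\Re^{2}$, which by Feder and Greene is \NP-hard and \NP-hard to approximate within any factor smaller than $2$. Given an instance $(S,k)$ with $S \subset \Re^{2}$, I apply the isometric embedding of \lemref{embedding} with $d = \lenClusters = 2$ to map $S$ to the curve set $T(S) \subset \Delta_{2}$, and return the $(\nrClusters,\lenClusters)$-center instance consisting of $T(S)$ together with the parameters $\nrClusters = k$ and $\lenClusters = 2$. The lemma guarantees that the embedding and its inverse are computable in time linear in $|S|$, so the reduction is polynomial.

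Next I would verify that the two optima coincide. For the ``$\leq$'' direction, given any $\ell_{\infty}$ solution $\{p_{1},\dots,p_{k}\} \subset \Re^{2}$ of cost $\rho^{*}$, the embedded curves $\{T(p_{1}),\dots,T(p_{k})\} \subset \Delta_{2}$ form a feasible $(\nrClusters,\lenClusters)$-center solution of cost at most $\rho^{*}$; the proof of \lemref{embedding} only needs that the argument of $T$ has $\ell_{\infty}$-norm bounded by $\delta = \max_{w \in S}\|w\|_{\infty}$, which holds for any center in the convex hull of $S$. The ``$\geq$'' direction is the more delicate step: I take any solution $C = \{c_{1},\dots,c_{k}\} \subset \Delta_{2}$ of cost $\rho$ and construct an $\ell_{\infty}$ solution in $\Re^{2}$ of cost at most $\rho$. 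Without loss of generality $\rho < 4\delta$, since otherwise I can replace $C$ by the embedding of any single point of $S$, yielding cost at most $2\delta$. Because every translated curve $T(x)$ has its two consecutive vertex values separated by more than $8\delta$, each $T(x)$ is its own $\rho$-signature, so \lemref{nec:suff} forces every center $c_{j}$ that serves some input $T(x)$ to have two vertices whose $i$-th value lies within $\rho$ of $T(x)_{i}$. I then define $p_{j} \in \Re^{2}$ by inverting the coordinate-wise translation defining $T$ on the two vertex values of $c_{j}$; \obsref{segments} now yields $\|p_{j}-x\|_{\infty} \leq \rho$ for every $x$ assigned to $c_{j}$. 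Centers $c_{j}$ with fewer than two vertices cannot serve any input at cost $\leq \rho$ by the same signature argument, and may be replaced by arbitrary points.

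Combining the two directions, the $(\nrClusters,2)$-center optimum on $T(S)$ equals the $\ell_{\infty}$ $k$-center optimum on $S$, and any $\alpha$-approximation for the former transfers, via the inversion step above, to an $\alpha$-approximation for the latter. This immediately imports both \NP-hardness and the factor-$2$ lower bound of Feder and Greene to $(\nrClusters,\lenClusters)$-center for $\lenClusters = 2$. For $\lenClusters > 2$, the same reduction carries over by invoking \lemref{embedding} with $d = \lenClusters$ and starting from \NP-hard $\ell_{\infty}$ $k$-center in $\Re^{\lenClusters}$, which is hard already in $\Re^{2}$ and hence in every higher dimension by a trivial zero-coordinate padding. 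The main obstacle I anticipate is the soundness of the inverse embedding on arbitrary center curves (which a priori need not lie in $T(\Re^{\lenClusters})$); the signature-based lower bound in \lemref{nec:suff} is precisely the tool that rules out degenerate low-complexity centers and forces each ``useful'' $c_{j}$ to be structurally of the form $T(p_{j})$ up to error $\rho$ per coordinate.
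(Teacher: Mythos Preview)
Your approach is correct and essentially the same as the paper's: reduce from $\ell_{\infty}$ $k$-center in $\Re^{d}$ (with $d=\lenClusters$) via the isometric embedding of \lemref{embedding}, citing Feder and Greene for the source hardness and inapproximability. The paper's write-up is terser --- it simply asserts that one ``applies the inverse embedding function to the solution'' and remarks that the embedding extends to the convex hull of $S$ --- whereas you explicitly work out the soundness direction (arbitrary centers $c_j\in\Delta_{\lenClusters}$ need not lie in $T(\Re^{\lenClusters})$) using the $\rho$-signature argument from \lemref{nec:suff}; this is exactly the mechanism hidden inside the proof of \lemref{embedding}, so your more careful treatment is a faithful expansion rather than a different route.
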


\begin{theorem}\thmlab{median:nphard}
The $(\nrClusters,\lenClusters)$-median problem (where $\nrClusters$ is part of the input) 
is \NP-hard for $\lenClusters \geq 2$. 
\end{theorem}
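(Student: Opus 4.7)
The plan is to mirror the reduction used for \thmref{center:nphard}, now starting from a known hardness result for the median objective. First, recall from the discussion preceding \lemref{embedding} that the $k$-median problem under the $\ell_1$ norm in $\Re^2$ was shown \NP-hard by Megiddo and Supowit~\cite{megiddo1984geo}, and by the rotation-and-scaling observation noted in the excerpt, the same problem under $\ell_\infty$ in $\Re^2$ is also \NP-hard (the rotation $T$ is an isometry between $(\Re^2,\ell_\infty)$ and $(\Re^2,\ell_1)$, hence $k$-median instances transfer exactly between the two metrics).

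Given an arbitrary instance $S\subset\Re^2$ of $k$-median under $\ell_\infty$, I would apply \lemref{embedding} with $d=2$ to obtain an isometric embedding $T:S\to\Delta_2$ computable in $O(|S|)$ time. The resulting set $P=\{T(w)\mid w\in S\}\subseteq\Delta_2$ is a valid input instance to the $(\nrClusters,\lenClusters)$-median problem with $\lenClusters=2$. Since $T$ is an isometry on $S$, and moreover (as noted after \lemref{embedding}) the construction extends to any point in the convex hull of $S$, so that any candidate median $c\in\Re^2$ gives a curve $T(c)\in\Delta_2$ with $\distFr{T(w)}{T(c)}=\|w-c\|_\infty$ for every $w\in S$. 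Conversely, given any 2-vertex curve $\gamma\in\Delta_2$ in the image of $T$, one can in linear time recover the corresponding point $T^{-1}(\gamma)\in\Re^2$ (and if $\gamma$ happens to lie outside the image, its two coordinates with the translation $T$ inverted still define a valid point in $\Re^2$ whose $\ell_\infty$ distances to $S$ equal the \Frechet distances of $\gamma$ to $P$, by the argument of \lemref{embedding}).

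Consequently, for any choice of $\nrClusters$ cluster centers $C=\{\centerTraj{1},\dots,\centerTraj{\nrClusters}\}\subset\Delta_2$, the inverse embedding yields $\nrClusters$ points $C'\subset\Re^2$ satisfying
\[
\cost_1(P,C) \;=\; \sum_{w\in S}\min_{c'\in C'}\|w-c'\|_\infty,
\]
and likewise in the other direction. Thus an optimal (or exact) solution to the $(\nrClusters,2)$-median instance $P$ is, via $T^{-1}$, an optimal solution to the original $k$-median instance $S$ under $\ell_\infty$, and the reduction runs in polynomial time. For $\lenClusters>2$, we can simply pad the instance by adding dummy coordinates (or equivalently duplicate vertices) without changing the problem, so the hardness extends to all $\lenClusters\geq 2$.

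The only delicate step is verifying that restricting the cluster centers to $\Delta_\ell$ does not artificially change the optimum: we must ensure that the optimum center curves in $\Delta_2$ are exactly of the form $T(c)$ for some $c\in\Re^2$. This follows because any curve in $\Delta_2$ has exactly two vertices, i.e., is determined by two real values, and the analysis in \lemref{embedding} applies to arbitrary such pairs (not only those in $S$), so the $(\nrClusters,2)$-median optimum coincides with the $\ell_\infty$ $k$-median optimum over $\Re^2$. This completes the reduction and establishes the claimed \NP-hardness.
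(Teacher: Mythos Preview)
Your argument is essentially the same reduction the paper uses: start from the \NP-hardness of $k$-median under $\ell_\infty$ in $\Re^2$ (Megiddo--Supowit plus the rotation observation) and push it through the isometric embedding of \lemref{embedding}. Your final paragraph correctly handles the point the paper also notes, namely that the embedding applies to arbitrary candidate centers, not just to the input points.

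The only place your write-up diverges from the paper is the extension to $\lenClusters>2$. The paper simply takes $d=\lenClusters$ and embeds $\Re^{\lenClusters}$ into $\Delta_{\lenClusters}$. Your ``pad with dummy coordinates / duplicate vertices'' remark is imprecise: duplicating vertices does not change the curve under the \Frechet\ distance, and $\lenClusters$ constrains the \emph{centers}, not the input. The clean way to finish your version is to keep the input in $\Delta_2$ and observe that for any center $\pi=u_1,\dots,u_{\lenClusters}\in\Delta_{\lenClusters}$ and any input edge $\overline{a_ib_i}$, one has $\distFr{\overline{a_ib_i}}{\pi}\geq\max(|a_i-u_1|,|b_i-u_{\lenClusters}|)=\distFr{\overline{a_ib_i}}{\overline{u_1u_{\lenClusters}}}$ by \obsref{segments}, so replacing $\pi$ by the two-vertex curve $\overline{u_1u_{\lenClusters}}$ never increases the cost. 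Hence the $(\nrClusters,\lenClusters)$-median optimum on this instance equals the $(\nrClusters,2)$-median optimum, and hardness carries over.
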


\section{Doubling dimension of the metric space}
\label{section:doublingdimension}

In this section we show that the unconstrained metric space of univariate time series 
under the \Frechet distance has unbounded doubling dimension~(\lemref{doub:unlimit}).
Even if we restrict the complexity of the curves 
to $\ell\geq 4$, the 
doubling dimension is unbounded~(\lemref{doub:limit}). 
For lower complexities, one can easily
show that the doubling dimension is bounded. Note that for $\ell=2$ the 
metric space $(\Delta_\ell, d_F)$ equals the metric space $(\Re^2,\ell_{\infty})$. 
Note that the infinity of the doubling dimension is simply caused by the fact that the
metric space is incomplete. However, it remains that standard techniques for
doubling spaces cannot be applied.

\begin{definition}
In any metric space $(X,D)$ a ball of center $p\in X$ and radius $r\in \Re$  is defined as the set
$\{q\in X : D(p,q) \leq r\}$.
\end{definition}

\begin{definition}
The doubling dimension of a metric space is the smallest positive integer $d$
such that every ball of the metric space can be covered by $2^d$ balls of half
the radius. 
\end{definition}

\begin{lemma}
\label{lemma:doub:unlimit}
The doubling dimension of the metric space $(\Delta, d_F)$ is unbounded.
\end{lemma}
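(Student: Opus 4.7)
The plan is to leverage the isometric embedding established in \lemref{embedding}, which shows that any bounded subset of $(\Re^d, \ell_\infty)$ embeds isometrically into $\Delta_d \subseteq \Delta$. Since $(\Re^d, \ell_\infty)$ itself has doubling dimension growing unboundedly with $d$ (witnessed by the $2^d$ corners of an axis-aligned cube), transporting such a ``bad'' configuration through the embedding produces, for every $d$, a ball in $(\Delta, d_F)$ that cannot be covered by fewer than $2^d$ balls of half its radius.

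Concretely, I would fix an arbitrary $d \geq 1$ and let $c^* = (1, \ldots, 1) \in \Re^d$. Consider $S_d = \{0,2\}^d \cup \{c^*\}$. Under $\ell_\infty$, every corner of the cube $\{0,2\}^d$ is at distance exactly $1$ from $c^*$, and any two distinct corners are at distance exactly $2$ from each other. I would apply \lemref{embedding} to $S_d$ to obtain an isometric embedding $\phi : S_d \to \Delta_d$, writing $\tilde{c} = \phi(c^*)$ and letting $\tilde{p}_1, \ldots, \tilde{p}_{2^d}$ denote the images of the corners. By isometry, $\distFr{\tilde{c}}{\tilde{p}_i} = 1$ for every $i$ and $\distFr{\tilde{p}_i}{\tilde{p}_j} = 2$ whenever $i \neq j$.

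Now consider the ball of radius $1$ centered at $\tilde{c}$ in $(\Delta, d_F)$; by construction it contains all $2^d$ curves $\tilde{p}_i$. If this ball were covered by fewer than $2^d$ balls of radius $1/2$, then by pigeonhole some such smaller ball $B(q, 1/2)$ would contain two distinct images $\tilde{p}_i$ and $\tilde{p}_j$; the pseudo-metric triangle inequality for $d_F$ would then yield $2 = \distFr{\tilde{p}_i}{\tilde{p}_j} \leq \distFr{\tilde{p}_i}{q} + \distFr{q}{\tilde{p}_j} \leq 1$, a contradiction. Hence at least $2^d$ half-radius balls are required. Letting $d$ grow arbitrarily shows that the doubling dimension of $(\Delta, d_F)$ is unbounded.

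The only thing that needs care is that the entire configuration (the center $c^*$ together with all $2^d$ corners) must be fed into \lemref{embedding} as a single bounded subset, so that all relevant pairwise distances are simultaneously pinned down by one isometry. Beyond that, no technical obstacle arises: once the embedding is in hand, the argument reduces to a standard pigeonhole plus triangle-inequality step, and it is crucial only that pairwise distances ($2$) strictly exceed the diameter of a half-radius ball ($1$).
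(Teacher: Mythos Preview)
Your proof is correct but takes a different route from the paper's. The paper constructs the bad configuration explicitly: for each $d$ it defines $2^d+1$ four-vertex curves $\tau_i = 0,\, i,\, i-\tfrac12,\, 2^d+2$, all lying within \Frechet distance $\tfrac18$ of a single (high-complexity) center curve and at pairwise \Frechet distance $\tfrac14$; the same pigeonhole-plus-triangle-inequality step then finishes. You instead transport the standard $\ell_\infty$ cube-corner packing through \lemref{embedding}, which is already available at this point in the paper.

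Your approach is cleaner in that it reuses existing machinery and makes the connection to $\ell_\infty$ explicit; the paper's approach is more self-contained and hands-on. One minor trade-off: in your construction the curves $\tilde p_i$ have $d$ vertices (growing with the doubling parameter), whereas in the paper's construction all the $\tau_i$ have exactly four vertices regardless of $d$, and only the center curve is long. Neither proof immediately yields the stronger statement of \lemref{doub:limit} about $\Delta_\ell$ for fixed $\ell$, which requires a separate argument anyway.
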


\begin{proof}
Assume for the sake of contradiction that there exists a positive integer $d$
which equals the doubling dimension of the metric space $(\Delta, d_F)$.  We
show that such $d$ cannot exist by constructing $2^d+1$ elements of
$\Delta$ which lie in a ball of radius $\frac{1}{8}$ while no two elements can be
covered by a ball of radius $\frac{1}{16}$.  However, by the pidgeon hole principle, at
least one of the smaller balls would have to cover two different curves in the set. 
We construct a set of curves
$P=\tau_{1},\dots,\tau_{2^d+1}$ with $\tau_i=0,i,i-\frac{1}{2},2^d+2$. The set
$P$ is contained in the ball of radius $\frac{1}{8}$ centered at the curve
\[c=0,\frac{7}{8}, \frac{5}{8}, \dots,i-\frac{1}{8},i-\frac{3}{8}, \dots,
2^d+\frac{7}{8}, 2^d+\frac{5}{8}, 2^d+2.\] See \figref{ddex}~(left) for an example.
Any two curves $\tau_i,\tau_j \in P$
have \Frechet distance $\frac{1}{4}$ to each other.  Now, assume that a ball of
radius $\frac{1}{16}$ exists which contains both $\tau_i$ and $\tau_j$. Let 
its center be denoted $c_{ij}$. We can derive a contradiction using the triangle inequality:
\[\frac{1}{4}=\distFr{\tau_i}{\tau_j}\leq \distFr{\tau_i}{c_{ij}} +\distFr{c_{ij}}{\tau_j}=\frac{1}{8}. \]
\end{proof}

\begin{figure}
\centering
\includegraphics{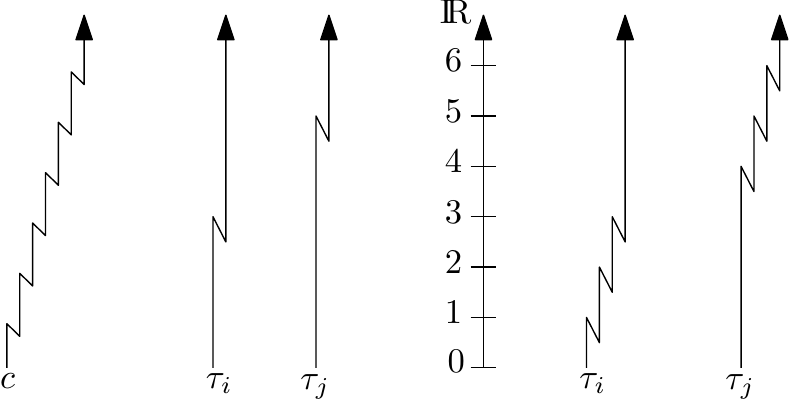}
\caption{Examples of the constructed curves in \lemref{doub:unlimit}~(left) and \lemref{doub:limit}~(right). }
\figlab{ddex}
\end{figure}

\begin{lemma}
\label{lemma:doub:limit}
For any integer $\ell>3$, the doubling dimension of the metric space $(\Delta_{\ell}, d_F)$ is unbounded.
\end{lemma}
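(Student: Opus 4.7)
The plan is to follow the same high-level argument as in \lemref{doub:unlimit}. For each positive integer $d$, the aim is to exhibit a ball $B(c,r)$ in the metric space $(\Delta_\ell,\distFrConst)$, with center $c\in\Delta_\ell$ and some radius $r>0$, together with a set $P=\{\tau_1,\ldots,\tau_{2^d+1}\}\subseteq B(c,r)$ of $2^d+1$ curves from $\Delta_\ell$ whose pairwise \Frechet distances are all strictly greater than $r$. The triangle inequality then forces each ball of radius $r/2$ to contain at most one element of $P$, so covering $B(c,r)$ would require at least $|P|=2^d+1>2^d$ balls of half radius, violating the definition of doubling dimension $d$. Since $d$ is arbitrary, this yields the unboundedness of the doubling dimension.

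The separated set $P$ can be obtained by a variant of the construction from \lemref{doub:unlimit}. The family $\tau_i=0,i,i-\frac{1}{2},2^d+2$ used there consists of curves of complexity $4\le\ell$, hence each $\tau_i$ lies in $\Delta_\ell$, and a matching argument analogous to the previous proof shows that its pairwise \Frechet distances are all equal to $\frac{1}{4}$. The new and essential difficulty, imposed by the complexity bound $\ell$, is replacing the center used in \lemref{doub:unlimit} -- which has $\Theta(2^d)$ vertices and therefore cannot lie in $\Delta_\ell$ once $d$ is large -- by a bounded-complexity curve $c\in\Delta_\ell$ that still lies within \Frechet distance $r<\frac{1}{4}$ of every $\tau_i$ simultaneously. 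A suitable $c$, together with a (possibly refined) choice of $P$, is depicted in \figref{ddex}~(right).

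The main obstacle is therefore the construction of $c$ and the accompanying matching analysis. The plan is to produce, for each $i$, a monotone reparametrization aligning $c$ with $\tau_i$ at pointwise distance at most $r$. The key mechanism is that when $c$ has a long monotone portion sweeping through the range of values visited by $\tau_i$, one can match $\tau_i$'s peak and valley both to a carefully chosen mid-height value on $c$ by a reparametrization that holds $c$ stationary during $\tau_i$'s excursion -- essentially the same ``parking'' technique already exploited in the proof of \lemref{fd:signature} and in the shortcut analysis of \thmref{remove:one}. Verifying that this bound can be achieved uniformly in $i$ with the strict inequality $r<\frac{1}{4}$ is the main technical content of the proof. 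Once this is done, the triangle-inequality and pigeonhole argument from the first paragraph completes the proof.
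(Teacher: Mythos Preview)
Your proposal contains a genuine gap. You assert that one can find a center $c \in \Delta_\ell$ with $\distFr{c}{\tau_i} \le r < \tfrac14$ for all $\tau_i = 0,\, i,\, i-\tfrac12,\, 2^d+2$, but this is impossible once $d$ is large relative to $\ell$. For any $\delta < \tfrac14$ the $\delta$-signature of $\tau_i$ is $\tau_i$ itself, so by \lemref{nec:suff} any curve within \Frechet distance $\delta$ of $\tau_i$ must have a vertex in each of the intervals $[i-\delta,i+\delta]$ and $[i-\tfrac12-\delta,i-\tfrac12+\delta]$. These intervals are pairwise disjoint across $i=1,\dots,2^d+1$, so a curve within distance $<\tfrac14$ of every $\tau_i$ needs at least $2(2^d+1)$ vertices and hence lies outside $\Delta_\ell$. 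No ``parking'' reparametrization can circumvent this; the obstruction is structural. In fact the line segment $c=0,\,2^d+2$ attains $\distFr{c}{\tau_i}=\tfrac14$ exactly, and the pairwise distances $\distFr{\tau_i}{\tau_j}$ are also exactly $\tfrac14$, leaving no slack for the triangle-inequality argument even with the refined curves shown in \figref{ddex}~(right).

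The paper's proof does not try to beat $r=\tfrac14$. It takes $c$ to be a line segment (so $c\in\Delta_2\subseteq\Delta_\ell$) with radius $r=\tfrac14$, and it equips each $\tau_i$ with $s=\lfloor(\ell-2)/2\rfloor$ teeth so that $\tau_i$ already uses close to $\ell$ vertices. The key step then argues directly, again via \lemref{nec:suff}, that any curve $c'\in\Delta_\ell$ with $\distFr{c'}{\tau_i}\le\tfrac18$ and $\distFr{c'}{\tau_j}\le\tfrac18$ for $i\ne j$ would require a vertex in each of $2+4s>\ell$ pairwise disjoint signature intervals, contradicting $c'\in\Delta_\ell$. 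Hence every half-radius ball with center in $\Delta_\ell$ contains at most one $\tau_i$, and $2^d$ of them cannot cover $P$. The mechanism is the bounded complexity of the \emph{covering centers}, not a large pairwise separation of the $\tau_i$; this is precisely what your triangle-inequality route cannot capture.
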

\begin{proof}
The proof is similar to the proof of \lemref{doub:unlimit}. However, this time we argue that no two 
curves in the set can be covered by a ball of half the radius because there exists no suitable 
center in $\Delta_{\ell}$, that is, the center would need to have complexity higher than $\ell$.
As in the other proof, we define a set $P=\lbrace \tau_1,\dots,\tau_{2^d+1}\rbrace \subset \Delta_{\ell}$.
For $s=\floor{\frac{\ell-2}{2}}$, let \[ \tau_i = 0, s(i-1) + 1, s(i-1) +
\frac{1}{2}, \dots, s(i-1) + j, s(i-1) + j - \frac{1}{2}, \dots, s i, s i -
\frac{1}{2}, s(2^d+2).\] 
See \figref{ddex}~(right) for an example with $\ell=8$.
Clearly, each $\tau_i \in P$ is an element of $\Delta_{\ell}$, since its complexity is at most $\ell$.
Furthermore, the set $P$ is contained in the ball with radius $\frac{1}{4}$ centered at $c=0,s(2^{d}+2)$.
Note that the $(\frac{1}{8})$-signature of any $\tau_i\in P$ is equal to $\tau_i$ itself.
Thus, by \lemref{nec:suff}, any center of a ball of radius at most $\frac{1}{8}$, which contains $\tau_i$
 needs to have a vertex in each interval $\pbrc{v-\frac{1}{8},v+\frac{1}{8}}$ for each vertex $v$ of $\tau_i$.
By construction, these intervals are pairwise disjoint for each curve and
across all curves in $P$ (except for the intervals around the two endpoints).
Therefore, such a ball with radius at most $\frac{1}{8}$ which would cover two
different curves $\tau_i,\tau_j \in P$, would need to have more than $\ell$
vertices and is therefore not contained in $\Delta_{\ell}$.
Indeed,  the number of pairwise disjoint signature intervals induced by any $\tau_i,\tau_j \in P$ with $i\neq j$,
is $2+4s = 2 \ell - 2 > \ell$.
\end{proof}

\bibliographystyle{abbrv}
      \bibliography{Trajectories}

\end{document}